\newtheorem{theorem}{Theorem}[section]
\newtheorem{corollary}[theorem]{Corollary}
\newtheorem{lemma}[theorem]{Lemma}
\newtheorem{observation}[theorem]{Observation}
\newtheorem{claim}[theorem]{Claim}
\newtheorem{fact}[theorem]{Fact}
\theoremstyle{definition}
\newtheorem{definition}[theorem]{Definition}
\newenvironment{fminipage}
{\begin{Sbox}\begin{minipage}}
		{\end{minipage}\end{Sbox}\fbox{\TheSbox}}
\newenvironment{algbox}[0]{\vskip 0.2in
	\noindent 
	\begin{fminipage}{6.3in}
	}{
	\end{fminipage}
	\vskip 0.2in
}
\def\prob#1#2{\mbox{Pr}_{#1}\left[ #2 \right]}
\def\defeq{\stackrel{\mathrm{def}}{=}}
\def\abs#1{\left|#1  \right|}
\def\norm#1{\left\| #1 \right\|}
\def\normi#1{\left\vvvert  #1 \right\vvvert }
\newcommand\Ahat{\widehat{\mathit{A}}}
\newcommand\Atil{\widetilde{\mathit{A}}}
\newcommand\Cbar{\overline{\mathit{C}}}
\newcommand\Khat{\widehat{\mathit{K}}}
\newcommand\Mhat{\widehat{\mathit{M}}}
\newcommand\Mtil{\widetilde{\mathit{M}}}
\newcommand\Otil{\widetilde{O}}
\newcommand\Shat{\widehat{\mathit{S}}}
\newcommand\Xhat{\widehat{\mathit{X}}}
\newcommand\Yhat{\widehat{\mathit{Y}}}
\newcommand\Zbar{\overline{\mathit{Z}}}
\newcommand\Ztil{\widetilde{\mathit{Z}}}
\newcommand\bhat{\widehat{\mathit{b}}}
\newcommand\qhat{\widehat{\mathit{q}}}
\newcommand\rhat{\widehat{\mathit{r}}}
\newcommand\that{\widehat{\mathit{t}}}
\newcommand\what{\widehat{\mathit{w}}}
\newcommand\xhat{\widehat{\mathit{x}}}
\newcommand\yhat{\widehat{\mathit{y}}}
\newcommand\zhat{\widehat{\mathit{z}}}
\newcommand\epsilonhat{\widehat{\mathit{\epsilon}}}
\newcommand\deltahat{\widehat{\mathit{\delta}}}
\newcommand\normal{\mathcal{N}}
\newcommand\Qcal{\mathcal{Q}}
\newcommand\Wcal{\mathcal{W}}
\newcommand\Ycal{\mathcal{Y}}
\newcommand{\diag}[1]{\mathrm{DIAG} \left(#1\right)}
\begin{document}
	
	\title{Solving Sparse Linear Systems Faster\\ 
		than Matrix Multiplication}
	
	\author{Richard Peng\\
		Georgia Tech\\
		\texttt{rpeng@cc.gatech.edu}
		\and
		Santosh Vempala\\
		Georgia Tech\\
		\texttt{vempala@gatech.edu}
	}
	
	\maketitle
	
	\begin{abstract}
		Can linear systems be solved faster than matrix multiplication?
		While there has been remarkable progress for the special cases of graph structured linear systems, in the general setting, the bit complexity of solving an $n \times n$ linear system $Ax=b$ is $\tilde{O}(n^\omega)$,
		where $\omega < 2.372864$ is the matrix multiplication exponent.
		Improving on this has been an open problem even for sparse linear systems with poly$(n)$ condition number.
		
		In this paper, we present an algorithm that solves linear systems in sparse matrices asymptotically faster than
		matrix multiplication for any $\omega > 2$.
		This speedup holds for any input matrix $A$ with $o(n^{\omega -1}/\log(\kappa(A)))$ non-zeros, where $\kappa(A)$ is the condition number of $A$.
		For poly$(n)$-conditioned matrices with $\tilde{O}(n)$ nonzeros,
		and the current value of $\omega$, the bit complexity of our algorithm to solve to within any $1/\text{poly}(n)$ error is $O(n^{2.331645})$.
		
		Our algorithm can be viewed as an efficient, randomized implementation of the block Krylov method via recursive low displacement rank factorizations. It is inspired by the algorithm of [Eberly et al. ISSAC `06 `07]
		for inverting matrices over finite fields.
		In our analysis of numerical stability, we develop matrix anti-concentration techniques to bound the smallest eigenvalue and the smallest gap in eigenvalues of semi-random matrices.
	\end{abstract}
	
	\newpage

	\tableofcontents

	\pagebreak

	\section{Introduction}
	Solving a linear system $Ax=b$ is a basic algorithmic problem with direct applications to scientific computing, engineering, and physics, and is at the core of algorithms for many other problems, including optimization~\cite{Ye11:book},
	data science~\cite{BlumHK20:book},
	and computational geometry~\cite{EdelsbrunnerH10:book}. 
	It has enjoyed an array of elegant approaches, from Cramer's rule and Gaussian elimination to numerically stable iterative methods to more modern randomized variants based on random sampling~\cite{SpielmanT11, KoutisMP12} and sketching~\cite{DrineasMM08,Woodruff14:book}.
	Despite much recent progress on faster solvers for graph-structured linear systems~\cite{Vaidya89,Gremban96:thesis,SpielmanTengSolver:journal,KoutisMP12,Kyng17:thesis}, progress on the general case has been elusive.
	
	Most of the work in obtaining better running time bounds
	for linear systems solvers has focused on efficiently computing
	the inverse of $A$, or some factorization of it.
	Such operations are in turn closely related to the cost of
	matrix multiplication.
	Matrix inversion can be reduced to matrix multiplication via
	divide-and-conquer, and this reduction
	was shown to be stable when the word size for representing numbers\footnote{
		We will be measuring bit-complexity under fixed-point arithmetic.
		Here the machine word size is on the order of the
		maximum number of digits of precision in $A$,
		and the total cost is measured by the number of word operations.
		The need to account for bit-complexity of the numbers
		naturally led to the notion of condition number~\cite{Turing48,Blum04}.
		The logarithm of the condition number measures the additional number of
		words needed to store $A^{-1}$ (and thus $A^{-1} b$) compared to $A$.
		In particular, matrices with $poly(n)$ condition number can be stored
		with a constant factor overhead in precision, and are numerically
		stable under standard floating point number representations.} is increased by
	a factor of $O(\log{n})$~\cite{DemmelDHK07}.
	The current best runtime of $O(n^{\omega})$ with
	$\omega < 2.372864 $~\cite{Legall14} follows a long line of work on faster matrix multiplication algorithms~\cite{Strassen69,Pan84:book,CoppersmithW87,Williams12,Legall14}
	and is also the current best running time for solving $Ax = b$:
	when the input matrix/vector are integers, matrix multiplication based algorithms can obtain the exact rational value solution using $O(n^{\omega})$ word operations~\cite{Dixon82,Storjohann05}.
	
	Methods for matrix inversion or factorization are often referred
	to as direct methods in the linear systems literature~\cite{DavisRS16:survey}.
	This is in contrast to iterative methods, which gradually converge to
	the solution.
	Iterative methods have little space overhead, and therefore
	are widely used for solving large, sparse,
	linear systems that arise in scientific computing. Another reason for their popularity is that they are naturally suited to producing approximate solutions of desired accuracy in floating point arithmetic, the de facto method for representing real numbers.
	Perhaps the most famous iterative method is the Conjugate Gradient (CG)
	/ Lanczos algorithm~\cite{HestenesS52,Lanczos50}. 
	It was introduced as an $O(n \cdot nnz)$ time algorithm under exact
	arithmetic, where $nnz$ is the number of non-zeros in the matrix.
	However, this bound only holds under the Real RAM model
	where the words have with unbounded precision~\cite{PreparataS85:book,BlumSS89}. When taking bit sizes into account, it incurs an additional factor of $n$. Despite much progress in iterative techniques in the intervening decades,
	obtaining analogous gains over matrix multiplication in
	the presence of round-off errors has remained an open question.
	
	The convergence and stability of iterative methods typically depends on
	some {\em condition number} of the input. 
	When all intermediate steps are carried out to precision close
	to the condition number of $A$, the running time bounds
	of the conjugate gradient algorithm, as well as other currently
	known iterative methods, depend polynomially
	on the condition number of the input matrix $A$.
	Formally, the condition number of a symmetric matrix $A$,
	$\kappa(A)$, is the ratio between
	the maximum and minimum eigenvalues of $A$.
	Here the best known rate of convergence when all intermediate operations
	are restricted to bit-complexity $O(\log(\kappa(A)))$ is to an error of $\epsilon$
	in $O(\sqrt{\kappa(A)} \log(1 / \epsilon))$ iterations.
	This is known to be tight if one restricts to matrix-vector multiplications
	in the intermediate steps~\cite{SachdevaN13,MuscoMS18}.
	This means for moderately conditioned (e.g. with $\kappa = poly(n)$), sparse, systems,
	the best runtime bounds are still via direct methods,
	which are stable when $O(\log(1  / \kappa))$ words of precision are
	maintained in intermediate steps~\cite{DemmelDHK07}.
	
	Many of the algorithms used in scientific computing for solving linear systems
	involving large, space, matrices are based on combining direct and iterative
	methods: we will briefly discuss this perspectives in Section~\ref{subsec:Related}.
	From the asymptotic complexity perspective, the practical successes
	of many such methods naturally leads to the question of whether one can
	provably do better than the $O(\min\{n^{\omega}, nnz \cdot \sqrt{\kappa(A)}\})$
	time corresponding to the faster of direct or iterative methods.
	Somewhat surprisingly, despite the central role of this question in scientific
	computing and numerical analysis, as well as extensive studies of linear systems solvers,
	progress on this question has been elusive.
	The continued lack of progress on this question has led to its
	use as a hardness assumption for showing conditional lower bounds
	for numerical primitives such as linear elasticity problems~\cite{KyngZ17}
	and positive linear programs~\cite{KyngWZ20}.
	One formalization of such hardness is the
	\emph{Sparse Linear Equation Time Hypothesis} (\textsc{SLTH}) from~\cite{KyngWZ20}: $\textsc{SLTH}_{k}^{\gamma}$ denotes the
	assumption that a sparse linear system with $\kappa \leq nnz(A)^{k}$
	cannot be solved in time faster than $nnz(A)^{\gamma}$ to within relative error $\epsilon = n^{-10 k}$.
	Here improving over the smaller running time of both direct
	and iterative methods can be succinctly encapsulated
	as refuting $\textsc{SLTH}_{k}^{\min\{1 + k/2, \omega\}}$.
	\footnote{
		The hardness results in~\cite{KyngWZ20} were based
		on $\textsc{SLTH}_{1.5}^{1.99}$ under the Real RAM model
		in part due to the uncertain status of conjugate gradient
		in different models of computation.}
	
	In this paper, we provide a faster algorithm for solving sparse linear systems.
	Our formal result is the following (we use the form defined in \cite{KyngWZ20} 
	[Linear Equation Approximation Problem, \textsc{LEA}]).
	\begin{theorem}
		\label{thm:Main}
		Given a matrix ${A}$ with max dimension $n$,
		$nnz({A})$ non-zeros (whose values fit into a single word),
		along with a parameter $\kappa({A})$ such that
		$\kappa({A}) \ge \sigma_{\max}({A}) / \sigma_{\min}({A})$,
		along with a vector $b$ and error requirement $\epsilon$,
		we can compute, under fixed point arithmetic, in time
		\[
		O\left(\max \left\{ nnz(A)^{\frac{\omega-2}{\omega-1}}n^2, n^{\frac{5\omega-4}{\omega+1}} \right\}
		\log^2\left( \kappa / \epsilon \right) \right)
		\]
		a vector $x$ such that
		\[
		\norm{Ax - \Pi_{A} b}_{2}^2
		\leq 
		\epsilon \norm{\Pi_{A} b}_2^2,
		\]
		where $c$ is a fixed constant and $\Pi_{A}$ is the projection operator onto the
		column space of $A$.
	\end{theorem}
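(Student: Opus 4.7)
The plan is to implement a randomized block Krylov method in the style of Eberly et al., and to make it provably efficient over the reals by (i) exploiting the block-Hankel (low displacement rank) structure of the induced system to solve it quickly, and (ii) developing matrix anti-concentration bounds that keep every intermediate matrix well-conditioned enough that $O(\log(\kappa/\epsilon))$ words of precision suffice.

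The algorithmic skeleton is as follows. Draw a random Gaussian block $G \in \mathbb{R}^{n \times s}$ for a block width $s$ to be chosen later, and build the Krylov basis
\[
K \;=\; \bigl[\, G \mid AG \mid A^2 G \mid \cdots \mid A^{k-1} G \,\bigr],
\qquad k = \lceil n/s \rceil,
\]
via $k-1$ successive sparse matrix--block multiplications, at cost $O(nnz(A)\cdot n)$. Writing a candidate solution in the Krylov subspace as $x = Ky$ and left-multiplying $Ax = \Pi_{A} b$ by $K^T$ reduces the problem to $Hy = K^T b$, where $H \defeq K^T A K$. The entry $H_{ij} = G^T A^{i+j+1} G$ depends only on $i+j$, so $H$ is block-Hankel with $s \times s$ blocks and has displacement rank $O(s)$; this is the structural fact that makes a fast direct solve possible.

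The system $Hy = K^T b$ is then handled by a recursive, Schur-complement-based superfast block-Hankel solver. The $2k$ distinct blocks $G^T A^j G$ are harvested from the stored $\{A^i G\}$ using batched rectangular matrix multiplications, and the recursion operates on $s \times s$ blocks using fast matrix multiplication. The total cost decomposes into (a) the sparse Krylov construction, (b) the batched assembly of the Hankel blocks, and (c) the recursive block-Hankel solve; balancing $s$ against these three terms yields exactly the two regimes in the theorem statement, $nnz(A)^{(\omega-2)/(\omega-1)} n^2$ when sparse matrix--vector products dominate and $n^{(5\omega-4)/(\omega+1)}$ when the block-Hankel inversion dominates.

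The main obstacle, and the bulk of the technical work, is numerical stability. Every matrix whose inverse we touch -- $K$, $H$, and the Schur-complement pivots at every level of the recursive displacement-rank solve -- must have smallest singular value at least $1/\mathrm{poly}(n, \kappa(A))$ for the $O(\log(\kappa/\epsilon))$-bit analysis to succeed. At the base level this reduces, via a change of basis to the eigenbasis of $A$, to an anti-concentration statement for a Gaussian-weighted Vandermonde-type matrix built from the eigenvalues of $A$. To make this bound nontrivial one needs, in particular, a polynomial lower bound on the smallest eigenvalue \emph{gap} of $A$, which need not hold deterministically; we recover it by pre-conditioning $A$ with a small random symmetric perturbation and applying the new matrix anti-concentration tools to bound both the smallest eigenvalue and the smallest eigenvalue gap of the resulting semi-random matrix. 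These bounds then propagate through the recursion, because Schur complements of well-conditioned leading submatrices remain well-conditioned. A final Richardson-style iterative refinement loop, using the approximate inverse of $H$ as a preconditioner, drives the residual down to the target $\epsilon$; the $\log(1/\epsilon)$ refinement iterations together with the $\log(\kappa)$-word precision account for the overall $\log^2(\kappa/\epsilon)$ factor in the runtime.
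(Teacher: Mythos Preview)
Your overall architecture matches the paper's, but there is a fundamental gap in the numerical accounting that would derail the running time bound. You assert that $K$, $H$, and all intermediate Schur complements have smallest singular value at least $1/\mathrm{poly}(n,\kappa(A))$, so that $O(\log(\kappa/\epsilon))$ words of precision suffice throughout. This is false, and in fact controlling what actually happens here is the paper's central contribution. The entries of $A^{m-1}G$ already have magnitude up to $(n\kappa)^{O(m)}$, and the Vandermonde-in-the-eigenbasis anti-concentration argument you sketch can only yield $\sigma_{\min}(K)\ge \alpha^{O(m)}$, where $\alpha$ is the minimum eigenvalue gap of the perturbed $A$; after your perturbation step $\alpha$ is at best $1/\mathrm{poly}(n,\kappa)$, so the condition number of $K$, and hence of $H$ and of every Schur complement in the recursive Hankel solve, is $\kappa^{\Theta(m)}$, not polynomial. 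Consequently every arithmetic step in both the Krylov construction and the displacement-rank solve must carry $\widetilde{O}(m\log(\kappa/\epsilon))$ words, not $O(\log(\kappa/\epsilon))$. That extra factor of $m$ enters both the $nnz(A)\cdot n$ term and the $m s^\omega$ term, and it is precisely this that drives the balancing of $m$ to produce the exponents $\frac{5\omega-4}{\omega+1}$ and $2+\frac{\omega-2}{\omega-1}$; with your accounting the optimum would be different and better than what is actually achievable. Richardson refinement does not rescue this: a single approximate solve in $H$ already needs the full $\widetilde{O}(m\log\kappa)$ precision to contract at all.

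Two further points your sketch elides but which the paper has to deal with explicitly. First, the anti-concentration bound for $\sigma_{\min}(K)$ is only established when $K$ is strictly tall, $ms\le n-\Omega(m)$; the square case appears genuinely harder (cf.\ the analogous gap between tall and square random matrices), so the algorithm pads $K$ with $\Theta(m)$ dense Gaussian columns and Schur-complements them away at the end. Second, to keep the cost of assembling the $2m$ blocks $G^TA^jG$ below the target, the starting block $G$ must be \emph{sparse} (roughly $\mathrm{poly}(m)$ nonzeros per column), which in turn makes the anti-concentration argument substantially more delicate and is the source of the $n^2 m^3$ term that ultimately produces the $n^{(5\omega-4)/(\omega+1)}$ exponent.
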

	Note that $\norm{\Pi_{A} b}_2
	= \norm{A^Tb}_{(A^TA)^{-1}}$,
	and when $A$ is square and full rank,
	it is just $\norm{b}_2$.
	
	The cross-over point for the two bounds is at
	$nnz(A) = n^{\frac{3 (\omega - 1)}{\omega + 1}}$.
	In particular, for the sparse case with $nnz(A) = O(n)$,
	and the current best $\omega \leq 2.372864$~\cite{Legall14},
	we get an exponent of
	\[
	\max\left\{ 
	2 + \frac{\omega - 2}{\omega - 1},
	\frac{5 \omega - 4}{\omega + 1}
	\right\}
	<
	\max\{
	2.271595,
	2.331645
	\}
	=
	2.331645.
	\]
	As $n \leq nnz$, this also translates to a running time of
	$O(nnz^{\frac{5\omega - 4}{\omega + 1} })$,
	which as $\frac{5\omega - 4}{\omega + 1}
	= \omega - \frac{(\omega - 2)^2}{\omega + 1}$,
	refutes $\textsc{SLTH}_k^{\omega}$ for constant
	values of $k$ and any value of $\omega > 2$.
	
	We can parameterize the asymptotic gains over
	matrix multiplication for moderately sparse instances.
	Here we use the $\Otil(\cdot)$ notation to hide lower-order
	terms, specifically $\Otil(f(n))$ denotes $O(f(n)
	\cdot \log^{c}(f(n)))$ for some absolute constant $c$.
	
	\begin{corollary}
		For any matrix $A$ with dimension at most $n$,
		$O(n^{\omega - 1 - \theta})$ non-zeros,
		and condition number $n^{O(1)}$,
		a linear system in $A$ can be solved 
		to accuracy $n^{-O(1)}$ in time $\Otil{(
			\max\{n^{\frac{5 \omega - 4}{\omega + 1}}, n^{\omega - \frac{\theta(\omega - 2)}{\omega - 1}}\} ) }$.
	\end{corollary}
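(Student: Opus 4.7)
The plan is to obtain the corollary by directly substituting the parameter regime $nnz(A) = O(n^{\omega-1-\theta})$ and $\kappa(A), 1/\epsilon \le n^{O(1)}$ into the runtime bound of Theorem~\ref{thm:Main} and then simplifying the exponent algebraically; no new algorithmic ideas are needed.

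First I would note that under $\kappa(A) = n^{O(1)}$ and accuracy $\epsilon = n^{-O(1)}$, the factor $\log^2(\kappa/\epsilon) = O(\log^2 n)$ is polylogarithmic in $n$, so it disappears into the $\Otil(\cdot)$ notation. This reduces the task to analyzing
\[
\max\left\{ nnz(A)^{\frac{\omega-2}{\omega-1}} n^2,\; n^{\frac{5\omega-4}{\omega+1}} \right\}.
\]
The second term in the maximum is independent of $nnz(A)$ and contributes the $n^{(5\omega-4)/(\omega+1)}$ branch of the bound stated in the corollary.

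Next I would simplify the first term. Substituting $nnz(A) = n^{\omega-1-\theta}$ and using the identity $(\omega-1-\theta)\cdot\tfrac{\omega-2}{\omega-1} = (\omega-2) - \tfrac{\theta(\omega-2)}{\omega-1}$, I compute
\[
nnz(A)^{\frac{\omega-2}{\omega-1}} \cdot n^2
= n^{(\omega-1-\theta)\frac{\omega-2}{\omega-1} + 2}
= n^{\omega - \frac{\theta(\omega-2)}{\omega-1}},
\]
which is exactly the second branch appearing inside the $\max$ of the corollary. Combining the two branches yields the claimed $\Otil\bigl(\max\{n^{(5\omega-4)/(\omega+1)}, n^{\omega-\theta(\omega-2)/(\omega-1)}\}\bigr)$ runtime.

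The only minor subtlety, rather than a genuine obstacle, is checking that the corollary as stated is consistent with having $nnz(A)$ possibly smaller than $n$: since Theorem~\ref{thm:Main} is stated for matrices of max dimension $n$ and takes $nnz(A)$ as input, the substitution is well-defined for any $\theta \in [0,\omega-1]$ and the bound degrades monotonically as $\theta \to 0$, recovering an $n^\omega$-type bound in that limit, as expected. No nontrivial step in the proof requires anything beyond this direct substitution.
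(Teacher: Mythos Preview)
Your proposal is correct and is exactly the intended argument: the paper states the corollary immediately after Theorem~\ref{thm:Main} without a separate proof, treating it as the direct consequence of substituting $nnz(A)=O(n^{\omega-1-\theta})$, $\kappa=n^{O(1)}$, and $\epsilon=n^{-O(1)}$ into the main runtime bound. Your exponent algebra is right, and the paper implicitly restricts to $0<\theta\le\omega-2$ (so $nnz(A)\ge n$), which removes the minor concern you flagged.
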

	
	Here the cross-over point happens at
	$\theta = \frac{(\omega - 1)(\omega - 2)}{\omega + 1}$.
	Also, because $\frac{5 \omega - 4}{\omega + 1}
	= \omega - \frac{(\omega - 2)^2}{\omega + 1}$,
	we can also infer that for any
	$0 < \theta \leq \omega - 2$ and any $\omega > 2$,
	the runtime is $o(n^{\omega})$,
	or asymptotically faster than matrix multiplication.
	
	\subsection{Idea}
	
	At a high level, our algorithm follows the 
	block Krylov space method
	(see e.g. Chapter 6.12 of Saad~\cite{Saad03:book}).
	This method is a multi-vector extension of the conjugate gradient / Lanczos method,
	which in the single-vector setting is  known to be problematic under
	round-off errors both in theory~\cite{MuscoMS18} and in practice~\cite{GolubO89}.
	Our algorithm starts with a set of $s$ initial vectors, $B \in \Re^{n \times s}$,
	and forms a column space by multiplying these vectors by $A$ repeatedly, $m$ times.
	Formally, the block Krylov space matrix is
	\[
	K=
	\left[
	\begin{array}{c|c|c|c|c}
		B & AB & A^2B & \ldots & A^{m - 1}B
	\end{array}
	\right].
	\]
	The core idea of Krylov space methods is to efficiently
	orthogonalize this column space.
	For this space to be spanning, block Krylov space methods
	typically choose $s$ and $m$ so that $sm = n$.
	
	The conjugate gradient algorithm can be viewed as an
	efficient implementation of the case $s = 1$, $m = n$, and $B$ is set to $b$, the RHS of the input linear system.
	The block case with larger values of $s$ was studied by
	Eberly, Giesbrecht, Giorgi, Storjohann, and Villard~\cite{EberlyGGSV06,EberlyGGSV07} over finite fields, 
	and they gave an $O(n^{2.28})$
	time algorithm for computing the inverse
	of a sparse matrix over a finite field.
	
	Our algorithm also leverages the top-level insight of
	the Eberly et al. results: the Gram matrix 
	of the Krylov space matrix (which can be used inter-changeably 
	for solving linear systems) is a block Hankel matrix.
	That is, if we view the Gram matrix $(AK)^T(AK)$ as an $m$-by-$m$
	matrix containing $s$-by-$s$ sized blocks, then all the blocks
	along each anti-diagonal are the same:
	\[
	\left(AK\right)^T \left(AK\right)
	=
	\left[
	\begin{array}{c|c|c c c}
		B^TA^2B & B^TA^3B & B^TA^4B & \ldots & B^{T} A^{m + 1}B\\
		\hline
		B^TA^3B & B^TA^4B & B^TA^5B &\ldots & B^{T} A^{m + 2}B\\
		\hline
		B^TA^4B & B^TA^5B & B^TA^6B& \ldots & B^{T} A^{m + 3}B\\
		\ldots & \ldots & \ldots & \ldots & \ldots\\
		B^TA^{m + 1}B & B^TA^{m + 2}B & B^TA^{m + 3}B & \ldots & B^{T} A^{2m}B\\
	\end{array}
	\right]
	\]
	Formally, the $s$-by-$s$ inner product matrix formed from
	$A^{i}B$ and $A^{j}B$ is $B^{T}A^{i + j}B$, and only depends
	on $i + j$.
	So instead of $m^2$ blocks each of size $s \times s$,
	we are able to represent a $n$-by-$n$ matrix with about $m$ blocks.
	
	Operations involving these $m$ blocks of the Hankel matrix
	can be handled using $\Otil(m)$ block operations.
	This is perhaps easiest seen for computing matrix-vector
	products using $K$.
	If we use $\{i\}$ to denote the $i$th block of the Hankel matrix,
	that is
	\[
	H_{\left\{i, j\right\}} = M\left(i + j \right)
	\]
	for a sequence of matrices $M$, we get that the $i\textsuperscript{th}$ block
	of the product $Hx$ can be written in block-form as
	\[
	\left(Hx\right)_{\left\{i\right\}}
	=
	\sum_{j}H_{\left\{i, j\right\}} x_{\left\{j\right\}}
	=
	\sum_{j} M\left(i + j \right) x_{\left\{j\right\}}.
	\]
	Note this is precisely the convolution of (a sub-interval) of $M$
	and $x$, with shifts indicated by $i$.
	Therefore, in the forward matrix-vector multiplication direction,
	a speedup by a factor of about $m$ is possible with fast convolution algorithms.
	The performance gains of the Eberly et al. algorithms~\cite{EberlyGGSV06,EberlyGGSV07}
	can be viewed as of similar nature, albeit in the more difficult direction
	of solving linear systems.
	Specifically, they utilize algorithms for
	the Pad\'{e} problem of computing a polynomial from the result of its convolution~\cite{XuB90,BeckermannL94}.
	Over finite fields, or under exact arithmetic, such algorithms for matrix
	Pad\'{e} problems take $O(m \log{m})$ block operations~\cite{BeckermannL94},
	for a total of $\Otil(s^{\omega} m)$ operations..
	
	The overall time complexity follows from two opposing goals:
	\begin{enumerate}
		\item Quickly generate the Krylov space:
		repeated multiplication by $A$
		allows us to generate $A^{i}B$ using $O(ms \cdot nnz) = O(n \cdot nnz)$
		arithmetic operations.
		Choosing a sparse $B$ then allows us to compute $B^T A^{i} B$ in
		$O(n \cdot s)$ arithmetic operations, for a total overhead of $O(n^2) =
		O(n \cdot nnz)$.
		\item Quickly invert the Hankel matrix.
		Each operation on an $s$-by-$s$ block takes $O(s^{\omega})$ time.
		Under the optimistic assumption of $\Otil(m)$ block operations,
		the total is $\Otil(m \cdot s^{\omega})$.
	\end{enumerate}
	Under these assumptions, and the requirement of $n \approx ms$, the total
	cost becomes about $O(n \cdot nnz + m \cdot s^{\omega})$, which is
	at most $O(n \cdot nnz)$ as long as $m > n^{\frac{\omega - 2}{\omega - 1}}$.
	However, this runtime complexity is over finite fields, where numerical stability is not an issue, 
	instead of over reals under round-off errors, where one must contend with numerical errors without blowing up the bit complexity.
	This is a formidable challenge; indeed, with exact arithmetic, the CG method takes time $O(n\cdot nnz)$, but this is misleading since the computation is effective only the word sizes increase by a factor of $n$ (to about $n \log\kappa$ words), which leads to an overall complexity of $O(n^2\cdot nnz \cdot \log \kappa)$. 
	
	\subsection{Our Contributions} 
	\label{subsec:Contributions}

	Our algorithm can be viewed as the numerical generalization of the algorithms from~\cite{EberlyGGSV06,EberlyGGSV07}.
	We work with real numbers of bounded precision,
	instead of entries over a finite field.
	The core of our approach can be summarized as:
	
	\begin{center}
		\framebox{
			\parbox{13cm}{
				The block Krylov space method together with fast Hankel solvers can
				be made numerically stable using $\Otil(m \log(\kappa))$ words of precision.
			}
		}
	\end{center}
	
	Doing so, on the other hand, requires developing tools
	for two topics that have been extensively studied
	in mathematics, but separately.
	\begin{enumerate}
		\item Obtain low numerical cost solvers for block
		Hankel/Toeplitz matrices.
		Many of the prior algorithms rely on algebraic identities
		that do not generalize to the block setting, and are often
		(experimentally) numerically unstable~\cite{GallivanTVV96,Gray06:book}.
		\item Develop matrix anti-concentration bounds for analyzing the
		word lengths of inverses of random Krylov spaces.
		Such bounds upper bound the probability of random matrices being in some
		set of small measure, which in our case is the set of nearly singular matrices.
		Previously, they were known assuming
		the matrix entries are independent~\cite{SankarST03:journal,TaoV10},
		while Krylov spaces have correlated columns.
	\end{enumerate}
	Furthermore, due to the shortcomings of the matrix anti-concentration
	bounds, we modify the solver algorithm so that it uses a more limited
	version of the block-Krylov space that fall under the cases that could
	be analyzed.
	
	Before we describe the difficulties and new tools needed,
	we first provide some intuition on why a factor $m$ increase
	in word lengths may be the right answer by
	upper-bounding the magnitudes of entries in a $m$-step Krylov space.
	The maximum magnitude of $A^{m}b$ is bounded by the max magnitude
	of $A$ to the power of $m$, times a factor corresponding to the number
	of summands in the matrix product:
	\[
	\norm{A^{m}b}_{\infty}
	\leq
	\left( n \normi{A}_{\infty} \right)^{m} \norm{b}_{\infty}.
	\]
	So the largest numbers in $K$ (as well as $AK$) can be bounded
	by $(n\kappa)^{O(m)}$,
	or $O(m \log{\kappa})$ words in front of the decimal point
	under the assumption of $\kappa > n$.
	
	Should such a bound of $O(m \log{\kappa})$ hold for all numbers
	that arise, including the matrix inversions, and the matrix $B$
	is sparse with $O(n)$ entries, the
	cost of computing the block-Krylov matrices becomes
	$O( m \log{\kappa} \cdot ms \cdot nnz)$, while the cost of the
	matrix inversion portion encounters an overhead of $O(m \log{\kappa})$,
	for a total of $\Otil(m^2 s^{\omega} \log{\kappa})$.
	In the sparse case of $nnz = O(n)$, and $n \approx ms$, this becomes:
	\begin{equation}
		O\left( n^2 m \log{\kappa} + m^2 s^{\omega} \log{\kappa} \right)
		=
		O\left( n^2 m \log{\kappa} + \frac{n^{\omega}}{m^{\omega - 2}} \log{\kappa} \right).
		\label{eq:BlockKrylovGoal}
	\end{equation}
	Due to the gap between $n^2$ and $n^{\omega}$,
	setting $m$ appropriately gives improvement over $n^{\omega}$
	when $\log\kappa < n^{o(1)}$.
	
	However, the magnitude of an entry in the inverse depends on
	the smallest magnitude, or in the matrix case, its minimum
	singular value.
	Bounding and propagating the min singular value,
	which intuitively corresponds to how close a matrix
	is to being degenerate,
	represents our main challenge.
	In exact/finite fields settings,
	non-degeneracies are certified via the Schwartz-Zippel Lemma about polynomial roots.
	The numerical analog of this is more difficult:
	the Krylov space matrix $K$ is asymmetric, even for a symmetric matrix $A$.
	It is much easier for an asymmetric matrix with correlated entries
	to be close to singular.
	
	Consider for example a two-banded, two-block matrix with
	all diagonal entries set to the same random variable $\alpha$ (see Figure~\ref{fig:BadMatrix}):
	\[
	A_{ij}
	=
	\begin{cases}
		1 & \text{if $i = j$ and $j \leq n / 2$},\\
		\alpha& \text{if $i = j + 1$ and $j \leq n/2$},\\
		\alpha & \text{if $i = j + 1$ and $n / 2 < j$},\\
		2 & \text{if $i = j + 1$ and $n/2 < j$},\\
		0 & \text{otherwise}.
	\end{cases}
	\]
	
	\begin{figure}
		\begin{center}
			\begin{tikzpicture}
				\draw (0,0) -- (6,0) -- (6,6) -- (0,6) -- (0,0);
				\draw (0,5.5) -- (0,6) -- (3,3) -- (2.5,3) -- (0,5.5);
				\draw (0,5) -- (0,5.5) -- (2.5,3) -- (2,3) -- (0,5);
				\node at (0.2, 5.55) (a) {$1$};
				\node at (0.7, 5.05) (a) {$1$};
				\node at (2.05, 3.7) (a) {$1$};
				\node at (2.55, 3.2) (a) {$1$};
				\node[rotate = -45] at (1.5, 4.3) (blah) {$\ldots$};
				\node[rotate = -45] at (1.2, 4) (blah) {$\ldots$};
				\node at (0.2, 5.05) (a) {$\alpha$};
				\node at (0.7, 4.55) (a) {$\alpha$};
				\node at (2.05, 3.2) (a) {$\alpha$};
				\draw (3,2.5) -- (3,3) -- (6,0) -- (5.5,0) -- (3,2.5);
				\draw (3,2) -- (3,2.5) -- (5.5,0) -- (5,0) -- (3,2);
				\node at (3.2, 2.55) (a) {$\alpha$};
				\node at (3.7, 2.05) (a) {$\alpha$};
				\node at (5.05, 0.7) (a) {$\alpha$};
				\node at (5.55, 0.2) (a) {$\alpha$};
				\node[rotate = -45] at (4.5, 1.3) (blah) {$\ldots$};
				\node[rotate = -45] at (4.2, 1) (blah) {$\ldots$};
				\node at (3.2, 2.05) (a) {$2$};
				\node at (3.7, 1.55) (a) {$2$};
				\node at (5.05, 0.2) (a) {$2$};
				\draw [decorate,decoration={brace,amplitude=10pt}]
				(0, 6) -- (6, 6) node (curly_bracket)[black,midway, yshift =- 0.3 cm] 
				{};
				\node at (3, 6.6) {$n$};
				\draw [decorate,decoration={brace,amplitude=10pt, mirror}]
				(6, 0) -- (6, 6) node (curly_bracket)[black,midway, yshift =- 0.3 cm] 
				{};
				\node at (6.6, 2.5) {$n$};
			\end{tikzpicture}
		\end{center}
		\caption{The difference between matrix anti-concentration over
			finite fields and reals:
			a matrix that is full rank for all $\alpha \neq 0$,
			but is always ill conditioned.
		}
		\label{fig:BadMatrix}
	\end{figure}
	In the exact case, this matrix is full rank unless $\alpha = 0$,
	even over finite fields.
	On the other hand, its minimum singular value is close to $0$ for
	all values of $\alpha$ because:
	\begin{observation}
		The minimum singular value of a matrix with $1$s on the diagonal,
		$\alpha$ on the entries immediately below the diagonal, and $0$
		everywhere else is at most $|\alpha|^{-(n - 1)}$, due to the test vector
		$[1; -\alpha; \alpha^2; \ldots; (-\alpha)^{n - 1}]$.
	\end{observation}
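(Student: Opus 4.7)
The plan is to exhibit the stated test vector explicitly and bound the ratio $\|Mv\|_2/\|v\|_2$, since $\sigma_{\min}(M)$ is by definition the minimum of this Rayleigh-type quotient over nonzero $v$. Let $M$ denote the lower bidiagonal matrix in the observation, so $M_{ii} = 1$ for all $i$, $M_{i+1,i} = \alpha$ for $i < n$, and all other entries vanish. Let $v \in \mathbb{R}^n$ be the vector with entries $v_i = (-\alpha)^{i-1}$.

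First I would verify by direct computation that $Mv = e_1$. Row $1$ of $M$ has only the diagonal entry, so $(Mv)_1 = v_1 = 1$. For any row $i \geq 2$ the only nonzeros are $M_{ii} = 1$ and $M_{i,i-1} = \alpha$, giving
\[
(Mv)_i \;=\; v_i + \alpha v_{i-1} \;=\; (-\alpha)^{i-1} + \alpha \cdot (-\alpha)^{i-2} \;=\; (-\alpha)^{i-1} - (-\alpha)^{i-1} \;=\; 0.
\]
In particular $\|Mv\|_2 = 1$.

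Next I would lower bound $\|v\|_2$. Since the last entry of $v$ has magnitude $|\alpha|^{n-1}$, we immediately get $\|v\|_2 \geq |\alpha|^{n-1}$. Combining the two bounds,
\[
\sigma_{\min}(M) \;\leq\; \frac{\|Mv\|_2}{\|v\|_2} \;\leq\; \frac{1}{|\alpha|^{n-1}} \;=\; |\alpha|^{-(n-1)},
\]
which is the claimed inequality. There is no real obstacle here: the proof is a one-line algebraic identity showing that $v$ is almost a null vector of $M$ (exactly so, modulo the first coordinate), and the remaining estimate is just reading off a single coordinate of $v$. The only thing worth noting is that the bound is informative precisely in the regime $|\alpha| > 1$ that motivates the example in Figure~\ref{fig:BadMatrix}, where it shows the singular value decays geometrically in $n$ even though the matrix is algebraically nonsingular for every $\alpha \neq 0$.
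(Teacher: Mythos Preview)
Your proof is correct and is exactly the argument the paper intends: the observation in the paper is stated without a formal proof, only the hint ``due to the test vector $[1; -\alpha; \alpha^2; \ldots; (-\alpha)^{n-1}]$,'' and you have carried out precisely that computation. There is nothing to add.
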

	Specifically, in the top-left block, as long as $|\alpha| > 3/2$,
	the top left block has minimum singular value at most $(2/3)^{n-1}$.
	On the other hand, rescaling the bottom-right block by $1 / \alpha$
	to get $1$s on the diagonal gives $2 / \alpha$ on the off-diagonal.
	So as long as $|\alpha| < 3/2$, this value is at least $4/3$, which in turn
	implies a minimum singular value of at most $(3/4)^{n-1}$ in the
	bottom right block.
	This means no matter what value $\alpha$ is set to,
	this matrix will always have a singular value that's
	exponentially close to $0$.
	Furthermore, the Gram matrix of this matrix also gives such a counter
	example to symmetric matrices with (non-linearly) correlated entries.
	Previous works on analyzing condition numbers of asymmetric
	matrices also encounter similar difficulties: a more detailed discussion of it
	can be found in Section 7 of Sankar et al.~\cite{SankarST03:journal}.

	In order to bound the bit complexity of all intermediate steps of
	the block Krylov algorithm by $\Otil(m)\cdot \log{\kappa}$,
	we devise a more numerically
	stable algorithm for solving block Hankel matrices, as well as
	provide a new perturbation scheme to quickly generate a
	well-conditioned block Krylov space.
	Central to both of our key components is the close connection
	between condition number and bit complexity bounds.
	
	First, we give a more numerically stable solver for block
	Hankel/Toeplitz matrices.
	Fast solvers for Hankel (and closely related Toeplitz) matrices
	have been extensively studied in numerical analysis,
	with several recent developments on more stable
	algorithms~\cite{XiaXG12}.
	However, the notion of numerical stability studied in these algorithms
	is the more practical variant where the number of bits of precision
	is fixed.
	As a result, the asymptotic behavior of the stable algorithm
	from~\cite{XiaXG12} is quadratic in the number of digits in the condition number,
	which in our case would translate to a prohibitive cost of $\Otil(m^2)$ (i.e., the overall cost would be higher than $n^\omega$).

	Instead, we combine developments in recursive block
	Gaussian elimination~\cite{DemmelDHK07,KyngLPSS16,CohenKKPPRS18}
	with the {\em low displacement rank} representation of Hankel/Toeplitz
	matrices~\cite{KailathKM79,BitmeadA80}.
	Such representations allow us to implicitly express both the Hankel matrix and its inverse by displaced versions of rank $2s$ matrices.
	This means the intermediate sizes of instances arising from recursion
	is $O(s)$ times the dimension, for a total size of $O(n\log{n})$,
	giving a total of $\Otil(n s^{\omega - 1})$ arithmetic operations
	involving words of size $\Otil(m)$.
	We provide a rigorous analysis of the accumulation of round-off errors similar to the analysis of recursive matrix multiplication
	based matrix inversion from~\cite{DemmelDHK07}.
	
	Motivated by this close connection with the condition number of
	Hankel matrices, we then try to initialize with Krylov spaces
	of low condition number.
	Here we show that a sufficiently small perturbation suffices for producing
	a well conditioned overall matrix.
	In fact, the first step of our proof, that
	a small sparse random perturbation to $A$ guarantees good separations
	between its eigenvalues is a direct combination of bounds
	on eigenvalue separation of random Gaussians~\cite{NguyenTV17}
	as well as min eigenvalue of random sparse matrices~\cite{LuhV18}.
	This separation then ensures that the powers of $A$,
	$A^{1}, A^{2}, \ldots A^{m}$, are sufficiently distinguishable
	from each other. Such considerations also come up in the
	smoothed analysis of numerical algorithms~\cite{SankarST03:journal}.
	
	The randomness of the Krylov matrix induced by the initial set of random vectors $B$ is more difficult to analyze: each column of $B$
	affects $m$ columns of the overall Krylov space matrix.
	In contrast, all existing analyses of lower bounds of singular values
	of possibly asymmetric random
	matrices~\cite{SankarST03:journal,TaoV10} rely on the randomness in
	the columns of matrices being independent.
	The dependence between columns necessitates analyzing singular
	values of random linear combinations of matrices, which we handle
	by adapting $\epsilon$-net based proofs of anti-concentration bounds. Here we encounter an additional challenge in bounding the minimum singular value of the block Krylov matrix. 
	We resolve this issue algorithmically: instead of picking a Krylov space
	that spans the entire $\Re^{n}$, we stop things short by picking
	$ms = n - \Otil(m)$
	This set of extra columns significantly simplify the proof of singular
	value lower bounds. This is similar in spirit to the analysis of minimum singular values of random matrices, which is significantly easier for non-square matrices~\cite{RudelsonV10}. 
	In the algorithm, the remaining columns are treated as a separate block that we reduce to via a Schur complement at the very end of the block elimination algorithm. Since the block is small, so is its overhead on the running time.

	\subsection{History and Related Work}
	\label{subsec:Related}
	
	Our algorithm has close connections with multiple lines of research
	on more efficient solvers for sparse linear systems.
	This topic has been extensively studied not only in computer science,
	but also in applied mathematics and engineering.
	For example, in the Editors of the Society of Industrial and
	Applied Mathematics News' `top 10 algorithms of the 20th century',
	three of them (Krylov space methods, matrix decompositions,
	and QR factorizations) are directly related to linear systems solvers~\cite{Cipra00}.

	At a high level, our algorithm is a hybrid linear systems solver.
	It combines iterative methods, namely block Krylov space methods,
	with direct methods that factorize the resulting Gram matrix of the
	Krylov space.
	Hybrid methods have their origins in the incomplete Cholesky method
	for speeding up elimination/factorization based direct solvers.
	A main goal of these methods is to reduce the $\Omega(n^2)$ space
	needed to represent matrix factorizations/inverses.
	This high space requirement is often 
	even more problematic than time when handling large sparse matrices.
	Such reductions can occur in two ways: either by directly dropping
	entries from the (intermediate) matrices, or by providing more succinct
	representations of these matrices using additional structures.
	
	The main structure of our algorithm is based on the latter
	line of work on solvers for structured matrices.
	Such systems arise from physical processes where the interactions
	between objects have invariances (e.g. either by time or space
	differences).
	Examples of such structure include
	circulant matrices~\cite{Gray06:book},
	Hankel/Toeplitz matrices~\cite{KailathKM79,BitmeadA80,XiaXG12,XiXCB14},
	and distances from $n$-body simulations~\cite{CoifmanRW93}.
	Many such algorithms require exact preservation of the
	structure in intermediate steps.
	As a result, many of these works develop algorithms over finite
	fields~\cite{BitmeadA80,BeckermannL94,BostanJMS17}.
	
	More recently, there has been work on developing more numerically
	stable variants of these algorithms for structured matrices,
	or more generally, matrices that are numerically close
	to being structured~\cite{XiaCGL10,LinLY11,XiaXG12,XiXCB14}.
	However, these results only explicitly discussed the entry-wise
	Hankel/Toeplitz case (which corresponds to $s = 1$).
	Furthermore, because they rely on domain-decomposition techniques
	similar to fast multiple methods, they produce one bit of precision
	per each outer iteration loop.
	As the Krylov space matrix has condition number $\exp(\Omega(m))$,
	such methods would lead to another factor of $m$ in the solve
	cost when directly invoked.
	
	Instead, our techniques for handling and bounding numerical errors
	are more closely related to recent developments in provably efficient
	sparse Cholesky factorizations~\cite{KyngLPSS16,KyngS16,Kyng17:thesis,CohenKKPPRS18}.
	These methods generated efficient preconditioners using only the condition
	of intermediate steps of Gaussian eliminatnion, known as Schur complements,
	having small representations.
	They avoided the explicit generation of the dense representations of Schur
	complements by treatment them as operators, and implicitly applied randomized
	tools to directly sample/sketch the final succinct representations, which have
	much smaller algorithmic costs.
	
	On the other hand, previous works on spare Choleskfy factorizations
	required the input matrix to be decomposable
	into a sum of simple elements, often through additional combinatorial
	structure of the matrices.
	In particular, this line of work on combinatorial preconditioning
	was initiated through a focus on graph Laplacians,
	which are built from $2$-by-$2$ matrix blocks corresponding
	to edges of undirected graphs~\cite{Vaidya89,Gremban96:thesis,SpielmanTengSolver:journal,KoutisMP12}.
	Since then, there has been 
	substantial generalizations to the structures amenable to such
	approaches, notably to finite element matrices~\cite{BomanHV08}
	and directed graphs/irreversible Markov chains~\cite{CohenKPPRSV17}.
	However, recent works have also shown that many classes of structures
	involving more than two variables are complete for general linear systems~\cite{Zhang18:thesis}.
	Nonetheless, the prevalence of approximation errors in such algorithms
	led to the development of new ways of bounding numerical round-off errors
	in algorithms that are critical to our elimination routine for block-Hankel
	matrices.
	
	Key to recent developments in combinatorial preconditioning is 
	matrix concentration~\cite{RudelsonV07,Tropp15}.
	Such bounds provide guarantees for (relative) eigenvalues
	of random sums of matrices.
	For generating preconditioners, such randomness arise from whether
	each element is kept, and a small condition number (which in turn
	implies a small number of outer iterations usign the preconditioners)
	corresponds to a small deviation between the original and sampled matrices.
	In contrast, we introduce randomness in order to obtain block Krylov
	spaces whose minimum eigenvalue is large.
	As a result, the matrix tool we need is anti-concentration, which
	somewhat surprisingly is far less studied.
	Previous works on it are mostly related by similar problems
	from numerical precision~\cite{SankarST03:journal,TaoV10},
	and mostly address situations where the entries in the resulting
	matrix are independent.
	Our bound on the min singular value of the random Krylov space
	can yield a crude bound for a sum of rectangluar random matrices,
	but we believe much better matrix anti-concentration bounds
	are possible.
	
	\subsection{Organization}
	\label{subsec:Organization}
	
	The rest of this paper is organized as follows:
	we present the ``outer" algorithm in Section~\ref{sec:Overview},
	and give a detailed outline of its analysis in Section~\ref{sec:Analysis}.
	A breakdown of the main components of the analysis
	is in Section~\ref{subsec:Core}: briefly,
	Sections~\ref{sec:PerturbA} and~\ref{sec:RandKrylov} bound
	the singular values of the block Krylov matrix,
	and Sections~\ref{sec:Solver} and~\ref{sec:Pad} give the
	linear systems solver with block Hankel matrices.
	Some research directions raised by this work, including possible improvements and extensions
	are discussed in Section~\ref{sec:Discussion}.
	
	\section{Algorithm}
	\label{sec:Overview}
	
	We describe the algorithm, as well as the running
	times of its main components in this section.
	To simplify discussion, we assume
	the input matrix $A$ is symmetric,
	and has $poly(n)$ condition number.
	If it is asymmetric (but invertible),
	we implicitly apply the algorithm
	to $A^T A$, using the identity
	$A^{-1} = (A^TA)^{-1} A^T$
	derived from
	$(A^TA)^{-1} = A^{-1} A^{-T}$.
	Also, recall from the discussion after Theorem~\ref{thm:Main} that we
	use $\Otil(\cdot)$ to hide lower order terms in order to simplify runtimes.
	
	Before giving details on our algorithm,
	we first discuss what constitutes
	a linear systems solver algorithm,
	specifically the equivalence between many such algorithms
	and linear operators.

	For an algorithm $\textsc{Alg}$ that takes
	a matrix $B$ as input,
	we say that $\textsc{Alg}$ is linear if 
	there is a matrix $Z_{\textsc{Alg}}$
	such that for any input $B$, we have
	\[
	\textsc{Alg}\left(B\right)
	=
	Z_{\textsc{Alg}}.
	\]
	In this section, in particular in the pseudocode in Algorithm~\ref{fig:solver},
	we use the name of the procedure, $\textsc{Solve}_{A}(b, \delta)$,
	interchangeably with the operator correpsonding to a linear algorithm that
	solves a system in $A$, on vector $b$, to error $\delta > 0$.
	In the more formal analysis, we will denote such corresponding linear
	operators using the symbol $Z$, with subscripts
	corresponding to the routine if appropriate.
	
	This operator/matrix based analysis of algorithms was
	first introduced in the analysis of
	recursive Chebyshev iteration by Spielman and Teng~\cite{SpielmanTengSolver:journal},
	with credits to the technique also attributed to Rohklin.
	It the advantage of simplifying analyses of multiple
	iterations of such algorithms, as we can directly measure
	Frobenius norm differences between such operators and the
	exact ones that they approximate.
	
	Under this correspondence, the goal of producing an algorithm
	that solves $Ax = b$ for any $b$ as input becomes equivalent
	to producing a linear operator $Z_{A}$
	that approximates $A^{-1}$, and then running it on the input $b$.
	For convenience, we also let the solver take as input a matrix 
	instead of a vector, in which case the output is the result of 
	solves against each of the columns of the input matrix. 
	
	The high-level description of our algorithm is in Figure~\ref{fig:solver}.
	To keep our algorithms as linear operators,
	we will ensure that the only approximate steps are
	from inverting matrices (where condition numbers naturally lead
	to matrix approximation errors), and in forming operators using
	fast convolution.
	We will specify explicitly in our algorithms when such
	round-off errors occur.
	
	\begin{figure}[p]
		\fbox{\parbox{\textwidth}{
				{\bf \textsc{BlockKrylov}(
					$\textsc{MatVec}_{A}( x, \delta)$:
					symmetric matrix given as implicit matrix vector muliplication access,
					$\alpha_{A}$: eigenvalue range/separation bounds for $A$ that also doubles as error threshold,
					$m$: Krylov step count
					)} 
				
				\begin{enumerate}
					
					\item (FORM KRYLOV SPACE) 
					\begin{enumerate}
						\item Set $s \leftarrow \lfloor n / m \rfloor - O(m)$,
						$h \leftarrow O(m^{2} \log(1 / \alpha_{A}))$.
						Let $G^S$ be an $n \times s$ random matrix with $G^S_{ij}$
						set to $\mathcal{N}(0, 1)$
						with probability $\frac{h}{n}$, and $0$ otherwise.
						
						\item (Implicitly)
						compute the block Krylov space
						\[
						K
						=
						\left[
						\begin{array}{c|c|c|c|c}
							G^S & A G^S & A^{2} G^S & \ldots & A^{m - 1} G^S
						\end{array}
						\right].
						\]
					\end{enumerate}
					
					\item \label{Step:InvertHankel} (SPARSE INVERSE)
					Use fast solvers for block Hankel matrices to
					obtain a solver for the matrix:
					\[
					M \leftarrow \left( A K\right)^{T} \left( AK\right),
					\]
					and in turn a solve to arbitrary error 
					which we denote $\textsc{Solve}_{M}(\cdot, \epsilon)$.
					
					\item (PAD and SOLVE) 
					\begin{enumerate}
						\item
						Let $r = n - ms$ denote the number of remaining columns.
						Generate a $n \times r$ dense Gaussian matrix $G$,
						use it to complete the basis as: $Q = [K | G]$.
						
						\item Compute the Schur complement of $(AQ)^T AQ$
						onto its last $r = n - ms$ entries
						(the ones corresponding to the columns of $G$) via the operation
						\[
						\left(AG\right)^T AG
						- \left(AG\right)^T
						\cdot
						AK
						\cdot
						\textsc{Solve}_{M}\left(\left(A K\right)^{T}AG, \alpha_{A}^{10m} \right)
						\]
						and invert this $r$-by-$r$ matrix. 
						
						\item Use the inverse of this Schur complement, as well as
						$\textsc{Solve}_{M}(\cdot, \epsilon)$ to obtain a solver
						for $Q^TQ$, $\textsc{Solve}_{Q^{T}Q}(\cdot, \epsilon)$.
					\end{enumerate}
					
					\item (SOLVE and UNRAVEL) Return the operator
					$
					Q \cdot 
					\textsc{Solve}_{(AQ)^{T}AQ} ( (AQ)^{T}x, \alpha_{A}^{10m} )
					$
					as an approximate solver for $A$.
				\end{enumerate}
		}}
		\caption{Pseudocode for block Krylov space algorithm:
			$\textsc{Solve}_{\cdot}(\cdot, \cdot)$ are operators corresponding
			to linear system solving algorithms whose formalization we discuss
			at the start of this section.
		}
		\label{fig:solver}
	\end{figure}
	
	Some of the steps of the algorithm require care for,
	efficiency, as well as tracking the number of words needed
	to represent the numbers.
	We assume the bounds on bit-complexity in the analysis
	(Section~\ref{sec:Analysis}) below, which is $\Otil(m)$
	when $\kappa = poly(n)$, and use this in the brief
	description of costs in the outline of the steps below.
	
	We start by perturbing the input matrix, resulting in a symmetric
	positive definite matrix where all eigenvalues are separated by $\alpha_A$.
	Then we explicitly form a Krylov matrix from sparse Random Gaussians:
	\begin{figure}
		\begin{center}
			\begin{tikzpicture}
				\draw (0,0) -- (7.4,0) -- (7.4,8) -- (0,8) -- (0,0);
				\draw (1.6,0) -- (1.6,8);
				\draw (3.2,0) -- (3.2,8);
				\draw (4.8,0) -- (4.8,8);
				\draw (5.8,0) -- (5.8,8);
				\node at (0.8, 4) (a) {$G^{S}$};
				\node at (2.4, 4) (a) {$AG^{S}$};
				\node at (4, 4) (a) {$A^{2}G^{S}$};
				\node at (6.6, 4) (a) {$A^{m-1}G^{S}$};
				\draw [decorate,decoration={brace,amplitude=10pt}]
				(0, 8) -- (1.6, 8) node (curly_bracket)[black,midway, yshift =- 0.3 cm] 
				{};
				\node at (0.8, 8.6) {$s$};
				\node at (5.4, 4) (blah) {$\ldots$};
				\draw [decorate,decoration={brace,amplitude=10pt}]
				(1.6, 8) -- (3.2, 8) node (curly_bracket)[black,midway, yshift =- 0.3 cm] 
				{};
				\node at (2.4, 8.6) {$s$};
				\draw [decorate,decoration={brace,amplitude=10pt}]
				(3.2, 8) -- (4.8, 8) node (curly_bracket)[black,midway, yshift =- 0.3 cm] 
				{};
				\node at (4, 8.6) {$s$};
				\draw [decorate,decoration={brace,amplitude=10pt}]
				(5.8, 8) -- (7.4, 8) node (curly_bracket)[black,midway, yshift =- 0.3 cm] 
				{};
				\node at (6.6, 8.6) {$s$};
				\draw [decorate,decoration={brace,amplitude=10pt}]
				(0, 0) -- (0, 8) node (curly_bracket)[black,midway, yshift =- 0.3 cm] 
				{};
				\node at (-0.6, 4) {$n$};
			\end{tikzpicture}
		\end{center}
		\caption{Randomized $m$-step Krylov Space Matrix
			with $n$-by-$s$ sparse Gaussian $G^{S}$ as starter.}
		\label{fig:KrylovMatrix}
	\end{figure}
	For any vector $u$, we can compute $A^{i} u$ from
	$A^{i - 1}u$ via a single matrix-vector multiplication in $A$.
	So computing each column of $K$ requires $O(nnz(A))$
	operations, each involving a length $n$ vector with
	words of length $\Otil(m)$.
	So we get the matrix $K$, as well as $AK$,
	in time 
	\[
	\Otil\left(nnz\left(A\right) \cdot n\cdot m\right).
	\]
	
	To obtain a solver for $AK$, we instead solve
	its Gram matrix $(AK)^T(AK)$.
	Each block of $K^TK$ has the form $(G^S)^{T} A^{i} G^S$
	for some $2 \leq i \leq 2m$,
	and can be computed by multiplying $(G^S)^{T}$ and $A^{i}G^S$.
	As $A^{i}G^{S}$ is an $n$-by-$s$ matrix,
	each non-zero in $G^{S}$ leads to a cost of $O(s)$
	operations involving words of length $\Otil(m)$.
	Then because we chose $G^{S}$ to have $\Otil(m^{3})$ non-zeros
	per column, the total number of non-zeros in $G^{S}$ is about
	$\Otil(s \cdot m^{3}) = \Otil(n m^2)$.
	This leads to a total cost (across the $m$ values of $i$) of:
	\[
	\Otil\left(n^{2} m^{3}\right).
	\]
	
	The key step is then Step~\ref{Step:InvertHankel}:
	a block version of the Conjugate Gradient method.
	It will be implemented using a recursive data structure
	based on the notion of displacement rank~\cite{KailathKM79, BitmeadA80}.
	To get a sense of why a faster algorithm may be possible,
	note that there are only $O(m)$ distinct blocks in the
	matrix $(AK)^T (AK)$.
	So a natural hope is to invert these blocks by themselves:
	the cost of (stable) matrix inversion~\cite{DemmelDH07},
	times the $\Otil(m)$ numerical word complexity, would then
	give a total of
	\[
	\Otil\left( m^2 s^{\omega} \right)
	=
	\Otil\left( m^2 \left( \frac{n}{m} \right)^{\omega} \right)
	=
	\Otil\left( n^{\omega} m^{\omega - 2} \right).
	\]
	Of course, it does not suffice to solve these $m$ $s$-by-$s$
	blocks independently.
	Instead, the full algorithm, as well as the $\textsc{Solve}_{M}$
	operator, is built from efficiently convolving such
	$s$-by-$s$ blocks with matrices using Fast Fourier Transforms.
	Such ideas can be traced back to the development of
	super-fast solvers for (entry-wise)
	Hankel/Toeplitz matrices~\cite{BitmeadA80,LabahnS92,XiXCB14}.
	
	Choosing $s$ and $m$ so that $n = sm$ would then give the
	overal running time,
	\textbf{assuming that we can bound the minimum singular value
		of $K$ by $\exp(-\Otil(m))$}.
	This is a major shortcoming of our analysis:
	we can only prove such a bound when $n - sm \geq \Omega(m)$.
	Its underlying cause is that rectangular semi-random matrices can
	be analyzed using $\epsilon$-nets, and thus are significantly
	easier to analyze than square matrices.
	
	This means we can only use $m$ and $s$ such that
	$n - ms = \Theta(m)$, and we need to pad $K$ with
	$n - ms$ columns to form a full rank, invertible, matrix.
	To this end we add $\Theta(m)$ dense Gaussian columns
	to $K$ to form $Q$,
	and solve the system $AQ$, and its associated Gram
	matrix $(AQ)^T(AQ)$ instead.
	These matrices are shown in Figure~\ref{fig:FullMatrix}.
	\begin{figure}
		\begin{center}
			\begin{tikzpicture}
				\draw (0,0) -- (6,0) -- (6,6) -- (0,6) -- (0,0);
				\draw (1.6,0) -- (1.6,6);
				\draw (3.2,0) -- (3.2,6);
				\draw (3.8,0) -- (3.8,6);
				\draw (5.4,0) -- (5.4,6);
				\node at (0.8, 3) (a) {$AG^{S}$};
				\node at (2.4, 3) (a) {$A^2G^{S}$};
				\node at (4.6, 3) (a) {$A^{m}G^{S}$};
				\node at (5.7, 3) (a) {$AG$};
				\draw [decorate,decoration={brace,amplitude=10pt}]
				(0, 6) -- (1.6, 6) node (curly_bracket)[black,midway, yshift =- 0.3 cm] 
				{};
				\node at (0.8, 6.6) {$s$};
				\draw [decorate,decoration={brace,amplitude=10pt}]
				(1.6, 6) -- (3.2, 6) node (curly_bracket)[black,midway, yshift =- 0.3 cm] 
				{};
				\node at (2.4, 6.6) {$s$};
				\draw [decorate,decoration={brace,amplitude=10pt}]
				(3.8, 6) -- (5.4, 6) node (curly_bracket)[black,midway, yshift =- 0.3 cm] 
				{};
				\node at (4.6, 6.6) {$s$};
				\draw [decorate,decoration={brace,amplitude=10pt}]
				(5.4, 6) -- (6, 6) node (curly_bracket)[black,midway, yshift =- 0.3 cm] 
				{};
				\node at (5.7, 6.6) {$\Theta(m)$};
				\draw [decorate,decoration={brace,amplitude=10pt}]
				(0, 0) -- (0, 6) node (curly_bracket)[black,midway, yshift =- 0.3 cm] 
				{};
				\node at (-0.6, 3) {$n$};
				
				\node at (3.5, 3) (blah) {$\ldots$};
				
				\draw (8,0) -- (8,6) -- (14,6) -- (14,0) -- (8,0);
				\node at (10,4) {$(AK)^TAK$};
				\node at (13,4) {$(AK)^TAG$};
				\node at (10,1) {$(AG)^TAK$};
				\node at (13,1) {$(AG)^TAG$};
				\draw (12,0) -- (12,6);
				\draw (8,2) -- (14,2);
				\draw [decorate,decoration={brace,amplitude=10pt}]
				(8, 2) -- (8, 6) node (curly_bracket)[black,midway, yshift =- 0.3 cm] 
				{};
				\node at (7, 4) {$ms$};
				\draw [decorate,decoration={brace,amplitude=10pt}]
				(8, 0) -- (8, 2) node (curly_bracket)[black,midway, yshift =- 0.3 cm] 
				{};
				\node at (7, 1) {$\Theta(m)$};
				\draw [decorate,decoration={brace,amplitude=10pt}]
				(8, 6) -- (12, 6) node (curly_bracket)[black,midway, yshift =- 0.3 cm] 
				{};
				\node at (10, 6.6) {$ms$};
				\draw [decorate,decoration={brace,amplitude=10pt}]
				(12, 6) -- (14, 6) node (curly_bracket)[black,midway, yshift =- 0.3 cm] 
				{};
				\node at (13, 6.6) {$\Theta(m)$};
				\draw [decorate,decoration={brace,amplitude=10pt,mirror}]
				(8, 0) -- (14, 0) node (curly_bracket)[black,midway, yshift =- 0.3 cm] 
				{};
				\node at (11, -0.6) {$n$};
			\end{tikzpicture}
		\end{center}
		\caption{Full matrix $AQ$ and its Associated Gram Matrix
			$(AQ)^T (AQ)$.
			Note that by our choice of parameters $m$ is much smaller than
			$s \approx n/m$.}
		\label{fig:FullMatrix}
	\end{figure}
	
	Because these additional columns are entry-wise i.i.d,
	its minimum singular value can be analyzed using
	existing tools~\cite{SankarST03:journal,TaoV10},
	namely lower bounding the dot product of a random vector
	against any normal vector.
	Thus, we can lower bound the minimum singular value of
	$Q$, and in turn $AQ$, by $\exp(-\Otil(m))$ as well.
	
	This bound in turn translates to the minimum
	eigenvalue of the Gram matrix of $AQ$, $(AQ)^T(AQ)$.
	Partitioning its entries by those from $K$ and $G$
	gives four blocks:
	one $(sm)$-by-$(sm)$ block
	corresponding to $(AK)^T(AK)$,
	one $\Theta(m)$-by-$\Theta(m)$ block corresponding to
	$(AG)^T(AG)$, and then the cross terms.
	To solve this matrix, we apply block-Gaussian elimination,
	or equivalently, form the Schur complement onto the
	$\Theta(m)$-by-$\Theta(m)$ corresponding to the columns
	in $AG$.
	
	To compute this Schur complement, it suffices to solve
	the top-left block (corresponding to $(AK)^T(AK)$)
	against every column in the cross term.
	As there are at most $\Theta(m) < s$ columns, this solve
	cost comes out to less than $\Otil(s^{\omega} m)$ as well.
	We are then left with a $\Theta(m)$-by-$\Theta(m)$ matrix,
	whose solve cost is a lower order term.
	
	So the final solver operator costs
	\[
	\Otil\left(nnz(A)\cdot n m + n^{2} m^{3} + n^{\omega} m^{2 - \omega}\right)
	\]
	which leads to the final running time by choosing $m$ to balance the terms.
	This bound falls short of the ideal case
	given in Equation~\ref{eq:BlockKrylovGoal}
	mainly due to the need for a denser $B$
	to the well-conditionedness of the Krylov space matrix.
	Instead of $O(n)$ non-zeros total, or about $O(m)$ per column,
	we need $poly(m)$ non-zero variables per column to ensure the
	an $\exp(-O(m))$ condition number of the block Krylov space matrix $K$.
	This in turn leads to a total cost of $O(n \cdot nnz \cdot poly(m))$
	for computing the blocks of the Hankel matrix, and a worse trade
	off when summed against the $\frac{n ^{\omega}}{m^{\omega - 2}}$ term.
	
	\section{Outline of Analysis}
	\label{sec:Analysis}

	In this section we outline our analysis of the algorithm
	through formal theorem statements.
	We start by formalizing our tracking of convergence, and
	the tracking of errors and roundoff errors.

	\subsection{Preliminaries}
	\label{subsec:Preliminares}

	We will use capital letters for matrices,
	lower case letters for vectors and scalars.
	All subscripts are for indexing into entries
	of matrices and vectors, and superscripts are for 
	indexing into entries of a sequence.
	Our notation is summarized in Table~\ref{table:Notations}
	at the end of the paper.
	
	\paragraph{Norms and Singular Values.}
	Our convergence bounds are all in terms of the Euclidean,
	or $\ell_{2}$ norms.
	For a length $n$ vector $x$, the norm of $x$ is given by
	$\norm{x} = \sqrt{\sum_{1 \leq i \leq n} x_i^2}$.
	Similarly, for a matrix $M$, the norm of its entries treated
	as a vector is known as the Frobenius norm, and we have
	\[
	\norm{M}_{F}
	=
	\sqrt{\sum_{ij} M_{ij}^2}
	=
	\sqrt{\textsc{Trace}\left(M^TM\right)}.
	\]
	
	We will also use $\normi{\cdot}$ to denote entry-wise
	norms over a matrix, specifically $\norm{M}_{\infty}$
	to denote the max magnitude of an entry in $M$.
	Note that $\norm{M}_{F} = \normi{M}_{2}$, so we have
	$\normi{M}_{\infty} \leq \norm{M}_{F} \leq n \normi{M}_{\infty}$.
	
	The minimum and maximum singular values of a matrix $M$
	are then defined as the min/max norms of its product against
	a unit vector:
	\[
	\sigma_{\min} \left( M \right)
	=
	\min_{x} \frac{\norm{Mx}_2}{\norm{x}_2}
	\qquad
	\sigma_{\max} \left( M \right)
	=
	\max_{x} \frac{\norm{Mx}_2}{\norm{x}_2},
	\]
	and the condition number of $M$ is defined
	as $\kappa(M) = \sigma_{\max}(M) / \sigma_{\min}(M)$.
	
	Bounds on the minimum singular value allows us to
	transfer perturbation errors to it to its inverse.
	
	\begin{lemma}
		\label{lem:ErrorInvert}
		If $M$ is a full rank square matrix with min and max singular values
		in the range $[\sigma_{\min}, \sigma_{\max}]$, and $\Mtil$
		is some approximation of it such that
		$\norm{\Mtil - M}_{F} \leq \epsilon$
		for some $\epsilon < \sigma_{\min} / 2$, then
		\begin{enumerate}
			\item All singular values in $\Mtil$ are in the range
			$[ \sigma_{\min} - \epsilon, \sigma_{\max} + \epsilon]$, and
			\item The inverse of $\Mtil$ is close to the inverse of $M$:
			\[
			\norm{\Mtil^{-1} - M^{-1}}_F
			\leq
			10 \sigma_{\min}^{-2} \epsilon .
			\]
		\end{enumerate}
	\end{lemma}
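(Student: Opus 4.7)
The plan is to deduce both parts from standard singular-value perturbation together with the resolvent identity $\tilde M^{-1} - M^{-1} = -\tilde M^{-1}(\tilde M - M) M^{-1}$. The only mild subtlety is that the hypothesis controls the Frobenius norm of the perturbation, whereas the spectral effects are naturally bounded in operator norm; this is handled by the standard inequality $\|X\|_{\mathrm{op}} \leq \|X\|_F$.

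For part 1, I would invoke Weyl's inequality for singular values, which states that for any matrices $M$ and $\tilde M$ of the same size,
\[
\left| \sigma_i(\tilde M) - \sigma_i(M) \right|
\leq
\left\| \tilde M - M \right\|_{\mathrm{op}}
\leq
\left\| \tilde M - M \right\|_F
\leq \epsilon
\]
for every index $i$. Applying this to $i$ indexing the smallest and largest singular values immediately gives the claimed range $[\sigma_{\min} - \epsilon, \sigma_{\max} + \epsilon]$ for $\sigma(\tilde M)$. In particular, since $\epsilon < \sigma_{\min}/2$, the matrix $\tilde M$ is itself invertible and $\sigma_{\min}(\tilde M) \geq \sigma_{\min}/2$.

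For part 2, I would start from the algebraic identity
\[
\tilde M^{-1} - M^{-1}
=
\tilde M^{-1} \left( M - \tilde M \right) M^{-1},
\]
obtained by multiplying $M - \tilde M$ on the left by $\tilde M^{-1}$ and on the right by $M^{-1}$. Taking Frobenius norms and using the submultiplicative bound $\|ABC\|_F \leq \|A\|_{\mathrm{op}} \|B\|_F \|C\|_{\mathrm{op}}$ yields
\[
\left\| \tilde M^{-1} - M^{-1} \right\|_F
\leq
\left\| \tilde M^{-1} \right\|_{\mathrm{op}}
\cdot
\left\| \tilde M - M \right\|_F
\cdot
\left\| M^{-1} \right\|_{\mathrm{op}}.
\]
Now $\|M^{-1}\|_{\mathrm{op}} = 1/\sigma_{\min}(M) = 1/\sigma_{\min}$, and by part 1, $\|\tilde M^{-1}\|_{\mathrm{op}} = 1/\sigma_{\min}(\tilde M) \leq 1/(\sigma_{\min} - \epsilon) \leq 2/\sigma_{\min}$ since $\epsilon < \sigma_{\min}/2$. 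Plugging in gives $\|\tilde M^{-1} - M^{-1}\|_F \leq 2 \sigma_{\min}^{-2} \epsilon$, which is comfortably below the stated $10 \sigma_{\min}^{-2} \epsilon$.

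There is no real obstacle here; the only thing to be careful about is verifying the resolvent identity (it is just $\tilde M^{-1}(M M^{-1}) - (\tilde M^{-1} \tilde M) M^{-1} = \tilde M^{-1}(M - \tilde M) M^{-1}$) and tracking that the Frobenius/operator norm bookkeeping sits in the right slots. The slack between $2$ and $10$ in the final constant presumably accommodates looser forms of the same bound used elsewhere in the paper.
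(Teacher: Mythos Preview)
Your proposal is correct and matches the paper's proof essentially line for line: the paper proves part 1 directly from the variational characterization of singular values (which is exactly Weyl's inequality), and for part 2 it uses the same resolvent identity $\tilde M^{-1} - M^{-1} = M^{-1}(M - \tilde M)\tilde M^{-1}$ followed by the same submultiplicative norm bound, also obtaining the constant $2$ rather than $10$.
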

	
	\begin{proof}
		The bound on singular values follows from the
		norm minimization/maximization definition of singular values.
		Specifically, we get that for a unit vector $x$,
		\[
		\abs{\norm{\Mtil x}_2 - \norm{M x}_2}
		\leq
		\norm{\left(\Mtil - M \right) x }_2
		\leq
		\norm{\Mtil - M}_2 \norm{x}_2
		\leq
		\epsilon,
		\]
		which means all singular values can change by at most $\epsilon$.
		
		Note that this implies that $\Mtil$ is invertible.
		For the bounds on inverses, note that
		\[
		\Mtil^{-1} - M^{-1}
		=
		M^{-1} \left( M \Mtil^{-1} - I \right)
		=
		M^{-1} \left( M  - \Mtil \right) \Mtil^{-1}.
		\]
		So applying bounds on norms, as well as
		$\norm{\Mtil^{-1}}_2 \leq (\sigma_{\min} - \epsilon)^{-1}
		\leq 2 \sigma_{\min}^{-1}$ gives
		\[
		\norm{\Mtil^{-1} - M^{-1}}_{F}
		\leq
		\norm{M^{-1}}_2
		\norm{M - \Mtil}_{F}
		\norm{\Mtil^{-1}}_2
		\leq
		2 \sigma_{\min}^{-2} \epsilon.
		\]
	\end{proof}
	
	\paragraph{Error Accumulation.}
	Our notion of approximate operators also compose
	well with errors.
	
	\begin{lemma}
		\label{lem:ErrorCompose}
		If $Z^{(1)}$ and $Z^{(2)}$ are linear operators with
		(algorithmic) approximations $\Ztil^{(1)}$ and $\Ztil^{(2)}$
		such that for some $\epsilon < 0.1$, we have
		\[
		\norm{Z^{\left( 1 \right)} - \Ztil^{\left( 1 \right)}}_{F},
		\norm{Z^{\left( 2 \right)} - \Ztil^{\left( 2 \right)}}_{F}
		\leq
		\epsilon
		\]
		then their product satisfies
		\[
		\norm{Z^{\left( 1 \right)}Z^{\left( 2 \right)} -
			\Ztil^{\left( 1 \right)}\Ztil^{\left( 2 \right)}}_{F}
		\leq
		10 \epsilon \max\left\{1, \norm{Z^{\left(1\right)}}_{2}, \norm{Z^{\left(2\right)}}_{2}\right\}.
		\]
	\end{lemma}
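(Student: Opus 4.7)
The plan is to use a standard telescoping decomposition followed by submultiplicativity of the Frobenius norm. Specifically, I would first write
\[
Z^{(1)}Z^{(2)} - \Ztil^{(1)}\Ztil^{(2)}
= \left(Z^{(1)} - \Ztil^{(1)}\right) Z^{(2)}
+ \Ztil^{(1)}\left(Z^{(2)} - \Ztil^{(2)}\right),
\]
which isolates each source of error against a single ``surviving'' factor, and then apply the triangle inequality to the Frobenius norm of the left hand side.

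Next I would bound each of the two resulting summands using the mixed submultiplicative inequality $\norm{AB}_F \leq \norm{A}_F \norm{B}_2$ (and its transpose). The first term is immediately bounded by $\norm{Z^{(1)} - \Ztil^{(1)}}_F \norm{Z^{(2)}}_2 \leq \epsilon \norm{Z^{(2)}}_2$. For the second term, I first bound $\norm{\Ztil^{(1)}}_2$ by the triangle inequality, using $\norm{\Ztil^{(1)}}_2 \leq \norm{Z^{(1)}}_2 + \norm{\Ztil^{(1)} - Z^{(1)}}_2 \leq \norm{Z^{(1)}}_2 + \epsilon$, since the spectral norm is dominated by the Frobenius norm. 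This yields an upper bound of $(\norm{Z^{(1)}}_2 + \epsilon)\epsilon$ on the second summand.

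Combining, the total error is at most $\epsilon \norm{Z^{(2)}}_2 + \epsilon \norm{Z^{(1)}}_2 + \epsilon^2$. Letting $M = \max\{1, \norm{Z^{(1)}}_2, \norm{Z^{(2)}}_2\}$ and using the hypothesis $\epsilon < 0.1$, we get at most $2\epsilon M + \epsilon^2 \leq 2\epsilon M + 0.1\epsilon M \leq 10\epsilon M$, giving the claim. There is no real obstacle here; this is a routine perturbation calculation, and the only care needed is to avoid picking up a second-order $\epsilon^2$ term that scales with $\norm{Z^{(1)}}_2 \norm{Z^{(2)}}_2$, which is why the telescoping is done with $\Ztil^{(1)}$ (and not $Z^{(1)}$) on the outside of the second summand.
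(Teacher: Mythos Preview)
Your proof is correct and follows essentially the same approach as the paper: expand the difference of products into error terms, bound each via the mixed inequality $\norm{AB}_F \leq \norm{A}_F\norm{B}_2$, and absorb the $\epsilon^2$ term using $\epsilon < 0.1$. The paper uses the three-term expansion $E_1 Z^{(2)} + Z^{(1)} E_2 - E_1 E_2$ rather than your two-term telescoping, but the resulting bound $2\epsilon M + \epsilon^2$ is identical.
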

	
	\begin{proof}
		Expanding out the errors gives
		\begin{multline*}
			Z^{\left( 1 \right)}Z^{\left( 2 \right)} -
			\Ztil^{\left( 1 \right)}\Ztil^{\left( 2 \right)}
			=
			\left( Z^{\left( 1 \right)} - \Ztil^{\left( 1 \right)} \right)
			Z^{\left( 2 \right)}
			+
			\left( Z^{\left( 2 \right)} - \Ztil^{\left( 2 \right)} \right)
			Z^{\left( 1 \right)}\\
			+
			\left( Z^{\left( 1 \right)} - \Ztil^{\left( 1 \right)} \right)
			\left( Z^{\left( 2 \right)} - \Ztil^{\left( 2 \right)} \right)
		\end{multline*}
		
		The terms involving the original matrix against the
		error gets bounded by the error times the norm
		of the original matrix.
		For the cross term, we have
		\[
		\norm{\left( Z^{\left( 1 \right)} - \Ztil^{\left( 1 \right)} \right)
			\left( Z^{\left( 2 \right)} - \Ztil^{\left( 2 \right)} \right)}
		\leq
		\norm{Z^{\left( 1 \right)} - \Ztil^{\left( 1 \right)}}_{F}
		\cdot
		\norm{Z^{\left( 2 \right)} - \Ztil^{\left( 2 \right)}}_{2}
		\leq
		\epsilon^2,
		\]
		which along with $\epsilon < 0.1$ gives the overall bound.
	\end{proof}
	
	\paragraph{Randomization and Normal Distributions}
	\label{subsec:Random}
	
	Our algorithms rely on randomly perturbing the input matrices
	to make them non-degenerate, and much of our analysis revolving
	analyzing the effect of such perturbations on the eigenvalues.
	We make use of standard notions of probability, in particular,
	the union bound, which states that for any two events $E_1$
	and $E_2$, $\prob{}{E_1 \cup E_2} \leq \prob{}{E_1} + \prob{}{E_2}$.
	
	Such a bound means that it suffices to show that the failure
	probability of any step of our algorithm is $n^{-c}$ for some
	constant $c$.
	The total number of steps is $poly(n)$, so unioning over
	such probabilities still give a success probability of
	at least $1 - n^{-c + O(1)}$.
	
	We will perturb our matrices using Gaussian random variables.
	These random variables $N(0, \sigma)$ have density function
	$g(x) = \frac{1}{\sigma \sqrt{2 \pi}} e^{-x^2 / 2\sigma^2}$.
	They are particularly useful for showing anti-concentration
	because the sum of Gaussians is another Gaussian, with variance
	equalling to the sum of squares, or $\ell_2^2$-norm,
	of the variance terms.
	That is, for a vector $x$ and a (dense) Gaussian vector with
	entry-wise i.i.d. $N(0, 1)$ normal random variables,
	aka. $g \sim N(0, 1)^{n}$, we have $x^T g \sim N(0, \norm{x}_2)$.
	
	The density function of Gaussians means that their magnitude
	exceed $n$ with probability at most $O(\exp(-n^2))$.
	This probability is much smaller than the $n^{-c}$ failure
	probabilities that we want, so to simplify presentation we will
	remove it at the start.
	\begin{claim}
		\label{claim:GaussianMax}
		We can analyze our algorithm conditioning on any normal random
		variable with variance $\sigma$, $N(0, \sigma)$, having magnitude
		at most $n \sigma$.
	\end{claim}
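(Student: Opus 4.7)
The plan is to invoke the standard Gaussian tail bound and then absorb the resulting failure probability via a union bound over all random draws made by the algorithm. Concretely, for $X \sim N(0,\sigma)$ one has the classical estimate
\[
\prob{}{\abs{X} > t \sigma}
\;\leq\;
2 \exp\left( -t^2 / 2 \right),
\]
obtained by integrating the density $g(x) = \frac{1}{\sigma\sqrt{2\pi}} e^{-x^2/(2\sigma^2)}$ from $t\sigma$ to infinity (for instance by comparing to $\int_{t\sigma}^\infty \frac{x}{t\sigma} g(x)\, dx$, which admits a closed form). Setting $t = n$ yields $\prob{}{\abs{X} > n \sigma} \leq 2 \exp(-n^2/2)$.

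Next I would count the Gaussian random variables used anywhere in the algorithm of Figure~\ref{fig:solver}. The sparse starter matrix $G^S$ has at most $ns$ entries, the dense padding $G$ has $nr$ entries with $r = \Otil(m)$, and all further randomness used in the analysis (e.g.\ the perturbation of $A$ discussed in Sections~\ref{sec:PerturbA} and~\ref{sec:RandKrylov}) contributes at most $\mathrm{poly}(n)$ additional Gaussian samples. Let $N = n^{O(1)}$ be an upper bound on the total number of Gaussian draws.

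Applying the union bound gives
\[
\prob{}{\text{some drawn } N(0,\sigma) \text{ exceeds } n\sigma \text{ in magnitude}}
\;\leq\;
2 N \exp\left( -n^2 / 2 \right)
\;=\;
n^{O(1)} \exp\left( -n^2 / 2 \right),
\]
which is $o(n^{-c})$ for every constant $c$. Hence this event is dominated by all of the other $n^{-c}$-type failure probabilities that we will tolerate over the course of the analysis, and conditioning on its complement changes the distribution of each $N(0,\sigma)$ only on an event of measure $o(n^{-c})$. Throughout the remainder of the paper we may therefore assume without further comment that every Gaussian random variable drawn with variance $\sigma$ has magnitude at most $n\sigma$. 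There is no real obstacle here: the only thing to watch is that the count $N$ of Gaussians remain polynomial in $n$, which is transparent from the pseudocode, so the claim follows immediately.
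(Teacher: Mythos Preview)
Your proposal is correct and matches the paper's reasoning. The paper does not give a separate proof of this claim; it is stated immediately after the sentence noting that a Gaussian exceeds magnitude $n$ with probability $O(\exp(-n^2))$, which is far below any $n^{-c}$ threshold, and the later Fact about Gaussian tails together with the remark that only $O(n^2)$ such variables are used. Your write-up simply makes this union-bound argument explicit, so there is nothing further to add.
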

	
	\paragraph{Tracking Word Length.}
	
	The numerical rounding model that we will use is fixed point
	precision.
	The advantage of such a fixed point representation is that it
	significantly simplifies the tracking of errors during
	additions/subtractions.
	The need to keep exact operators means we cannot omit
	intermediate digits.
	Instead, we track both the number of digits before and
	after the decimal point.
	
	The number of trailing digits, or words after the decimal
	point, compound as follows:
	\begin{enumerate}
		\item Adding two numbers with $L_1$ and $L_2$ words after
		the decimal point each results in a number with $\max\{L_1, L_2\}$
		words after the decimal point.
		\item Multiplying two numbers with $L_1$ and $L_2$ words after
		the decimal point each results in a number with $L_1 + L_2$
		words after the decimal point.
	\end{enumerate}
	As $\max\{L_1, L_2\} \leq L_1 + L_2$ when $L_1$ and $L_2$
	are non-negative,
	we will in general assume that when we multiply matrices
	with at most $L_1$ and $L_2$ words after the decimal point,
	the result has at most $L_1 + L_2$ words after the decimal point.
	In particular, if $Z$ is an operator with $L_Z$ words
	after the decimal point, and its input $B$ has $L_B$
	words after the decimal point, the output has at most
	$L_Z + L_B$ words after the decimal point.
	
	Note that both of these bounds are for exact computations.
	The only round off errors come from round-off errors by dropping
	some of the digits, as the matrices themselves are created.
	
	On the other hand, we need to bound the maximum magnitude
	of our operators.
	The number of digits before the decimal point is given by bounds
	on the magnitude of the numbers themselves.
	Such bounds also propagate nicely along multiplications.
	
	\begin{lemma}
		\label{lem:MaxMagnitude}
		If the maximum magnitude of entries in two matrices $Y$ and $Z$
		with dimension at most $n$ are both at most $\alpha$,
		then all entries in $YZ$ have magnitude at most $n  \alpha^2$ as well.
	\end{lemma}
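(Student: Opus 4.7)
The plan is to apply the triangle inequality entry-wise. I would fix an arbitrary index pair $(i,j)$ and expand
\[
\left(YZ\right)_{ij} = \sum_{k} Y_{ik} Z_{kj},
\]
where the sum ranges over the common (inner) dimension, which is at most $n$ by hypothesis. Since each factor has magnitude at most $\alpha$, every summand has magnitude at most $\alpha^{2}$, and there are at most $n$ summands, so the triangle inequality yields
\[
\abs{(YZ)_{ij}} \leq \sum_{k} \abs{Y_{ik}} \cdot \abs{Z_{kj}} \leq n \alpha^{2}.
\]
Taking the max over $(i,j)$ gives the stated bound on $\normi{YZ}_{\infty}$.

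There is no real obstacle here; the lemma is included purely as a bookkeeping tool for later sections where magnitudes propagate through chains of matrix products (for example, when bounding entries of $A^m B$ or of the block Hankel operators used in the solver). The only things to be careful about in the writeup are that ``dimension at most $n$'' is what controls the length of the inner sum, and that the bound is stated in terms of $\normi{\cdot}_{\infty}$ (entrywise max) rather than spectral norm, so we do not need Cauchy--Schwarz or any singular value considerations. I would also note in passing that iterating the inequality gives $\normi{Y_{1} Y_{2} \cdots Y_{k}}_{\infty} \leq n^{k-1} \alpha^{k}$ when each $Y_{t}$ has entries bounded by $\alpha$, which is the form in which this fact is actually used when estimating the number of digits before the decimal point needed to store powers of $A$.
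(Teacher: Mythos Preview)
Your proof is correct and essentially identical to the paper's: both expand $(YZ)_{ij}$ as the inner-dimension sum, apply the triangle inequality, and bound each of the at most $n$ terms by $\alpha^{2}$. The additional remarks about iteration are fine but not part of the paper's proof of this lemma.
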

	
	\begin{proof}
		\[
		\abs{\left( YZ \right)_{ij}}
		=
		\abs{\sum_{k} Y_{ik} Z_{kj}}
		\leq
		\sum_{k} \abs{Y_{ik}} \abs{Z_{kj}}
		\leq
		n \alpha^2
		\]
	\end{proof}
	
	Throughout our analyses, we will often rescale the matrices so that
	their max magnitudes are $n^{-2}$.
	This allows us to absorb any constant factor increases in
	magnitudes from multiplying these matrices by Lemma~\ref{lem:MaxMagnitude}
	because $(c \cdot n^{-2})^2 \leq n^{-2}$.
	
	Finally, by doing FFT based fast multiplications
	for all numbers involved~\cite{CormenLRS09:book,HarveyH19},
	we can multiple two numbers with an $O(\log{n})$ factor
	in their lengths.
	This means that when handling two matrices with $L_1$
	and $L_2$ words after the decimal point, and whose
	maximum magnitude is $\mu$, the overhead caused by the
	word-lengths of the numbers involved is $\Otil(\mu + L_1 + L_2$)

	\paragraph{Random Variables and Probability}
	
	For our perturbations we use standard Gaussian $\mathcal{N}(0,1)$ random variables. 
	\begin{fact}
		For $x \sim \mathcal{N}(0,1)$, we have $\Pr(\abs{x}\ge t)\le \frac{2}{t\sqrt{2\pi}}e^{-t^2/2}$.
	\end{fact}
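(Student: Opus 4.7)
The plan is to prove this standard Gaussian tail bound by a direct calculation from the density. I will first use symmetry of the standard normal about $0$ to write $\Pr(\abs{x} \geq t) = 2 \Pr(x \geq t)$, and then express the right tail probability as the integral $\frac{2}{\sqrt{2\pi}} \int_t^\infty e^{-y^2/2}\, dy$.

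The key trick is that the integrand $e^{-y^2/2}$ has no elementary antiderivative, but $y e^{-y^2/2}$ does, namely $-e^{-y^2/2}$. Since $y \geq t$ on the range of integration, we have $1 \leq y/t$, so I can bound
\[
\int_t^\infty e^{-y^2/2}\, dy \;\leq\; \frac{1}{t} \int_t^\infty y\, e^{-y^2/2}\, dy \;=\; \frac{1}{t}\, e^{-t^2/2},
\]
the last equality by direct evaluation of the antiderivative at the endpoints. Plugging this back in and keeping track of the factor $2/\sqrt{2\pi}$ from symmetry and the normalization of the density yields exactly the claimed inequality.

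There is essentially no obstacle in this proof — it is a textbook Mills-ratio-type estimate, and the only care required is to ensure the constants $2$ and $\sqrt{2\pi}$ are handled correctly in the final bound.
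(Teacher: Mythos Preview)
Your proof is correct; this is the standard Mills-ratio argument and all constants are handled properly. The paper itself states this as a \emph{Fact} without proof, treating it as a well-known estimate, so there is no alternative argument to compare against. One minor implicit assumption worth noting is that $t > 0$ (needed for the step $1 \le y/t$ and for the right-hand side to be finite), but this is the only regime in which the bound is interesting.
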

	Thus, with probability at least $1-\exp(-n/2)$, a standard Gaussian variable is bounded by $\sqrt{n}$. We will use $O(n^2)$ such variables and condition on the event that all their norms are bounded by $\sqrt{n}$. 
	
	\subsection{Main Technical Ingredients}
	\label{subsec:Core}
	
	The main technical components of the analysis can be summarized as follows:
	
	\begin{enumerate}
		\item $A$ can be perturbed so that its eigenvalues 
		are separated (Theorem~\ref{thm:PerturbA}, Section~\ref{sec:PerturbA}).
		\item Such a separation implies a well-conditioned Krylov space,
		when it is initialized with sparse random Gaussian vectors.
		(Theorem~\ref{thm:RandKrylov}, Section~\ref{sec:RandKrylov})
		\item This Krylov matrix can be solved efficiently
		using a combination of low displacement rank solvers and fast
		convolutions (Theorem~\ref{thm:Solver}, Section~\ref{sec:Solver}).
		\item The last few rows/columns can be solved efficiently
		via the Schur complement (Lemma~\ref{lem:PadAndSolve}, Section~\ref{sec:Pad}).
	\end{enumerate}
	
	\paragraph{Anti-Concentration of semi-random matrices.}
	A crucial part of our analysis is bounding the spectrum of semi-random matrices.
	Unfortunately, the highly developed literature on spectral properties of random matrices assumes independent entries or independent columns, which no longer hold in the semi-random case of $K$.
	However, getting tight estimates is not important for us (the running time is affected only by the logarithm of the gap/min value). So we adapt methods from random matrix theory to prove sufficient anti-concentration. 
	
	Specifically, after symmetrizing the potentially asymmetric input $A$
	by implicitly generating the operator $A^TA$, we need to bound
	(1) the minimum eigenvalue gap of the coefficient matrix after
	perturbation by a symmetric sparse matrix and
	(2) the minimum singular value of the block Krylov matrix constructed
	by multiplying with a sparse random matrix.
	
	The first step of showing eigenvalue separation is needed
	because if $A$ has a duplicate eigenvalue, the resulting
	Krylov space in it has rank at most $n - 1$.
	We obtain such a separation by perturbing the matrix randomly:
	its analysis follows readily from recent results on separations of
	eigenvalues in random matrices by Luh and Vu~\cite{LuhV18}.
	In Section~\ref{sec:PerturbA}, we show the following separation bound.
	
	\begin{restatable}{theorem}{PerturbA}
		\label{thm:PerturbA}
		For any $n \times n$ symmetric positive definite matrix $\overline{A}$ with:
		\begin{enumerate}
			\item entries at most $1/n$,
			\item eigenvalues at least $1 / \kappa$ for some $\kappa \geq n^{3}$,
		\end{enumerate}
		and any probability where
		\[
		p \geq \frac{300 \log{\kappa} \log{n}}{n}
		\]
		the symmetrically random perturbed matrix $A$ defined as
		\[
		A_{ij} = A_{ji}
		\defeq
		\begin{cases}
			\bar{A}_{ij} + \frac{1}{n^{2} \kappa} \mathcal{N}\left( 0, 1 \right)
			& \qquad \text{w.p. $p$},\\
			\bar{A}_{ij} & \qquad \text{w.p. $1 - p$},
		\end{cases}
		\]
		with probability at least $1 - n^{-10}$
		has all eigenvalues separated by at least $\kappa^{-5 \log{n}}$.
	\end{restatable}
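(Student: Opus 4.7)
I would write $A = \bar{A} + E$ where $E$ is the symmetric random matrix with each upper-triangular entry equal to $\frac{1}{n^2 \kappa}\mathcal{N}(0,1)$ independently with probability $p$ (and $0$ otherwise), and then invoke the eigenvalue separation framework of Luh--Vu for sparse random symmetric matrices. The goal becomes upper bounding
\[
\Pr\bigl[\min_{i \neq j} |\lambda_i(A) - \lambda_j(A)| < \kappa^{-5 \log n}\bigr] \leq n^{-10},
\]
which by a union bound over the $\binom{n}{2}$ index pairs reduces to showing that for every fixed pair $(i,j)$, the gap $|\lambda_i(A) - \lambda_j(A)|$ is below $\kappa^{-5\log n}$ with probability at most $n^{-12}$.

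For a fixed pair, I would analyze the gap via first-order eigenvalue perturbation. If $v_i, v_j$ are unit eigenvectors of $A$, then $\partial(\lambda_i - \lambda_j)/\partial E_{ab}$ equals $(v_i v_i^T - v_j v_j^T)_{ab}$ (with the appropriate factor of $2$ for off-diagonal entries). Hence, conditional on the random support $S \subseteq \binom{[n]}{2}$ of $E$, the variable $\lambda_i(A) - \lambda_j(A)$ locally behaves like a Gaussian of variance on the order of $\frac{1}{n^4\kappa^2}\sum_{(a,b)\in S}\bigl[(v_i)_a(v_i)_b - (v_j)_a(v_j)_b\bigr]^2$. The matrix $v_i v_i^T - v_j v_j^T$ has Frobenius norm $\sqrt{2}$, and a standard concentration argument on the random support $S$ (whose expected size is $p\binom{n}{2} = \Omega(n \log \kappa \log n)$) combined with an $\epsilon$-net over candidate pairs of orthonormal eigenvectors shows that the restricted Frobenius norm on $S$ is at least $\Omega(p)$ with probability $1 - \exp(-\Omega(n))$. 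Gaussian anti-concentration then yields
\[
\Pr\bigl[|\lambda_i(A) - \lambda_j(A)| \leq \delta\bigr] \;\leq\; O\!\left(\frac{\delta \, n^2 \kappa}{\sqrt{p}}\right) \cdot n^{O(1)},
\]
and setting $\delta = \kappa^{-5\log n}$ makes the bound at most $n^{-12}$, since $\kappa \geq n^3$ and the exponent $5\log n$ on $\kappa$ far dominates any $n^{O(1)}$ factor.

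The main obstacle is that the first-order formula is only valid in a small neighborhood, so the $\epsilon$-net step has to be carried out robustly: we must show that no adversarial configuration of eigenvectors inherited from $\bar{A}$ can align with the random sparsity pattern $S$ to kill the Gaussian variance. This is exactly what the Luh--Vu machinery is engineered to handle for sparse symmetric random matrices. Crucially, the additive shift by the deterministic $\bar{A}$ introduces no correlations in $E$ and does not reduce the randomness available, so their argument (together with the density lower bound $p \geq 300 \log\kappa \log n / n$) can be applied essentially as a black box to finish the proof. The hypotheses on $\bar A$ (small entries, $\kappa \geq n^3$) enter only to ensure that the failure probability $n^{-12}$ per pair and the target gap $\kappa^{-5\log n}$ are mutually consistent in the final arithmetic.
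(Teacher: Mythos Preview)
Your outline has a genuine gap at the step where you linearize the eigenvalue gap. The formula $\partial(\lambda_i-\lambda_j)/\partial E_{ab}=(v_iv_i^T-v_jv_j^T)_{ab}$ involves the eigenvectors $v_i,v_j$ of $A=\bar A+E$, which themselves depend on \emph{all} the randomness in $E$. So the quantity you are trying to anti-concentrate is not a Gaussian in the entries $E_{ab}$; it is a nonlinear function of them. Your proposed remedy---an $\epsilon$-net over pairs of orthonormal vectors---does not rescue this, because for a fixed pair $(u,w)$ the event ``$u^TAu-w^TAw$ is small'' is not the event ``$\lambda_i-\lambda_j$ is small''; one would additionally need $(u,w)$ to be near-eigenvectors of the \emph{random} $A$, and encoding that constraint in the net brings back the dependence you were trying to remove. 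The claim that Luh--Vu handles this ``essentially as a black box'' is also not accurate: their argument (and Nguyen--Tao--Vu before them) is built precisely to avoid this circularity, and the paper explicitly notes that modifications are needed to accommodate the deterministic shift $\bar A$.

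The paper's route is structurally different. It uses Cauchy interlacing: for any index $k$, the eigenvalues of the principal minor $A_{[n]\setminus k,[n]\setminus k}$ interlace those of $A$, so $\lambda_{i+1}(A)-\lambda_i(A)\ge|\lambda_i(A)-\lambda_i(A-)|$, and the latter is bounded below by $|v(A,i)_k|\cdot|v(A-,i)^T A_{[n]\setminus k,\,k}|$ (Lemma~\ref{lem:EigenGap}). The crucial point is that the eigenvector $v(A-,i)$ of the minor is \emph{independent} of the $k$th column $A_{[n]\setminus k,\,k}$, so the dot product has genuine fresh Gaussian randomness. What remains is to ensure this dot product is not accidentally small, which requires that $v(A-,i)$ be spread out; the paper proves this separately (Lemma~\ref{lem:EigenvectorDense}) via a layered $\epsilon$-net showing every eigenvector has $\Omega(n/\log n)$ entries of magnitude at least $\kappa^{-O(\log n)}$. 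The interlacing decoupling is the idea your proposal is missing.
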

	
	Given this separation, we show that a random $n$-by-$s$ B gives
	a $m$-step Krylov space matrix, as long as $n - ms = \Omega(m)$.
	Furthermore, we pick this $B$ to be sparse in the columns, so we
	can quickly compute $B^T A^{i} B$.
	
	\begin{restatable}{theorem}{RandKrylov}
		\label{thm:RandKrylov}
		Let $A$ be an $n \times n$ symmetric positive definite matrix with
		entries at most $1 / n$, and $\alpha_{A} < n^{-10}$
		a parameter such that:
		\begin{enumerate}
			\item all eigenvalues of $A$ are at least $\alpha_{A}$, and
			\item  all pairs of eigenvalues of $A$
			are separated by at least $\alpha_{A}$.
		\end{enumerate}
		Let $s$ and $m$ be parameters such that
		$n^{0.01} \leq m \leq n^{\frac{1}{4}}$
		and $s \cdot m \leq n - 5 m$.
		The $n$-by-$s$ sparse Gaussian matrix $G^{S}$ where each
		entry is set to $\normal(0, 1)$ with probability at least
		\[
		\frac{10000 m^3 \log\left( 1 / \alpha_{A} \right)}{n}
		\]
		leads to the Krylov space matrix
		\[
		K
		=
		\left[
		\begin{array}{c|c|c|c|c}
			G^{S} & AG^{S} & A^2G^{S} & \ldots & A^{m - 1}G^{S}
		\end{array}
		\right].
		\]
		With probability at least $1 - n^{-2}$
		(over randomness in $G^{S}$),
		$K$ has maximum singular value at most $n^2$,
		and minimum singular value at least $\alpha_{A}^{5m}$.
	\end{restatable}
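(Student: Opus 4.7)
The plan is to reduce via eigendecomposition of $A$ to a question about Vandermonde matrices, and then carry out an $\epsilon$-net argument on the unit sphere of $\mathbb{R}^{sm}$. The upper bound $\sigma_{\max}(K)\le n^2$ is immediate: the row sums of $A$ have magnitude at most $1$, so $\|A\|_2\le 1$; combined with the entrywise magnitude bound $|G^S_{jk}|\le\sqrt{n}$ from Claim~\ref{claim:GaussianMax}, this gives $\|K\|_F\le\sqrt{m}\,\|G^S\|_F\le n\sqrt{sm}<n^2$.

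For the lower bound, diagonalize $A=U\Lambda U^T$ with eigenvalues $\lambda_1,\dots,\lambda_n$. Since $U$ is orthogonal, $\sigma_{\min}(K)=\sigma_{\min}(U^T K)$ where $U^T K=[H\mid \Lambda H\mid \cdots\mid \Lambda^{m-1}H]$ and $H=U^T G^S$. For any unit vector $v=(v^{(0)},\dots,v^{(m-1)})\in\mathbb{R}^{sm}$, introduce the polynomials $P_k(x)=\sum_j (v^{(j)})_k x^j$ of degree less than $m$, whose coefficients together form $v$ so that $\sum_k\|P_k\|_2^2=1$. Expanding gives
\[
(U^T Kv)_i \;=\; \sum_{j,k} U_{ji}\,P_k(\lambda_i)\,G^S_{jk},
\]
so $Kv$ is a linear function of the $ns$ independent sparse-Gaussian entries of $G^S$, with coefficients determined by $v$ through the $P_k$. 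The deterministic ingredient is a Vandermonde bound: because the $\lambda_i$ lie in $[0,O(1)]$ and are $\alpha_A$-separated, every $m$-by-$m$ minor of the matrix $V_{ij}=\lambda_i^{j-1}$ satisfies $\sigma_{\min}\ge\alpha_A^{O(m)}$, so $\sum_i P_k(\lambda_i)^2\ge\alpha_A^{O(m)}\|P_k\|_2^2$ for each $k$, and hence $\sum_{k,i}P_k(\lambda_i)^2\ge\alpha_A^{O(m)}$. This controls the ``dense-Gaussian'' covariance $\Sigma:=\sum_{j,k}(P_k(A)e_j)(P_k(A)e_j)^T$ that would govern $Kv$ if $G^S$ were fully dense.

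The main obstacle is transferring this deterministic spread into a per-$v$ anti-concentration strong enough to beat the net cardinality. For $v$ in an $\epsilon$-net of $S^{sm-1}$ with $\epsilon=\alpha_A^{\Theta(m)}$, the net has log-size $\Theta(sm^2\log(1/\alpha_A))$, so a $1$-dimensional Gaussian anti-concentration $\alpha_A^{O(m)}$ is insufficient and we must exploit that $Kv$ lives in $\mathbb{R}^n$. Conditional on the random support $S$ of $G^S$, $Kv$ is a centered Gaussian whose covariance is obtained from $\Sigma$ by restricting the sum to $(j,k)\in S$; the standard small-ball bound gives $\Pr[\|Kv\|\le t\mid S]\le(t/\sigma)^d$ where $d$ is the rank and $\sigma$ the minimum nonzero singular value of this restricted covariance, so it suffices that $d=\Omega(sm)$ and $\sigma\ge\alpha_A^{O(m)}$. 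Both hold with overwhelming probability over $S$: with density $p=\Theta(m^3\log(1/\alpha_A)/n)$, a Bernstein-type concentration shows that for every direction $u\in\mathbb{R}^n$ the restriction of $S$ retains an $\Omega(p)$-fraction of the Vandermonde energy $u^T\Sigma u$, so the restricted covariance inherits rank and minimum singular value (up to the desired losses) from $\Sigma$. A Lipschitz transfer using $\|K\|_2\le n^2$ extends the bound from the net to every unit $v$, yielding $\sigma_{\min}(K)\ge\alpha_A^{5m}$ with total failure probability at most $n^{-2}$.
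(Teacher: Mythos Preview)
Your reduction to the eigenbasis and the Vandermonde structure is the right starting point, and your upper bound argument is fine. But the heart of the lower-bound argument has a genuine gap that cannot be patched as stated.

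The problem is the claimed rank bound $d=\Omega(sm)$ for the conditional covariance $\Sigma_S$. Consider a unit $v$ supported on a single column block, say $\|P_1\|_2=1$ and $P_k=0$ for $k\ne 1$. Then $Kv=P_1(A)G^S_{:,1}$, and conditionally on the support $S_1$ of column~1,
\[
\Sigma_S \;=\; \operatorname{diag}\bigl(P_1(\lambda)\bigr)\Bigl(\sum_{j\in S_1}(U^T)_{:,j}(U^T)_{:,j}^T\Bigr)\operatorname{diag}\bigl(P_1(\lambda)\bigr),
\]
which has rank at most $|S_1|$. But $\mathbb{E}|S_1|=np=\Theta(m^3\log(1/\alpha_A))$, while $sm$ is of order $n$; since $m\le n^{1/4}$ this gives $|S_1|\le n^{3/4+o(1)}\ll sm$. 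So for such $v$ the small-ball exponent you obtain is only $\Theta(m^4\log(1/\alpha_A))$, far short of the $\Theta(sm^2)$ needed to beat a net on $S^{sm-1}$ of granularity $\alpha_A^{\Theta(m)}$. No ``Bernstein-type concentration'' over the support can fix this: the rank of $\Sigma_S$ is bounded above by the number of nonzeros of $G^S$ that actually enter $Kv$, and for block-concentrated $v$ that number is simply too small. These vectors are in your net and your union bound fails on them.

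The paper sidesteps exactly this obstacle by \emph{not} netting on $\mathbb{R}^{sm}$. Instead it uses the block structure: for any unit $x$ it picks the block $[j]$ carrying mass $\ge s^{-1/2}$, passes to the Schur complement of $\hat K^T\hat K$ onto that block, and rewrites it as $\hat K_{:,[j]}^T W W^T \hat K_{:,[j]}$ for $W$ an orthonormal basis of (a subspace of) the null space of the \emph{other} blocks. This decouples the problem: now one only needs $\sigma_{\min}(W^T\operatorname{diag}(Ug^S_{:,j})V)$, where $g^S_{:,j}$ is independent of $W$, and the net is over $\mathbb{R}^m$ (size $\alpha_A^{-O(m^2)}$) rather than $\mathbb{R}^{sm}$. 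The remaining work---showing that every vector in the column span of $W$ has many large coordinates when twisted by $\operatorname{diag}(\hat y)U^T$ (Lemma~\ref{lem:NullSpace}), then that a random column sample preserves min singular value (Lemma~\ref{lem:SampleColumns}), then restricted invertibility and Gaussian anti-concentration on $\sim r/2$ coordinates---is what makes the per-vector failure probability small enough. Your sketch is missing this block/Schur-complement reduction, and without it the counting does not close.
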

	
	We prove this in Section~\ref{sec:RandKrylov}.
	These bounds allow us to bound the length of numbers,
	and in turn running time complexity of solving $(AK)^{T}AK$.
	
	\paragraph{Solvers for block Hankel matrices.} 
	
	An important ingredient in our algorithm is a numerically efficient solver for block Hankel matrices. For this we use the notion of displacement rank by Kailath, Kung and Morf~\cite{KailathKM79}.
	Its key statement is that any Schur Complement of a Toeplitz
	Matrix has displacement rank $2$, and can be uniquely represented
	as the factorization of a rank $2$ matrix. This combined with the fact that the displacement rank of the inverse is the same as that of the matrix is used to compute the Schur complement   
	in the first super-fast/asymptotically fast
	solvers for Toeplitz matrices by Bitmead and Anderson~\cite{BitmeadA80}. Here we extend this to block Toeplitz/Hankel matrices and prove numerical stability, using 
	the natural preservation of singular values of Schur complements.
	Specifically, the analysis from Demmel, Dumitriu, Holtz and
	Kleinberg~\cite{DemmelDHK07} can be readily adapted to this setting.
	
	We give full details on the algorithm in Section~\ref{sec:Solver}.
	
	\begin{restatable}{theorem}{Solver}
		\label{thm:Solver}
		If $H$ is an $sm \times sm$ symmetric $s$-block-Hankel matrix
		and $0 < \alpha_{H} < (sm)^{-100}$ is a parameter such that
		every contiguous square block-aligned minor of $H$
		containing the top-right or bottom left corner
		have minimum eigenvalue at least $\alpha_{H}$:
		\[
		\sigma_{\min}\left(H_{\left\{1:i, \left(m-i+1\right):m\right\}}\right),
		\sigma_{\min}\left(H_{\left\{ \left(m-i+1\right):m, 1:i\right\}}\right)
		\geq
		\alpha_{H}
		\qquad \forall 1 \leq i \leq m
		\]
		and all entries in $H$ have magnitude at most $(sm)^{-2}\alpha_{H}^{-1}$,
		then for any error $\epsilon$, we can pre-process
		$H$ in time $\Otil(m s^{\omega}
		\log(\alpha_{H}^{-1} \epsilon^{-1}) )$ to form $\textsc{Solve}_H(\cdot, \epsilon)$
		that corresponds to a linear operator $Z_{H}$ such that:
		\begin{enumerate}
			\item For any $(ms) \times k$ matrix $B$
			with max magnitude $\normi{B}_{\infty}$ and $L_B$ words after the decimal point, $\textsc{Solve}_H(B, \epsilon)$
			returns $Z_H B$ in time
			\[
			\Otil\left(m
			\cdot \max\left\{s^{\omega - 1}{k}, s^2 k^{\omega - 2}\right\}
			\cdot
			\left(\log\left(
			\frac{\left( 1 + \normi{B}_{\infty}\right)
				ms}{\alpha_{H} \epsilon}\right)
			+ L_B \right)
			\right).
			\]
			\item
			$Z_H$ is a high-accuracy approximation to $H^{-1}$:
			\[
			\norm{Z_{H} - H^{-1}}_{F}
			\leq
			\epsilon.
			\]
			\item the entries of $Z_{H}$ have at most
			$O( \log^2{m} \log(\alpha_{H}^{-1} \epsilon^{-1}))$
			words after the decimal point.
		\end{enumerate}
	\end{restatable}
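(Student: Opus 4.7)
The plan is to combine the classical low displacement rank (LDR) representation of block Hankel matrices with a recursive block Gaussian elimination in the style of Bitmead--Anderson~\cite{BitmeadA80}, while controlling roundoff in the spirit of the Demmel--Dumitriu--Holtz--Kleinberg analysis of recursive matrix inversion~\cite{DemmelDHK07}. The foundational observation is that a symmetric $s$-block Hankel matrix $H$ of size $sm \times sm$ has block displacement rank $O(s)$: for an appropriate block shift operator $Z$, the matrix $H - Z H Z^T$ has rank at most $2s$, so $H$ is described by a pair of $sm \times O(s)$ generator matrices, a total of $O(s^2 m)$ parameters. Every contiguous block-aligned principal submatrix of $H$, and every Schur complement formed from such a submatrix, inherits an LDR representation of rank $O(s)$; I will carry all intermediate matrices in this generator form rather than as dense explicit objects.

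Concretely I split $H$ into a two-by-two block partition where each piece is about $(sm/2) \times (sm/2)$ along the block boundary. The top-left piece is itself a contiguous block Hankel principal submatrix, and is well conditioned by hypothesis, so I recurse on it to obtain a solver $\textsc{Solve}_{H_{TL}}$. The Schur complement onto the bottom-right piece is no longer block Hankel but still has LDR $O(s)$; by the hypothesis on the top-right and bottom-left corner minors, it is again lower bounded in minimum singular value by $\alpha_H$, so I recurse on it in generator form. The off-diagonal multiplications needed to form and apply the Schur complement reduce to convolutions of sequences of $s \times s$ (or $s \times k$) blocks, which I compute by FFT in the sequence coordinate combined with block matrix multiplication at each frequency. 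Using rectangular fast matrix multiplication, one such convolution costs $\Otil(m \cdot \max\{s^{\omega-1} k, s^2 k^{\omega-2}\})$ for a width-$k$ right hand side, and $\Otil(m s^\omega)$ for the preprocessing. The recurrence $T(m) = 2 T(m/2) + \Otil(m s^\omega)$ solves to $\Otil(m s^\omega \log m)$, yielding the claimed preprocessing time, and the analogous recurrence for the solve yields the two stated bounds after multiplication by the per-word overhead from fast number multiplication.

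For numerical stability I invoke Lemma~\ref{lem:ErrorInvert} at each inversion in the recursion to convert the hypothesized eigenvalue lower bound $\alpha_H$ on the relevant minor into a Frobenius bound on the error of that intermediate inverse, then use Lemma~\ref{lem:ErrorCompose} to compose errors across the $O(\log m)$ recursion levels and across the FFT-based convolutions. Since the minor hypothesis controls every Schur complement encountered, no condition number compounds multiplicatively down the recursion; only $O(\log m)$ additive digit losses per level accumulate. Running the entire procedure with $O(\log^2 m \log(\alpha_H^{-1} \epsilon^{-1}))$ digits after the decimal point therefore suffices to produce a linear operator $Z_H$ with $\norm{Z_H - H^{-1}}_F \leq \epsilon$, matching part~3 of the statement.

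The main obstacle is precisely this numerical bookkeeping. The generator representation of Schur complements is famously delicate in practice~\cite{GallivanTVV96}, and previous stable Hankel solvers such as~\cite{XiaXG12} pay a cost quadratic in the number of precision digits. Avoiding that quadratic dependence requires showing that the generators produced by the recursion themselves stay bounded in norm, which is exactly where the two-sided hypothesis on top-right and bottom-left corner minors is used essentially: it prevents the off-diagonal factors appearing in the Schur complement identity from amplifying error across levels. Once that norm control is in place, the remaining analysis parallels the recursive inversion argument of~\cite{DemmelDHK07}, applied to generator-based block multiplications instead of dense ones.
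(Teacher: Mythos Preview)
Your plan is essentially the paper's own approach: convert to block-Toeplitz, carry everything in the $O(s)$-rank displacement representation, run the Bitmead--Anderson recursive Schur complement with FFT-based block convolutions, and bound error propagation via the condition-number hypothesis on all intermediate principal minors and Schur complements, exactly as in Lemmas~\ref{lem:SchurComplementClosure}, \ref{lem:ConditionSchur}, Corollary~\ref{cor:BlocksAreWellConditioned}, and the inductive Lemma~\ref{lem:Recursion}. One algorithmic step you leave implicit is how to \emph{obtain} the rank-$O(s)$ generators of each Schur complement and of the final inverse when you only have black-box multiplication access to them; the paper does this via a randomized low-rank factorization (Lemma~\ref{lem:LowRankApprox}) using Gaussian sketching plus an explicit SVD of an $O(s)\times O(s)$ core, which is what keeps the generator norms bounded and the rank exactly $2s$ at every level rather than growing by constant factors---but once you supply that routine, the rest of your outline matches the paper.
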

	
	The overhead of $\log^2{m}$ in the word lengths
	of $Z_{H}$ is from the $O(\log{n})$ layers
	of recursion in Fast Fourier Transform times
	the $O(\log{n})$ levels of recursion in the divide-and conquer
	block Schur complement algorithm.
	Note that the relation between the $i\textsuperscript{th}$ block of $H = K^TK$
	and $K$ itself is:
	\[
	H_{\left\{1:i, (n-i+1):m\right\}}
	=
	K_{\left\{:, 1:i\right\}}^T K_{\left\{:, (n-i+1):m\right\}}
	=
	K_{\left\{:, 1:i\right\}}^T A^{m - i - 1}
	K_{\left\{:, 1:i\right\}}.
	\]
	So the min/max singular values of these matrices off by a
	factor of at most $\alpha_{A}^{m}$ from the singular value
	bounds on $H$ itself.
	
	It remains to pad $K$ with random columns to make it
	a square matrix, $Q$.
	We will bound the condition number of this padded matrix,
	and convert a solver for $M = (AK)^{T}(AK)$
	to a solver for $(AQ)^T(AQ)$ in Section~\ref{sec:Pad},
	Specifically, the following bounds are from combining
	Theorems~\ref{thm:RandKrylov} and~\ref{thm:Solver},
	along with properties of random dense Gaussians.
	
	\begin{restatable}{lemma}{PadAndSolve}
		\label{lem:PadAndSolve}
		Let $A$ be an $n \times n$ symmetric positive definite
		matrix with entries at most $1/n$,
		and $0 < \alpha_{A} < n^{-10}$ a parameter such that:
		\begin{enumerate}
			\item all eigenvalues of $A$ are at least
			$\alpha_{A}$, and at most $\alpha_{A}^{-1}$,
			\item all pairs of eigenvalues of $A$
			are separated by at least $\alpha_{A}$,
			\item all entries in $A$ have magnitude at most $\alpha_{A}$,
			and at most $O(\log( 1 / \alpha_{A}))$ words after the
			decimal point.
		\end{enumerate}
		For any parameter $m$ such that $n^{0.01} \leq m
		\leq 0.01 n^{0.2}$,
		the routine $\textsc{BlockKrylov}$ as shown in
		Figure~\ref{fig:solver} pre-processes $A$ in time:
		\begin{enumerate}
			\item $O(n)$ matrix-vector multiplications of $A$
			against vectors with at most $O(m \log( 1 / \alpha_{A}))$ words 
			both before and after the decimal point,
			\item plus operations that cost a total of:
			\[
			\Otil\left(n^2 \cdot m^{3} \cdot \log^{2} \left(1/\alpha_{A}\right)
			+ n^{\omega}
			m^{2 - \omega} \log\left(1/\alpha_{A}\right) \right).
			\]
		\end{enumerate}
		and obtains a routine $\textsc{Solve}_{A}$
		that when given a vector $b$ with $L_b$ words after
		the decimal point, returns $Z_{A}b$ in time
		\[
		\Otil\left(n^2 m
		\cdot \left(  \log\left(1/\alpha_{A}\right)
		+ \normi{b}_{\infty} + L_b\right) \right) ,
		\]
		for some $Z_{A}$ with at most $O(\log(1 / \alpha_A))$ words
		after the decimal point such that
		\[
		\norm{Z_{A} - A^{-1}}_{F} \leq \alpha_{A}.
		\]
	\end{restatable}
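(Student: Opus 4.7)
The plan is to assemble Lemma~\ref{lem:PadAndSolve} by combining Theorem~\ref{thm:RandKrylov} (well-conditionedness of the sparse Krylov space), Theorem~\ref{thm:Solver} (fast stable block-Hankel solve), and a Schur-complement reduction that handles the $r = n - ms = \Theta(m)$ columns left over after the Krylov construction. The hypotheses on $A$ already guarantee the eigenvalue separation assumed by Theorem~\ref{thm:RandKrylov}, so I would first apply that theorem directly to the sparse Gaussian $G^{S}$ to obtain $\sigma_{\min}(K) \ge \alpha_{A}^{5m}$ and $\sigma_{\max}(K) \le n^{2}$ with probability at least $1 - n^{-2}$. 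Since any contiguous block-aligned minor of $H := (AK)^{T}(AK)$ has the form $K_{\{:,\,I\}}^{T} A^{2} K_{\{:,\,J\}}$ for consecutive index ranges $I, J$, and each sub-Krylov matrix $K_{\{:,\,I\}}$ inherits the lower singular-value bound of $K$ restricted to a coordinate subspace, the minimum singular value of every such minor is at least $\alpha_{A}^{O(m)}$. This supplies Theorem~\ref{thm:Solver} with a parameter $\alpha_{H} \ge \alpha_{A}^{O(m)}$, so the block-Hankel preprocessing cost becomes $\Otil(m s^{\omega} \log(1/\alpha_{H})) = \Otil(n^{\omega} m^{2-\omega} \log(1/\alpha_{A}))$ and every intermediate word length in that solver is $\Otil(m \log(1/\alpha_{A}))$.

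Next I would analyze the \textbf{PAD and SOLVE} step. Padding $K$ with the dense Gaussian $G \in \Re^{n \times r}$ yields a square matrix $Q$ whose minimum singular value can be lower-bounded by $\alpha_{A}^{O(m)}$: the $r = \Theta(m)$ columns of $G$ are independent standard Gaussians, and Gaussian anti-concentration against the $r$-dimensional orthogonal complement of $\Span{K}$ (using Claim~\ref{claim:GaussianMax} to cap magnitudes) prevents the padded columns from lying too close to $\Span{K}$. Combined with the eigenvalue bounds on $A$ this gives $\sigma_{\min}((AQ)^{T}(AQ)) \ge \alpha_{A}^{O(m)}$; in particular the $r \times r$ Schur complement
\[
S \;=\; (AG)^{T}AG \;-\; (AG)^{T} AK \cdot M^{-1} \cdot (AK)^{T} AG
\]
satisfies $\sigma_{\min}(S) \ge \alpha_{A}^{O(m)}$. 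I would compute an approximation $\widetilde{S}$ by replacing $M^{-1}$ with $\textsc{Solve}_{M}(\cdot, \alpha_{A}^{10m})$ applied column-wise to the $r$ columns of $(AK)^{T}AG$, then directly invert $\widetilde{S}$; Lemma~\ref{lem:ErrorInvert} converts the $\alpha_{A}^{10m}$ interior error into an $O(\alpha_{A}^{m})$ error in $\widetilde{S}^{-1}$. Standard block-Gaussian-elimination formulas then combine $\textsc{Solve}_{M}$ and $\widetilde{S}^{-1}$ into an approximate solver for $(AQ)^{T}(AQ)$. Because $AQ$ is square and invertible we have $A^{-1} = Q ((AQ)^{T}AQ)^{-1} (AQ)^{T}$, so the returned operator is $Z_{A} := Q \cdot Z_{(AQ)^{T}AQ} \cdot (AQ)^{T}$. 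Repeated application of Lemma~\ref{lem:ErrorCompose}, together with the $\sigma_{\max}$ bounds on $Q$ and $AQ$, yields $\norm{Z_{A} - A^{-1}}_{F} \le \alpha_{A}$ as long as every subsolve is run to precision $\alpha_{A}^{\Omega(m)}$, which by Theorem~\ref{thm:Solver} costs only a logarithmic overhead in $1/\alpha_{A}$.

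The running-time breakdown then goes: (a) forming $K$ and $AK$ uses $O(ms) = O(n)$ matrix-vector products against $A$ on vectors of $\Otil(m \log(1/\alpha_{A}))$ words, which the lemma accounts for separately; (b) computing the $O(m)$ distinct blocks $(G^S)^{T} A^{i} G^S$ of $H$ takes $\Otil(s \cdot m^{3} \log(1/\alpha_{A}))$ arithmetic operations per block (each column of $G^S$ carries $\Otil(m^{3}\log(1/\alpha_{A}))$ non-zeros) and folds in a word-length factor of $\Otil(m \log(1/\alpha_{A}))$, for a total of $\Otil(n^{2} m^{3} \log^{2}(1/\alpha_{A}))$; (c) the Theorem~\ref{thm:Solver} preprocessing contributes $\Otil(n^{\omega} m^{2-\omega} \log(1/\alpha_{A}))$; (d) the Schur-complement preprocessing and final $r \times r$ inversion are lower order because $r = \Theta(m) \ll s$. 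At query time, the work is dominated by multiplying $Q$ (an $n \times n$ matrix with $\Otil(m \log(1/\alpha_{A}))$-word entries) and $(AQ)^{T}$ against vectors, producing the claimed bound $\Otil(n^{2} m (\log(1/\alpha_{A}) + L_{b} + \normi{b}_{\infty}))$.

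The main obstacle, in my view, is the singular-value analysis of the padded matrix: proving $\sigma_{\min}(AQ) \ge \alpha_{A}^{O(m)}$ requires Gaussian anti-concentration against the column span of the already highly correlated Krylov block $K$, and then propagating this lower bound through the explicit Schur complement without losing more than a polynomial factor in $\alpha_{A}^{m}$ along the way. Once this is in hand, the rest is a mechanical combination of Theorems~\ref{thm:RandKrylov} and~\ref{thm:Solver} with the error-composition lemmas.
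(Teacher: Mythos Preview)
Your approach matches the paper's proof in Section~\ref{sec:Pad} almost exactly: invoke Theorem~\ref{thm:RandKrylov}, feed the corner-minor bounds into Theorem~\ref{thm:Solver}, pad with dense Gaussians, and unwind via the Schur complement of $(AQ)^{T}AQ$. The one point that needs tightening is your justification that $\sigma_{\min}(H_{\{1:i,(m-i+1):m\}}) \ge \alpha_{A}^{O(m)}$: knowing only that $K_{\{:,I\}}$ and $K_{\{:,J\}}$ each inherit the min singular value of $K$ does \emph{not} lower-bound $\sigma_{\min}\bigl((AK)_{\{:,I\}}^{T} (AK)_{\{:,J\}}\bigr)$ --- for unrelated column ranges this product could even vanish. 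The correct argument (which the paper records right after stating Theorem~\ref{thm:Solver}) uses the Krylov shift identity $(AK)_{\{:,(m-i+1):m\}} = A^{m-i}(AK)_{\{:,1:i\}}$, so that the corner minor equals $L^{T} A^{m-i} L$ with $L := (AK)_{\{:,1:i\}}$; this is symmetric positive definite with $\lambda_{\min} \ge \alpha_{A}^{m-i}\,\sigma_{\min}(L)^{2} \ge \alpha_{A}^{O(m)}$. For the padding step, the paper supplies exactly the anti-concentration lemma you anticipate (Lemma~\ref{lem:Extend}): appending one scaled Gaussian column to a tall matrix with $\sigma_{\min} \ge \alpha_M$ yields $\sigma_{\min} \ge n^{-10}\alpha_M$, and iterating this over the $O(m)$ columns of $G$ gives $\sigma_{\min}(Q) \ge \alpha_{A}^{O(m)}$ with no dependence on the internal correlations of $K$.
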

	
	Note that we dropped $\epsilon$ as a parameter
	to simplify the statement of the guarantees.
	Such an omission is acceptable because
	if we want accuracy less than the eigenvalue bounds
	of $A$, we can simply run the algorithm with
	$\alpha_{A} \leftarrow \epsilon$ due to the pre-conditions
	holding upon $\alpha_{A}$ decreasing.
	With iterative refinement (e.g.~\cite{Saad03:book},
	it is also possible to lower  the dependence on
	$\log(1 / \epsilon)$ to linear instead of the cubic
	dependence on $\log(1 / \alpha_{H})$.
	
	\subsection{Proof of Main Theorem}
	\label{subsec:ProofMain}
	
	It remains to use the random perturbation specified
	in Theorem~\ref{thm:PerturbA} and taking the outer product
	to reduce a general system to the symmetric, eigenvalue
	well separated case covered in Lemma~\ref{lem:PadAndSolve}.
	
	The overall algorithm then takes a symmetrized version of
	the original matrix $\overline{A}$, perturbs it,
	and then converts the result of the block Krylov space
	method back.
	Its pseudocode is in Figure~\ref{fig:LEA}
	
	\begin{figure}[t]
		\fbox{\parbox{\textwidth}{
				
				{\bf \textsc{LinearEquationApproximation}(
					$A$, $b$:
					integer matrix/vector pair,
					$\kappa$: condition number bound for $A$,
					$\epsilon$: error threshold.
					)}
				
				\begin{enumerate}
					
					\item Compute $\theta_{A} \leftarrow \normi{A}_{\infty}$.
					
					\item Generate random symmetric matrix $R$ with
					\[
					R_{ij} = R_{ji} = \frac{\epsilon}{n^{10} \kappa^2} \mathcal{N}(0, 1)
					\]
					with probability $\frac{O\left( \log\left( \kappa / \epsilon \right) \log{n} \right)}{n}$.
					
					\item 
					Implicitly generate
					\[
					\Atil
					=
					\frac{1}{n^4\theta_{A}^2}
					A^TA + R
					\]
					and its associated matrix-multiplication operator
					$\textsc{MatVec}_{\Atil}(\cdot, \delta)$.
					
					\item Build solver for $\Atil$ via
					\[
					\textsc{Solve}_{\Atil}
					\leftarrow
					\textsc{BlockKrylov}\left(
					\textsc{MatVec}_{\Atil}\left(\cdot, \delta\right),
					\left( n^{8} \kappa^{2} \epsilon^{-1} \right)^{-5 \log{n}},
					n^{\frac{\omega - 2}{\omega + 1}}
					nnz\left(A\right)^{\frac{\omega - 2}{\omega + 1}}
					\right).
					\]
					
					\item
					\label{step:FinalRescale}
					Return
					\[
					\frac{1}{n^4 \theta_A^2}
					\cdot
					\textsc{Solve}_{\Atil} \left( A^T b \right).
					\]
					
				\end{enumerate}
		}}

		\caption{Pseudocode for block Krylov space algorithm.}
		\label{fig:LEA}
	\end{figure}
	
	\begin{proof}(Of Theorem~\ref{thm:Main})
		Let $\Ahat$ be the copy of $A$ scaled down by $n^2 \theta_{A}$:
		\[
		\Ahat
		=
		\frac{1}{n^2 \theta_{A}}{A}
		=
		\frac{1}{n^2 \norm{A}_{\infty}} A.
		\]
		This rescaling gives us bounds on both the maximum and
		minimum entries of $\Ahat$.
		The rescaling ensures that the max magnitude of an entry
		in $\Ahat$ is at most $1 / n^2$.
		Therefore its Frobenius norm, and in turn max singular value,
		is at most $1$.
		On the other hand, the max singular of $A$ is at least
		$\normi{A}_{\infty} = \theta_{A}$: consider the unit vector that's $1$ in
		the entry corresponding to the column containing the max magnitude entry of $A$, and $0$ everywhere else.
		This plus the bound on condition number of $\kappa$ gives that
		the minimum singular value of $A$ is at least
		\[
		\sigma_{\min}\left( A \right)
		\geq
		\frac{1}{\kappa} \sigma_{\max}\left( A \right)
		\geq
		\frac{\theta_{A}}{\kappa},
		\]
		which coupled with the rescaling by $\frac{1}{n^2 \theta_{A}}$ gives
		\[
		\sigma_{\min}\left( \Ahat \right)
		\geq
		\frac{1}{n^2 \theta_{A}}
		\cdot
		\frac{\theta_{A}}{\kappa}
		=
		\frac{1}{n^2 \kappa}.
		\]
		
		The matrix that we pass onto the block Krylov method,
		$\Atil$, is then the outer-product of $\Ahat$ plus a sparse
		random perturbation $R$ with each entry is set
		(symmetrically when across the diagonal)
		to $\frac{\epsilon}{n^{4} \kappa}N(0, 1)$ with probability
		$O(\log{n} \log(\kappa / \epsilon)) / n$
		\[
		\Atil \leftarrow
		\Ahat^{T} \Ahat
		+
		R.
		\]
		This matrix $\Atil$ is symmetric.
		Furthermore, by Claim~\ref{claim:GaussianMax}, we may assume
		that the max magnitude of an entry in $R$ is at most
		$\epsilon n^{-9} \kappa^{-2}$, which gives
		\[
		\norm{R}_{F}
		\leq
		\frac{\epsilon}{n^{8} \kappa^2}.
		\]
		So we also get that the max magnitude of an entry in
		$\Atil$ is still at most $2 n^{-2}$.
		Taking this perturbation bound into
		Lemma~\ref{lem:ErrorInvert} also gives that all
		eigenvalues of $\Atil$ are in the range
		\[
		\left[
		\frac{1}{n^{5}\kappa^2}
		,1
		\right].
		\]

		By Theorem~\ref{thm:PerturbA},
		the minimum eigenvalue separation in this perturbed matrix
		$\Atil$ is at least
		\[
		\left( n^{8} \kappa^2 \epsilon^{-1} \right)^{-5 \log{n}}.
		\]
		Also, by concentration bounds on the number of entries
		picked in $R$, its number of non-zeros is with high
		probability at most
		$O(n\log{n} \log(\kappa n / \epsilon))
		= \Otil(n \log(\kappa / \epsilon))$.
		
		As we only want an error of $\epsilon$,
		we can round all entries in $A$ to precision
		$\epsilon / \kappa$
		without affecting the quality of the answer.
		
		So we can invoke Lemma~\ref{lem:PadAndSolve}
		with
		\[
		\alpha_{\Atil}
		=
		\left( n^{8} \kappa^2 \epsilon^{-1} \right)^{-5 \log{n}},
		\]
		which leads to a solve operator $Z_{\Atil}$ such that
		\[
		\norm{Z_{\Atil} - \Atil^{-1}}_{F}
		\leq \alpha_{\Atil}
		\leq
		\left( n^{8} \kappa^2 \epsilon^{-1} \right)^{-5 \log{n}}
		\leq \frac{\epsilon}{n^{40} \kappa^{10}}.
		\]
		The error conversion lemma from Lemma~\ref{lem:ErrorInvert}
		along with the condition that the min-singular value
		of $\Atil$ is at least $\frac{1}{n^{5} \kappa^2}$ implies that
		\[
		\norm{Z_{\Atil}^{-1} - \Atil}_{F}
		\leq
		\frac{\epsilon}{n^{30} \kappa^{6}}
		\]
		or factoring into the bound on the size of $R$
		via triangle inequality:
		\[
		\norm{Z_{\Atil}^{-1} - \Ahat^T \Ahat }_{F}
		\leq
		\frac{2 \epsilon}{n^{8} \kappa^{4}},
		\]
		which when inverted again via Lemma~\ref{lem:ErrorInvert}
		and the min singular value bound gives
		\[
		\norm{Z_{\Atil} - \left(\Ahat^T \Ahat\right)^{-1} }_{F}
		\leq
		\frac{\epsilon}{n^2}.
		\]
		
		It remains to propagate this error across the rescaling in
		Step~\ref{step:FinalRescale}.
		Since $\Ahat = \frac{1}{n^2\theta_{A}} A$, we have
		\[
		\left( A^T A \right)^{-1} =
		\frac{1}{n^4 \theta_{A}^2} \left(\Ahat^T \Ahat\right)^{-1},
		\]
		and in turn the error bound translates to
		\[
		\norm{\frac{1}{n^4 \theta_{A}^2} Z - \left(\Ahat^T \Ahat\right)^{-1}}_{F}
		\leq
		\frac{\epsilon}{n^4 \theta_{A}^2}.
		\]
		The input on the other hand has
		\[
		\Pi_{A} b
		=
		A \left( A^T A \right)^{-1} A^T b,
		\]
		so the error after multiplication by $A$ is
		\[
		A \left(\frac{1}{n^4 \theta_{A}^2} Z - \left(\Ahat^T \Ahat\right)^{-1}\right) A^Tb,
		\]
		which incorporating the above, as well as
		$\norm{A}_2 \leq n\theta_{A} $ gives
		\[
		\norm{A
			\left[ \frac{1}{n^4 \theta_{A}^2} Z A^Tb \right]
			- \pi_{A} b}_{2}
		\leq
		\frac{\epsilon}{n^3 \theta_{A}}
		\norm{A^T b}_{2}.
		\]
		On the other hand, because the max eigenvalue of
		$A^TA$ is at most $\norm{A}_{F}^2 \leq n^2 \theta_{A}^2$,
		the minimum eigenvalue of $(A^TA)^{-1}$
		is at least $\theta_{A}^{-2}$.
		So we have
		\[
		\norm{\Pi_{A} b}_{2}
		=
		\norm{A^T b}_{\left(A^TA\right)^{-1}}
		\geq
		\frac{1}{n^2 \theta_{A}} \norm{A^T b}_{2}.
		\]
		Combining the two bounds then gives that the error
		in the return value is at most $\epsilon \norm{\Pi_{A} b}_2$.
		
		For the total running time, the number of non-zeros in $R$
		implies that the total cost of mutliplying $\Atil$ against
		a vector with $\Otil(m \log(1 / \alpha_{A}))
		= \Otil(m \log{n} \log(\kappa / \epsilon))
		= \Otil(m \log(\kappa / \epsilon))$ is
		\[
		\Otil\left( \left( nnz\left( A \right) + n\right) m \log^2 \left( \kappa / \epsilon \right) \right)
		\leq
		\Otil\left( nnz\left( A \right) m \log^2 \left( \kappa / \epsilon \right) \right),
		\]
		where the inequality of $nnz(A) \leq n$ follows
		pre-processing to remove empty rows and columns.
		So the total construction cost given in
		Lemma~\ref{lem:PadAndSolve} simplifies to
		\[
		O\left( n^2 m^3 \log^2\left( \kappa / \epsilon \right)
		+ n^{\omega} m^{2 - \omega} \log\left( \kappa / \epsilon \right)
		+ n \cdot nnz \left( A \right) \cdot m \log\left( \kappa / \epsilon \right) \right)
		\]
		
		The input vector, $\frac{1}{\theta_{Y}} y$ has max magnitude
		at most $1$, and can thus be rounded to
		$O(\log(\kappa / \epsilon))$ words after the decimal point as well.
		This then goes into the solve cost with
		$\log( \normi{b}_{\infty}) + L_b
		\leq O(\log(n \kappa / \epsilon))$, which gives
		a total of $\Otil(n^2 m \log( \kappa / \epsilon))$,
		which is a lower order term compared to the construction cost.
		The cost of the additional multiplication in $A$
		is also a lower order term.
		
		Optimizing $m$ in this expression above
		based on only $n$ and $nnz(A)$ gives that we should
		choose $m$ so that
		\[
		\max\left\{ n \cdot nnz\left( A \right) m, n^2 m^3\right\}
		=
		n^{\omega} m^{2 - \omega},
		\]
		or
		\[
		\max\left\{ n \cdot nnz\left( A \right) m^{\omega - 1}, n^2 m^{
			\omega + 1} \right\}
		=
		n^{\omega}.
		\]
		The first term implies
		\[
		m \leq \left( n^{\omega - 1} \cdot nnz\left(A \right)^{-1} \right)^{\frac{1}{\omega - 1}}
		=
		n \cdot nnz\left( A \right)^{\frac{-1}{\omega - 1}}
		\]
		while the second term implies
		\[
		m \leq n^{\frac{\omega - 2}{\omega + 1}}.
		\]
		Substituting the minimum of these two bounds
		back into $n^{\omega} m^{2 - \omega}$ and noting that
		$2 - \omega \leq 0$ gives that the total runtime dependence
		on $n$ and $nnz(A)$ is at most
		\[
		n^{\omega}
		\cdot
		\max\left\{
		n^{\frac{\left( \omega - 2 \right) \left( 2 - \omega \right) }{\omega + 1}},
		n^{2 - \omega} \cdot nnz\left( A \right)^{
			\frac{- \left( 2 - \omega \right)}{\omega - 1}}
		\right\}
		=
		\max\left\{
		n^{\frac{5 \omega - 4}{\omega + 1}},
		n^2 \cdot nnz\left( A \right)^{\frac{\omega - 2}{\omega - 1}}
		\right\}.
		\]
		Incorporating the trailing terms,
		then gives the the bound stated in Theorem~\ref{thm:Main},
		with $c$ set to $2$ plus the number of log factors hidden in the $\Otil$.
	\end{proof}
	
	\section{Separating Eigenvalues via Perturbations}
\label{sec:PerturbA}

In this section we show that a small symmetric random perturbation
to $A$ creates a matrix whose eigenvalues are well separated.
The main result that we will prove is Theorem~\ref{thm:PerturbA},
which we restate below.

\PerturbA*

Our proof follows the outline and structure of the bounds
on eigenvalue separations in random matrices by
Nguyen, Tao and Vu~\cite{NguyenTV17} for the dense case,
and by Luh and Vu~\cite{LuhV18} and
Lopatto and Luh~\cite{LopattoL19} for the sparse case.
The separation of eigenvalues needed for
Krylov space methods was mentioned as a motivating application
in~\cite{NguyenTV17}.
Our proof can be viewed as a more basic variant of the
proof on gaps of random sparse matrices by Lopatto and
Luh~\cite{LopattoL19}.
Because we only need to prove a polylog bound on the number
of digits, we do not need to associate the $O(\log{n})$ levels
of $\epsilon$-nets using chaining type arguments.

On the other hand, we do need to slightly modify the arguments
in Lopatto and Luh~\cite{LopattoL19} to handle the initial
$A$ matrix that's present in addition the random terms.
These modifications we make are akin to the ones needed
in the dense case by Nguyen, Tao and Vu~\cite{NguyenTV17}
to handle arbitrary expectations.

The starting point is the interlacing theorem,
which relates the eigenvalue separation to the inner product of the
last column with any eigenvector of the principle minor without it
and its corresponding row, which we denote as $A-$.

\begin{lemma}
	\label{lem:Interlacing}
	Let $A$ be a symmetric $n$-by-$n$ matrix,
	and denote its $(n - 1)$-by-$(n- 1)$ leading minor by $A-$:
	\[
	A-
	=
	A_{1:\left(n - 1\right), 1:\left(n - 1\right)}
	\]
	Then the eigenavlues of $A$ in sorted order,
	$\lambda(A, 1) \leq
	\lambda(A, 2) \leq
	\ldots
	\leq \lambda(A, n)$
	and the eigenvalues of $A-$ in sorted order,
	$\lambda(A-, 1),
	\leq \lambda(A-, 2) \leq
	\ldots
	\leq \lambda(A-, n - 1)$
	interlace each other when aggregated:
	\[
	\lambda\left(A, 1\right)
	\leq
	\lambda\left(A-, 1 \right)
	\leq
	\lambda\left(A, 2\right)
	\leq
	\ldots
	\lambda\left(A, n - 1\right)
	\leq
	\lambda\left(A-, n - 1\right)
	\leq
	\lambda\left(A, n\right)
	.
	\]
\end{lemma}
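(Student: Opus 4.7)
The plan is to prove the Cauchy interlacing theorem by invoking the Courant--Fischer min-max characterization of eigenvalues of a symmetric matrix and relating the two extremal problems for $A$ and $A-$ through the natural inclusion $\mathbb{R}^{n-1} \hookrightarrow \mathbb{R}^n$. Concretely, I will identify $A-$ with the restriction of the quadratic form $x \mapsto x^T A x$ to the coordinate subspace $W \defeq \{x \in \mathbb{R}^n : x_n = 0\}$, so that $x^T A x = x^T (A-) x$ for every $x \in W$ (viewed on the left as an element of $\mathbb{R}^n$ and on the right as an element of $\mathbb{R}^{n-1}$).

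First I would record both equivalent min-max formulas for the $k$-th smallest eigenvalue of a symmetric matrix $M$ of dimension $N$:
\[
\lambda(M,k)
= \max_{\substack{V \subseteq \mathbb{R}^N \\ \dim V = k}} \min_{x \in V \setminus \{0\}} \frac{x^T M x}{x^T x}
= \min_{\substack{V \subseteq \mathbb{R}^N \\ \dim V = N - k + 1}} \max_{x \in V \setminus \{0\}} \frac{x^T M x}{x^T x}.
\]
The strategy is then to prove, for each $1 \leq k \leq n - 1$, the two inequalities $\lambda(A,k) \leq \lambda(A-, k)$ and $\lambda(A-, k) \leq \lambda(A, k+1)$ separately, which together give the interlacing.

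For the lower bound $\lambda(A,k) \leq \lambda(A-, k)$, I would use the max-min form. Any $k$-dimensional subspace $V' \subseteq \mathbb{R}^{n-1}$ corresponds to a $k$-dimensional subspace $V \subseteq W \subseteq \mathbb{R}^n$ on which the two quadratic forms agree. Taking the max over this smaller family of subspaces of $\mathbb{R}^n$ can only be no larger than the max over all $k$-dimensional subspaces of $\mathbb{R}^n$, yielding the desired inequality. For the upper bound $\lambda(A-, k) \leq \lambda(A, k+1)$, I would use the min-max form symmetrically: any $(n-k)$-dimensional subspace of $\mathbb{R}^{n-1}$ lifts to an $(n-k)$-dimensional subspace of $\mathbb{R}^n$ contained in $W$ on which the two quadratic forms agree, so taking the min over this restricted family upper-bounds, rather than under-bounds, the min over all $(n-k)$-dimensional subspaces of $\mathbb{R}^n$, which is exactly $\lambda(A, k+1)$.

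There is no real obstacle here beyond bookkeeping: the result is classical and the only place to slip is keeping track of whether $k$ versus $k+1$ and whether $k$ versus $n - k + 1$ appear in each variational formula. I would double-check by verifying the extremal cases $k = 1$ (giving $\lambda(A,1) \leq \lambda(A-,1)$ via the Rayleigh quotient bound, and $\lambda(A-,1) \leq \lambda(A,2)$ using the $(n-1)$-dimensional subspace $W$ itself) and $k = n-1$ (symmetrically), which pin down the correct indexing of both inequalities.
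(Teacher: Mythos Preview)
The paper does not prove this lemma; it is stated as the classical Cauchy interlacing theorem and used as a black box. Your plan to prove it via Courant--Fischer and the inclusion $W = \{x : x_n = 0\} \hookrightarrow \mathbb{R}^n$ is the standard argument and is sound in outline.

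However, the Courant--Fischer formulas you wrote down are swapped. For the $k$-th \emph{smallest} eigenvalue of an $N \times N$ symmetric matrix one has
\[
\lambda(M,k) \;=\; \min_{\substack{V \subseteq \mathbb{R}^N \\ \dim V = k}} \max_{x \in V \setminus \{0\}} \frac{x^T M x}{x^T x}
\;=\; \max_{\substack{V \subseteq \mathbb{R}^N \\ \dim V = N-k+1}} \min_{x \in V \setminus \{0\}} \frac{x^T M x}{x^T x},
\]
not the other way around (check $k=1$: your first formula gives $\max_v R(v) = \lambda_N$). With your stated formulas, the argument you describe for $\lambda(A,k) \le \lambda(A-,k)$ actually produces $\lambda(A-,k) \le \lambda(A,k)$: you correctly observe that the max over the smaller family (subspaces inside $W$) is at most the max over all $k$-dimensional subspaces of $\mathbb{R}^n$, but under your labeling those two quantities are $\lambda(A-,k)$ and $\lambda(A,k)$ respectively. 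The fix is simply to use the correct min--max form $\lambda_k = \min_{\dim V = k} \max$ for the inequality $\lambda(A,k) \le \lambda(A-,k)$ (restricting the outer $\min$ to a smaller family only increases it), and the correct max--min form $\lambda_k = \max_{\dim V = N-k+1} \min$ for $\lambda(A-,k) \le \lambda(A,k+1)$ (where both sides involve $(n-k)$-dimensional subspaces). Your own proposed sanity check at $k=1$ would have caught this, so the error is purely the bookkeeping slip you anticipated.
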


In particular, this allows us to lower bound
the gap between $\lambda_{i - 1}(A)$ and $\lambda_{i}(A)$
by lower bounding the gap between the $i$th eigenvalues
of $A$ and $A-$.
This bound can in turn be computed by left and right
multiplying $A-$, the principle minor, against the corresponding
eigenvectors of $A-$ and $A$.

\begin{lemma}
\label{lem:EigenGap}
Let $v(A, i)$ denote the eigenvector of $A-$ corresponding
to $\lambda_i(A)$, and
$v(A-, i)$ denote the eigenvector of $A-$ corresponding
to $\lambda_i(A-)$.
Then we have
\[
\abs{\lambda_i\left( A \right) - \lambda_i\left( A- \right) }
\geq
\abs{v\left(A, i\right)_n}
\abs{v\left(A-, i\right)^{T} A_{1:\left( n - 1\right), n}}.
\]
\end{lemma}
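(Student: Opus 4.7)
The plan is to project the eigenvalue equation for $A$ onto the eigenvector of $A-$, turning the eigenvalue gap into the inner product quantities appearing on the right-hand side. Writing $A$ in block form with principal minor $A-$, off-diagonal column $c = A_{1:(n-1), n}$, and bottom-right scalar, and writing the unit eigenvector of $A$ as $v(A, i) = (w; \alpha)$ with $\alpha = v(A, i)_n$, the top $n-1$ coordinates of the identity $A v(A, i) = \lambda_i(A) v(A, i)$ read $A- w + \alpha c = \lambda_i(A) w$, i.e.,
\[
\left( \lambda_i(A) I - A- \right) w = \alpha c.
\]

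Next I would left-multiply this identity by $v(A-, i)^T$ and use the symmetry of $A-$ to replace $v(A-, i)^T A-$ with $\lambda_i(A-) v(A-, i)^T$. The left-hand side then collapses to $(\lambda_i(A) - \lambda_i(A-)) \cdot v(A-, i)^T w$, so taking absolute values gives
\[
\abs{\lambda_i(A) - \lambda_i(A-)} \cdot \abs{v(A-, i)^T w}
= \abs{\alpha} \cdot \abs{v(A-, i)^T c}.
\]

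To finish I would drop the factor $\abs{v(A-, i)^T w}$ on the left by observing that it is at most $1$: $v(A-, i)$ is a unit vector, and $w$ is a sub-vector of the unit vector $v(A, i)$ so $\norm{w}_2 \leq 1$, and Cauchy--Schwarz closes the gap. Rearranging yields the stated inequality. I do not see any real obstacle here: this is the standard Schur-complement/interlacing manipulation, and the only care needed is matching up which eigenvector gets projected against which equation (in particular, using the $A$-eigenvector to expose the first $n-1$ rows, and the $A-$-eigenvector to annihilate $A-$ via symmetry).
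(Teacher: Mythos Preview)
Your proposal is correct and follows essentially the same argument as the paper: restrict the eigenvector equation for $A$ to its first $n-1$ rows, left-multiply by $v(A-,i)^T$ using that $A-$ is symmetric, and then bound the resulting inner product $|v(A-,i)^T w|$ by $1$ via Cauchy--Schwarz. The only cosmetic difference is that the paper writes the final step as a division rather than dropping a factor bounded by $1$; your phrasing is in fact slightly cleaner since it handles the degenerate case $v(A-,i)^T w = 0$ without comment.
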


\begin{proof}
Taking the condition of $v(A, i)$ being an eigenvector for $A$:
\[
A v\left( A, i \right)
= 
\lambda \left( A, i \right) v\left( A, i \right)
\]
and isolating it to the first $n - 1$ rows gives:
\[
\lambda \left( A, i \right) v\left( A, i \right)_{1:\left( n - 1 \right)}
=
\left[ A v\left( A, i \right) \right]_{1:\left( n - 1 \right)}
=
\left(A-\right) v\left( A, i \right)_{1:\left( n - 1 \right)}
+ A_{1:\left(n - 1\right), n} v\left( A, i \right)_{n}.
\]

Left multiplying this equality by $v(A-, i)^{T}$, and invoking
substituting in $v(A-, i)^T A- = \lambda (A-, i) v(A-, i)$ given by the fact that $v(A-, i)$ is an eigenvector for $A-$ gives:
\begin{multline*}
\lambda \left( A, i \right) \cdot
v\left( A-, i \right)^{T}
v\left( A, i \right)_{1:\left( n - 1 \right)}\\
=
v\left( A-, i \right)^{T} \left(A-\right) v\left( A, i \right)_{1:\left( n - 1 \right)}
+ v\left( A-, i \right)^{T} A_{1:\left(n - 1\right), n} v\left( A, i \right)_{n}\\
=
\lambda \left( A-, i \right) \cdot
v\left( A-, i \right)^{T}
v\left( A, i \right)_{1:\left( n - 1 \right)}
+
v\left( A, i \right)_{n} \cdot
v\left( A-, i \right)^{T} A_{1:\left(n - 1\right), n} 
 .
\end{multline*}
Moving the two terms involving the dot product
$v(A-, i)^{T} v(A, i)_{1:(n - 1)}$ onto the same
side then gives an expression involving the difference
of eigenvalues:
\[
\left( \lambda\left( A, i \right) - \lambda\left( A -, i \right)\right)
\cdot
v\left( A-, i \right)^{T}
v\left( A, i \right)_{1:\left( n - 1 \right)}
=
v\left( A, i \right)_{n} \cdot
v\left( A-, i \right)^{T} A_{1:\left(n - 1\right), n}.
\]
or upon taking absolute values and dividing by the dot product
\[
\abs{\lambda\left( A, i \right) - \lambda\left( A -, i \right)}
=
\frac
{
	\abs{
		v\left( A, i \right)_{n} \cdot
		v\left( A-, i \right)^{T} A_{1:\left(n - 1\right), n}
	}
}
{
	\abs{
		v\left( A-, i \right)^{T}
		v\left( A, i \right)_{1:\left( n - 1 \right)}
	}
}.
\]

As both $v(A-, i)$ and $v(A, i)$ are unit vectors,
we can also upper bound the denominator above by $1$:
\[
\abs{v\left( A-, i \right)^{T} v\left( A, i \right)_{1:\left( n - 1 \right)}}
\leq
\norm{v\left( A-, i \right)}_2
\norm{v\left( A, i \right)_{1:\left( s - 1 \right)}}_2
\leq
\norm{v\left( A-, i \right)}_2
\norm{v\left( A, i \right)}_2
\leq 1,
\]
which then gives the lower bound on eigengap
by the dot product of $v(A-, i)$ against the
last column of $A$ minus its last entry.
\end{proof}

Note that because $v(A, i)$ is unit lengthed,
it has at least one entry with magnitude at least $n^{-1/2}$.
By permutating the rows and columns,
we can choose this entry to be entry $n$ when invoking
Lemma~\ref{lem:Interlacing}.
So it suffices to lower bound the dot product of every
column of $A$ against the eigenvectors of the minor
with that row/column removed.
When the columns are dense Gaussians, this is simply a Gaussian
whose variance is the norm of the eigenvector, which is $1$.

For the sparse case, we need to prove that
the eigenvector is dense in many entries.
We will invoke the following lemma for every principle
$(n - 1)$-by-$(n - 1)$ minor of $A$.

\begin{lemma}
\label{lem:EigenvectorDense}
Let $\bar{A}$ and $A$ be the initial and perturbed matrix
as given in Theorem~\ref{thm:PerturbA}.
That is, $\bar{A}$ is symmetric, has all entries with magnitude at most
$1 / n$, and minimum eigenvalue at least $1 / \kappa$ for some
$\kappa \geq n^{3}$; and $A$ is formed by adding a symmetric sparse
Gaussian matrix where each entry is set to
$\frac{1}{n^2 \kappa} \cdot \normal(0, 1)$
with probability $300 \log{n} \log\kappa / n$, and $0$ otherwise.
Then with probability at least $1 - n^{-20}$,
all eigenvectors of $A$ have at least
$\frac{n}{40 \log{n}}$ entries with magnitude
at least $\kappa^{ - 3 \log{n}}$.
\end{lemma}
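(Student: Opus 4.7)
The plan is to argue by contradiction and union bound, following the strategy of Lopatto--Luh~\cite{LopattoL19} with a modification to accommodate the deterministic base matrix $\bar{A}$. Suppose a unit eigenvector $v$ of $A = \bar{A} + R$ with eigenvalue $\lambda$ has fewer than $n/(40 \log n)$ coordinates of magnitude at least $\alpha := \kappa^{-3 \log n}$, and let $S := \{j : |v_j| \ge \alpha\}$ and $T := [n] \setminus S$. Restricting $Av = \lambda v$ to rows in $T$ gives $A_{T,S} v_S = (\lambda I - A_{T,T}) v_T$; bounding the right-hand side using $\|v_T\|_\infty < \alpha$, $\|\bar{A}\|_{\infty \to \infty} = O(1)$, $|\lambda| = O(1)$, and (via Claim~\ref{claim:GaussianMax} together with Chernoff on the number of nonzeros per row of $R$) $\|R\|_{\infty \to \infty} = O(1)$ yields the key constraint
\[
\|A_{T,S} v_S\|_\infty \le C \alpha,
\]
together with $\|v_S\|_2 \ge 1/2$ (since $\|v_T\|_2 \le \sqrt{n}\,\alpha \ll 1/2$), so $v_S / \|v_S\|_2$ is effectively a unit vector in $\mathbb{R}^S$ with every coordinate of magnitude $\ge \alpha/2$.

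The core of the argument is a per-configuration small-ball bound. Fix $S$ of size $s < n/(40\log n)$, a value of $\lambda$ in an $\alpha$-net of $[-2,2]$, and a unit vector $\hat{v}_S \in \mathbb{R}^S$ in an $\alpha^{O(1)}$-net of the sphere. We bound the probability (over $R$) that $\|A_{T, S} \hat{v}_S - w\|_\infty \le C' \alpha$ for some vector $w$ that is measurable with respect to $R_{T,T}$ and the other auxiliary randomness. The crucial independence observation is that, because $R$ is symmetric and $S \cap T = \emptyset$, the block $R_{T, S}$ is independent of $R_{T, T}$ and its rows are mutually independent, so after conditioning on $R_{T, T}$ (which makes the target $w$ deterministic) the event factors as a product of $|T|$ independent per-row small-ball events. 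For each such row, since $\hat{v}_S$ has some coordinate $j^*$ with $|\hat{v}_{j^*}| \ge 1/\sqrt{s} \ge \sqrt{40 \log n / n}$, conditioning on the Bernoulli indicator $X_{i, j^*} = 1$ (probability $p$) makes $R_{i, S} \hat{v}_S$ a Gaussian of standard deviation at least $|\hat{v}_{j^*}|/(n^2 \kappa)$, whose probability of falling in any specific interval of length $O(\alpha)$ is $\kappa^{-\Omega(\log n)}$. Multiplying this per-row bound over the $|T| \ge n/2$ independent rows and union bounding over the $\binom{n}{s} \cdot (1/\alpha)^{O(s)} \cdot (1/\alpha)$ choices of $(S, \hat{v}_S, \lambda)$ is what we need to bring below $n^{-20}$.

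The main obstacle is that the unconditional per-row probability is only $(1 - p) + p \cdot \kappa^{-\Omega(\log n)}$, because the case $X_{i, j^*} = 0$ contributes no Gaussian anti-concentration and only kills the row when the target $w_i$ happens to be outside an $O(\alpha)$-window of zero. The density threshold $p \ge 300 \log n \log \kappa / n$ is calibrated precisely so that $(1-p)^{|T|} \le n^{-\Omega(\log \kappa)}$, which together with the choice of net scale suffices in the ``generic'' case where the target $w_i$ is bounded away from zero for most rows. The residual case in which many $w_i$ simultaneously lie within $O(\alpha)$ of zero would force $\bar{A}_{T, S} v_S$ itself to be nearly null on a large row subset, and we rule this out separately using the hypothesis that the minimum eigenvalue of $\bar{A}$ is at least $1/\kappa$ (so that $\bar{A}$ cannot have unusually many near-null submatrix rows against a delocalized $v_S$). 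This last step is exactly the modification to the Lopatto--Luh argument flagged in the section preamble as needed to handle the ``initial $A$ matrix,'' and it is where the polynomial lower bound on the spectrum of $\bar{A}$ is used.
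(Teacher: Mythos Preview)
Your single-level $\epsilon$-net argument has a union-bound gap that cannot be closed with the stated parameters. The net over $(S, \hat v_S, \lambda)$ has size at least $(1/\alpha)^{\Omega(s)} = \kappa^{\Omega(s\log n)}$, which for $s$ as large as $n/(40\log n)$ is $\kappa^{\Omega(n)}$. On the other hand, your per-configuration bound relies on the single coordinate $j^*$ with $|\hat v_{j^*}| \ge 1/\sqrt{s}$; the remaining coordinates of $\hat v_S$ are only guaranteed to have magnitude $\ge \alpha/2$, and a Gaussian with standard deviation $\Theta(\alpha/(n^2\kappa))$ gives no useful anti-concentration at scale $\alpha$. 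So each row contributes at best a factor $(1-p) + p\cdot \kappa^{-\Omega(\log n)} \approx 1-p$, and the product over $|T| \le n$ rows is only $\kappa^{-O(\log n)}$. The net size beats this by $\kappa^{\Theta(n)}$, so the union bound fails exponentially.

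Your proposed ``residual case'' repair is also incorrect: the hypothesis $\sigma_{\min}(\bar A) \ge 1/\kappa$ does not prevent $\bar A_{T,S} v_S$ from vanishing on many rows. Take $\bar A$ diagonal (or block-diagonal); then $\bar A_{T,S} = 0$ identically while $\sigma_{\min}(\bar A) \ge 1/\kappa$ holds. In fact the paper's proof uses the eigenvalue lower bound only to pin $|\lambda| \le 2$, never in the way you describe.

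What the paper does instead is an \emph{inductive} multi-scale argument: it shows for $i = 0, 1, \dots, \log(n/(40\log n))$ that every eigenvector has at least $2^i$ entries of magnitude $\ge \epsilon_i := \kappa^{-3(i+1)}$. At level $i$ the net consists of vectors with at most $2^i$ nonzeros rounded to multiples of $\epsilon_i$, so its size is only $\kappa^{O(i\cdot 2^i)}$. Crucially, the inductive hypothesis supplies $2^{i-1}$ entries of magnitude $\ge \epsilon_{i-1}$, so for each of the $\ge n/2$ rows where $\hat x$ vanishes, the probability that \emph{none} of those $2^{i-1}$ entries is perturbed is $(1-p)^{2^{i-1}}$; and since $\epsilon_i/\epsilon_{i-1} = \kappa^{-3}$, a single hit yields anti-concentration at scale $\epsilon_i$. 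Compounding over rows gives probability $\kappa^{-\Omega(\log n \cdot 2^i)}$, which now dominates the net size because $i \le \log n$. The iterative coupling of the net dimension ($2^i$), the number of usable large entries ($2^{i-1}$), and the scale ratio ($\epsilon_i/\epsilon_{i-1}$) is exactly the missing idea in your plan.
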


The proof is by inductively constructing $O(\log{n})$ 
levels of $\epsilon$ nets.
A similar use of this technique for bounding the density of 
null space vectors in matrices with completely independent entries
is in Lemma~\ref{lem:SparseGaussianNullSpace}.

\begin{proof}	
First, by Claim~\ref{claim:GaussianMax},
we may assume that all entries in the perturbation generated
are bounded by $1/(\kappa n) \leq \frac{1}{n}$.
This in turn gives $\norm{A-\bar{A}}_F\le 1/(n\kappa)$,
and so by Lemma~\ref{lem:ErrorInvert}, all eigenvalues
of $A$ are between $\frac{1}{2\kappa}$ and $2$.

We will prove by induction on $i$
that for all $i \in [0, \log(\frac{n}{1000 \log{n}})]$,
with probability at least $1 - i \cdot n^{-21}$,
all unit length vectors $x$ such that $A x = \lambda x$
for some $\lambda$ with absolute value in the range
$[\frac{1}{2\kappa}, 2]$ has at least $2^{i}$ entries
with magnitude at least $\kappa^{-100 (i + 1)}$.
	
The base case of $i = 0$ follows that the norm of
$x$ being $1$: at least one of $n$ entries
in $x$ must have magnitude at least $1 / n$.
	
For the inductive case, we construct an $\epsilon$-net
consisting of all vectors with entries that are integer multiples of
$\epsilon_i$, which we set to
\[
\epsilon_i
\leftarrow
\kappa^{-3 \left(i + 1\right)}.
\]
	
For any vector $x$, rounding every entry in $x$ to the nearest
multiple of $\epsilon_i$ produces a vector $\xhat$ such that
\[
\norm{x - \xhat}_2
\leq
\sqrt{n} \epsilon_i
\]
which when multiplied by the norm of $A$ gives
\[
\norm{Ax - A\xhat}_2
\leq
\norm{A}_2
\cdot
\norm{x - \xhat}_2
\leq
2\sqrt{n} \epsilon_i.
\]
Combining these then gives
\[
\norm{A \xhat - \lambda \xhat}_{2}
\leq
\norm{A - \lambda x} + 4 \sqrt{n} \epsilon_{i}.
\]
This means it suffices to show that with good probability,
for all $\xhat$ with:
\begin{enumerate}
    \item at most $2^{i}$ non-zero entries,
    \item all non-zeros are integer multiple of $\epsilon_i$,
    \item overall norm at most $2$,
\end{enumerate}
there does not exist some
$\lambda$ with $\abs{\lambda} \in [\frac{1}{2\kappa}, 2]$
such that
\[
\norm{A \xhat - \lambda \xhat}_2
\leq
10 \sqrt{n} \epsilon_i.
\]
	
By the inductive hypothesis, it suffices to consider $x$
with at least $2^{i - 1}$ entries with magnitude at least $\epsilon_{i - 1}$.
Such vectors in turn round to $\xhat$ with at least
$2^{i - 1}$ entries whose magnitudes are at least
\[
\epsilon_{i - 1}
-
\epsilon_i
\geq
\frac{1}{2} \epsilon_{i - 1}.
\]
Let this subset of entries be $\textsc{Large}$.
	
We will bound the probability of $\xhat$ being close to an eigenvector
by applying the approximation condition to its zeros.
The choice of $i$ so that
$2^{i} \leq \frac{n}{1000 \log{n}} \leq \frac{n}{2}$ means $\xhat$ has at least $n / 2$ entries that are $0$.
For each such entry $i$, the fact that $\lambda \xhat_{i} = 0$
for any choice of $\lambda$ means it suffices to upper bound
the probability over $A$ of
\[
\abs{A_{i, :} \xhat}
\leq
10 \sqrt{n} \epsilon_i
\]
	
As these entries with zeros are disjoint from the ones in
$\textsc{Large}$,
the corresponding entries in $A$ are independent from each other.
So we can consider the rows corresponding to each such $i$
independently.
The probability that no entry in $A_{i, \textsc{Large}}$
is chosen to be a non-zero is
\[
\left(1 - p \right)^{2^{i - 1}}
=
\left(1 - \frac{300 \log{n} \log{\kappa}}{n} \right)^{2^{i - 1}}
\geq
\exp\left( \frac{- 300 \log{n} \log{\kappa} 2^{i - 1}}{n} \right).
\]
On the other hand, if one such entry is picked,
the resulting Gaussian corresponding to $A_{i, :} \xhat$ has
variance at least
\[
\frac{\epsilon_{i - 1}}{2} \cdot \frac{1}{n^2 \kappa},
\]
so its probability of being in an interval of size at most
$10 \sqrt{n} \epsilon_i$ is at most
\[
10 \sqrt{n} \epsilon_{i}
\frac{2 n^2 \kappa}{\epsilon_{i  - 1}}
=
20 n^{2.5} \kappa \frac{\epsilon_{i}}{\epsilon_{i - 1}}
\leq
\kappa^{-1},
\]
where the second inequality uses
$\epsilon_{i + 1} = \kappa^{-3} \epsilon_{i}$ and $\kappa \geq n^3$.
Combining these two then gives
\[
\prob{A}{\abs{A_{i, :} \xhat} \leq n^2 \epsilon_i}
\leq
\max\left\{\exp\left( \frac{- 300 \log\kappa 2^{i - 1}}{n} \right), \kappa^{-1} \right\},
\]
which compounded over the at least $n/2$ choices of $i$
gives that the overall probability of $A$ being picked so that
$\xhat$ could be close to an eigenvector is at most
\[
\max\left\{\exp\left( - 300 \log\kappa  2^{i - 2}\right), \kappa^{-n/2} \right\}.
\]
	
So it remains to take a union bound of this probability
over the size
of the $\epsilon$-net.
Recall that we considered all $\xhat$ with
at most $2^{i}$ nonzeros, and each such non-zero is an
integer multiple of $\epsilon_i$ in the range $[-2, 2]$.
This means the total size of the $\epsilon$-net can be upper
bounded by:
\[
\binom{n}{2^{i}} \cdot \left( \frac{4}{\epsilon_i} \right)^{2^{i}}
\leq
n^{2^{i}}
\cdot
\kappa^{3 \left( i + 1 \right) \cdot 2^{i}}
\leq
\kappa^{7 i \cdot 2^{i}},
\]
where the last inequality uses the assumption of $i \geq 1$.

We then invoke union bound over this $\epsilon$ net.
That is, we upper bound the overall failure probability
by multiplying the two failures probabilities for individual
vectors obtained earlier against this size bound above.
For the first term, we get
\[
\kappa^{7 \cdot i \cdot 2^{i}}
\cdot
\exp\left( - 300 \log{\kappa} \log{n} 2^{i - 2} \right)
=
\kappa^{7 i 2^{i} - 300 \log{n} 2^{i - 2}},
\]
which is less than $n^{-22}$ because $i < \log{n}$.

For the second term, we get
\[
\kappa^{7 i \cdot 2^{i}} \cdot \kappa^{-n/2}
=
\kappa^{7 \cdot i \cdot 2^{i} - n/2},
\]
which is at most $\kappa^{-22} \leq n^{-22}$
when $2^{i} \leq \frac{n}{20\log{n}}$.
Thus the overall failure probability is at most
$2 \cdot n^{-22} \leq n^{-21}$,
and the inductive hypothesis holds for $i$ as well by union bound.

The choice of $i$ being powers of $2$s means that
the last value picked is at least half the upper bound,
aka. at least $\frac{n}{40 \log{n}}$.
So we get that with probability at least $1 - n^{-20}$,
at least this many entries have 
entries have magnitude at least $\kappa^{-3 \log{n}}$.
\end{proof}

Applying Lemma~\ref{lem:EigenvectorDense} to every principle minor
that remove one row and column of $A$,
and incorporating the dot product based lower bound on eigenvalue gap
from Lemma~\ref{lem:EigenGap} then gives the overall bound.

\begin{proof} (of Theorem~\ref{thm:PerturbA})
First, consider all the $s$ principle minors of $A$
formed by removing one row and column (which we denote by $k$),
\[
A_{\left[n\right]\setminus k, \left[n\right]\setminus k}.
\]
	
In order to invoke Lemma~\ref{lem:EigenGap},
we need to first lower bound the value of column $k$
when multiplied by all the eigenvectors of this principle minor,
\[
\abs{
	A_{\left[n\right]\setminus k, k}^{T}
	v\left(A_{\left[n\right]\setminus k, \left[n\right]\setminus k}, i\right)
}.
\]
Note that the entries in this matrix are also perturbed in the same
manner as the overall matrix,
Therefore, Lemma~\ref{lem:EigenvectorDense} gives that with probability
at least $1 - n^{-20}$, for each $i$, all entries of
$v(A_{[n]\setminus k, [n]\setminus k}, i)$
$\frac{n}{40\log{n}}$ entries with magnitude
at least $\kappa^{ - 3 \log{n}}$.
	
On the other hand, the entries of $A_{[n] \setminus k}$ are
each perturbed by $\frac{1}{n^2 \kappa} \normal(0, 1)$
with probability at least $\frac{30 \log{\kappa} \log{n}}{n}$.
This means the probability of one of these perturbations
occurring on a large magnitude entry of
$v(A_{[n]\setminus k, [n]\setminus k}, i)$ is at least
\begin{multline*}
1 - \left( 1 - \frac{300 \log{\kappa} \log{n}}{n} \right)
^{\frac{n}{40 \log{n}}}
\geq
1 - \exp\left( - \frac{300 \log{\kappa} \log{n}}{n}
\cdot \frac{n}{40 \log{n}}\right)\\
=
1 - \exp\left( - \frac{300 \log{\kappa}}{40} \right)
\geq
1 - n^{-20},
\end{multline*}
where the last inequality follows from $\kappa \geq n^{3}$.
Thus, we have that with probability at least
$1 - n^{-20}$, the variance in the dot product between
$A_{[n]\setminus k, k}$ and $v(A_{[n]\setminus k, [n]\setminus k}, i)$
is at least
\[
\kappa^{-3 \log{n}} \cdot \frac{1}{n^2 \kappa},
\]
which in turn implies with probability at least $1 - n^{-20}$,
this dot product is at least $\kappa^{-4 \log{n}}$ in magnitude.
	
Taking a union bound over all $n$ choices of $k$,
and all $n - 1$ eigenvectors of $A_{[n] \setminus k, [n] \setminus k}$
gives that with probability at least $1 - n^{-10}$, we have
\[
\abs{
	A_{\left[n\right]\setminus k, k}^{T}
	v(A_{\left[n\right]\setminus k, \left[n\right]\setminus k}, i)
}
\geq
\kappa^{ -4 \log{n}}
\qquad
\forall k, i.
\]

Then consider any $v(A, i)$.
Since it has norm $1$, there is some entry $k$ with
$|v(A, i)_{k}| \geq n^{-1/2}$.
Plugging this into Lemma~\ref{lem:EigenGap} gives:
\[
\abs{
	\lambda\left( A_{\left[s\right] \setminus k, \left[s\right] \setminus k} , i\right)
	-
	\lambda\left( A, i \right)
}
\geq
n^{-1/2}
\cdot
\kappa^{-4 \log{n}}
\geq
\kappa^{-5 \log{n}}.
\]
Combining this with interlacing from Lemma~\ref{lem:Interlacing},
namely
$
	\lambda( A, i )
	\leq
		\lambda( A_{[n] \setminus k, [n] \setminus k}, i )
	\leq
	\lambda( A, i + 1 )	
$
then lower bounds the gap between $\lambda_{i}(A, i)$
and $\lambda(A, i + 1)$ by the same amount.
\end{proof}

	\section{Condition Number of a Random Krylov Space}
\label{sec:RandKrylov}

In this section, we use $\epsilon$-nets to bound the condition number
of a Krylov space matrix generated from a random sparse matrix.
	
\RandKrylov*

Our overall proof structure builds upon the eigenvalue lower
bounds for random matrices with independent entries~\cite{SankarST03:journal,TaoV10}.
This approach is also taken in Lemma~\ref{lem:DensePerturb}
(with suboptimal parameter trade-offs)
to analyze the min singular value of a
matrix perturbed by dense Gaussian matrices.

The main difficulty we need to address when adapting this
approach is the dependence between columns of the Krylov space matrix.
For a particular $i$, the columns of $A^{i} G^{S}$
are still independent, but for a single column of $G^{S}$,
$g^{S}$ and two different powers $i_1$ and $i_2$,
$A^{i_1}g^{S}$ and $A^{i_2}g^{S}$ are dependent.
Such dependencies requires us to analyze the $m$ columns
generated from each column of $G^{S}$ together.
Specifically, the $m$ columns
$g^{S}, Ag^{S}, \ldots, A^{m - 1} g^{S}$ produced by each such vector $g^{S}$.

We then consider the space orthogonal to rest of the Krylov space,
which we denote using $W$.
Our matrix generalization for showing that the columns
corresponding to $g^{S}$ having large projection into $W$
is to lower bound the right singular values of the matrix
\[
W^T \left[ g^{S}, Ag^{S}, \ldots, A^{m - 1} g^{S} \right],
\]
or equivalently, showing that its products against all
length $m$ unit vectors have large norms.

The condition of $m \times s \leq n - 5 m$ represents the key
that allows the use of $\epsilon$ nets.
It implies that $W$ has at least $5m$ rows, and thus the
randomness in $g^{S}$ has more dimensions to show up
than the dimensions of length $m$ unit vectors.
In Section~\ref{subsec:OneColumn}, we transform the
product for some $\xhat$ in the $\epsilon$ net,
\[
W^T
\left[ g^{S}, Ag^{S}, \ldots, A^{m - 1} g^{S} \right]
\xhat
\]
into the product of a matrix with high numerical rank
(about $\Omega(m)$) against the non-zero entries of $g^{S}$.
This is done through the connection between Krylov
spaces and Vandermonde matrices, which we formalize
in the rest of this section and Appendix~\ref{sec:Vandermonde}.

However, for even a single vector to work,
we need to ensure that the columns of $W$ are not
orthogonal to $A^{i} g^{S}$.
Consider the following example: $A$ is diagonal,
and all the non-zeros in the columns of $W$ are in about $O(m)$ rows.
Then $A^{i} g^{S}$ is non-zero only if the non-zeros of $g^{S}$
overlap with the non-zero rows of $W$.
As we choose entry of $g^{S}$ is set to non-zero with probability
about $m^{3} / n$, this overlap probability works out to about
$O(m^{4} / n) < 0.1$.
In other words, $W^T [g^{S}, A g^{S}, \ldots A^{m - 1} g^{S}]$
would be $0$ with constant probability.

Therefore, we need to rule out $W$ that are `sparse'.
For $A$ with general eigenspace structures, this condition becomes any
vector in the column space of $W$ having large dot products with many
(about $n / poly(m)$) eigenvectors of $A$.
We will formalize the meaning of such density through the spectral
decomposition of $A$ in the rest of this section.
Then in Section~\ref{subsec:NullSpace} we use a multi-layer
$\epsilon$-net argument similar to the eigenvector
density lower bound from Lemma~\ref{lem:EigenvectorDense} to
conclude that any vector in the column span of $W$ has dense
projections in the eigenspaces.

We start by taking the problem into the eigenspaces
of $A$, and consider the effect of each
column of $G^{S}$ as a column block.
Since $A$ is symmetric, let its spectral decomposition be
\[
A = U^T \diag{\sigma} U
\]
where $\sigma$ is the list of eigenvalues.
Then for a single column $g^{S}$, the resulting vectors are
\[
\left[
\begin{array}{c|c|c|c|c}
U^T U g^{S}&
U^T \diag{\sigma}^1 U g^{S}&
U^T \diag{\sigma}^2 U g^{S}&
\ldots &
U^T \diag{\sigma}^{m - 1} U g^{S}
\end{array}
\right].
\]
Since $U^T$ is a unitary matrix, we can remove it
from consideration.
Furthermore, note that for any vectors $x$ and $y$,
\[
\diag{x} y
=
\diag{y}x
\]
so we can switch the roles of $\sigma$ and $U g^{S}$,
by defining the Vandermonde matrix $V \in \Re^{n \times m}$
with entries given by:
\[
V_{ij}
=
\sigma_{i}^{j - 1}.
\]
With this matrix, the term corresponding to $g^{S}$ can be
written as
\[
\diag{U g^{S}}
V.
\]
More globally, we are now considering a matrix formed by
taking $s$ copies (one per column of $G^{S}$) of $V$
with rows rescaled by $\diag{U g^{S}}$, and putting the columns
beside each other.

As mentioned before, $V$ is a Vandermonde matrix.
Its key property that we will use is below.
We will prove this for completeness in Appendix~\ref{sec:Vandermonde}.
\begin{restatable}{lemma}{Vandermonde}
\label{lem:Vandermonde}
Let $\sigma$ be a length $n$ vector with all entries in
the range $[\alpha, \alpha^{-1}]$ for some $\alpha < 1 /n$,
and any two entries of $\sigma$ at least $\alpha$ apart.
The $n \times m$ Vandermonde matrix with entries given as
\[
V_{ij}
=
\sigma_{i}^{j - 1}.
\]
has the property that for any unit vector $x$,
at least $n - m + 1$
entries in $Vx$ have magnitude at least $\alpha^{3m}$.
\end{restatable}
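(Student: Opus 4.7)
The plan is to interpret the claim as a statement about polynomial evaluations. For any unit vector $x$ of length $m$, define the polynomial $p(t) = \sum_{j=1}^{m} x_j t^{j-1}$, so that $(Vx)_i = p(\sigma_i)$. The lemma then asserts that at most $m - 1$ of the evaluations $p(\sigma_1), \ldots, p(\sigma_n)$ can have magnitude below $\alpha^{3m}$. Qualitatively this is the classical fact that a nonzero degree-$(m-1)$ polynomial has at most $m - 1$ roots; the work is to make it quantitative using the pairwise separation $\alpha$ and the magnitude bound $\alpha^{-1}$ on the $\sigma_i$.

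Suppose for contradiction there exist $m$ indices $i_1 < i_2 < \cdots < i_m$ with $\abs{p(\sigma_{i_k})} \leq \alpha^{3m}$ for every $k$. Restricting $V$ to these rows yields a square $m \times m$ Vandermonde submatrix $V_m$, which is invertible since the $\sigma_{i_k}$ are pairwise distinct, and $V_m x = y$ with $\norm{y}_{\infty} \leq \alpha^{3m}$. It therefore suffices to upper bound the entries of $V_m^{-1}$ and derive a contradiction to $\norm{x}_2 = 1$.

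To bound $V_m^{-1}$ I would use Lagrange interpolation: the $k$th column consists of the coefficients of the basis polynomial $L_k(t) = \prod_{l \neq k} (t - \sigma_{i_l}) / (\sigma_{i_k} - \sigma_{i_l})$. Each denominator has magnitude at least $\alpha^{m-1}$ by the pairwise separation hypothesis, and each coefficient of the numerator $\prod_{l \neq k}(t - \sigma_{i_l})$ is an elementary symmetric polynomial in $m - 1$ values of magnitude at most $\alpha^{-1}$, hence of magnitude at most $2^{m-1} \alpha^{-(m-1)}$. Combining, every entry of $V_m^{-1}$ is at most $2^{m-1} \alpha^{-2(m-1)}$ in absolute value, so
\[
\norm{x}_2 \leq \norm{V_m^{-1}}_F \cdot \norm{y}_2 \leq \mathrm{poly}(m) \cdot \alpha^{-2(m-1)} \cdot \alpha^{3m} = \mathrm{poly}(m) \cdot \alpha^{m + 2},
\]
which is far less than $1$ under the hypothesis $\alpha < 1/n$ (and the range of $m$ implied by the calling context of Theorem~\ref{thm:RandKrylov}), giving the required contradiction.

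The main care point is lining up the two exponents: the Lagrange denominators lose a factor $\alpha^{-(m-1)}$ from the separation bound, and the numerator coefficients lose another $\alpha^{-(m-1)}$ from the magnitude bound on the $\sigma_i$, so $V_m^{-1}$ itself can genuinely be as large as $\alpha^{-\Theta(m)}$. The threshold $\alpha^{3m}$ in the statement is chosen precisely to leave a comfortable gap against this growth, so there is no real obstacle beyond bookkeeping. The only subtlety worth flagging is an off-by-one: $m$ simultaneous near-zeros of a polynomial of degree $m-1$ is exactly the boundary case that makes the Vandermonde system contradictory, which matches the conclusion allowing exactly $m - 1$ exceptional entries in $Vx$.
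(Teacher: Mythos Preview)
Your proposal is correct and takes essentially the same approach as the paper: reduce by contradiction to an $m\times m$ Vandermonde submatrix, bound its inverse entrywise via the Lagrange interpolation formula (denominators $\geq \alpha^{m-1}$ from the separation, numerator coefficients $\leq 2^{m-1}\alpha^{-(m-1)}$ from the magnitude bound), and check that the resulting $\alpha^{-O(m)}$ bound on $\norm{V_m^{-1}}$ is beaten by the threshold $\alpha^{3m}$. The paper packages the square case as a separate lemma (minimum singular value $\geq m^{-1}2^{-m}\alpha^{2m}$) but the content is identical.
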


We will treat this property of $V$ as a black box.
Specifically, the main technical result that we will show
using $\epsilon$-nets is:

\begin{restatable}{lemma}{BlockSigmaMin}
\label{lem:BlockSigmaMin}
Let $V$ be an $n \times m$ matrix with:
\begin{enumerate}
    \item $m < n^{1/4}$,
    \item max entry magnitude at most $1$,
    \item for any unit $x$,
        $Vx$ has at least $n - m + 1$ entries with magnitude
        at least $\alpha_{V}$ for some $\alpha_{V} \leq n^{-5 \log{n}}$.
\end{enumerate} 
and let $U$ be an $n \times n$ orthonormal matrix.
Let $G^{S}$ be an $n \times s$ sparse Gaussian matrix
where each entry is set i.i.d. to $\normal(0, 1)$
with probability $h / n$,
where $s \geq \frac{n}{2 m}$,
$h \geq 10000 m^2 \log(1/\alpha_{V})$,
and $W$ be an $n \times r$ orthonormal matrix
orthogonal to $G^{S}$ for some $r \geq 5 m$, i.e., $W^T G^{S} = 0$.
    
Then for another column vector $g^{S}$ generated from
this sparse Gaussian distribution with density $h$,
we have with probability at least $1 - n^{-3}$:
\[
    \sigma_{\min} \left( W^T \diag{U g^{S}} V\right)
    \geq
    \alpha_{V}^{4}.
\]
\end{restatable}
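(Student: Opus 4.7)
The plan is an $\epsilon$-net argument over the unit sphere of $\mathbb{R}^m$, combined with Gaussian anti-concentration for the fresh sparse column $g^S$, exploiting the fact that $g^S$ is independent of $(G^S, W)$.

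The starting observation is the identity $\diag{a}b = \diag{b}a$, which lets me rewrite
\[
W^T\diag{Ug^S}\,Vx \;=\; W^T\diag{Vx}\,U g^S \;=:\; M(x)\,g^S, \qquad M(x) := W^T\diag{Vx}\,U \in \mathbb{R}^{r\times n},
\]
so the goal reduces to showing $\|M(x)g^S\|_2\ge \alpha_V^4$ for every unit $x\in\mathbb{R}^m$ with failure probability at most $n^{-3}$. Set $y(x):=Vx$. By Lemma~\ref{lem:Vandermonde}, at most $m-1$ coordinates of $y(x)$ fall below $\alpha_V$ in magnitude, and $\|y(x)\|_\infty \le \|V\|_\infty\sqrt{m}\le\sqrt{m}$. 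Since $(P_W)_{ii}\le 1$ and $\sum_i(P_W)_{ii}=r\ge 5m$, a trace identity gives the uniform Frobenius lower bound
\[
\|M(x)\|_F^2 \;=\; \sum_i y(x)_i^2 (P_W)_{ii} \;\ge\; \alpha_V^2(r - m + 1)\;\ge\; 4 m\alpha_V^2.
\]

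Conditional on the support $S$ of $g^S$, the vector $M(\xhat) g^S$ is a centered Gaussian in $\mathbb{R}^r$ with covariance $\Sigma := M(\xhat)_{:,S} M(\xhat)_{:,S}^T$, so $\|M(\xhat)g^S\|_2^2 = \sum_k \lambda_k(\Sigma)\,\xi_k^2$ for iid standard Gaussians $\xi_k$. A chi-squared lower-tail estimate then says this quadratic form is at least $\tfrac12\mathrm{tr}\,\Sigma$ except on an event of probability at most $\exp(-c\,(\mathrm{tr}\,\Sigma)^2/\|\Sigma\|_F^2)$, and $(\mathrm{tr}\,\Sigma)^2/\|\Sigma\|_F^2 \ge \mathrm{tr}\,\Sigma/\|\Sigma\|_{op}$ is precisely the ``numerical rank'' of $\Sigma$. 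A Chernoff-type argument over the random support $S$ (each index picked independently with probability $h/n$) then transfers this to the deterministic numerical rank of $M(\xhat) M(\xhat)^T = W^T \diag{y(\xhat)^2}\,W$: once this numerical rank exceeds the union-bound threshold $\Omega(m\log(n/\alpha_V))$, a standard $\epsilon$-net argument on the unit sphere of $\mathbb{R}^m$ closes the proof, with the Lipschitz bound $\|M(x)-M(\xhat)\|_{op}\le \sqrt{m}\|x-\xhat\|$ absorbing the discretization error.

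The main obstacle is precisely this numerical-rank lower bound, equivalently the operator-norm upper bound $\|M(\xhat)\|_{op}^2 = \lambda_{\max}(\diag{y(\xhat)}\,P_W\,\diag{y(\xhat)}) \ll m$. The trivial bound $\le\|y(\xhat)\|_\infty^2\le m$ is much too weak because the $\le m-1$ exceptional large-magnitude coordinates of $y(\xhat)$ could a priori coincide with directions on which $P_W$ is concentrated. To overcome this I would exploit the hypothesis $W^TG^S=0$: because $G^S$ is sparse Gaussian and independent of $g^S$, a multi-scale $\epsilon$-net argument modeled on the proof of Lemma~\ref{lem:EigenvectorDense}, but applied to the null space of $G^S$ instead of to eigenvectors of a perturbed matrix, shows that with probability at least $1-n^{-5}$ over $G^S$ the projection $P_W$ is delocalized, $(P_W)_{ii}\le\mathrm{polylog}(n)\cdot r/n$, equivalently $\|Wu\|_\infty\le\mathrm{polylog}(n)/\sqrt{n}$ for every unit $u\in\mathbb{R}^r$. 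This flatness, together with the anti-sparsity of $y(\xhat)$ from Lemma~\ref{lem:Vandermonde} and a further $\epsilon$-net over $u\in\mathbb{R}^r$, yields the required uniform bound $\|M(\xhat)\|_{op}^2\ll m$ and hence the required numerical rank. The sparsity parameter $h\ge 10^4 m^2\log(1/\alpha_V)$ is calibrated to simultaneously absorb the nested $\epsilon$-nets over $x\in\mathbb{R}^m$ and $u\in\mathbb{R}^r$ and the $O(\log n)$ inductive density layers, closing the union bound against the target failure probability $n^{-3}$.
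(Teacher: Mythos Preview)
Your reduction to $M(x)g^S$ with $M(x)=W^T\diag{Vx}U$ is correct and matches the paper. The gap is in the tail bound you invoke. The Hanson--Wright/chi-squared lower-tail estimate you cite,
\[
\Pr\Bigl(\|M(\xhat)g^S\|_2^2 \le \tfrac12\,\mathrm{tr}\,\Sigma\Bigr)\;\le\;\exp\!\left(-c\,\frac{(\mathrm{tr}\,\Sigma)^2}{\|\Sigma\|_F^2}\right),
\]
has exponent equal to (a constant times) the numerical rank of $\Sigma=M(\xhat)_{:,S}M(\xhat)_{:,S}^T$. But $\Sigma$ is an $r\times r$ positive semidefinite matrix, so its numerical rank is at most its actual rank, which is at most $r=5m$. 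Your own stated requirement, that the numerical rank exceed $\Omega(m\log(n/\alpha_V))$, is therefore impossible: $\log(1/\alpha_V)\ge 5\log^2 n$, so you are asking for numerical rank $\gg r$. No amount of delocalization of $P_W$ can push the exponent past $cr$, and $\exp(-O(m))$ cannot beat an $\epsilon$-net of size $(1/\alpha_V)^{\Theta(m)}=\exp(\Theta(m\log(1/\alpha_V)))$. Relatedly, your proposed delocalization $(P_W)_{ii}\lesssim r/n$ is in the standard basis, whereas $M(x)M(x)^T=W^T\diag{(Vx)^2}W$ lives there too; even granting it, the ratio $\mathrm{tr}/\|\cdot\|_{op}$ you can extract is of order $\alpha_V^2 r/m$, not $m\log(1/\alpha_V)$.

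What the paper does instead is replace concentration by a genuine Gaussian \emph{small-ball} bound, which yields per-net-point failure probability $\alpha_V^{\Theta(r)}$ rather than $e^{-\Theta(r)}$; this is exactly the scale needed to absorb the $\alpha_V^{-O(m)}$ net. To get there one must lower-bound the \emph{minimum} singular value of (a column submatrix of) $M(\xhat)_{:,S}$, not just its trace. The paper accomplishes this in three steps: (i) a density lemma (Lemma~\ref{lem:NullSpace}) showing that for every $w$ in the span of $W$ and every $\yhat$ in the net, the row vector $w^T\diag{\yhat}U^T$ has $\Omega(s)$ entries of magnitude $\ge\alpha_V^3$ --- this is delocalization in the $U$-basis \emph{after weighting by} $\yhat$, which is what is actually needed and is where the hypothesis $W^TG^S=0$ enters; (ii) a sampling lemma showing that the random support $S$ of $g^S$ then hits enough of these large entries to make $M(\xhat)_{:,S}$ have min singular value $\gtrsim\alpha_V^2$; (iii) restricted invertibility to extract $|J|\ge r/2$ columns on which the coordinate-wise Gaussian small-ball bound gives $\alpha_V^{|J|}$. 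Your delocalization idea is in the right spirit but is aimed at the wrong basis and the wrong spectral quantity.
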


We will prove this bound in the next subsection
(Section~\ref{subsec:OneColumn}).
Before we do so, we first formally verify that this local bound
suffices for globally lower bounding the min
singular value of the block Krylov space matrix.

\begin{proof}(of Theorem~\ref{thm:RandKrylov})

Claim~\ref{claim:GaussianMax} allows us to assume that the
max magnitude of an entry in $G^{S}$ is at most $n$.
Applying Lemma~\ref{lem:MaxMagnitude} inductively on
$A^{i} G^{S}$ then gives that the magnitudes of all entries
in each such matrix is at most $n$ as well.
Thus the max magnitude of an entry in $K$ is at most $n$,
and the max singular value of $K$ is at most $n^2$.

We now turn attention to the min singular value.
Since $U$ is unitary,
it suffices to lower bound the min singular vlaue of
\[
\Khat
=
U K
=
\textsc{HCat}_{1 \leq j \leq s}
\left(\diag{U G^{S}_{:, j}} V\right).
\]
where $\textsc{HCat}(\cdot)$ denotes horizontal concatenation
of matrices (with the same numbers of rows),
and $V$ is $m$-step Vandermonde matrix generated from the eigen
values of $A$.
Since the Frobenius norm of $A$ is at most $1$, all eigenvalues,
and all their powers, have magntiude at most $1$ as well.
Combining this with Lemma~\ref{lem:Vandermonde} then gives
that $V$ has max entry-wise magnitude at most $1$,
and for any unit vector $x$, $Vx$ has at least $n - m + 1$
entries with magnitude at least $\alpha_{A}^{3m}$.

We denote the block of $m$ columns corresponding to
$G^{S}_{:, j}$ as $[j]$.
Taking union bound over all column blocks,
we get that Lemma~\ref{lem:BlockSigmaMin} holds
for each of the blocks.
Under this assumption, we proceed to lower bound
bound $\norm{\Khat x}_2$ for all unit vectors $x$.
For each such vector, because there are at most $s$ blocks,
there is some block $x[j]$ with $\norm{x[j]}_2 \geq s^{-1/2}$.

The energy minimization extension definition of Schur complements
(e.g. Appendix A.5.5. of~\cite{BoydV04:book}) gives
\[
x^T \left( \Khat^T \Khat \right) x
\geq
x\left[j\right]^{T}
\textsc{Sc}\left( \Khat^T \Khat, \left[j\right]\right)
x\left[j\right]
\]
This Schur complement of $\Khat^T \Khat$ onto the block $[j]$ can in turn
be written as
\[
\Khat_{:, \left[j\right]}^T \Khat_{:, \left[j\right]}
-
\Khat_{:, \left[j\right]}^T \Khat_{:, \overline{\left[j\right]}}
    \left( \Khat_{:, \overline{\left[j\right]}}^{T}
			\Khat_{:, \overline{\left[j\right]}} \right)^{\dag}
\Khat_{:, \overline{\left[j\right]}}^{T} K_{:, \left[j\right]}
=
\Khat_{:, \left[j\right]}^T
    \left( I -
		\Khat_{:, \overline{\left[j\right]}}
		\left( \Khat_{:, \overline{\left[j\right]}}^{T}
		\Khat_{:, \overline{\left[j\right]}} \right)^{\dag}
		\Khat_{:, \overline{\left[j\right]}}^{T}
		\right)
\Khat_{:, \left[j\right]}
\]
where $\dag$ denotes the pseudo-inverse.

Note that the middle term is precisely a projection matrix
onto the space orthogonal to the columns of $K_{:, \overline{[j]}}$.
So in particular, picking $W(\overline{[j]})$ to be an
orthogonal basis of a subset of this space can only decrease the operator:
\[
\textsc{Sc}\left( \Khat^T \Khat, \left[j\right]\right)
\succeq
\Khat_{:, \left[j\right]}^{T}
W\left( \overline{\left[j\right]}\right)
W\left( \overline{\left[j\right]}\right)^{T}
\Khat_{:, \left[j\right]}
\]
Applying this matrix inequality to the vector $x[j]$ then gives:
\begin{multline*}
\norm{\Khat x}_2
\geq
\sqrt{
x\left[j\right]^{T}
\Khat_{:, \left[j\right]}^{T}
W\left( \overline{\left[j\right]}\right)
W\left( \overline{\left[j\right]}\right)^{T}
\Khat_{:, \left[j\right]}
x\left[j\right]
}
=
\norm{W\left(\overline{\left[j\right]}\right)^{T}
				\Khat_{:, \left[j\right]} x\left[j\right]}_2\\
\geq
\sigma_{\min} \left( W\left(\overline{\left[j\right]}\right)^{T}
				\Khat_{:, \left[j\right]}\right)
\cdot
\norm{x\left[j\right]}_2
\geq
n^{-1/2}
\cdot
\sigma_{\min} \left( W\left(\overline{\left[j\right]}\right)^{T}
				\Khat_{:, \left[j\right]}\right).
\end{multline*}
Then the result follows from the lower bound on the min
singular value from Lemma~\ref{lem:BlockSigmaMin}.
\end{proof}

\subsection{Least Singular Value of Krylov Space of One Column}
\label{subsec:OneColumn}

We now prove the lower bound on min singular value
of a single block as stated in Lemma~\ref{lem:BlockSigmaMin}.

\BlockSigmaMin*

The proof will be via an $\epsilon$-net argument.
We want to show that for all $x \in \Re^{m}$,
$\norm{W^T \diag{U g^{S}} Vx}$ is large.
To do so, we will enumerate over all vectors,
with entries rounded to nearest multiple of $\epsilon$.
We will denote these rounded vectors with the hat superscript,
i.e., $\xhat$ for $x$.

By claim~\ref{claim:GaussianMax}, we may assume that
the entries in $Ug^{S}$ have magnitude at most $2\sqrt{n}$.
Combining this with the fact that $W$ is orthonormal,
and the given condition on max magnitude of $V$ gives
that for vectors $x$ and $\xhat$ such that
$\norm{x - \xhat}_{\infty} \leq \epsilon$, we have
\[
\norm{W^T \diag{Ug^{S}} V \xhat
-
W^T \diag{Ug^{S}} V x
}
\leq
n^3 \epsilon.
\]
So we can consider a finite number of such $\xhat$
vectors, as long as $\epsilon$ is set to be smaller than
the lower bound on products we want to prove.

For such a vector $\xhat$, we let
\[
\yhat = V \xhat.
\]
As $\frac{1}{2} \leq \norm{\xhat}_2 \leq 2$,
the max magnitude of an entry in $\yhat$ is at most $2n$
by the bound on magnitude of entries of $V$;
and at least $n - m$ entries in $\yhat$ have magnitude
at least $\alpha_{V} / 2$ by the given condition on $V$.

We want to show for a fixed triple of $W^T$, $U$, and $\yhat$,
it's highly unlikely for $g^{S}$
to result in a small value of $\norm{W^T \diag{U g^{S}} \yhat}$.
Consider the effect of a single entry of $g^{S}$, $g^{S}_{i}$ on the result.
The vector produced by $g^{S}_{i}$ in $U^{T} g^{S}$ is
\[
\left( U^{T} \right)_{:, i}
=
\left( U_{i, :} \right)^{T}.
\]
This vector is then multiplied entry-wise by $\yhat$:
\[
\diag{U_{i, :}^{T}} \yhat
= 
\diag{\yhat} \left( U_{i, :} \right)^{T},
\]
and then multiplied against $W^{T}$.
So the contribution of $g^{S}_{i}$ to the overall sum
is a vector in $\Re^{k}$ given by
\[
    g^{S}_i \cdot W^{T} \diag{\yhat} U_{i, :}^{T}.
\]
	
Thus, once we fix $\xhat$ and $\yhat$ from the $\epsilon$-net,
we can form the $\Re^{k \times n}$ matrix that directly measures
the contribution by the entries in $g^{S}$.
\[
    W^{T} \diag{\yhat} U^{T}.
\]
So our goal is to prove that a random subset of columns
of this matrix picked independently with probability
$h / n$ per column has high numerical rank.
	
To do so, we will show that for a fixed $\yhat$, any vector
in the span of $W$ is likely to have non-zero dot products
with most columns of the form of $\diag{\yhat} U_{i, :}^{T}$.
This relies on proving a global lower bound on the density
of null space vectors of $G^{S}$.
So we will carry it out as a global argument in 
Section~\ref{subsec:NullSpace}

\begin{restatable}{lemma}{NullSpace}
\label{lem:NullSpace}
Let $U$ be an $n$-by-$n$ orthornomal basis,
$\Ycal$ be a family of length $n$ vectors with magnitude at most
$n$, at least $n - m$ entries
with magnitude at least $\alpha_{Y} < n^{-4 \log{n}}$ for some $m < n^{\frac{1}{4}}$,
and let $G^{S}$ be an $n \times d$ sparse Gaussian matrix
with each entry set to $\normal(0, 1)$ with probability $h/n$
for some $h$ such that $dh > 20 n \log{n} \log(1 / \alpha)$
and $d > 80 m$.
Then with probability at least 
\[
1 - n^{-9} - \abs{\Ycal} \alpha_{Y}^{\frac{d}{10}},
\]
the matrix $G^{S}$ has the property that 
for any vector $w$ orthogonal to the columns of $G^{S}$,
and any vector $\yhat \in \Ycal$, the vector
$w^{T} \diag{\yhat} U^{T}$ has at least $\frac{d}{80}$
entries with magnitude at least $\alpha_{Y}^3$.
\end{restatable}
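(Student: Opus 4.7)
The plan is to pass from the ``for all $w$ orthogonal to the columns of $G^S$'' quantifier to an $\epsilon$-net argument on an explicit, roughly $O(d)$-dimensional parameterization of the bad $w$'s, and then apply column-wise Gaussian anti-concentration. First I would parameterize the bad witnesses: suppose the conclusion fails, so there exist $\yhat\in\Ycal$, a subset $T\subseteq[n]$ with $|T|=n-d/80$, and a unit $w$ with $w^TG^S=0$ such that $|(w^T\diag{\yhat}U^T)_j|\le\alpha_Y^3$ for all $j\in T$. Writing $v=w^T\diag{\yhat}U^T$ and using that $U$ is unitary, $w^T\diag{\yhat}=vU=v_TU_{T,:}+v_{T^c}U_{T^c,:}$, and the first term has $\ell_2$ norm at most $\sqrt n\,\alpha_Y^3$; hence $w^T\diag{\yhat}$ lies within small error of a vector in the $(d/80)$-dimensional row span of $U_{T^c,:}$. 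Splitting off the at most $m$ coordinates $S_0\subseteq[n]$ where $|\yhat_i|<\alpha_Y$ (on which we allow arbitrary $w_i\in[-1,1]$), the remaining entries of $w$ are determined up to error by the free parameter $v_{T^c}\in\mathbb{R}^{d/80}$.

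Next I would construct an $\epsilon$-net. The parameter $v_{T^c}$ has norm $O(n)$ (since $\|w^T\diag{\yhat}\|_2\le n$), so rounding both $v_{T^c}$ and $w_{S_0}$ to a grid of resolution $\epsilon=\alpha_Y^{10}$ gives, for each $\yhat$, a net of size at most $\binom{n}{d/80}(3n/\epsilon)^{d/80+m}\le(n/\alpha_Y)^{O(d/80+m)}$. The a priori bound $\|g^S_{:,j}\|_2\le O(\sqrt{hn})$ from Claim~\ref{claim:GaussianMax} lets us replace $w$ by its nearest net vector $\what$, changing each $|\what^Tg^S_{:,j}|$ by at most $\epsilon\sqrt{hn}\ll\alpha_Y^5$. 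It thus suffices to bound the probability, over $G^S$, that there is a $\what$ in the net with $|\what^Tg^S_{:,j}|\le\alpha_Y^5$ for every $j\in[d]$.

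For the per-element bound, fix $\what$ and use that the columns $g^S_{:,j}$ are independent, so the joint event factors across $j$. Given the random support $S_j$ of $g^S_{:,j}$, the inner product $\what^Tg^S_{:,j}$ is $\normal(0,\|\what_{S_j}\|_2^2)$. The structural claim is that any $\what$ coming from the parameterization has $\Omega(d)$ coordinates of magnitude at least $\alpha_Y^{O(1)}$: because $\|w\|_2=1$, $|\yhat_i|\le n$, and $|\yhat_i|\ge\alpha_Y$ on $[n]\setminus S_0$, the $(d/80)$-dimensional structure of $w^T\diag{\yhat}$ combined with the unit-norm constraint distributes a polynomial amount of $\ell_2$ mass over $\Omega(d)$ coordinates of $w$ even for adversarial $U$. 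With this in hand, $\Pr(S_j\cap\{\text{large coords}\}=\emptyset)\le\exp(-\Omega(hd/n))\le\alpha_Y^{\Omega(1)}$ via the hypothesis $hd\ge 20n\log n\log(1/\alpha_Y)$, and on the complementary event Gaussian anti-concentration gives a further $\alpha_Y^{\Omega(1)}$ factor. Raising to the $d$-th power and union-bounding across the net and $\Ycal$ yields the stated $n^{-9}+|\Ycal|\alpha_Y^{d/10}$ failure bound, with the $n^{-9}$ absorbing the conditioning on column norms and the polynomial factors from the net.

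The main obstacle is the structural claim in the per-element step: showing that the parametrized $\what$ carries $\Omega(d)$ non-negligible coordinates even for adversarial $U$. A worst case is $U$ a permutation matrix, which makes the support of $v_{T^c}U_{T^c,:}$ as small as $d/80$; the saving grace is that dividing entrywise by $\yhat$ (bounded by $n$) can shrink coordinates by at most a factor $n$, so if $\|w\|_2=1$ then those $\Omega(d)$ supporting coordinates must each have magnitude at least about $1/(n\sqrt d)$, which is polynomially large in $\alpha_Y$ since $\alpha_Y\le n^{-4\log n}$, and this polynomial gap is exactly what powers the anti-concentration step.
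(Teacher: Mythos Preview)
Your parameterization of the bad $w$'s via $(v_{T^c},w_{S_0})$ and the resulting net are essentially the paper's construction of $\Wcal(\yhat)$. The genuine gap is the ``structural claim'' in the per-element step: you assert that every $\what$ in the net has $\Omega(d)$ coordinates of magnitude at least $\alpha_Y^{O(1)}$, but this is false. Take $U=I$ and $\yhat=(1,\ldots,1)$; then $w^T\diag{\yhat}U^T=w^T$, and $w=e_1$ has exactly one large entry, so $v_{T^c}$ can be taken $1$-sparse (padded with zeros to dimension $d/80$), $S_0=\emptyset$, and $\what=e_1$ sits in the net with a single large coordinate. Your last paragraph implicitly assumes the support of $v_{T^c}U_{T^c,:}$ has size $d/80$, but nothing prevents $v_{T^c}$ itself from being $1$-sparse. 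For such a $\what$ the per-column failure probability is roughly $1-h/n$, so over all $d$ columns you only get $\exp(-hd/n)\le\alpha_Y^{O(\log n)}$, which is destroyed by the net size $\alpha_Y^{-\Omega(d)}$ in the union bound whenever $d\gg\log n$.

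The paper supplies exactly the missing piece. Before touching the net $\Wcal(\yhat)$, it proves a separate multi-scale $\epsilon$-net lemma (Lemma~\ref{lem:SparseGaussianNullSpace}) showing that, with probability $1-n^{-10}$ over $G^S$, \emph{every} unit vector in the null space of $G^S$ already has at least $d/(40\log n)$ coordinates of magnitude $\ge n^{-4\log n}$. This is a probabilistic property of $G^S$, not a structural property of the parameterization. Conditioned on it, any null-space $w$ is dense, hence any $\what\in\Wcal(\yhat)$ that is actually close to such a $w$ is dense too, and then the per-column Gaussian anti-concentration (stated as Lemma~\ref{lem:ClobberOneVector}) gives the strong $\alpha_Y^{d/5}$ per-vector bound needed to beat the net. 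So the fix to your plan is: drop the structural claim, insert the global ``null-space vectors of $G^S$ are dense'' step, and only apply the anti-concentration bound to dense net vectors.
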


We then convert this per-entry bound to an overall bound on
the minimum
singular value of $W^{T} \diag{y} U^{T}_{:, S}$, where $S$
is the subset of non-zeros picked in $g^{S}$.
This is once again done by union bound over an $\epsilon$-net.
Here the granularity of the net is again dictated by
the minimum dot product that we want to show.
This value is in turn related to the entry-wise magnitude
lower bound of the matrix that we can assume, which is $\alpha_Y^{3}$.

\begin{lemma}
\label{lem:SampleColumns}
    Let $Y$ be an $r \times n$ matrix with $r \leq n$ and per-entry
    magnitude at most $n^{10}$ such that for any unit vector
    $z \in \Re^{r}$, the vector $z^T Y$ has at least 
    $t$ entries with magnitude at least $\alpha_{Y} < n^{-12}$.
    Then a random sample of the columns of $Y$ with
    each column chosen independently with probability
    $\frac{h}{n}$ gives a subset $S$ such that for all
    unit $z \in \Re^{k}$,
    \[
    \norm{z^T Y_{:, S}}_{2} \geq \frac{\alpha_{Y}}{2}
    \]
    with probabitliy at least
    \[
    1 - \alpha_{Y}^{-2r} \exp\left( - \frac{t \cdot h}{n} \right)
    \]
\end{lemma}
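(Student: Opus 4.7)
The plan is a standard $\epsilon$-net argument over the unit sphere in $\mathbb{R}^r$, combined with a one-sided Chernoff bound per fixed direction. First I would fix a single unit vector $z \in \mathbb{R}^r$. By hypothesis the row vector $z^T Y$ has a set $T = T(z)$ of at least $t$ coordinates whose magnitude is at least $\alpha_Y$. Each column of $Y$ is included in $S$ independently with probability $h/n$, so the probability that $S$ misses every index of $T$ is at most
\[
\left( 1 - \frac{h}{n} \right)^{t} \leq \exp\left( - \frac{t h}{n} \right).
\]
If $S \cap T \neq \emptyset$, then $\|z^T Y_{:, S}\|_2 \geq \alpha_Y$, since the surviving coordinate contributes at least $\alpha_Y$ in absolute value.

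Next I would upgrade this pointwise statement to a uniform statement via a net. Take $\mathcal{N}$ to be an $\epsilon$-net of the unit sphere in $\mathbb{R}^r$ with $\epsilon \defeq \alpha_Y / (4 n^{12})$; standard volumetric bounds give $|\mathcal{N}| \leq (3/\epsilon)^r \leq \alpha_Y^{-2r}$, where the final inequality uses $\alpha_Y < n^{-12}$ (so that $3/\epsilon = 12 n^{12}/\alpha_Y \leq \alpha_Y^{-2}$ once $n$ is not too small). A union bound over $\mathcal{N}$ shows that with probability at least $1 - \alpha_Y^{-2r} \exp(-th/n)$, every $\hat{z} \in \mathcal{N}$ satisfies $\|\hat{z}^T Y_{:, S}\|_2 \geq \alpha_Y$.

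Finally I would transfer this from $\mathcal{N}$ to all unit $z$. For any unit $z$, pick $\hat{z} \in \mathcal{N}$ with $\|z - \hat{z}\|_2 \leq \epsilon$. Since every entry of $Y$ has magnitude at most $n^{10}$, we have $\sigma_{\max}(Y_{:, S}) \leq \|Y\|_F \leq r \cdot n \cdot n^{10} \leq n^{12}$, so
\[
\left| \, \norm{z^T Y_{:, S}}_2 - \norm{\hat{z}^T Y_{:, S}}_2 \, \right|
\leq \norm{(z - \hat{z})^T Y_{:, S}}_2
\leq \epsilon \cdot n^{12}
\leq \alpha_Y / 4.
\]
Combined with the uniform lower bound on $\mathcal{N}$, this yields $\|z^T Y_{:, S}\|_2 \geq \alpha_Y - \alpha_Y/4 \geq \alpha_Y / 2$ for every unit $z$, as desired.

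The only real subtlety is calibrating $\epsilon$: it has to be small enough that the approximation error $\epsilon \cdot \sigma_{\max}(Y_{:,S})$ stays below $\alpha_Y / 2$, yet the resulting net size $(3/\epsilon)^r$ must still be absorbed into the claimed factor $\alpha_Y^{-2r}$. The hypothesis $\alpha_Y < n^{-12}$ is exactly what makes these two requirements simultaneously satisfiable, and aside from this balancing the argument is essentially mechanical.
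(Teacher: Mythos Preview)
Your proof is correct and follows essentially the same $\epsilon$-net-plus-union-bound argument as the paper; the only cosmetic difference is that you net the unit sphere directly, whereas the paper rounds each coordinate of $z$ to a grid of step $\alpha_Y/(2n^{11})$ and bounds the net size as $(2/\epsilon_Z)^r \leq \alpha_Y^{-2r}$. One minor arithmetic slip: $\|Y\|_F \leq \sqrt{rn}\cdot n^{10} \leq n^{11}$ rather than $rn\cdot n^{10}$, but your stated bound of $n^{12}$ is still valid, so nothing downstream is affected.
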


\begin{proof}

The magnitude upper bound of $Y$
ensures for any $z$ the rounded vector $\zhat$ has
\[
\norm{z^T Y_{:, S} - \zhat^T Y_{:, S}}_{2}
\leq
\norm{z  - \zhat}_{2}
\cdot
\norm{Y_{:, S}}_2
\leq
\sqrt{n} \norm{z - \zhat}_{\infty}
\cdot
\sqrt{n}
\normi{Y_{:, S}}
\leq
n^{11} \norm{z - \zhat}_{\infty}
\]
where the second last inequality follows from bounding
the $\ell_2$ norm by $\ell_{\infty}$ norms.
So an $\epsilon$-net over $z$ with granularity
\[
\epsilon_{Z} = \frac{\alpha_{Y}}{2 n^{11}}.
\]
ensures that for any $z$, there is some $\zhat$ in the net such that
\[
\norm{z^T Y_{:, S} - \zhat^T Y_{:, S}}_{2}
\leq
\frac{\alpha_{Y}}{2}.
\]

So it suffices to bound the probability that for
all $\zhat$, $S$ contains at least one of the large
entries in the vector $\zhat^T Y$.
As each of the $t$ large entries is picked with probability
$h/n$, the probability that none gets picked is
\[
\left(1 - \frac{h}{n} \right)^{t}
\leq
\exp\left( - \frac{ht}{n} \right).
\]
While on the other hand,
because every entry of a unit vector $z$ has value between $-1$
and $1$, the rounding to granularity $\epsilon_Z$ gives at most
\[
\frac{2}{\epsilon_{Z}}
\leq
4 n^{11} \alpha_{Y}^{-1}
\leq
\alpha_{Y}^{-2}
\]
values per coordinate.
Multiplied over the $r$ coordinates of $z$ then gives
that the total size of the $\epsilon$-net is at most $\alpha_{Y}^{-2r}$.
Taking a union bound over all these vectors then gives the overall bound.
\end{proof}

This ensures that the sub-matrix corresponding to the non-zero
entries of $g^{S}$ is well conditioned.
We can then apply {\em restricted invertibility}~\cite{BourgainT87}
to obtain a subset of columns $J \subseteq S$ such that
the covariance of those Gaussian entries is well conditioned.
The formal statement of restricted invertibility that we will
use is from~\cite{SpielmanS12}.
\begin{lemma}
	\label{lem:RestrictedInvertibility}
	Let $Y$ be a $n \times r$ matrix such that
	$Y^{T} Y$ has minimum singular value at least $\xi$.
	Then there is a subset $J \subseteq [n]$ of size
	at least $r / 2$ such that
	$Y_{J, :}$ has rank $|J|$, and
	minimum singluar value at least $\frac{\xi}{\sqrt{10n}}$,
	or equivalently:
	\[
	Y_{J, :} Y_{J, :}^{T}
	\succeq
	\frac{\xi^2}{10 n}I.
	\]
\end{lemma}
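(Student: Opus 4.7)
The plan is to derive this lemma as a direct consequence of the Spielman--Srivastava restricted invertibility theorem (their Theorem~3.1 or the symmetric variant). Writing the rows of $Y$ as vectors $y_i = Y_{i,:}^T \in \mathbb{R}^r$, the Gram matrix decomposes as $Y^T Y = \sum_{i=1}^{n} y_i y_i^T$, and the hypothesis reads $\sum_{i} y_i y_i^T \succeq \xi I_r$. This is precisely the setup for restricted invertibility, and the conclusion amounts to selecting a subset $J$ so that the restricted sum $\sum_{i \in J} y_i y_i^T$ remains spectrally well-behaved on its $|J|$-dimensional image.

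First, I would pass to the isotropic normalization $\tilde y_i = (Y^T Y)^{-1/2} y_i$, under which $\sum_i \tilde y_i \tilde y_i^T = I_r$, and invoke the standard rank-$k$ RI statement: for every $\epsilon \in (0,1)$ there exists $J \subseteq [n]$ with $|J| = \lceil \epsilon^2 r \rceil$ such that the $|J|$ nonzero eigenvalues of $\sum_{i \in J} \tilde y_i \tilde y_i^T$ are all at least $(1-\epsilon)^2 \, r / n$. Taking $\epsilon$ so that $|J| \geq r/2$ (e.g.\ $\epsilon = 1/\sqrt{2}$) yields a universal constant lower bound of order $r/n$ on these nonzero eigenvalues.

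Second, I would pull this back to $Y$. Since
\[
Y_{J,:}^T Y_{J,:} = (Y^T Y)^{1/2} \Bigl(\sum_{i \in J} \tilde y_i \tilde y_i^T\Bigr) (Y^T Y)^{1/2},
\]
the smallest nonzero eigenvalue of $Y_{J,:}^T Y_{J,:}$ is bounded below by $\sigma_{\min}(Y^T Y) \cdot \Omega(r/n) \geq \Omega(\xi r/n)$. Because the two Gram matrices $Y_{J,:}^T Y_{J,:}$ and $Y_{J,:} Y_{J,:}^T$ share their nonzero spectra, we conclude $Y_{J,:} Y_{J,:}^T \succeq \Omega(\xi r/n) \cdot I_{|J|}$, and the rank equality $\mathrm{rank}(Y_{J,:}) = |J|$ is immediate from positive definiteness on $\mathbb{R}^{|J|}$. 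The final bound of $\xi^2/(10 n)$ stated in the lemma then follows by absorbing the factor $r/\xi$ (which is harmless since the lemma is invoked in a regime where $\xi \lesssim r$, and weaker constants than the optimal RI bound already suffice).

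The main obstacle is choosing the correct quantitative variant of RI so that the constants line up cleanly with the stated $\sqrt{10n}$ denominator. Modern formulations of Spielman--Srivastava and Bourgain--Tzafriri give everything needed, but care is required when translating between the isotropic decomposition and the original (non-isotropic) $Y$ to preserve the linear dependence on $\xi$. If the SS constants prove inconvenient, an alternative route is a deterministic barrier-function argument à la Batson--Spielman--Srivastava, which delivers the same $r/n$-type lower bound by greedy selection and bypasses the isotropic reduction altogether.
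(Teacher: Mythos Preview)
Your proposal is correct and matches the paper's approach essentially line for line: reduce to the isotropic case via the normalization $Y(Y^TY)^{-1/2}$, invoke the Spielman--Srivastava restricted invertibility theorem (the paper cites it as Theorem~2 of \cite{SpielmanS12} with $\epsilon=1/2$), and then pull the bound back through $(Y^TY)^{1/2}$. The only cosmetic difference is that the paper does the pull-back via a unit-vector argument on $\|Y_{J,:}^T u\|_2$ rather than your spectral identity $Y_{J,:}^T Y_{J,:} = (Y^TY)^{1/2}\bigl(\sum_{i\in J}\tilde y_i\tilde y_i^T\bigr)(Y^TY)^{1/2}$, and the paper drops the extra factor of $r$ from the SS bound (using $1/(10n)$ rather than $\Omega(r/n)$), which is why you found yourself needing to ``absorb $r/\xi$'' to match the stated constant.
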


\begin{proof}
The isotropic, or $Y^{T} Y = I$ 
case of this statement is Theorem 2 from \cite{SpielmanS12}
instantiated with:
\begin{itemize}
	\item $v_{i}$ being the $i$th row of $Y$,
	\item $n \leftarrow s$, $m \leftarrow n$,
	\item $L = I$,
	\item $\epsilon = 1/2$.
\end{itemize}
	
For the more general case, consider the matrix
\[
M \leftarrow  Y^{T} Y.
\]
and in turn the matrix
$Y M^{-1/2}$.
This matrix satisfies
\[
\left( Y M^{-1/2} \right)^{T} Y M^{-1/2}
=	M^{-1/2} Y^{T} Y M^{-1/2}
= M^{-1/2} M M^{-1/2}
= I.
\]
So we can apply the special case of restricted invertibility
(for the case where the Gram matrix is identity)
mentioned above to get a row subset $J$ such that:
\[
\frac{1}{10 n}
\preceq
Y_{J, :} M^{-1/2} \left( Y_{J, :} M^{-1/2} \right)^{T}\\
=
Y_{J, :} M^{-1} Y_{J, :}^{T}
\]

On the other hand, the singular value bound on $Y$ implies
$M \succeq \xi^2 I$ and therefore
$M^{1/2} \succeq \xi I$.
So we get for any unit vector $u \in \Re^{|J|}$,
\[
\norm{Y_{J, :}^{T} u}_2
\geq
\xi \norm{M^{-1/2} Y_{J, :}^{T} u}_2
\geq
\xi \sqrt{\frac{1}{10 n}},
\]
which when squared gives the desired bound on singular values.
\end{proof}

\begin{proof} (of Lemma~\ref{lem:BlockSigmaMin})
As adding to columns of $W$ can only increase $\norm{W^T z}$,
it suffices to consider the case of $r = 5m$.
Consider taking an $\epsilon$-net over all unit $x \in \Re^{m}$
with granularity
\[
\epsilon_{x} \geq n^{-10} \alpha_{V}^3.
\]
The size of this net is at most
\[
\left( \frac{2}{n^{-10} \alpha_V^{3}} \right)^{m}
\leq
\alpha_{V}^{-4m}.
\]
Let the associated set of $\yhat = V \xhat$
vectors be $\Ycal$.

Then Lemma~\ref{lem:NullSpace} gives that with probability
at least
\[
1 - n^{-9} - \alpha_{V}^{-4m}
\alpha_{V}^{\frac{d}{10}}
\geq
1 - n^{-8}.
\]
the $W$ matrix orthogonal to $G^{S}$ has the property that
for all $w$ in its column span, and for all $\yhat \in \Ycal$, 
\[
w^T \diag{\yhat} U^T
\]
has at least
\[
\frac{s}{80}
\geq 
\frac{n}{160 m}
\]
entries with magnitude at least $\alpha_{V}^2$,
Here the last inequality in the probability bound
follows from the assumption of $m < n^{1/4}$.

Consider each individual $\yhat$.
Let $Y(\yhat)$ be the associated matrix
\[
Y\left(\yhat\right)
=
W^T \diag{\yhat} U^{T}
\]
Lemma~\ref{lem:SampleColumns} gives that the (globally)
picked subset $S$ give that
$W^T \diag{\yhat} U^{T}_{:, S} = Y(\yhat)_{:, S}$
has minimum singular value
at least $\alpha_V^2/2$ with probability at least
\[
1 - \left( 2\alpha_{V} \right)^{-4m} \exp\left( - \frac{h}{n} \cdot \frac{n}{160 m} \right)
=
1 - \alpha_{V}^{-4m} \exp\left( - \frac{h}{200 m} \right).
\]
Should such bound hold,
by restricted invertibility as stated in Lemma~\ref{lem:RestrictedInvertibility},
we get that there is a set $J(\yhat)$ with size
at least $r / 2$ such that
\[
Y\left(\yhat\right)_{:, J\left( \yhat \right)}^T
Y\left(\yhat\right)_{:, J\left( \yhat \right)}
\succeq
\frac{\alpha_{V}^2}{10 n} I.
\]

This means it suffices to bound the probability over the random choice of
$g^{S}_{S}$ which lead to $\norm{Y(\yhat)g^{S}_{S}}_2$ being small.
Specifically, we can do a worst-case bound over entries
not in $S$, call that vector $b$, to get:
\[
\prob{G^{S}_{S}}
{\norm{Y\left(\yhat\right)_{:, S} g^{S}_{S}}_{2} \leq \alpha_{V}^4}
\leq
\max_{b}
\prob{G^{S}_{J\left( \yhat \right)}}
{\norm{Y\left(\yhat\right)_{:, J\left( \yhat \right)}
    g^{S}_{J\left( \yhat \right)} - b}_{2} \leq \alpha_{V}^4}.
\]
multiplying by $Y(\yhat)_{:, J( \yhat )}^T$
and $(Y(\yhat)_{:, J( \yhat )}^T Y(\yhat)_{:, J( \yhat )})^{-1}$
then gives
\begin{multline*}
\norm{Y\left(\yhat\right)_{:, J\left( \yhat \right)}
    g^{S}_{J\left( \yhat \right)} - b}_{2}
\geq
n^{-10}
\norm{
Y\left(\yhat\right)_{:, J\left( \yhat \right)}^T
Y\left(\yhat\right)_{:, J\left( \yhat \right)}
    g^{S}_{J\left( \yhat \right)} -
Y\left(\yhat\right)_{:, J\left( \yhat \right)} b
}_{2}\\
\geq
\alpha_V^{3}
\norm{
    g^{S}_{J\left( \yhat \right)} -
\left(Y\left(\yhat\right)_{:, J\left( \yhat \right)}^T
Y\left(\yhat\right)_{:, J\left( \yhat \right)}\right)^{-1}
Y\left(\yhat\right)_{:, J\left( \yhat \right)} b
}_{2}.
\end{multline*}
Here the first inequality follows from the magnitude upper
bound on $Y(\yhat)_{:, J( \yhat )}^T$, and the bound on
min singular value above.

Substituting this lower bound back in,
with a different choice of the worst-case vector, gives
\[
\prob{g^{S}_{S}}
{\norm{Y\left(\yhat\right)_{:, S} g^{S}_{S}}_{2} \leq \alpha_{V}^4}
\leq
\max_{\bhat}
\prob{g^{S}_{J\left( \yhat \right)}}
{\norm{ g^{S}_{J\left( \yhat \right)} - \bhat}_2
\leq
\alpha_{V}^{-3} \cdot \alpha_{V}^{4}}
\]
Simplifying to coordinates gives
\[
\leq
\prod_{j \in J\left( \yhat \right)}
\max_{\bhat_{j}}
\prob{g^{S}_{j}}
{\abs{g^{S}_j - \bhat_{j}} \leq \alpha_{V}}
\leq
\alpha_{V}^{\frac{r}{2}}.
\]
Here the last inequality follows from the density of
a standard Gaussian being at most $1$.

So the overall failure probability is, by union bound, at most
\[
\alpha_{V}^{-4m}
\cdot \alpha_{V}^{r}
+
\alpha_{V}^{-5r}
\exp\left( - \frac{h}{200 m } \right),
\]
where the first term is from  union bounding over the entire net
of the above probability,
and the second term follows from the invocation of Lemma~\ref{lem:SampleColumns}.

Substituting in $r = 5m$ gives that the first term
is at most $\alpha_{V}^{m} \leq n^{-20}$.
For the second term,
incorporating $h \geq 10000 m^2 \log ( 1 / \alpha_{V})$ gives
\[
\alpha_{V}^{-5r}
\exp\left( - \frac{h}{200 m } \right)
\leq
\alpha_{V}^{-25 m}
\cdot
\alpha_{V}^{50 m}
\leq
n^{-20}.
\]
Summing these then gives the overall failure probability.
\end{proof}

\subsection{Density Lower Bound on Vectors in Null Space}
\label{subsec:NullSpace}

It remains to rule out all vectors such that
$w \diag{\yhat} U^{T}$ is sparse for some $\yhat$
corresponding to some $\xhat$ in the $\epsilon$-net.
We do so by leveraging the initial Krylov block $G^{S}$.
We show that any particular vector is highly unlikely
to be orthogonal to all $s$ columns of $G^{S}$.
For a dense $n$-by-$s$ Gaussian $G$, the probability of
any particular vector being (nearly) orthogonal to all
columns is about $\exp(-O(s))$ due to all columns being
independent.
We will show that sparse Gaussians also produce a similar
behavior.
The proof then utilizes this per-vector bound together
with an $\epsilon$-net argument on vectors that
are sparse in their representation under the basis $U$,
with an additional case to take the few small entries of
$\yhat$ into account.

\NullSpace*

We first show that $G^{S}$ rules out any particular
vector with reasonably large probability.
This is again done in two steps: first by ruling out
all sparse vectors using a successive $\epsilon$-net
argument similar to the proof of Lemma~\ref{lem:EigenvectorDense}.
This proof is slightly simplified due to all entries of $G^{S}$
being completely independent, instead of correlated across the
diagonal as in Lemma~\ref{lem:EigenvectorDense}.

\begin{lemma}
\label{lem:SparseGaussianNullSpace}
Let $G^{S}$ be an $n \times d$ sparse Gaussian
matrix with each entry set to $\normal(0, 1)$ with probability $h / n$,
and $0$ otherwise.
If $hd \geq 20 n \log^2{n}$,
then with probability at least $1  - n^{-10}$, all unit vectors
$w$ satisfying $(G^{S})^{T} w = 0$
(i.e., orthogonal to all columns $G^{S}$)
has at least
$\frac{d}{40 \log{n}}$ entries with magnitude at least $n^{-4 \log{n}}$.
\end{lemma}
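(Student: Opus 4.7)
The plan is to adapt the multi-layer $\epsilon$-net induction used for Lemma~\ref{lem:EigenvectorDense}, which here becomes cleaner because the entries of $G^{S}$ are fully independent rather than coupled across a diagonal. Define density thresholds $\mu_i := n^{-4(i+1)}$ and target level $i_{\max} := \lceil \log_2(d/(40 \log n))\rceil \leq \log n$. I would prove by induction on $i \in \{0,\dots,i_{\max}\}$ that with probability at least $1 - i n^{-11}$, every unit vector $w$ with $(G^{S})^T w = 0$ has at least $2^i$ coordinates of magnitude $\geq \mu_i$. The base case is automatic ($w$ has some entry of magnitude $\geq n^{-1/2} \geq \mu_0$), and the terminal level yields the stated density $d/(40\log n)$ above threshold $n^{-4\log n}$, with the $\log n$ union bound over levels absorbed into $n^{-10}$.

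For the inductive step I contrapose: a bad $w$ at level $i$, after zeroing entries of magnitude below $\mu_i$ and rounding the rest to the grid $\mu_i \mathbb{Z}$, yields a net vector $\xhat$ with at most $2^i$ nonzeros, norm in $[1/2, 2]$, and---by the level-$(i-1)$ hypothesis applied to $w$---a distinguished subset $L$ of at least $2^{i-1}$ coordinates of magnitude $\geq \mu_{i-1}/2$. After capping the entries of $G^{S}$ at $\sqrt{n}$ via Claim~\ref{claim:GaussianMax}, we obtain $\|(G^{S})^T \xhat\|_\infty = \|(G^{S})^T (\xhat - w)\|_\infty \leq 2 n^{3/2} \mu_i$, so it suffices to union bound, over the finite net, the event that $|g_j^T \xhat| \leq 2 n^{3/2} \mu_i$ for every one of the $d$ columns $g_j$ of $G^{S}$. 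For a single column, conditioning on its random support splits the probability into a ``misses $L$'' term $(1-h/n)^{|L|} \leq \exp(-h\cdot 2^{i-1}/n)$ and a Gaussian anti-concentration term bounded by $8 n^{3/2} \mu_i / \mu_{i-1} = 8 n^{-5/2}$, using that the conditional variance of $g_j^T \xhat$ is at least $(\mu_{i-1}/2)^2$ once the support meets $L$. Independence across columns raises this to the $d$-th power, and the net size $\binom{n}{2^i}(4/\mu_i)^{2^i} \leq n^{O((i+1)2^i)}$ is absorbed by the $\exp(-hd \cdot 2^{i-1}/n)$ factor precisely when $hd \geq 20 n \log^2 n$.

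The main obstacle is balancing three interlocking scales---the grid/zeroing spacing $\mu_i$, the rounding-induced window $n^{3/2} \mu_i$ for $\|(G^{S})^T \xhat\|_\infty$, and the lower-level threshold $\mu_{i-1}$ that lower-bounds the Gaussian variance. Taking the geometric ratio $\mu_i/\mu_{i-1} = n^{-4}$ makes the anti-concentration term negligible at every level, so the missing-$L$ term $\exp(-h \cdot 2^{i-1}/n)$ alone has to beat the net inflation $n^{O((i+1)2^i)}$. The worst case is $i = i_{\max}$, where the exponent $hd \cdot 2^{i-1}/n$ on the order of $\frac{hd}{d/\log n} \cdot \frac{1}{n} \cdot \text{const}$ must dominate $O(\log^2 n) \cdot 2^{i_{\max}}$; the hypothesis $hd \geq 20 n \log^2 n$ is exactly this calibration and also pins down both the $\log^2 n$ in the density requirement and the $\log n$ in the final entry count of $d/(40\log n)$.
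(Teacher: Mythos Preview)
Your proposal is correct and follows essentially the same multi-level $\epsilon$-net induction as the paper's own proof, with the same thresholds $\mu_i = n^{-4(i+1)}$, the same split of the per-column failure into a ``support misses $L$'' term and a Gaussian anti-concentration term, and the same balancing of the net size against $\exp(-dh\cdot 2^{i-1}/n)$ using $hd \geq 20 n\log^2 n$. The only cosmetic differences are that the paper caps the Gaussian entries at $n$ rather than $\sqrt{n}$ and organizes the final union bound as a two-case minimum rather than declaring the anti-concentration term negligible; neither affects the argument.
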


\begin{proof}
By Claim~\ref{claim:GaussianMax}, we may assume that all entries in
$G^{S}$ have magnitude at most $n$.
		
We will prove by induction for $i = 0 \ldots \lfloor \log_{2}(\frac{d}{40 \log{n}})\rfloor$
that with probability at least $1 - i n^{-11}$,
all unit vectors in the null space of $G^{S}$ has at least $2^{i}$
entries with magnitude at least $n^{-4 (i + 1)}$.
		
The base case of $i = 0$ follows from a length $n$ unit vector having
an entry of magnitude at least $n^{-1/2}$.
		For the inductive case, we will build an $\epsilon$-net with granularity
\[
\epsilon_i \leftarrow n^{-4 \left(i + 1 \right)}.
\]
That is, we only consider vectors whose entries are integer
multiples of $\epsilon_i$.
For a generic vector $w$, we can round each entry of it toward
$0$ to form $\what$ such that:
\begin{itemize}
	\item $\norm{\what}_{2} \leq 1$.
	\item $(G^{S})^T \what$ is entry-wise small:
	\[
	\norm{(G^{S})^T \what}_{\infty}
	\leq
	n \norm{G^{S}}_{\infty}
	\norm{w - \what}_{\infty}
	\leq
	n^2 \epsilon_i.
	\]
\end{itemize}

So it suffices to show that the probability of $G^{S}$
having an entry-wise small product with any $\what$
with at most $2^{i}$ non-zeros is small.
This is because any entry with magnitude less than $\epsilon_i$
will get rounded to $0$.
Furthermore, by the inductive hypothesis, it suffices to consider
only $w$ with at least $2^{i - 1}$ entries with magnitude at least
$\epsilon_{i - 1}$.
Each such entry, when perturbed by $\epsilon_i$, has magnitude
at least $\epsilon_{i - 1} - \epsilon_{i} \geq \epsilon_{i - 1} / 2$.
	
We will do so by union bound over all such vectors $\what$.
Because the columns of  the probability that any column of $G^{S}$
picks none of $t$ entries is at most
\[
\left( 1 - \frac{h}{n} \right)^{2^{i - 1}}
\leq
\exp\left( - \frac{h \cdot 2^{i - 1}}{n} \right).
\]	
Furthermore, if one of these entries are picked,
the resulting Gaussian corresponding to the product of that column of
$G^{S}$ against $\what$ has variance at least $\epsilon_{i - 1} / 2$.
Which means that it's in an interval of size at most $n^{3} \epsilon_i$
with probability at most
\[
\frac{n^{2} \epsilon_{i}}
{\epsilon_{i - 1} / 2}
\leq
n^{-1},
\]
where the inequality follows from the choice of
$\epsilon_{i} = n^{-4} \epsilon_{i - 1}$.
Taking union bound over these two events gives that the probability
of a column of $G^{S}$ having small dot product against $\what$ is at most
\[
\exp\left( - \frac{h 2^{i - 1}}{n} \right)
+
n^{-1}
\leq
2 \exp\left( -
\min\left\{\frac{h \cdot 2^{i - 1}}{n}, \log{n} \right\}
\right)
\]
which compounded over the $d$ columns gives,
and substituting in the assumption of $h \cdot d \geq 20 n \log^2{n}$
gives an overall probability of at most
\[
2\exp\left( -
\min
\left\{\frac{d \cdot h \cdot 2^{i - 1}}{n}, d \log{n}  \right\}
\right)
\leq
2\exp\left( -
\min\left\{ 10 \log^{2}{n} \cdot 2^{i}, d \log{n} \right\}
\right).
\]

On the other hand, the number of vectors with $2^{i}$
non-zeros, norm at most $2$, and entries rounded to integer
multiplies of $\epsilon_i$ is at most
\[
\binom{n}{2^{i}}
\cdot
\left(4 / \epsilon_i \right)^{2^{i}}
\leq
\left( 4n / \epsilon_i \right)^{2{i}}
\leq
\exp \left( 5 \cdot \left(i + 1 \right) \cdot \ln{n} \cdot 2^{i} \right)
\leq
\exp \left( 5 \log^2{n} \cdot 2^{i} \right).
\]
		
Matching this against the two terms means we need:
\begin{itemize}
\item For any $i \geq 0$, we have
    $5 \log^{2}n 2^{i} \leq 5 \log^2{n} 2^{i}$,
	and this term is minimized when $i = 1$.
So the first term is at most $2 \exp(- 5 \log^{2}n) \leq n^{-10}$.
\item For the second term to be small, substituting in
$2^{i} \leq \frac{d}{40 \log{n}}$ gives
\[
\exp \left( 5 \log^2{n} \cdot 2^{i} \right)
\cdot
2 \exp\left( - d \log{n} \right)
\leq
2\exp\left( \frac{d \log{n}}{8} - d \log{n} \right)
\leq
2\exp\left( - \frac{7 d \log{n}}{8} \right),
\]
which is at most $n^{-10}$ when $d$ is larger than some absolute constant.
\end{itemize}
\end{proof}

This means under a global event that happens with
probability at least $1 - n^{-10}$,
we only need to consider dense vectors in the column
space of $W$.
For each such vector, we can use its density to show that
it's highly unlikely to have small product against $G^{S}$.

\begin{lemma}
\label{lem:ClobberOneVector}
Let $0 < \alpha < n^{-8 \log{n}}$ be a threshold,
and let $G^{S}$ be a $n \times d$ sparse Gaussian matrix with
each entry set to $\normal(0, 1)$ with probability $h / n$
for some $hd > 20 n \log{n} \log(1 / \alpha)$.
Then any unit vector $w$ with at least $\frac{d}{40 \log{n}}$ entries
with magnitude at least $n^{-4 \log{n}}$ satisfies
\[
\prob{G^{S}}{\norm{\left( G^{S} \right)^{T} w}_2 < \alpha}
<
\alpha^{\frac{d}{5}}.
\]
\end{lemma}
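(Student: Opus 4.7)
My plan is to reduce the norm bound to an event on each coordinate of $(G^{S})^{T} w$ and exploit the independence of columns of $G^{S}$. Since $\norm{(G^{S})^{T} w}_{2} < \alpha$ implies $|(G^{S})_{j}^{T} w| < \alpha$ for every column index $j$, and since the columns of $G^{S}$ are drawn i.i.d., it suffices to upper bound
\[
\prob{G^{S}}{\norm{(G^{S})^{T} w}_{2} < \alpha}
\leq
\prod_{j=1}^{d}
\prob{(G^{S})_{:,j}}{\abs{(G^{S})_{j}^{T} w} < \alpha}.
\]
The main work is then to show that each factor is at most $\alpha^{1/2}$ up to a modest constant; raising to the $d$-th power and using $\alpha < n^{-8 \log{n}}$ to absorb the constant will yield the claimed $\alpha^{d/5}$.

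For a fixed column $j$, let $L \subseteq [n]$ denote the set of coordinates of $w$ with magnitude at least $n^{-4 \log{n}}$; by hypothesis $|L| \geq d/(40 \log{n})$. Let $S_{j}$ be the (random) support of column $j$. I condition on whether $S_{j} \cap L$ is empty. Conditional on $S_{j}$, the dot product $(G^{S})_{j}^{T} w$ is a Gaussian of variance $\sum_{i \in S_{j}} w_{i}^{2}$; if $S_{j} \cap L \neq \emptyset$ this variance is at least $n^{-8 \log{n}}$, so by the standard Gaussian anti-concentration bound (density at most $1/(\sigma\sqrt{2\pi})$) the probability of landing in $[-\alpha, \alpha]$ is at most $O(\alpha \cdot n^{4 \log{n}})$, which is at most $\alpha^{1/2}$ once $\alpha < n^{-8\log{n}}$. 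The remaining case is $S_{j} \cap L = \emptyset$, whose probability is $(1 - h/n)^{|L|} \leq \exp(-h|L|/n)$. Plugging in $|L| \geq d/(40 \log{n})$ and the hypothesis $hd \geq 20 n \log{n} \log(1/\alpha)$ gives $\exp(-h d/(40 n \log{n})) \leq \exp(-\tfrac{1}{2}\log(1/\alpha)) = \alpha^{1/2}$, exactly matching.

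Combining the two cases yields $\prob{}{|(G^{S})_{j}^{T} w| < \alpha} \leq 2\alpha^{1/2}$. Taking the product over the $d$ independent columns gives $(2 \alpha^{1/2})^{d} = 2^{d} \alpha^{d/2}$; the smallness of $\alpha$ (via $\alpha < n^{-8\log{n}}$) easily makes $2^{d} \leq \alpha^{-d/10}$, so the overall bound is at most $\alpha^{d/2 - d/10} = \alpha^{2d/5} \leq \alpha^{d/5}$, as required. The only delicate step is the balance between the two tails of the conditional bound on a single column: both the "support misses $L$" probability and the Gaussian small-ball probability must come out to $\alpha^{1/2}$ under the given hypotheses, and the hypothesis $hd \geq 20 n \log{n} \log(1/\alpha)$ is tuned precisely to make the combinatorial tail match the analytic one.
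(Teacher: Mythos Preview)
Your proof is correct and follows essentially the same approach as the paper: reduce to a per-column bound via independence, then split each column into the event that the support misses all large entries of $w$ (bounded by $\alpha^{1/2}$ via the hypothesis on $hd$) and the Gaussian small-ball event given a hit (bounded by $\alpha^{1/2}$ via $\alpha < n^{-8\log n}$). The only cosmetic difference is that the paper absorbs the factor $2$ immediately, writing $2\alpha^{1/2} \leq \alpha^{1/5}$ per column before taking the $d$-fold product, whereas you carry $2^{d}$ to the end and absorb it there; both are equivalent.
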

	
\begin{proof}
Consider each column of $G^{S}$, $g^{S}$.
The probability that $g^{S}$ picks none of the large entries in $w$
is at most
\[
\left( 1 - \frac{h}{n} \right)^\frac{d}{40 \log{n}}
=
\exp\left( - \frac{h d}{n \cdot 40 \log{n}} \right)
\leq
\exp\left( - \frac{20 n \log{n} \log\left(1 / \alpha\right) }
  {40 n \log{n}} \right)
=
\alpha^{\frac{1}{2}}.
\]

In the case such an entry is picked, the resulting product with the
Gaussian has variance at least $n^{-4 \log{n}}$.
So is in an interval of size $\alpha$ with probability at most
\[
\frac{\alpha}{n^{-4 \log{n}}}
\leq
\alpha^{\frac{1}{2}},
\]
where the last inequality follows from the assumption
of $\alpha < n^{-8 \log{n}}$.
By union bound, the probability of $g^{S}$ having small
dot product against $w$ is at most $\alpha^{1/5}$.
Compounding this over the $d$ columns then gives the overall bound.
\end{proof}

The rest of the proof is an $\epsilon$-net based argument
on all $w$ for which $w^T \diag{\yhat} U$ is sparse.
For each $\yhat$, we want to generate some set of vectors $\Wcal(\yhat)$
such that if $w$ is a vector where $w^T \diag{\yhat} U^{T}$ is sparse,
there is some $\what \in \Wcal$ such that
\[
\norm{\what - w}_2 \leq \alpha_{Y}.
\]
After that, it suffices to show that all $\what \in \Wcal(\yhat)$
has large $\norm{\what^T G^{S}}_2$ via Lemma~\ref{lem:ClobberOneVector}.
Since $U$ is invertible, most of $w$ is recoverable from
the $w^T \diag{\yhat} U$ vector via the operation
\[
\left( w^T \diag{\yhat} U \right) U^{T}
=
w^T \diag{\yhat}.
\]
So up to a small number of coordinates corresponding to the small
magnitude entries in $\yhat$, we can enumerate over the possible
$\what$s by enumerating over the possible sparse $w^T U$ vectors.
For the non-zero coordinates in both the original and spectral domains,
it also suffices to consider entries that are integer multiples of poly$(\alpha_{Y})$.

\begin{proof}(of Lemma~\ref{lem:NullSpace})
Claim~\ref{claim:GaussianMax} allows us to assume that the maximum
magnitude in $G^{S}$ is $n$.

We will use an $\epsilon$-net argument,
with the goal of invoking Lemma~\ref{lem:ClobberOneVector} on all
vectors in the net.
In order to do so, we first invoke Lemma~\ref{lem:SparseGaussianNullSpace},
and pay for the global failure probability of $n^{-10}$ once
for all vectors $\yhat$.

We also generate the set of vectors that
are $t$-sparse in the $U$ basis representation, with granularity
\[
\epsilon_{q}
\leftarrow
\alpha_{Y}^3.
\]
Formally, let $\Qcal$ denote all the vectors $\qhat$ with norm
at most $2$ and at most $t$ non-zeros,
all of which are integer multiples of $\epsilon_{q}$.

Now consider some vector $\yhat \in \Ycal$.
Let $BIG$ be the subset of entries in $y$ that are
at least $\alpha_{Y}$, and $\overline{BIG}$ its complement.
We generate $\Wcal(\yhat)$ by considering all vectors of the
form
\begin{align*}
    \what_{BIG} & = \diag{\yhat_{BIG}}^{-1} U^T \qhat
    \qquad \text{for some $\qhat \in \Qcal$}\\
    \what_{\overline{BIG}} & = \text{integer multiples of $\alpha_{Y}^2$}
\end{align*}

Then we want to show that any $w$ for which
$w^T \diag{\yhat} U^T$ is $t$-sparse is close to some
$\what \in \Wcal(\yhat)$.
For such a $w$, consider the vector
\[
q = U \diag{\yhat} w,
\]
and suppose we rounded its entries to the nearest multiples of
$\epsilon_{q}$ for some $\epsilon_{q}$, giving $\qhat$
such that $\norm{\qhat - q}_{\infty} \leq \epsilon_{q}$.
Because $U$ is an orthonormal matrix,
any error in $q$ translates to an error in $U^T q - \diag{\yhat} w$ as well:
\[
\norm{U^T \qhat - \diag{\yhat} w }_2
\leq
\norm{\qhat - U \diag{\yhat} w}_2
=
\norm{\qhat - q}_{2}
\leq
n^{1/2} \epsilon_{q}.
\]
This error can in turn be carried across all entries in
$\yhat_{BIG}$, using the entry-wise lower bounds.
Specifcally, for all $i \in BIG$ we have
\[
\abs{\left( U^T \qhat \right)_{i} \yhat_{i}^{-1} - w_{i} }
\leq
\abs{\yhat_{i}^{-1}}
\abs{\left( U^T \qhat \right)_{i} - \yhat_{i} w_{i} }
\leq 
\alpha_{Y}^{-1} \cdot 
\norm{U^T \qhat - \diag{\yhat} w}_{2}
\leq
\alpha_{Y}^{-1} n^{1/2} \epsilon_{q}.
\]
Thus the $\what$ that corresponds to this $w$ is the one
from this $\qhat$, plus having all entries in $\overline{BIG}$
rounded explicitly.
That is, for any such $w$ with $w^T \diag{\yhat} U^T$
$t$-sparse, there is some $\what \in \Wcal(\yhat)$ such that
\[
\norm{w - \what}_{2}
\leq
n^2 \alpha_Y^{-1} \epsilon_{q}
\leq
\frac{\alpha_{Y}}{n^3},
\]
where the inequality follows from the choice of $\epsilon_{q}$
and the assumption of $\alpha_{Y} \leq n^{-8 \log{n}}$.
Combining with the assumption of max magnitude in $G^{S}$
being $n$ from Claim~\ref{claim:GaussianMax} also gives
\[
\norm{G^{S} w - G^{S} \what}_{2}
\leq
\frac{\alpha_{Y}}{2}.
\]

Thus, to rule out all such $w$, it suffices to show that
all $\what \in \Wcal(\yhat)$ have $\norm{(G^{S})^T \what}_{2} \geq \alpha_{Y}$.
We do so by taking union bound over the entire $\epsilon$-net.

Since $w$ is a unit vector, $U^T w$ also has norm at most $1$.
Combining this with the assumption of the entries of
$\yhat$ having magnitude at most $n$ gives that the max
magnitude of an entry in $\qhat$ is at most $2n$.
As each of the $t$ non-zeros in $\overline{BIG}$ is explicitly
enumerated with magnitude at most $\epsilon_{q} = \alpha_{Y}^3$,
we have:
\[
\abs{\Qcal}
\leq
\binom{n}{t} \cdot \left( 4 n \alpha_{Y}^{-3} \right)^{t}
\leq
\left( 4 n^2 \alpha_{Y}^{-3} \right)^{t}
\leq
\alpha_{Y}^{-4t}
\]
which combined with the $m$ entries of $\what$
being explicitly generated as multiples of $\epsilon_{q}$ gives
\[
\abs{\Wcal\left( \yhat \right)}
\leq
\alpha_{Y}^{-4 \left( t + m \right)}.
\]
We remark that the key in this step is that the overhead
from generating terms in $\what$ has $m$ in the exponent
instead of $n$.

As we've already globally conditioned on all vectors in the null
space of $G^{S}$ being dense, we get that each vector
$\what \in \Wcal$ gives a small dot product with probability
at most $\alpha_{Y}^{-\frac{d}{5}}$.
Taking union bound over all $\abs{\Ycal} \cdot \alpha_{Y}^{-4(t + m)}$
vectors then gives a failure probability of at most
\[
\abs{\Ycal} \cdot \alpha_{Y}^{-4 \left( t + m \right) + \frac{d}{5}}.
\]
When $m, t \leq \frac{d}{80}$, this is at most
\[
\abs{\Ycal} \cdot \alpha_{Y}^{- \frac{d}{10} + \frac{d}{5}}
=
\abs{\Ycal} \cdot \alpha_{Y}^{\frac{d}{10}},
\]
which is the desired bound.
\end{proof}

Note that the need to rule out all sparse vectors in the null space
precludes us from applying this bound separately for each vector
$\yhat$ in the $\epsilon$-net.
Instead, we lower bound the density of all null space vectors
via Lemma~\ref{lem:SparseGaussianNullSpace} once for all $\yhat$
in the $\epsilon$-net, and then invoke Lemma~\ref{lem:ClobberOneVector}
for each of the nets generated for each $\yhat$.

	\section{Solver for Matrices with Low Displacement Rank}
\label{sec:Solver}

We now prove the running time of the solver for Hankel
matrices.
Our notation of matrices will revolve around block
matrices throughout that section: we use $s$ to denote
the size of a block, and $m$ to denote the number of blocks.
When there are multiple matrices that form natural
sequences, we will index into them using superscripts.

The algorithm here has much similarities with the hierarchical
matrix based solver by Xia, Xi, and Gu~\cite{XiaXG12}.
The main difference is that we work in the matrix domain instead of
the Fourier domain,
and our algorithm is optimized for inputs with arbitrarily lengthed
simulated floats.
Xia, Xi, and Gu~\cite{XiaXG12} does most of what we do: after transforming
the problem to the Fourier domain, they write the matrix as a Cauchy matrix,
which they in turn view as a hierarchical matrix multiplied by complex coefficients
of the form of $\frac{1}{z^{i} - z^{j}}$ ($z$ is a complex root of unity).
By leveraging well-spaced decompositions similar to the fast multipole method,
they are able to use stable solvers for hierarchical matrices to extract
the solution to the overall Cauchy matrix.
To our knowledge, directly invoking this algorithm would lead to an extra factor of $m$.
This is because our condition number, and sizes of the numbers involved,
are all $\exp(\Otil(m))$.
The fast multipole method is only able to extra one digit per iteration
due to its reliance on the Taylor expansion, so would give a total running time of
$s^{\omega} m^3 > n^{\omega}$, which is too big.

We will use the $\{\cdot\}$ notation to index into subsets
of blocks, in the same manner as indexing into row/column
indices.
\begin{definition}
\label{def:BlockIndexing}
Given block size $s$ and a set of indices $S \subseteq [m]$,
we use $\{S\}$ to denote the entries in the corresponding blocks.
That is, if we arrange the indices of the $m$ blocks sequentially,
we have:
\[
\left\{S\right\}
=
\bigcup_{i \in S} \left[ \left(i - 1\right)s + 1, is\right].
\]
\end{definition}

\Solver*

Crucial to our analysis is the small displacement rank
property of the Hankel matrix.
This fact relies on the $s$-block-Hankel matrix is identical to
itself shifted down and to the left by $s$ entries each.
This down/left shift however is a bit more cumbersome to
represent notationally, as the shifts occur in different
directions along the rows and columns.
So instead, we work with $s$-block-Toeplitz matrices,
which is formed by reversing the order of columns of $H$.

\begin{definition}
\label{def:Toeplitz}
A $s$-block-Toeplitz matrix with $m$ blocks
is an $ms$-by-$ms$ matrix $T$ where
\[
T_{\left\{i, j\right\}}
=
M^{\left( i - j \right)}
\]
where $M^{(-m + 1)}\ldots M^{(m - 1)}$
is a sequence of $s$-by-$s$ matrices.
\end{definition}

Notationally we will use $T$ to denote the
matrices that we operate on to emphasize the connection/motivation
with $s$-block-Toeplitz matrices.

\subsection{Displacement Rank Based Representations}

We can then define the shift-down by $s$ operator.
Its transpose is the shift-right by $s$ operator
when right multiplied to the matrices.
We will fix this definition for our choice of block size of $s$.
An illustration of it is in Figure~\ref{fig:DisplacementMatrix}.
\begin{definition}
\label{def:Displacement}
For any choice of block size $s$ and block number $m$,
the square displacement operator $\Delta(s)$
(whose dimension we assume to be implicit to the matrix
we use it against)
is the matrix with 
$1$s on all entries $s$ below the diagonal, and $0$ everywhere else.
\[
\Delta_{ij}
=
\begin{cases}
1\left(s\right) & \qquad \text{if $i = j + s$}\\
0 & \qquad \text{otherwise}.
\end{cases}
\]
Then for an $n$-by-$n$ matrix $M$,
the $s^{+}$/$s^{-}$-displaced versions of $M$ are given by:
\begin{align*}
    s^{+}\left( M \right)
    & = M - \Delta\left( s \right) M \Delta\left( s \right)^{T} , \\
	s^{-}\left( M \right)
	& = M - \Delta\left( s \right)^{T} M \Delta\left( s 	\right).
\end{align*}
and its $+s$/$-s$-displaced ranks are:
\begin{align*}
    \textsc{Rank}_{+s}\left( M \right)
    &= \textsc{Rank}\left( s^{+}\left(M \right) \right)
    = \textsc{Rank} \left( M - \Delta\left( s \right) M
        \Delta\left( s \right)^{T} \right), \\
	\textsc{Rank}_{-s}\left( M \right)
	&= \textsc{Rank}\left( s^{-}\left(M \right) \right)
	= \textsc{Rank} \left( M - \Delta\left( s \right)^{T} M
        \Delta\left( s \right) \right).
\end{align*}
\end{definition}

\begin{figure}
\begin{center}
\begin{tikzpicture}
\draw (0,0) -- (5,0) -- (5,5) -- (0,5) -- (0,0);
\draw (0,3.1) -- (0,3.9) -- (3.9,0) -- (3.1,0) -- (0,3.1);
\node at (0.3, 3.2) (a) {$1$};
\node at (0.8, 2.7) (a) {$1$};
\node at (1.3, 2.2) (a) {$1$};
\node at (2.7, 0.8) (a) {$1$};
\node at (3.2, 0.3) (a) {$1$};
\node[rotate = -45] at (2, 1.5) (blah) {$\ldots~\ldots$};

\draw [decorate,decoration={brace,amplitude=10pt,mirror}]
(3.2, 0) -- (5.0, 0) node (curly_bracket)[black,midway, yshift =- 0.3 cm] 
{};
\node at (4.1, -0.6) {$s$};
\draw [decorate,decoration={brace,amplitude=10pt}]
(0, 3.2) -- (0, 5.0) node (curly_bracket)[black,midway, yshift =- 0.3 cm] 
{};
\node at (-0.6, 4.1) {$s$};
\draw [decorate,decoration={brace,amplitude=10pt}]
(0, 5) -- (5.0, 5) node (curly_bracket)[black,midway, yshift =- 0.3 cm] 
{};
\node at (2.5, 5.6) {$n$};
\draw [decorate,decoration={brace,amplitude=10pt, mirror}]
(5, 0) -- (5.0, 5) node (curly_bracket)[black,midway, yshift =- 0.3 cm] 
{};
\node at (5.6, 2.5) {$n$};
\end{tikzpicture}
\end{center}
\caption{Displacement marix $\Delta(s) \in \Re^{n \times n}$}
\label{fig:DisplacementMatrix}
\end{figure}

Observe that if $T$ is a $s$-block-Toeplitz matrix,
then every leading principle minor of $T$ has $s^{+}$-displacement
rank at most $2s$.
The key property of displacement matrices is that
the inverse of a full rank matrix has the same displacement
rank under a sign flip.
The following is an adapation of
Theorem~1 from~\cite{KailathKM79} to the
more general displacement setting.
\begin{restatable}{lemma}{DisplacementRankInverse}
\label{lem:DisplacementRankInverse}
For any invertible matrix $M$ and any shift value $s$,
we have
\[
\textsc{Rank}_{+s} \left( M \right)
=
\textsc{Rank}_{-s} \left( M^{-1} \right)
\]
\end{restatable}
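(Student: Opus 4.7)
The plan is to reduce the claimed rank identity to the classical linear-algebra fact that $\textsc{Rank}(I - XY) = \textsc{Rank}(I - YX)$ for any two square matrices $X, Y$ of the same size. The observation driving the reduction is that, after peeling off an invertible factor of $M$ on one side, each of the two expressions $M - \Delta M \Delta^T$ and $M^{-1} - \Delta^T M^{-1} \Delta$ becomes ``$I$ minus a product,'' and the two products turn out to be cyclic rotations of one another.

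Concretely, since $M$ is invertible, factoring $M$ out on the left of $M - \Delta M \Delta^T$ yields
\[
\textsc{Rank}_{+s}(M)
= \textsc{Rank}\bigl(M(I - M^{-1} \Delta M \Delta^T)\bigr)
= \textsc{Rank}(I - XY),
\]
where I set $X := M^{-1} \Delta M$ and $Y := \Delta^T$. Symmetrically, multiplying $M^{-1} - \Delta^T M^{-1} \Delta$ on the right by $M$ gives
\[
\textsc{Rank}_{-s}(M^{-1})
= \textsc{Rank}\bigl((M^{-1} - \Delta^T M^{-1} \Delta) M\bigr)
= \textsc{Rank}(I - \Delta^T M^{-1} \Delta M)
= \textsc{Rank}(I - YX),
\]
with the same $X$ and $Y$. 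So the lemma collapses to the statement $\textsc{Rank}(I - XY) = \textsc{Rank}(I - YX)$.

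For this last step, the map $v \mapsto Yv$ sends $\ker(I - XY)$ into $\ker(I - YX)$, because $XYv = v$ implies $YX(Yv) = Y(XYv) = Yv$; and the map is injective on $\ker(I - XY)$ because if $Yv = 0$ and $v = XYv$ then $v = 0$. Interchanging the roles of $X$ and $Y$ gives the symmetric injection, hence equality of kernel dimensions, and therefore equality of ranks. I do not anticipate a genuine obstacle here; the only thing worth flagging is that $\Delta$ is nilpotent, so one cannot try to ``cancel it out'' directly --- the cyclic-product manipulation above is precisely what handles the non-invertibility. Note that the argument uses no property of $\Delta$ beyond being square of the right dimension, so the lemma in fact holds with any square $L$ in place of $\Delta(s)$.
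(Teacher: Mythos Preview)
Your proposal is correct and follows essentially the same approach as the paper: both factor out the invertible $M$ to rewrite each displaced matrix as $I$ minus a product, and then invoke the identity $\textsc{Rank}(I - XY) = \textsc{Rank}(I - YX)$, which both you and the paper prove by the kernel map $v \mapsto Yv$. The only cosmetic difference is the choice of how to split the product (you take $X = M^{-1}\Delta M$, $Y = \Delta^T$; the paper takes $\Delta^T M^{-1}$ and $\Delta M$), but the argument is the same.
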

Note that this Lemma applied with $M^{-1}$ instead of $M$
also gives $\textsc{Rank}_{-s} ( M )
    = \textsc{Rank}_{+s} ( M^{-1} )$.
The $s = 1$ case of this is also the reason behind the
representation of inverses of Toeplitz matrices
known as the Gohberg-Krupnik Formula~\cite{LabahnS92,GohbergK72}.

Lemma~\ref{lem:DisplacementRankInverse} allows us to have
rank-$s$ representations of inverses of leading minors
of $T$, as well as the Schur complements formed when
inverting onto a subset of the entries.

Also observe that matrix-vector multiplications involving
$\Delta(s)$ and $\Delta(s)^T$ take linear time:
it's merely shifting all entries.
So given multiplication access to $M$,
we can also obtain multiplication access to both $s^{+}(M)$
and $s^{-}(M)$.
Such a translation in representations is also error-preserving.
We check that errors in the orignal matrix,
or the displaced versions, translate naturally to each toher.
\begin{lemma}
\label{lem:ErrorTransfer}
For any $n$-by-$n$ matrices $M$ and $\Mtil$
\[
n^{-2} \norm{M - \Mtil}_{F}
\leq
\norm{s^{+}\left( M \right) - s^{+}\left( \Mtil \right)}_{F}
\leq
n^2 \norm{M - \Mtil}_{F}
\]
and similarly for the differences of the negatively displaced versions.
\end{lemma}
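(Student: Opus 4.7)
The plan is to exploit the linearity of $s^+$ and $s^-$ in their matrix argument: setting $E = M - \Mtil$, I get $s^+(M) - s^+(\Mtil) = s^+(E) = E - \Delta(s) E \Delta(s)^T$, so it suffices to establish the two-sided bound $n^{-2}\norm{E}_F \leq \norm{s^+(E)}_F \leq n^2 \norm{E}_F$ uniformly in $E$, and analogously for $s^-$.

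The upper bound I would obtain immediately from the triangle inequality, after noting that $\Delta(s)$, being a partial shift (a submatrix of a permutation matrix), has operator norm at most $1$. Then $\norm{\Delta(s) E \Delta(s)^T}_F \leq \norm{E}_F$, so $\norm{s^+(E)}_F \leq 2\norm{E}_F$, comfortably inside the required $n^2 \norm{E}_F$.

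The lower bound is the only step with substance, and the key observation is that $\Delta(s)$ is nilpotent: $\Delta(s)^K = 0$ whenever $Ks \geq n$, because each multiplication by $\Delta(s)$ pushes the non-zero band of the matrix further from the diagonal. This lets me recover $E$ from $s^+(E)$ via the telescoping identity
\[
E = \sum_{k=0}^{K-1} \Delta(s)^k \, s^+(E) \, (\Delta(s)^T)^k ,
\]
which holds because the $k$-th summand rewrites as $\Delta(s)^k E (\Delta(s)^T)^k - \Delta(s)^{k+1} E (\Delta(s)^T)^{k+1}$ and the $k = K$ tail vanishes by nilpotency. Applying the triangle inequality together with $\norm{\Delta(s)}_2 \leq 1$ to this identity yields $\norm{E}_F \leq K \norm{s^+(E)}_F \leq n \norm{s^+(E)}_F$, which is in fact stronger than the claimed $n^{-2}$ bound.

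The $s^-$ statement follows by the identical argument after swapping the roles of $\Delta(s)$ and $\Delta(s)^T$, since $\Delta(s)^T$ is also a partial shift of operator norm at most $1$ and is nilpotent with the same index. I do not anticipate any obstacle here: the telescoping identity is the only nontrivial ingredient, and it is immediate from nilpotency of the shift.
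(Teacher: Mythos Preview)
Your proposal is correct and is essentially the paper's own argument, just phrased at the matrix level rather than entrywise: the paper writes $M_{ij} = \sum_{k \geq 0} s^{+}(M)_{i-ks,\, j-ks}$, which is exactly your telescoping identity $E = \sum_k \Delta(s)^k\, s^+(E)\, (\Delta(s)^T)^k$ read coordinate by coordinate, and both directions yield the same (in fact sharper) constants $2$ and $n$ that you note.
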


\begin{proof}
In the forward direction, we have for all $i, j \geq s$,
\begin{multline*}
\abs{
s^{+}\left( M \right)_{ij} - s^{+}\left( \Mtil \right)_{ij}
}
=
\abs{M_{ij} - M_{i - s, j - s} - \left( \Mtil_{ij} - \Mtil_{i - s, j - s} \right)}\\
\leq
\abs{M_{ij} - \Mtil_{ij}}
+
\abs{M_{i - s, j - s} - \Mtil_{i - s, j - s}}
\end{multline*}
and the rest of the entries are the same.
So each entry in the difference $M - \Mtil$ contributes
to at most two entries.

In the reverse direction, we get
\[
M_{ij}
=
\sum_{0 \leq k \leq \left\lfloor i / s \right\rfloor}
    s^{+}\left( M \right)_{i - ks, j - ks}
\]
which subtracted against the same formula for $\Atil$ gives
\begin{multline*}
\abs{M_{ij} - \Mtil_{ij}}
=
\abs{\sum_{k} s^{+}\left( M \right)_{i - ks, j - ks} -
  s^{+}\left( \Mtil \right)_{i - ks, j - ks}}\\
\leq
\sum_{k} \abs{s^{+}\left( M \right)_{i - ks, j - ks} -
  s^{+}\left( \Mtil \right)_{i - ks, j - ks}}.
\end{multline*}
So the contributions of errors on each entry get
amplified by a factor of at most $n$.
\end{proof}

We will treat this representation as a black-box, and formalize
interactions with it using the following lemma.

\begin{restatable}{lemma}{DisplacementRepresentation}
\label{lem:DisplacementRepresentation}
Given block size $s$, block count $m$,
$(ms)$-by-$r$ matrices $X$ and $Y$,
the $(ms)$-by-$(ms)$ matrix $M$ such that
\[
M - \Delta\left(s\right) M \Delta\left( s \right)^{T}
=
X^TY
\]
has a unique solution.

Furthermore, there is a routine
$\textsc{ImplicitMatVec}$
that for any accuracy $\delta < (ms)^{-10}$
corresponds to a linear operator $\Ztil_{XY \rightarrow M, \delta}$
such that for any $ms$-by-$k$ matrix $B$ with at most $L_B$
words after decimal place,
$\textsc{ImplicitMatVec}(X, Y, B, \delta)$ takes time
(measured in number of word operations)
\begin{multline*}
O\left(m \log^3{m} \cdot
\max\left\{r, s\right\}
\max\left\{s^{\omega - 2}k, sk^{\omega - 2} \right\}
\right.
\\
\left. \cdot
\left( L_B + \log\left(
\left( 1 + \normi{X}_{\infty} \right)
\left( 1 + \normi{Y}_{\infty} \right)
\left( 1 + \normi{B}_{\infty} \right)
ms / \delta \right) \right)
\right)
\end{multline*}
and outputs the $(ms)$-by-$k$ matrix
\[
\Ztil_{XY \rightarrow M, \delta}B
\]
where $\Ztil_{XY \rightarrow M, \delta}$ is a matrix with at most
$O(\log{m} \log((1 + \normi{X}_{\infty})(1 + \normi{Y}_{\infty}) ms / \delta))$
words after the decimal point such that
\[
\norm{\Ztil_{XY \rightarrow M, \delta} - M}_{F}
\leq
\delta
\]
The same guarantees and runtime bounds also hold for the
negative displacement case where $M$ is implicitly specified
as $M - \Delta(s)^{T} M \Delta(s) = XY^{T}$,
as well as matrix-vector multiplications with $M^{T}$.
\end{restatable}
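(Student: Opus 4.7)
The plan is to derive an explicit closed form for $M$ that exposes its low-displacement structure, reduce matrix-vector multiplication $MB$ to two length-$m$ block convolutions, and implement those convolutions via entry-wise FFT combined with fast rectangular matrix multiplication in Fourier domain.

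\textbf{Closed form, existence, and uniqueness.} The operator $\Delta(s)$ shifts down by $s$ rows, so $\Delta(s)^m = 0$ on $(ms) \times (ms)$ matrices. Iterating the defining equation $M = \Delta(s) M \Delta(s)^T + X^T Y$ telescopes after $m$ steps to
\[
M \;=\; \sum_{k=0}^{m-1} \Delta(s)^k \cdot (X^T Y) \cdot (\Delta(s)^T)^k,
\]
establishing existence. Any two solutions differ by $E$ with $E = \Delta(s)^m E (\Delta(s)^T)^m = 0$, so $E = 0$ and the solution is unique.

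\textbf{Algorithm via two block convolutions.} Partition the rank-$r$ outer factor into $s$-block rows $X^{(0)}, \ldots, X^{(m-1)}$ and $Y^{(0)}, \ldots, Y^{(m-1)}$, and write $B$ in $s$-block rows $B^{(0)}, \ldots, B^{(m-1)}$ of size $s \times k$. The closed form yields
\[
(MB)^{(j)} \;=\; \sum_{i=0}^{j} X^{(j-i)}\, v^{(i)}, \qquad v^{(i)} \;=\; \sum_{\ell=0}^{m-1-i} (Y^{(\ell)})^T B^{(\ell+i)},
\]
so $MB$ is obtained by one length-$m$ block correlation (producing the $r \times k$ blocks $v^{(i)}$) followed by one length-$m$ block convolution (producing the $s \times k$ output blocks). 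I would evaluate each by zero-padding to length $2m$, taking a $2m$-point scalar FFT independently at each of $O(sr + sk + rk)$ entry positions, performing the $2m$ pointwise matrix products in Fourier domain with fast rectangular matrix multiplication, and inverse-FFTing. The FFT work is $\Otil(m(sr + sk + rk))$ scalar operations; the $2m$ pointwise products are each of shape $r \times s$ times $s \times k$ (resp.\ $s \times r$ times $r \times k$), and each can be carried out in $O(\max\{r, s\} \cdot \max\{s^{\omega - 2} k,\, s k^{\omega - 2}\})$ time by padding to a balanced cube and invoking fast MM. This delivers the non-logarithmic factor in the claimed runtime.

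\textbf{Bit complexity and the main obstacle.} Define $\Ztil_{XY \rightarrow M, \delta}$ to be the deterministic linear operator realized by the above pipeline in fixed-point arithmetic truncated to $L = O(\log m \cdot \log(\mu / \delta))$ words after the decimal point, where $\mu = (1 + \normi{X}_\infty)(1 + \normi{Y}_\infty)\, ms$. Each scalar arithmetic operation costs $\Otil(L + L_B)$ bit operations via Harvey--Hoeven multiplication, contributing one factor of $\log m$; a second $\log m$ comes from the FFT depth; and a third comes from managing precomputed twiddle factors at the required precision, giving the stated $\log^3 m$. Standard FFT stability---each butterfly scales magnitudes by at most a constant, so entries remain bounded by $\mu^{O(1)}$, and round-off accumulates additively across the $O(\log m)$ levels---combined with Lemma~\ref{lem:ErrorCompose} to compose the per-stage approximate operators into a single global linear map, and Lemma~\ref{lem:ErrorTransfer} to translate between displacement-domain and native-domain errors, shows $\norm{\Ztil_{XY \rightarrow M, \delta} - M}_F \leq \delta$. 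The main technical obstacle is exactly this stability bookkeeping: verifying that fast rectangular matrix multiplication in Fourier domain composes with FFT butterflies into a genuinely linear global operator whose word length and Frobenius error match the stated bounds, and that truncation at each stage does not break linearity of the composed map. The claims for multiplication by $M^T$ and the negative-displacement case follow by symmetry, since $M^T$ satisfies the displacement equation with $X$ and $Y$ swapped and $\Delta(s)^T$ is nilpotent of the same index.
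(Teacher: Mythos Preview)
Your proposal is correct and matches the paper's proof: both reduce $MB$ to two length-$m$ block convolutions carried out via FFT with fast matrix multiplication on the Fourier-domain blocks. The paper packages the decomposition via the Kailath--Kung--Morf identity $M=\sum_{\hat t} T_L(X^{(\hat t)})\,T_L(Y^{(\hat t)})^T$ after splitting the $r$ columns of $X,Y$ into $\lceil r/s\rceil$ width-$s$ chunks (so every convolution block is $s\times s$), whereas you telescope the displacement equation directly and keep rectangular $s\times r$ and $r\times k$ blocks; the costs coincide because the paper's $\lceil r/s\rceil$ prefactor is exactly your $\max\{r,s\}/s$ from the rectangular product. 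Your flagged obstacle is the right one, and the paper resolves it the way you suspect: round-off is introduced \emph{only} in the precomputed $X,Y$-dependent data (twiddle factors, and truncating $X,Y$ up front), never in intermediate results along the $B$-pipeline, so the composed map is genuinely linear in $B$ with the stated word length. One minor point: you do not need Lemma~\ref{lem:ErrorTransfer} here, since your algorithm computes $MB$ directly and never works in the displacement domain.
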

\subsection{Recursive Schur Complement}

The main algorithm is to invoke this succinct representation
during intermediate steps of (block) Gaussian elimination.
Specifically, for a subset of coordinates of $C$
and its complement set $\Cbar$,
we want to directly produce (a high accuracy approximation of)
the low displacement rank factorization of
\[
\textsc{SC}\left( T, C \right)
=
T_{CC} - T_{C \Cbar} T_{\Cbar \Cbar}^{-1} T_{\Cbar C}.
\]

Before proceeding with the algorithm
we first must show that the Schur complement has
small displacement rank.
For this, we need the following characterization
of Schur complements as minors of inverses.

\begin{fact}
\label{fact:SchurInverse}
If $M$ is a full rank matrix, then for any
subset of coordinates $C$, we have
\[
\textsc{SC}\left( M, C \right)^{-1}
=
\left[ M^{-1} \right]_{CC}.
\]
\end{fact}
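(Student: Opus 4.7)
The plan is to prove this via the block $LDU$ factorization of $M$ with respect to the partition $(C, \overline{C})$. After permuting rows and columns so that the coordinates in $C$ come first, we can write
\[
M
=
\begin{pmatrix}
I & M_{C\Cbar} M_{\Cbar\Cbar}^{-1}\\
0 & I
\end{pmatrix}
\begin{pmatrix}
\textsc{SC}(M, C) & 0\\
0 & M_{\Cbar\Cbar}
\end{pmatrix}
\begin{pmatrix}
I & 0\\
M_{\Cbar\Cbar}^{-1} M_{\Cbar C} & I
\end{pmatrix},
\]
where the middle factor uses the definition $\textsc{SC}(M, C) = M_{CC} - M_{C\Cbar} M_{\Cbar\Cbar}^{-1} M_{\Cbar C}$. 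The outer triangular factors are unit triangular, hence invertible; the middle factor is invertible since $M$ is full rank, which in particular forces both $M_{\Cbar\Cbar}$ and $\textsc{SC}(M, C)$ to be invertible.

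Inverting the factorization (each triangular factor is inverted by flipping the sign of its off-diagonal block, and the block-diagonal factor by inverting each block) yields
\[
M^{-1}
=
\begin{pmatrix}
I & 0\\
-M_{\Cbar\Cbar}^{-1} M_{\Cbar C} & I
\end{pmatrix}
\begin{pmatrix}
\textsc{SC}(M, C)^{-1} & 0\\
0 & M_{\Cbar\Cbar}^{-1}
\end{pmatrix}
\begin{pmatrix}
I & -M_{C\Cbar} M_{\Cbar\Cbar}^{-1}\\
0 & I
\end{pmatrix}.
\]
Reading off the $(C,C)$ block of the right-hand side, the two unit triangular factors contribute only the identity in that slot, so
\[
[M^{-1}]_{CC}
=
\textsc{SC}(M, C)^{-1},
\]
which is the claim.

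I do not expect any genuine obstacle: the factorization above is purely algebraic manipulation, and the only thing to check is that the two inverses in question actually exist, which follows from $M$ being full rank. As a sanity check, one can alternatively give a one-line solve-based argument: for any $b$ supported on $C$, solving $Mx = b$ by block elimination gives $\textsc{SC}(M, C) x_C = b_C$, while directly $x_C = [M^{-1}]_{CC} b_C$; since this holds for all $b_C$, the two operators agree. I would use the $LDU$ version in the paper because it also exposes the off-diagonal blocks of $M^{-1}$, which are the natural ingredients for the subsequent displacement-rank bookkeeping in Section~\ref{sec:Solver}.
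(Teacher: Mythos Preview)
The paper states this as a \emph{Fact} without proof, so there is nothing to compare against; your block $LDU$ argument is the standard derivation and is correct. One small wording issue: you write that the invertibility of $M$ ``forces both $M_{\Cbar\Cbar}$ and $\textsc{SC}(M,C)$ to be invertible,'' but the factorization you wrote down already presupposes $M_{\Cbar\Cbar}^{-1}$ exists (indeed, $\textsc{SC}(M,C)$ is not even defined otherwise, and $M$ being full rank alone does not guarantee every principal minor is invertible). So the invertibility of $M_{\Cbar\Cbar}$ is a standing hypothesis implicit in the statement, not a consequence; once that is assumed, your deduction that $\textsc{SC}(M,C)$ is invertible (from $\det M = \det M_{\Cbar\Cbar}\cdot\det\textsc{SC}(M,C)$) is valid, and the rest of the computation is fine.
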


Combining this with the fact that positive/negative
displacement ranks work well under taking leading/trailing
principle minors gives the following bounds
on displacement ranks.
We will in general use $C$ to denote the remaining coordinates,
which in our recursive algorithm will be a suffix of the indices.
Then the leading portion of coordinates will be
denoted using $\Cbar$.

\begin{lemma}
\label{lem:SchurComplementClosure}
If $M$ is a symmetric full rank matrix,
$C$ and $\overline{C}$ are a coordinate wise suffix/prefix
split of the indices, then we have:
\[
\textsc{Rank}_{+ s} \left( M_{\Cbar, \Cbar} \right),
\textsc{Rank}_{+ s} \left( \textsc{SC}\left( M, C \right) \right)
\leq
\textsc{Rank}_{+ s} \left( M \right)
\]
\end{lemma}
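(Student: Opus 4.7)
The plan is to handle the two displacement-rank bounds separately, using the fact that $s^{+}(\cdot)$ and $s^{-}(\cdot)$ interact well with leading and trailing principal minors respectively, together with the inverse duality from Lemma~\ref{lem:DisplacementRankInverse}.

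For the first bound, on $M_{\Cbar,\Cbar}$, I would unfold the definition of $s^{+}$ entrywise. Writing $\Cbar = [1,k]$, one checks that
\[
\left( s^{+}(M_{\Cbar,\Cbar}) \right)_{ij}
\;=\;
\begin{cases}
M_{ij} - M_{i-s, j-s} & \text{if } s < i,j \leq k,\\
M_{ij} & \text{otherwise,}
\end{cases}
\]
which coincides with the entries of $s^{+}(M)$ restricted to the first $k$ rows and columns. Thus $s^{+}(M_{\Cbar,\Cbar})$ is the leading $k\times k$ principal submatrix of $s^{+}(M)$, and taking a submatrix can only decrease rank, giving $\textsc{Rank}_{+s}(M_{\Cbar,\Cbar}) \leq \textsc{Rank}_{+s}(M)$.

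For the bound on the Schur complement, the strategy is to translate the problem to a trailing minor of $M^{-1}$, where the negative displacement operator behaves the same way. Using Fact~\ref{fact:SchurInverse}, we have $\textsc{SC}(M,C)^{-1} = [M^{-1}]_{CC}$, and by Lemma~\ref{lem:DisplacementRankInverse} (applied twice, once to $\textsc{SC}(M,C)$ and once to $M$),
\[
\textsc{Rank}_{+s}(\textsc{SC}(M,C))
\;=\;
\textsc{Rank}_{-s}([M^{-1}]_{CC}),
\qquad
\textsc{Rank}_{-s}(M^{-1}) \;=\; \textsc{Rank}_{+s}(M).
\]
It then suffices to prove that $\textsc{Rank}_{-s}([M^{-1}]_{CC}) \leq \textsc{Rank}_{-s}(M^{-1})$. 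Since $C$ is a suffix of the indices, writing $C = [k+1, n]$, an analogous entrywise unfolding shows that $s^{-}([M^{-1}]_{CC})$ equals the trailing $(n-k)\times(n-k)$ principal submatrix of $s^{-}(M^{-1})$: the shift by $s$ done by $\Delta(s)^T$ on the right and $\Delta(s)$ on the left matches whether one works in the full matrix or in the trailing block, so the rank bound again follows by the submatrix–rank inequality.

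The only real obstacle is index bookkeeping, namely verifying that the entrywise formulas for the restrictions really do match the displaced versions of the full matrix on every index, including the boundary cases where the shift falls outside the relevant block. Once this is checked (once for leading minors with $s^{+}$, once for trailing minors with $s^{-}$), the two bounds chain together cleanly via Fact~\ref{fact:SchurInverse} and Lemma~\ref{lem:DisplacementRankInverse}. There is no further analytic content beyond these observations.
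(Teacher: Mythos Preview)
Your proposal is correct and follows essentially the same approach as the paper: both arguments verify that $s^{+}$ of a leading principal minor equals the corresponding leading minor of $s^{+}(M)$, and then handle the Schur complement by passing to $M^{-1}$ via Fact~\ref{fact:SchurInverse} and Lemma~\ref{lem:DisplacementRankInverse}, using the analogous compatibility of $s^{-}$ with trailing principal minors. Your entrywise bookkeeping is somewhat more explicit than the paper's, but the logical structure is identical.
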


\begin{proof}
To bound the displacement rank of the leading minor $M_{\Cbar \Cbar}$,
note that because $\Cbar$ is a prefix of the coordinates,
\[
\Delta\left( s \right) M_{\Cbar \Cbar} \Delta\left( s \right)^{T}
=
\left( M - \Delta\left( s \right) M \Delta\left( s \right)^{T} 
\right)_{\Cbar \Cbar}
\]
which when substituted back in gives
\[
M_{\Cbar \Cbar} - \Delta\left( s \right) M_{\Cbar \Cbar} \Delta\left( s \right)^{T}
= 
\left( M - \Delta\left( s \right) M \Delta\left( s \right)^{T} 
\right)_{\Cbar \Cbar}.
\]
The inequality then follows from minors of matrices
having smaller ranks.
		
For the Schur Complement onto the suffix of indices $C$,
we combine the inverse representation
of Schur Complements from Fact~\ref{fact:SchurInverse} above
with the relation between positive and negative displacement ranks
from Lemma~\ref{lem:DisplacementRankInverse}.

Because $C$ is a suffix of the indices, we get
\[
M^{-1}_{CC} - \Delta\left( s \right)^{T} M^{-1}_{CC} \Delta\left( s \right)
=
\left( M^{-1} - \Delta\left( s \right) M^{-1} \Delta\left( s \right)^{T}\right)_{CC},
\]
which implies
\[
\textsc{Rank}_{+s}\left( \textsc{SC}\left( M, C \right) \right)
=
\textsc{Rank}_{-s}\left( M^{-1}_{CC} \right)
\leq
\textsc{Rank}_{-s}\left( M^{-1} \right)
=
\textsc{Rank}_{+s}\left( M \right)
\]
where the first and last equalities follow from the
equivalences between positive and negative
displacement ranks given by Lemma~\ref{lem:DisplacementRankInverse}.
\end{proof}

This observation, plus the low displacement representation
given in Lemma~\ref{lem:DisplacementRepresentation} leads
to a recursive routine.
We repeatedly partitioning up the matrix into two even
halves of leading/trailing coordinates.
As our partitions (as well as initial eigenvalue conditions)
are on the blocks, we will overload notation and use
$\{\Cbar\}$ and $\{C\}$ to denote the splits into the
corresponding (size $s$) blocks.

Then we recursively find the inverse of the top-left half $\{\Cbar\}$
(in the succinct representation given by
Lemma~\ref{lem:DisplacementRepresentation}),
and use it to compute the Schur complemt of the bottom-right half on $\{C\}$.
Then we once can recurse again on the Schur complement on $\{C\}$,
which also has low displacement rank representation.
Combining its inverse with the inverse of $T_{\{\Cbar, \Cbar\}}$
then gives the overall inverse.

Top-level pseudocode of this method is in Figure~\ref{fig:RecursiveSC}
Its main algorithmic difficulties lie in directly computing
the displaced factorizations of the Schur complement,
and the overall inverse.
We present this algorithm in
Section~\ref{subsec:LowRank} before returning to the overall proof.

\begin{figure}

\begin{algbox}
$\textsc{Solve}_{T}\left(\cdot\right)
=
\textsc{RecursiveSC}(m, s, X, Y, \epsilon)$\\
\underline{Input}: $(ms) \times r$ matrices $X$ and $Y$ that
describe an $(ms)$-by-$(ms)$ $s$-block-matrix $T$ with $m \times m$ blocks
that has $s^{+}$-displacement rank $r$.
Error threshold $\epsilon$.\\
\underline{Output}:
A routine $\textsc{Solve}_{T}\left(\cdot\right)$
that corresponds to a linear operator $Z(T)$.
		\begin{enumerate}
			\item If $m = 1$, return the explicit inverse of $T$.
			\item Let $\{C\}$ the last $\lfloor m / 2 \rfloor$ blocks,
			and its complement $\{\Cbar\}$ be the first $\lceil m / 2\rceil$ blocks.
			
			\item Use Lemma~\ref{lem:DisplacementRepresentation}
			to generate, using $X$ and $Y$,
			$\epsilon$-error multiplications involving
			$T_{C, C}$, $T_{\Cbar, C}$, and $T_{C, \Cbar}$.
			
			\item \label{line:FirstRecursive}
			Recursively on $\Cbar$ to obtain operator $Z(\Cbar)$
			that corresponds to
			\[
			\textsc{Solve}_{T_{\left\{\Cbar, \Cbar\right\}}}\left( \cdot \right)
			\leftarrow \textsc{RecursiveSC}\left(\lceil m / 2\rceil,
			    s, X_{\{\Cbar\}, :}, Y_{\{\Cbar\}, :}, \epsilon\right).
			\]
			
			\item \label{line:GenSC}
			Implicitly generate
			matrix multiplication functions for the Schur complement
			\[
			\widetilde{SC} = T_{\{C, C\}}
			    - T_{\{C, \Cbar\}} Z(\Cbar) T_{\{\Cbar, C\}},
			\]
			\item Factorize the $s^{+}$-displaced
			approximate Schur Complement onto $\{C\}$:
			\[
			X(\textsc{SC}), Y(\textsc{SC}) \leftarrow
			\textsc{LowRankApprox}(\abs{C}, s, \textsc{Mult}_{s^{+}(\widetilde{SC})}(\cdot), 
			\textsc{Mult}_{s^{+}(\widetilde{SC})^{T}}(\cdot), \epsilon).
			\]
			\item \label{line:SecondRecursive}
			Recurse on $\textsc{SC}(T, C)$
			to obtain operator $Z(SC)$ that corresponds to
			\[
			\textsc{Solve}_{SC}\left( \cdot \right)
			\leftarrow
			\textsc{RecursiveSC}(\left\lfloor m / 2\right\rfloor,
			    2s, X(\textsc{SC}), Y(\textsc{SC}), 
			\epsilon).
			\]
			\item Use $Z(\Cbar)$ and $Z(SC)$ to
			implicitly generate multiplication operators for the $s^{-}$-displaced version of
			the approximate inverse operator
            \[
                Z = 
				\left[ \begin{array}{cc}
					I & - Z\left(\Cbar\right) T_{\left\{\Cbar, C\right\}}\\
					0 & I 
				\end{array} \right]\\
				\left[ \begin{array}{cc}
				Z\left( \Cbar \right)
				& 0 \\
				0 &
				Z\left( \textsc{SC} \right)
				\end{array} \right]	\\
				\left[ \begin{array}{cc}
					I & 0\\
					- T_{\left\{C, \Cbar\right\}} Z\left( F \right) & I 
				\end{array} \right],
			\]
			\item Compute
			\[
			X^{INV}, Y^{INV}
			\leftarrow \textsc{LowRankApprox}(m, 2s, \textsc{Mult}_{s^{-}\left( Z\right)}(\cdot), 
			\textsc{Mult}_{s^{-}\left( Z\right)^{T}}(\cdot), \epsilon)
			\]
			and return the multiplication operator given
			by Lemma~\ref{lem:DisplacementRepresentation}
			with error $\epsilon$.
		\end{enumerate}
	\end{algbox}

    \caption{Pseudocode for Recursive Schur Complement Algorithm}
    \label{fig:RecursiveSC}
\end{figure}	

\subsection{Implicit Low Rank Factorizations}
\label{subsec:LowRank}

Key to the efficiency of this algorithm is the ability to
encode the $(sm)^2$ numbers of an inverse with $ms^2$ numbers instead.
We show that given multiplication access to a matrix $T$.
we can directly compute low rank factorizations of $s^{+}(T)$
and $s^{-}(T)$ by only calling the multiplication routine against
$(ms)$-by-$O(s)$ sized matrices.
For this section, we set $n=ms$, and $r=s$,
so we work with $n$-by-$n$ matrices that are close
to rank $r$ within some very small error.

The square case of computing rank revealing factoriaztions was studied
in conjunction with the stability of fast matrix operations~\cite{DemmelDH07}.
However, we need to obtain running times sublinear in the matrix sizes.

We use random Gaussian projections, a method first analyzed by
Sarlos~\cite{Sarlos06}.
The quality of this random projection has been 
studied extensively in randomized numerical liner algebra.
More recent progress show that a sparse random projection
to about $\Otil(r)$ columns captures most of the column space
information~\cite{Sarlos06,DrineasMM08,KannanV17,Woodruff14:book}.
However, because our matrix is given as black-box access,
the density of the vectors do not affect the performance
of our algorithm.

Specifically, we multiply the matrix to be factorized
with a random Gaussian matrix with about $r$ columns,
and use that $n$-by-$\Otil(r)$ matrix to compute a good
column basis for the entire matrix.
The projection guarantees that we will use is the following
variant of Theorem 14 from~\cite{Sarlos06}, as well as
Theorem 45 from~\cite{Woodruff14:book}.

\begin{lemma}
\label{lem:RandProj}
Let $M$ be an $n \times n$ matrix, $r$ any rank parameter,
and $M(r)$ the best rank $r$ approximation to $M$.
Let $S$ be a random $n$-by-$O(r)$ matrix with entries set
i.i.d. to $N(0, 1)$,
and $\Pi_{MS}$ the projection operator onto the column
space of $MS$.
Then we have with probability at least $1 - n^{-10}$,
\begin{enumerate}
\item The projection of $M$ onto the row space of $SM$
has small distance to $M(r)$:
\[
  \norm{\Pi_{MS} M - M\left( r \right)}_{F}
  \leq
  n^{30} \norm{M - M \left( r \right)}_{F}.
\]
\item there is an $\# cols(S)$-by-$n$ matrix $R$
with entry-wise magnitude at most $O(n^4)$ such that
\[
  \norm{MSR - M\left( r \right)}_{F}
  \leq
  n^{30} \norm{M - M \left( r \right)}_{F}
\]
\end{enumerate}
\end{lemma}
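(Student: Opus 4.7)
The plan is to reduce both claims to the fact that a random Gaussian $S$ preserves the top-$r$ singular directions of $M$. Let $M = U\Sigma V^T$ be the singular value decomposition, and split along rank $r$: write $M(r) = U_r \Sigma_r V_r^T$ for the leading part and $M_{\mathrm{tail}} := M - M(r) = U_{-r}\Sigma_{-r}V_{-r}^T$ for the tail. Define $W := V^T S$ and partition it as $W_r := V_r^T S$ on top of $W_{-r} := V_{-r}^T S$, so that
\[
MS \;=\; U_r \Sigma_r W_r \;+\; U_{-r} \Sigma_{-r} W_{-r}.
\]

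The key probabilistic step is that because $V_r$ has orthonormal columns, $W_r = V_r^T S$ is distributed exactly as an $r \times O(r)$ matrix with i.i.d.\ $\normal(0,1)$ entries. Choosing the hidden constant in the column count of $S$ large enough, standard smallest-singular-value estimates for rectangular Gaussian matrices, together with Claim~\ref{claim:GaussianMax} applied to the entries of $S$ to control $\|S\|_2$, yield with probability at least $1 - n^{-10}$ a right-inverse $W_r^\dagger$ with $\|W_r^\dagger\|_2 = O(1)$ and $\|W_{-r}\|_2 \leq \|S\|_2 = O(n)$. I would condition on this event throughout the remainder of the argument.

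Now set $R := W_r^\dagger V_r^T$, an $O(r) \times n$ matrix. Since $V_r^T$ has orthonormal rows and $\|W_r^\dagger\|_2 = O(1)$, the operator norm of $R$, and hence its entry-wise max magnitude, is $O(1)$, comfortably inside the $O(n^4)$ budget. Using $W_r W_r^\dagger = I_r$, a direct expansion gives
\[
MSR \;=\; U_r\Sigma_r V_r^T \;+\; U_{-r} \Sigma_{-r} W_{-r} W_r^\dagger V_r^T \;=\; M(r) + E,
\]
where the error term satisfies
\[
\|E\|_F \;\leq\; \|U_{-r}\Sigma_{-r}\|_F \cdot \|W_{-r}\|_2 \cdot \|W_r^\dagger\|_2 \cdot \|V_r^T\|_2 \;\leq\; n^{O(1)}\,\|M - M(r)\|_F,
\]
which establishes the second conclusion. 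For the first, observe that $MSR$ lies in the column space of $MS$, so $\Pi_{MS}(MSR) = MSR$; writing
\[
\Pi_{MS} M - M(r) \;=\; \Pi_{MS}(M - MSR) + (MSR - M(r)) \;=\; \Pi_{MS}(M_{\mathrm{tail}} - E) + E
\]
and using $\|\Pi_{MS}\|_2 \leq 1$ yields
\[
\|\Pi_{MS} M - M(r)\|_F \;\leq\; \|M_{\mathrm{tail}}\|_F + 2\|E\|_F \;\leq\; n^{O(1)}\|M - M(r)\|_F,
\]
which fits inside the $n^{30}$ factor stated in the lemma.

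The main obstacle I anticipate is the clean quantitative invocation of a smallest-singular-value bound on the Gaussian block $W_r$ at the required $1 - n^{-10}$ failure probability level. The $n^{30}$ slack makes everything downstream routine norm arithmetic, but one has to decide whether to cite an off-the-shelf bound (e.g.\ Rudelson--Vershynin style estimates for rectangular Gaussians) or rederive the estimate via a short $\epsilon$-net argument of the kind already used in Sections~\ref{sec:PerturbA} and~\ref{sec:RandKrylov}. No other step of the proof appears to require genuine work.
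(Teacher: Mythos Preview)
Your proposal is correct and uses the same choice of $R = (V_r^T S)^{\dagger} V_r^T$ as the paper. The one genuine difference is in how the error term $M_{\mathrm{tail}} S R$ is bounded. The paper rewrites $R = S^T V_r Q^{-1} V_r^T$ with $Q = V_r^T S S^T V_r$ and then invokes the Approximate Matrix Multiplication property of Gaussian sketches (Lemma~\ref{lem:ProjProperties}) together with the orthogonality $M_{\mathrm{tail}} V_r = 0$ to bound $\norm{M_{\mathrm{tail}} S S^T V_r}_F$. You instead bound $\norm{W_{-r}}_2 \leq \norm{S}_2$ directly, which is more elementary, avoids the AMM lemma entirely, and actually yields a smaller polynomial loss. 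The paper also simply cites Sarl\'os for part~1, whereas you derive it from part~2 via $\Pi_{MS}(MSR) = MSR$; your derivation is clean and self-contained. Both arguments rest on the same probabilistic ingredient, namely a lower bound on $\sigma_{\min}(V_r^T S)$, which the paper packages as the Subspace Embedding Property in Lemma~\ref{lem:ProjProperties}; citing that lemma would resolve the obstacle you flag at the end.
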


Note that the success probability is set to $1 - n^{-10}$
instead of $1/2$ as in~\cite{Sarlos06}:
this is at the expense of a larger error parameter.
This modification is obtained by invoking the Markov inequality
toward the end of the proof in~\cite{Sarlos06} with
a larger error threshold.

To show the existence of $R$ with small entry-wise magnitudes,
we open up the proof of Lemma 45 of~\cite{Woodruff14:book},
but transform it to work with matrices that reduce number of columns.
In order to do so, we make use of the following two
properties of the random Gaussian projections shown in~\cite{Woodruff14:book}.

\begin{lemma}
\label{lem:ProjProperties}
There is an absolute constant such that for any $r$,
the $n$-by-$O(r)$ dense Gaussian matrix $G$ satisfies:
\begin{itemize}
    \item (Subspace Embedding Property) For any $r$-by-$n$ matrix $M$,
    with probability at least $1 - n^{-20}$ we have
    \[
    0.9 \norm{x^T M}_{2}
    \leq
    \norm{x^T M S}_2
    \leq
    1.1 \norm{x^T M}_2
    \qquad \forall x \in \Re^{r}
    \]
    \item (Approximate Matrix Multiplication Property) For any two matrices
    $M(1)$ and $M(2)$ with $n$ rows each,
    with probability at least $1 - n^{-20}$ we have
    \[
    \norm{M\left( 1 \right)^T S S^T M\left( 2 \right) - M\left( 1 \right)^T M\left( 2 \right)}_{F}
    \leq
    n^{21} \norm{M\left( 1 \right)}_{F}
    \norm{M\left(2\right)}_{F}
    \]
\end{itemize}
\end{lemma}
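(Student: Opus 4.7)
The plan is to establish the two properties separately, each via a standard Gaussian sketching argument, with the caveat that the stated bounds are extremely loose (the subspace embedding has constant distortion using only $O(r)$ columns, and the AMM inequality carries a very generous $n^{21}$ factor), so the analysis can afford a wasteful union bound.

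For the subspace embedding, I would first observe that the set $\{x^T M : x \in \mathbb{R}^r\}$ lies in a subspace of $\mathbb{R}^n$ of dimension at most $r$. Rescaling, it suffices to show $0.9 \leq \norm{v^T S}_2 / (c\sqrt{k}) \leq 1.1$ uniformly over unit vectors $v$ in this subspace, where $k$ is the number of columns of $S$ and $c$ is the scaling reconciling the statement with $S$'s entry distribution $\normal(0,1)$. For any \emph{fixed} unit $v$, the entries of $v^T S$ are i.i.d. $\normal(0, \norm{v}_2^2) = \normal(0,1)$, so $\norm{v^T S}_2^2$ is chi-squared with $k$ degrees of freedom, giving tail probability $\exp(-\Omega(k))$ for constant multiplicative deviation (Laurent--Massart). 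I then construct a $(1/10)$-net on the unit sphere of the $r$-dimensional subspace, of size at most $30^r$, take a union bound, and use the standard net-to-sphere approximation that converts an $(1/10)$-net guarantee into a $(0.9, 1.1)$ distortion bound on all unit vectors. Choosing the hidden constant in $k = O(r)$ large enough drives the failure probability below $n^{-20}$.

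For the approximate matrix multiplication property, I would compute the mean and variance of each entry of $M(1)^T S S^T M(2)$. A direct calculation using $\mathbb{E}[S_{ij}S_{i'j'}] = \mathbf{1}[i=i',j=j']$ gives $\mathbb{E}[M(1)^T SS^T M(2)] = k \cdot M(1)^T M(2)$, so after absorbing the scale $k$ into the target inequality (again reconciling with the statement's convention), the entry-wise centered random variable is a quadratic form in independent Gaussians. The variance of the $(i,j)$-entry is bounded by $O(\tfrac{1}{k}) \norm{M(1)_{:,i}}_2^2 \norm{M(2)_{:,j}}_2^2$ by the standard fourth-moment calculation. Summing across entries gives
\[
\mathbb{E}\norm{M(1)^T S S^T M(2) - M(1)^T M(2)}_F^2
\leq
\frac{C}{k} \norm{M(1)}_F^2 \norm{M(2)}_F^2,
\]
and Markov's inequality, with failure budget $n^{-20}$ and the enormous slack $n^{21}$ in the target bound, concludes.

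The main obstacle is not any deep probabilistic difficulty but rather pinning down the implicit normalization: the statement uses $N(0,1)$ entries (rather than the more common $N(0,1/k)$) and demands constant distortion rather than $(1\pm\epsilon)$, so one must be careful that the constants 0.9, 1.1 and the exponent 21 are indeed compatible with the actual dimensions of $S$ being some sufficiently large $O(r)$. Once this bookkeeping is fixed, both items reduce to invocations of chi-squared concentration plus an $\epsilon$-net (for the first item) and a variance computation plus Markov (for the second), essentially as in Theorem~2.3 of Woodruff's monograph~\cite{Woodruff14:book}, adapted to the regime where $S$ reduces the number of columns of an $r$-by-$n$ matrix rather than the number of rows.
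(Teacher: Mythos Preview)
The paper does not actually prove this lemma: it is stated as a quotation of known properties of Gaussian sketches, with the sentence ``we make use of the following two properties of the random Gaussian projections shown in~\cite{Woodruff14:book}'' immediately preceding it. So there is no paper proof to compare against; the lemma is treated as a black-box import.

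Your proof sketch is the standard one and is correct in outline: chi-squared concentration plus an $\epsilon$-net on the unit sphere of the row space of $M$ for the subspace embedding, and a second-moment computation plus Markov for the approximate matrix multiplication bound. This is exactly the argument one finds in the cited reference. Your identification of the normalization issue is also apt and is in fact a genuine sloppiness in the paper's statement: with $S$ having i.i.d.\ $\normal(0,1)$ entries and $O(r)$ columns, $\mathbb{E}\norm{v^T S}_2^2 = \Theta(r)\norm{v}_2^2$, so the literal constants $0.9$ and $1.1$ cannot hold without an implicit rescaling of $S$ by $1/\sqrt{\#\text{cols}(S)}$. The paper's downstream use of the lemma (in the proof of Lemma~\ref{lem:RandProj}) only needs that $V(r)^T S$ has full row rank and that $(V(r)^T S)(V(r)^T S)^T$ is well-conditioned, both of which survive any positive rescaling, so this is a cosmetic rather than substantive issue---but you are right to flag it.
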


\begin{proof}(Of Lemma~\ref{lem:RandProj})
Let the SVD of $M$ be
\[
M = U \Sigma V^T
\]
and let the top $r$ singular vectors/values be $U(r) \in \Re^{n \times s}$,
$\Sigma(r) \in \Re^{r \times r}$, and $V(r) \in \Re^{n \times s}$ respectively.
The given condition of the distance from $M$
to its rank $r$ approximation being at most $\epsilon$
means we have
\[
\norm{M - U\left( r \right) \Sigma\left( r \right) V\left( r \right)^T}_{F}
\leq
\epsilon
\]

The matrix used to bound the distance is then
\[
R
=
\left(V\left( r \right)^T S \right)^{\dag} V\left(r \right)^T.
\]

The subspace embedding property gives that
because $V(r)$ has $r$ rows,
$V(r)^T S$ has full row rank $r$, and
\[
\left(V\left( r \right)^T S \right)^{\dag}
=
\left( V\left( r \right)^T S \right)^T Q^{-1}
=
S^T V\left( r \right) Q^{-1}
\]
for some $r$-by-$r$ matrix $Q = V(r)^T S S^T V(r)$
whose eigenvalues are in the range $[0.8, 1.3]$.
This in turn means that the entry-wise magnitude
of both $Q$ and $Q^{-1}$ are at most $O(1)$.

Combining this with the entry-wise magnitude of $1$
for $V(r)$ (because it's an orthonormal basis)
and $n$ for $S$ (due to Claim~\ref{claim:GaussianMax})
gives that $R$ has entry-wise magnitude at most
$O(n^4)$, and Frobenius norm at most $O(n^{5})$.

It remains to bound the error term.
Let $\Mhat$ be the matrix corresponding to singular
values $r + 1 \ldots n$ in the SVD:
\[
\Mhat
=
M - U\left( r \right) \Sigma\left(  r\right)
V \left( r \right)^T
=
U\left( n - r \right) \Sigma\left( n - r\right)
V \left( n - r \right)^T.
\]
The fact that $Q$ perfectly inverts $V(r)^T S$ means
$U(r) \Sigma(s) V(r)^T S R = U(r) \Sigma(s) V(r)$.
So the error is only on the $\Mhat$ term:
\[
M - M S R
=
\Mhat S R
=
\Mhat S S^T V\left( r \right) Q^{-1} V \left( r \right)^{T}.
\]
Furthermore, the fact that $V(r)$ is an orthonormal basis,
and that $Q^{-1}$ has max eigenvalue $2$, means that we
can remove them from consideration:
\[
\norm{\Mhat SR}_{F}
\leq
\norm{\Mhat S S^T V\left( r \right)}_{F}
\norm{Q^{-1}}_2
\norm{V \left( r \right)}_{2}
\leq
2
\norm{\Mhat S S^T V\left( r \right)}_{F}.
\]

For this last term, the orthgonality of singular vectors gives
\[
0
=
V \left( n - r \right)^T
V \left( r \right)
=
U\left( n - r \right) \Sigma\left( n - r\right)
V \left( n - r \right)^T
V \left( r \right)
=
\Mhat V \left( r \right).
\]
So applying the Approximate Matrix Multiplication Property gives
\[
\norm{\Mhat SS^T V\left( r \right)}_{F}
\leq
n^{21}
\cdot \norm{\Mhat}_{F}
\cdot \norm{V\left(r\right)^T}_{F}
\leq
n^{22} \epsilon.
\]
Here the last inequality follows from $V(r)$ being an orthonormal basis.
Incorporating the additional factor of $2$ above then gives the result.
\end{proof}

We remark that this $R$ matrix is used throughout randomized
algorithms for computing low rank approximations~\cite{KannanV17}.
The bound on the max magnitude of entries of $R$ here allows us to
perturb $MS$ slightly so that its minimum singular value
bounded away from $0$, which in turn gives bounds on the
bit-complexity of computing projections into its column space.
Specifically, we invoke the dot-product against null space
idea inherent to analyses of min singular values of entry-wise
i.i.d. matrices~\cite{SankarST03:journal,TaoV10},
which is also the starting point of the proof of 
Theorem~~\ref{thm:RandKrylov} in Section~\ref{sec:RandKrylov}.

\begin{lemma}
\label{lem:DensePerturb}
Let $\Mhat$ be any $n$-by-$d$ matrix with $d < n$,
and $\Mtil$ a perturbation of $\Mhat$ formed
by adding $\epsilon N(0, 1)$ to every entry.
Then with probability at least $1 - n^{-11}$,
the minimum singular value of $\Mtil$ is at least
$\epsilon n^{-20}$.
\end{lemma}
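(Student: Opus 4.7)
The plan is to combine the classical distance-to-subspace characterization of the minimum singular value of a matrix with independent columns with a dimension-free Gaussian anti-concentration estimate applied to the perturbation $\epsilon G$. Write $c_i = \Mtil_{:, i}$ and let $d_i$ denote the distance from $c_i$ to $\mathrm{span}\{c_j : j \neq i\}$. For any unit $x \in \Re^{d}$, projecting $\Mtil x = \sum_j x_j c_j$ onto the orthogonal complement of $\mathrm{span}\{c_j : j \neq i\}$ annihilates every term but $x_i c_i$, so $\norm{\Mtil x}_2 \geq |x_i| \cdot d_i$. Picking the index with $|x_i| \geq 1/\sqrt{d}$ gives the standard reduction
\[
\sigma_{\min}(\Mtil) \;\geq\; \frac{1}{\sqrt{d}} \min_i d_i,
\]
so it suffices to show each $d_i$ is at least $\epsilon \sqrt{d}\, n^{-20}$ with probability $1 - n^{-12}$.

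I would then fix a column $i$ and condition on the remaining columns $c_j$, $j\neq i$. The orthogonal projection $P_i$ onto the complement of $\mathrm{span}\{c_j : j \neq i\}$ is then deterministic, of rank $k \geq n - (d - 1) \geq 2$, and $d_i = \norm{P_i c_i}_2$. Writing $c_i = \Mhat_{:, i} + \epsilon g_i$ with $g_i \sim \normal(0, I_n)$ independent of the conditioning, the vector $P_i c_i$ is conditionally a Gaussian random vector in $\mathrm{range}(P_i)$ with some mean $P_i \Mhat_{:, i}$ and covariance $\epsilon^{2} P_i$.

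The step I expect to be the main obstacle here is that $\Mhat$ is entirely arbitrary, so one might fear that a large or badly aligned $\Mhat$ could place the mean $P_i \Mhat_{:, i}$ anywhere and ruin the anti-concentration. The way around this is a dimension-free bound: the density of a Gaussian on a $k$-dimensional subspace with covariance $\epsilon^{2} I_k$ is pointwise bounded by $(2\pi\epsilon^{2})^{-k/2}$ \emph{irrespective of the mean}, so estimating the Gaussian mass of a small ball by (volume)$\times$(density maximum) yields
\[
\prob{}{\norm{P_i c_i}_{2} \leq r}
\;\leq\;
\left( \frac{e r^{2}}{k \epsilon^{2}} \right)^{k/2},
\]
uniformly in $\Mhat$. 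Plugging in $r = \epsilon \sqrt{d}\, n^{-20}$ makes this at most $n^{-19}$ for every $k \geq 1$; union-bounding over the $d \leq n$ columns shows $\min_i d_i \geq \epsilon \sqrt{d}\, n^{-20}$, and hence $\sigma_{\min}(\Mtil) \geq \epsilon n^{-20}$, with probability at least $1 - n^{-18} \geq 1 - n^{-11}$. The hypothesis $d < n$ enters precisely to guarantee $\mathrm{rank}(P_i) \geq 1$, so that the Gaussian component of $c_i$ has a positive-dimensional direction in which to escape the subspace spanned by the other columns.
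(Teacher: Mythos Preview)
Your proof is correct and follows essentially the same approach as the paper: reduce $\sigma_{\min}$ to the minimum column-to-complementary-span distance, condition on the other columns, and use Gaussian anti-concentration on the remaining column. The only cosmetic difference is that the paper picks a single unit null-space vector $w(\setminus j)$ and bounds the one-dimensional inner product $|w(\setminus j)^T \Mtil_{:,j}|$, whereas you project onto the full $k$-dimensional orthogonal complement and invoke a small-ball bound; both arguments are equivalent here and yield the stated conclusion.
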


\begin{proof}
For each column $j$, consider a vector $w(\setminus j)$
in the null space of the rest of the columns of $\Mtil$.
The dense Gaussian added to $\Mtil$ gives that with
probability at least $1 - n^{-14}$, we have
\[
\abs{w\left( \setminus j\right)^T \Mtil_{: j}} \geq \epsilon n^{-15}.
\]
Taking union bound over all columnsn $j$ gives that
this holds for all columnns with probability at least
$1 - n^{-13}$.

We now lower bound $\norm{\Mtil x}_2$ for all unit vectors $x \in \Re^{d}$.
For any such unit $x$,
there is some $j$ such that $\abs{x_j} \geq n^{-1/2} > n^{-1}$.
Then we get:
\[
\norm{\Mtil x}_2
\geq
\abs{x_j} \cdot \abs{w\left(\setminus j \right)^T \Mtil_{:, j}}
\geq
\epsilon n^{-20}
\]
where the last inequality follows from the dot-product
lower bound above.
\end{proof}

Pseudocode of algorithm that utilizes this projection is in Figure~\ref{fig:LowRankApprox}.
Its guarantees are given below in Lemma~\ref{lem:LowRankApprox}.
	
\begin{figure}[ht]
	\begin{algbox}
		$(X, Y) = \textsc{LowRankApprox}(n, 
		\textsc{Mult}_{M}(\cdot),
		\textsc{Mult}_{M^{T}}(\cdot), r, \epsilon)$\\
		\underline{Input}: implicit access to an $n \times n$ matrix $M$
		via multiplication functions of it (via. $\textsc{Mult}_{M}(\cdot)$)
		and its transpose (via. $\textsc{Mult}_{M^{T}}(\cdot)$).\\
		Target rank $r$ and error guarantee $\epsilon$.\\
		\underline{Output}: rank $r$ approximation of $M$
		in factorized form, $X, Y \in \Re^{n \times r}$.
		
		\begin{enumerate}
		    \item Set $\epsilonhat \leftarrow \epsilon \normi{M}_{\infty}^{-1} n^{-10}$
			\item Generate $n$-by-$O(r)$ random matrix $S$
			with entries i.i.d. Gaussian, $N(0, 1)$.
			\item Compute $\Mhat \leftarrow \textsc{Mult}_{M}(S, \epsilonhat )$
			\item Perturb $\Mhat$ entry-wise by $\epsilon \cdot N(0, 1)$ to form $\Mtil$.
			\item Let $\Xhat$ be an orthonormal basis spanning the columns of $\Mtil$,
			\[
			\Xhat \leftarrow \Mtil \left( \Mtil^T \Mtil \right)^{-1/2}
			\]
			computed to additive accuracy $\epsilonhat$.
			\item Set $\Yhat \leftarrow
			    \textsc{Mult}_{M^{T}}( \Xhat, \epsilonhat)$.
			\item Return the rank $r$ singular value decomposition
			to $\Xhat \Yhat^{T}$.
		\end{enumerate}
	\end{algbox}
	\caption{Pseudocode for computing a rank $r$ factorization of
	a matrix using only accesses to its products against vectors.}
    \label{fig:LowRankApprox}
\end{figure}
	
\begin{lemma}
\label{lem:LowRankApprox}
For an $n \times n$ matrix $M$ with $L_M$ words after the decimal point,
and Frobenius norm distance at most $\epsilon$ to some rank $r$ matrix,
given via implicit access to multiplication operators
$\textsc{Mult}_{M}$ and $\textsc{Mult}_{M^{T}}$,
$ \textsc{LowRankApprox}(n, 
\textsc{Mult}_{M}(\cdot),
\textsc{Mult}_{M^{T}}(\cdot), r)$ returns
with probability at least $1 - n^{-10}$, 
$X, Y \in \Re^{n \times r}$ such that
\begin{enumerate}
\item $X$ and $Y$ have magnitude at most
$n^2 \normi{M}_{\infty}$, and at most
$\log( (1 + \normi{M}_{\infty}) n / \epsilon)$
words after the decimal point,
\item The approximation error satisfies
\[
\norm{M - X Y^{T}}_{F}
\leq
n^{20} \epsilon
\]
\end{enumerate}
and the total cost of the algorithm is:
\begin{enumerate}
\item the cost of calling $\textsc{Mult}_{M}(\cdot)$
and $\textsc{Mult}_{M^{T}}(\cdot)$ for
$O( r\log{n})$ vectors with entry magnitude at most $1$ and at most
$O( \log( (1 + \normi{M}_{\infty}) n/ \epsilon))$
words after the decimal point, plus
\item an additional overhead of
$\Otil(n r^{\omega - 1} 
(\log((1 + \normi{M}_{\infty})n / \epsilon))$.
\end{enumerate}

\end{lemma}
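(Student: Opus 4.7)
The plan is to follow the steps of the algorithm in Figure~\ref{fig:LowRankApprox}, invoking Lemma~\ref{lem:RandProj} for the approximation quality and Lemma~\ref{lem:DensePerturb} for the numerical stability of the orthonormalization step.

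First, I would apply Lemma~\ref{lem:RandProj} to the Gaussian sketch $MS$: with probability at least $1 - n^{-10}$, the projector $\Pi_{MS}$ onto the column span of $MS$ satisfies $\norm{\Pi_{MS} M - M(r)}_F \le n^{30} \norm{M - M(r)}_F \le n^{30}\epsilon$, and moreover there exists a coefficient matrix $R$ with $\normi{R}_\infty \le O(n^4)$ achieving a comparable bound. Since we only have access to the \emph{computed} sketch $\Mhat = \textsc{Mult}_M(S,\epsilonhat)$, I would use Lemma~\ref{lem:ErrorCompose} together with $\normi{S}_\infty \le n$ (from Claim~\ref{claim:GaussianMax}) to show $\norm{\Mhat - MS}_F \le \epsilonhat \cdot n^{O(1)}$, which is negligible relative to $\epsilon$ for our choice of $\epsilonhat = \epsilon \normi{M}_\infty^{-1} n^{-10}$.

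Second, the perturbation step: setting $\Mtil = \Mhat + \epsilonhat \cdot N(0,1)^{n\times O(r)}$, Lemma~\ref{lem:DensePerturb} gives $\sigma_{\min}(\Mtil) \ge \epsilonhat \cdot n^{-20}$ with failure probability $n^{-11}$. This lower bound is exactly what is needed to bound the bit-complexity of computing $(\Mtil^T \Mtil)^{-1/2}$ stably: by Lemma~\ref{lem:ErrorInvert} the inverse is numerically well-defined, and standard stable matrix inversion/square root (e.g.~\cite{DemmelDHK07}) on the $O(r)\times O(r)$ Gram matrix $\Mtil^T\Mtil$ takes time $\Otil(r^\omega \log(1/\epsilonhat))$ with $O(\log((1+\normi{M}_\infty)n/\epsilon))$ words after the decimal. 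I would then argue that $\Xhat = \Mtil(\Mtil^T\Mtil)^{-1/2}$, computed to additive error $\epsilonhat$, is within $O(n^{O(1)}\epsilonhat)$ Frobenius distance of a true orthonormal basis $X^\star$ for the column span of $\Mtil$, which in turn is within $O(\sqrt{r}\cdot\epsilonhat n^{20}/\epsilonhat) = n^{O(1)}$ of an orthonormal basis for the column span of $MS$ (the perturbation is small relative to $\sigma_{\min}$).

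Third, I would chain the approximations: since $\Pi_{MS} = X^\star (X^\star)^T$ up to negligible error, the matrix $\Xhat\Xhat^T M$ differs from $\Pi_{MS}M$ by at most $n^{O(1)}\epsilonhat \normi{M}_\infty \le n^{-1}\epsilon$, hence $\norm{\Xhat\Xhat^T M - M(r)}_F \le 2 n^{30}\epsilon$. Replacing the unknown $M$ on the right with $\Yhat^T = \textsc{Mult}_{M^T}(\Xhat,\epsilonhat)^T$ introduces another $\epsilonhat$-level error through Lemma~\ref{lem:ErrorCompose}. Truncating $\Xhat\Yhat^T$ to its best rank-$r$ SVD can only at most double the Frobenius distance to $M(r)$ (since $M(r)$ itself has rank $r$), yielding the claimed $\norm{M - XY^T}_F \le n^{20}\epsilon$ after adjusting the exponent of $n$ in the error. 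The magnitude and bit-complexity bounds on $X,Y$ follow from the bounds on $\Xhat$ (an orthonormal basis, so entries bounded by $1$) and on $\Yhat$ (bounded by $\normi{M}_\infty$ up to polynomial factors from $\Xhat$), then rounded to $O(\log((1+\normi{M}_\infty)n/\epsilon))$ digits.

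The main obstacle, and the one I would be most careful about, is making the numerical chain airtight around the inverse square root: $\sigma_{\min}(\Mtil)$ is only $\epsilonhat \cdot n^{-20}$, so $(\Mtil^T\Mtil)^{-1/2}$ has operator norm as large as $\epsilonhat^{-1} n^{20}$, and every additive error in computing it gets amplified by $\normi{M}_\infty$ when reconstructing $XY^T$. Choosing $\epsilonhat = \epsilon\normi{M}_\infty^{-1} n^{-10}$ (as in the pseudocode) and tracking the $n^{O(1)}$ slack afforded by the $n^{20}\epsilon$ guarantee exactly balances these competing terms; this is where all the polynomial factors of $n$ get absorbed. The runtime bound is then immediate: the $O(r\log n)$ matvec calls with vectors of the claimed bit complexity come from computing $\Mhat$ and $\Yhat$, and the $\Otil(n r^{\omega-1})$ overhead comes from forming $\Mtil^T\Mtil$, inverting it, computing $\Xhat$, and finally taking the rank-$r$ SVD of the $n\times O(r)$ outer-product factorization $\Xhat\Yhat^T$ using standard stable methods.
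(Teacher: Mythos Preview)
Your proposal has a genuine gap at the transition from the column span of $\Mtil$ back to the projection $\Pi_{MS}$. You write that an orthonormal basis $X^\star$ for $\Mtil$ is ``within $O(\sqrt{r}\cdot\epsilonhat n^{20}/\epsilonhat) = n^{O(1)}$ of an orthonormal basis for the column span of $MS$'', and then in the next paragraph assert ``$\Pi_{MS} = X^\star (X^\star)^T$ up to negligible error''. These two statements contradict each other: your own computation shows the gap is $n^{O(1)}$, which is not negligible. The underlying reason is that $\sigma_{\min}(MS)$ may be arbitrarily small (even zero), so the perturbation to $\Mtil$ is \emph{not} small relative to the relevant singular value, and a Davis--Kahan style subspace perturbation bound gives you nothing useful here. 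The whole point of adding the Gaussian noise was that $MS$ could be degenerate.

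The paper resolves this without ever comparing the two column spans. It uses the second conclusion of Lemma~\ref{lem:RandProj} --- the explicit coefficient matrix $R$ with $\normi{R}_\infty = O(n^4)$ and $\norm{MSR - M(r)}_F \le n^{30}\epsilon$ --- in the \emph{forward} direction: since $\norm{\Mtil - MS}_F \le n^{O(1)}\epsilon$ and $R$ has bounded entries, $\norm{\Mtil R - M(r)}_F \le n^{40}\epsilon$. This exhibits a good approximation to $M(r)$ lying \emph{in the column span of $\Mtil$} directly, so projecting $M$ onto that span (which is what $\Xhat\Yhat^T \approx \Xhat\Xhat^T M$ does) inherits the same error. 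You mentioned $R$ in your first paragraph but then abandoned it; the fix is to carry $R$ through to $\Mtil$ instead of trying to argue that the spans of $MS$ and $\Mtil$ are close. The rest of your outline (bit complexity via the $\sigma_{\min}(\Mtil)$ bound, the rank-$r$ SVD of $\Xhat\Yhat^T$, and the runtime accounting) is in line with the paper.
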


\begin{proof}
Lemma~\ref{lem:RandProj} gives that there is some $R$
with entry-wise magnitude at most $O(n^4)$ such that
\[
\norm{\Mhat R - M}_{F}
\leq
n^{30} \epsilon.
\]
Furthermore, the entry-wise perturbation is at most
$n\epsilon$ with probability $\exp(-n)$, so we get
\[
\norm{\Mhat R - \Mtil R}_{F}
\leq
n^5 \epsilon
\]
which in turn gives
\[
\norm{\Mtil R - M}_{F}
\leq
n^{40} \epsilon,
\]
or that $M$ has a good approximation in the column space
of $\Mtil$ as well.

On the other hand, Lemma~\ref{lem:DensePerturb} gives that
$\Mtil$ has full column rank, and has minimum singular value
at least $\epsilon n^{-20}$.
This means that the bit complexity needed to compute
$\Xhat$ to the specified accuracy of $\epsilonhat$ is at most
$\epsilonhat \cdot \epsilon \cdot n^{-O(1)}$, or that
$\log(\normi{M}_{\infty} n / \epsilon)$ words of precision
suffices for these calculations.
This gives the round-off error needed when generating
$S$, and in turn the bounds on the word lengths after
decimal points in the inputs given to $\textsc{Mult}_{M}$
and $\textsc{Mult}_{M^{T}}$.

We then bound the costs of the other steps.
First, note that because the max magnitude of entries in
$S$ is $1$, the max magnitude of entries in $\Mhat$
is at most $O(n^2 \normi{M}_{\infty})$.
Also, because $\Xhat$ is an orthonormal basis,
the max magnitude of an entry in it is at most $1$,
which means the max magnitude of an entry in $\Yhat$
is at most $O(n^2 \normi{M}_{\infty})$ as well.

Factoring in the round-off errors we incurred gives an additional error of
\[
\epsilonhat
\cdot 
O\left(n^2 \normi{M}_{\infty}\right)
\leq
n^{5} \epsilon.
\]
which in turn implies
\[
\norm{\Xhat \Yhat^{T} - M\left( r \right)}_{F}
\leq
n^{45} \epsilon.
\]
with probability at least $1 - n^{-10}$,
and in turn that the rank $r$ decomposition of
$\Xhat \Yhat^{T}$ has error at most $n^{45} \epsilon$.

So it remains to bound the overhead caused by running
high accuracy singular value decomposition on
$\Xhat \Yhat^{T}$.
Note that the outer product of this matrix is:
\[
\left( \Xhat \Yhat^{T} \right)^{T}
\Xhat \Yhat^{T}
=
\Yhat \left(\Xhat^T \Xhat\right) \Yhat,
\]
and the middle matrix can be computed in
$O(n r^{\omega - 1})$ operations involving
numbers whose magnitude are at most $O(n^{10} \normi{M}_{\infty})$
and with at most
$O(\log(n(1 + \normi{M}_{\infty})/\epsilon)) + L_M)$
words after decimal place.

Taking square root of this positive semi-definite
matrix, and rounding to error
\[
\frac{\epsilon}{\normi{M}_{\infty}  n^{10}}
\]
then allows us to write this matrix as
\[
\left( P \Yhat \right)^{T} P\Yhat
\]
for some positive semi-definite matrix $P = (\Xhat^T \Xhat)^{1/2}$.
Note that the magnitude of $P\Yhat$ is less than the
magnitude of $Y$ due to square-rooting only decreasing
eigenvalues above $1$.

The equivalence of SVDs of outer and inner products
means we can take the SVD of $\Yhat^T P^{T} P \Yhat$,
which is an $r$-by-$r$ matrix.
The complexity of this step was shown to be
$\Otil(r^{\omega})$
times the lengths of the initial words
in Sections 6.1. and 6.2. of~\cite{DemmelDH07}.
\footnote{The actual stated running time in~\cite{DemmelDH07}
is $O(r^{\omega + \eta})$ for any $\eta > 0$.
We observe this bound is $\Otil(r^{\omega})$ operations involving
words whose lengths equal to initial matrix entries:
the only overhead on the matrix multiplication steps
in~\cite{DemmelDH07} are from the $O(\log{r})$ layers of recursion.
}

It can then be converted back to rank $k$ bases for $P \Yhat$,
and then $\Xhat$ by multiplying through via these matrices:
each of which incurrs an error of at most $O(n^2 \normi{M}_{\infty}^2)$.
Thus, it suffices to keep the round-off errors in all of these
steps at most
\[
\frac{\epsilon}{n^{50} \left( 1 + \normi{M}_{\infty} \right)^4}
\]
which gives a word-length of at most
$\log(n (1 + \normi{M}_{\infty}) / \epsilon)$.
Incorporating this, along with the upper bound of all numbers
of $O(n^2 \normi{M})$ together with the
$O(n r^{\omega - 1})$ operations then gives the total cost.
\end{proof}

We remark that we cannot directly return just $\Xhat$ and $\Yhat$.
Due to the recursive invocation of such routines,
such an overhead will accumulate to $O(1)^{d}$
over $d$ layers of the recursion.
Such error would in turn necessitate stopping the recursion earlier, and in turn at least an $n^{o(1)}$ overhead in running time.

Also, note that the eigen-gap we need for the $s$-by-$s$
matrix is around $\epsilon$:
we can get away with using things like matrix exponential
operators from~\cite{OrecchiaSV12,SachdevaN13}
instead of the full eigen-decomposition routine from~\cite{DemmelDH07}.

On the other hand, we do remark that this ability 
to compute a low rank factorization in sublinear time is critical to our overall running time gains.
In our setting with $\log(\norm{M}_{\infty} / \epsilon)
= \Otil(m)$ and $L_{M} = \Otil(m)$,
the total cost of computing this factorization is about
$s^{\omega} \cdot m^2$ word operations,
so the reduction from $n$ to $s$ is necesasry for gains over $n^{\omega} \approx (sm)^{\omega}$.

\subsection{Error Propagation}
\label{subsec:Error}

Our analysis of the error of the recursive algorithm given
in Figure~\ref{fig:RecursiveSC} relies on bounds on extreme singular
values of all principle minors of $T$ containing block-prefixes of indices.
Such blocking is natural due to $T_{\{\Cbar, \Cbar\}}$ taking prefixes of blocks.

Below we provide tools needed to handle the accumulation of errors.
Lemma~\ref{lem:ErrorInvert} implies that singular value bound allow us to
carry errors between a matrix and its inverse, losing a factor of
$poly(n, \alpha^{-1})$ at each step where $\alpha$ is the min singular value.
In order to use this though, we need to bound the singular values
of all intermediate matrices that arise from such recursion.

Note that all the matrices that we work with consist of a
prefix of blocks, Schur complemented onto a suffix of blocks.
Formally, we let $\{S\}$ denote a prefix of the block indices,
and $\{C\}$ denote the suffix that we Schur complement onto,
and want to bound the min/max singular values of the matrix
\[
\textsc{SC}(T_{\left\{S, S\right\}}, \left\{C\right\}).
\]
The given conditions on $H$ transferred to $T$
ensures that $T_{\{S, S\}}$ has good singular values.

We show via the below lemma that this condition ensures
that the Schur complement also has good singular values.
Once again, although we only work with block-alligned
indices, we prove the bounds for any subset of indices.
\begin{lemma}
	\label{lem:ConditionSchur}
	If $M$ is a $n \times n$
	square full rank matrix, and $C$/$\Cbar$ is a split of the indices
	such that both $M$ and $M_{\Cbar \Cbar}$ have singular values in the range
	$[\sigma_{\min}, \sigma_{\max}]$, then the singular values of
	$\textsc{SC}(M, C)$ are in the range
	$[ n^{-2} \sigma_{\min}, n^{10} \sigma_{\max}^{2} \sigma_{\min}^{-1} ]$.
\end{lemma}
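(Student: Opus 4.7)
The plan is to handle the two bounds separately using the two different representations of the Schur complement: the explicit elimination formula $\textsc{SC}(M,C) = M_{CC} - M_{C\Cbar} M_{\Cbar\Cbar}^{-1} M_{\Cbar C}$ for the upper singular value bound, and the inverse-minor identity from Fact~\ref{fact:SchurInverse}, namely $\textsc{SC}(M,C)^{-1} = [M^{-1}]_{CC}$, for the lower singular value bound. The one auxiliary fact I will invoke throughout is that the operator norm of any submatrix is at most the operator norm of the full matrix: for $N \in \Re^{n \times n}$ and any index subset $S$, and any unit vector $x$ supported on $S$, $\|N_{SS} x\|_2 \leq \|N \tilde x\|_2 \leq \sigma_{\max}(N)$, where $\tilde x$ extends $x$ by zeros.

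For the upper bound on $\sigma_{\max}(\textsc{SC}(M,C))$, I would apply the triangle inequality to the elimination formula. Each of $\|M_{CC}\|_2$, $\|M_{C\Cbar}\|_2$, and $\|M_{\Cbar C}\|_2$ is at most $\sigma_{\max}(M) \leq \sigma_{\max}$ by the submatrix bound, while $\|M_{\Cbar\Cbar}^{-1}\|_2 = 1/\sigma_{\min}(M_{\Cbar\Cbar}) \leq 1/\sigma_{\min}$ is given directly in the hypothesis. Combining these,
\[
\sigma_{\max}(\textsc{SC}(M,C)) \leq \sigma_{\max} + \frac{\sigma_{\max}^2}{\sigma_{\min}} \leq \frac{2 \sigma_{\max}^2}{\sigma_{\min}},
\]
which is well within the claimed bound of $n^{10} \sigma_{\max}^2 \sigma_{\min}^{-1}$ (the extra factor of $n^{10}$ is slack that I will not need).

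For the lower bound on $\sigma_{\min}(\textsc{SC}(M,C))$, the cleanest route is to bound $\|\textsc{SC}(M,C)^{-1}\|_2$ from above. By Fact~\ref{fact:SchurInverse} this inverse equals $[M^{-1}]_{CC}$, so by the same submatrix-operator-norm fact applied to $M^{-1}$,
\[
\|\textsc{SC}(M,C)^{-1}\|_2 \leq \|M^{-1}\|_2 = \frac{1}{\sigma_{\min}(M)} \leq \frac{1}{\sigma_{\min}}.
\]
Inverting gives $\sigma_{\min}(\textsc{SC}(M,C)) \geq \sigma_{\min}$, which is actually stronger than the claimed $n^{-2} \sigma_{\min}$ (again the $n^{-2}$ is slack).

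I do not expect any real obstacle here: both bounds come down to (i) the elementary submatrix norm inequality and (ii) the identity $\textsc{SC}(M,C)^{-1} = [M^{-1}]_{CC}$, which is stated as Fact~\ref{fact:SchurInverse} earlier in the section. The only subtlety worth being a little careful about is not assuming symmetry of $M$ (the lemma does not state it), but neither the elimination formula nor the inverse-minor identity requires it, and the submatrix norm bound applies to any rectangular slice. The extra polynomial slack in the lemma statement indicates the bounds will be applied recursively (each level of the divide-and-conquer in Figure~\ref{fig:RecursiveSC} may introduce small polynomial losses from the approximation analysis), so keeping the cleaner tight bounds $[\sigma_{\min},\, 2\sigma_{\max}^2/\sigma_{\min}]$ in the proof and then loosening at the end will suffice.
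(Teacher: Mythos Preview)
Your proof is correct and in fact tighter than the paper's. The paper argues entrywise: for the lower bound it observes that each entry of $\textsc{SC}(M,C)^{-1}=[M^{-1}]_{CC}$ has magnitude at most $\sigma_{\min}^{-1}$ and then converts back to an operator-norm bound at the cost of a factor $n^{-2}$; for the upper bound it bounds the max entry of $M_{C\Cbar}M_{\Cbar\Cbar}^{-1}M_{\Cbar C}$ by $\sigma_{\max}\cdot\sigma_{\min}^{-1}\cdot\sigma_{\max}$ summed over at most $n^4$ index tuples, producing the $n^{10}$ slack. You instead stay in operator norm the whole time, using only the elementary fact $\|N_{ST}\|_2\le\|N\|_2$ for any index subsets $S,T$, together with Fact~\ref{fact:SchurInverse}. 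This yields the sharper range $[\sigma_{\min},\,2\sigma_{\max}^2/\sigma_{\min}]$ with no $n$-dependence at all, which of course implies the stated bounds. Your route is more elementary and loses nothing for the downstream analysis, since the recursion in Lemma~\ref{lem:Recursion} only needs the polynomial-in-$(\alpha_H^{-1})$ conclusion recorded in Corollary~\ref{cor:BlocksAreWellConditioned}.
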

	
\begin{proof}
	Since $M$ is square and full rank, the Schur Complement
	$\textsc{SC}(M, C)$ is also full rank.
	Its min singular value is at least $n^{-2}$ times the inverse of
	the magnitude of the max entry in its inverse.
	This entry is in turn at most the max magnitude of an entry in
	$M^{-1}$, which is at most $\sigma_{\min}^{-1}$.
		
	The max magnitude in $\textsc{SC}(M, C)$
	follows from bounding the max magnitude of
	\[
	M_{C \Cbar} M_{\Cbar \Cbar}^{-1} M_{\Cbar C}.
	\]
	The magnitude of each entry in these three matrices
	are $\sigma_{\max}$, $\sigma_{\min}^{-1}$, and $\sigma_{\max}$ respectively,
	and there are a total of $n^4$ tuples to consider
	(because every Schur complement that arise in a recursive
	Schur complement routine is the Schur complement onto
	a suffix block sets of a prefix of blocks).
\end{proof}

Lemmas~\ref{lem:ConditionSchur} along with the given initial conditions
on $T$ then gives that all block Schur complements,
as well as their block minors, are well conditioned.

\begin{corollary}
\label{cor:BlocksAreWellConditioned}
Let $H$ be a block-Hankel matrix with $m$ blocks of size $s$,
entry magnitude at most $1 / (sm)$
and min block singular values at least $\alpha_{H}$, that is
\[
\sigma_{\min}\left( H_{\left\{1:i, (m-i+1):m\right\}}\right),
\sigma_{\min}\left( H_{\left\{(m-i+1):m, 1:i\right\}}\right)
\geq
\alpha_{H}
\qquad
\forall 1\leq i \leq m.
\]
Let $T$ be the Toeplitz matrix $T$ formed by reversing
the order of the column blocks of $H$.
For every leading prefix of block coordinates
$\{S\}$, and every trailing blocks of $\{S\}$, $\{C\}$,
the Schur complement $\textsc{SC}(T_{\{S, S\}}, \{C\})$ has
singular values in the range
\[
\left[n^{-10} \alpha_{H}, n^{10} \alpha_{H}^{-3} \right].
\]
\end{corollary}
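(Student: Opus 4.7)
The plan is to reduce everything to a single application of Lemma~\ref{lem:ConditionSchur} on the leading block minor $T_{\{S,S\}}$ with the split induced by $C$ and $\Cbar = \{S\} \setminus \{C\}$. The two things to check are: (i) the minimum singular value of $T_{\{S,S\}}$, and (ii) the minimum singular value of $T_{\{\Cbar,\Cbar\}}$; once both are bounded below by $\alpha_H$ and above by $1$, the lemma delivers the stated range.

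First I would translate the block-anti-diagonal condition on $H$ into a leading-principal condition on $T$. Because $T$ is obtained from $H$ by reversing the order of column blocks, we have $T_{\{i,j\}} = H_{\{i, m-j+1\}}$. Consequently, for any $i$, the leading principal block minor $T_{\{1:i,1:i\}}$ equals $H_{\{1:i,(m-i+1):m\}}$, and the hypothesis gives
\[
\sigma_{\min}(T_{\{1:i,1:i\}}) \;=\; \sigma_{\min}(H_{\{1:i,(m-i+1):m\}}) \;\ge\; \alpha_H.
\]
Since $\{S\}$ is a leading block prefix, $T_{\{S,S\}}$ is exactly one such leading principal block minor, and so has $\sigma_{\min} \ge \alpha_H$. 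The complement $\Cbar = \{S\}\setminus\{C\}$ is itself a leading prefix of the blocks (because $\{C\}$ is a trailing block subset of $\{S\}$), so the same argument applied with a smaller $i$ yields $\sigma_{\min}(T_{\{\Cbar,\Cbar\}}) \ge \alpha_H$.

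For the upper bounds, the entry magnitude assumption on $H$ is preserved by the column-block reversal, so every entry of $T$ has magnitude at most $1/(sm)$. Then for any submatrix $T_{\{S,S\}}$ of size at most $n = sm$, the Frobenius norm is at most $n \cdot (sm)^{-1} = 1$, so $\sigma_{\max}(T_{\{S,S\}}), \sigma_{\max}(T_{\{\Cbar,\Cbar\}}) \le 1$. Now apply Lemma~\ref{lem:ConditionSchur} to $M = T_{\{S,S\}}$ with the $(C,\Cbar)$ partition, using $\sigma_{\min} = \alpha_H$ and $\sigma_{\max} = 1$: the singular values of $\textsc{SC}(T_{\{S,S\}}, \{C\})$ lie in
\[
[\, n^{-2}\sigma_{\min},\; n^{10}\sigma_{\max}^{2}\sigma_{\min}^{-1}\,] \;=\; [\, n^{-2}\alpha_H,\; n^{10}\alpha_H^{-1}\,] \;\subseteq\; [\, n^{-10}\alpha_H,\; n^{10}\alpha_H^{-3}\,],
\]
which is exactly the claim.

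The main obstacle here is really only bookkeeping: one must be careful that the Hankel-to-Toeplitz column reversal sends the anti-diagonal block minors of $H$ precisely to leading principal block minors of $T$ (and similarly sends the other family to trailing principal minors, even though we will not need that half for the leading-prefix version of the statement), and that the suffix $\{C\}$ inside $\{S\}$ makes $\Cbar$ another leading block prefix so that Lemma~\ref{lem:ConditionSchur}'s hypothesis on the ``minor'' $M_{\Cbar\Cbar}$ is actually available from the same Hankel assumption. Beyond that, the slack between $n^{-2}\alpha_H,n^{10}\alpha_H^{-1}$ and the stated $n^{-10}\alpha_H, n^{10}\alpha_H^{-3}$ is generous enough that no tighter estimate is needed.
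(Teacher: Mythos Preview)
Your proposal is correct and matches the paper's intended approach: the corollary is stated without a separate proof, being presented as an immediate consequence of Lemma~\ref{lem:ConditionSchur} together with the given Hankel minor assumptions, and you have filled in precisely those details (the column-block reversal identification $T_{\{1:i,1:i\}}=H_{\{1:i,(m-i+1):m\}}$, the observation that $\Cbar$ is again a leading prefix, and the entry-magnitude bound giving $\sigma_{\max}\le 1$).
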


Note that this covers all leading block minors of Schur complements as well.
Specifically, for $\{C\}$ that's a suffix of entries of $\{S\}$,
and $\{\Shat\}$ that's a prefix of $\{C\}$, we have
\[
\textsc{SC}\left( T_{\left\{S \setminus C \cup \Shat,
    S \setminus C \cup \Shat\right\}},
    \left\{\Shat\right\} \right) 
= 
\textsc{SC}\left( T_{\left\{S, S\right\}}, \left\{C\right\}\right)
    _{\left\{\Shat, \Shat\right\}},
\]

Combining Lemma~\ref{lem:ErrorInvert} and~\ref{lem:ConditionSchur}
also gives that a small error to $M$ implies small errors to
all its Schur complements as well.

\begin{lemma}
\label{lem:ErrorSchur}
	Let $M$ be a full rank square matrix,
	and $C$ a subset of coordinates (with complement $\Cbar$)
	such that both $M$ and $M_{\Cbar\Cbar}$
	have singular values in the range
	$[\sigma_{\min}, \sigma_{\max}]$.
	For any approximation $\Mtil$ such that
	\[
	\norm{M - \Mtil}_{F} \leq \epsilon
	\]
	for some $\epsilon < 0.1 n^{-10} \sigma_{\min} ( \sigma_{\max} / \sigma_{\min})^{-2}$, we have:
	\[
	\norm{\textsc{SC}\left( M, C \right) - \textsc{SC}\left( \Mtil, C \right)}_{F}
		\leq 
		n^{30} \left(\sigma_{\max} / \sigma_{\min}\right)^{4} \epsilon
    \]
\end{lemma}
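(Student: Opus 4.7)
The plan is to expand the Schur complement into its defining expression and track errors across each operation using the norm‑composition bounds already developed in the preliminaries. Writing
\[
\textsc{SC}(M,C) - \textsc{SC}(\Mtil,C) = (M_{CC}-\Mtil_{CC}) - \bigl(M_{C\Cbar}M_{\Cbar\Cbar}^{-1}M_{\Cbar C} - \Mtil_{C\Cbar}\Mtil_{\Cbar\Cbar}^{-1}\Mtil_{\Cbar C}\bigr),
\]
the first piece is trivial: since the Frobenius norm of a submatrix is at most that of the full matrix, $\norm{M_{CC}-\Mtil_{CC}}_F\le\norm{M-\Mtil}_F\le\epsilon$, and likewise for $M_{C\Cbar}$, $M_{\Cbar C}$, and $M_{\Cbar\Cbar}$. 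So the real work is in controlling the triple product.

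First I would use the smallness hypothesis on $\epsilon$ to invoke Lemma~\ref{lem:ErrorInvert} on $M_{\Cbar\Cbar}$: the assumption $\epsilon < 0.1\,n^{-10}\sigma_{\min}(\sigma_{\max}/\sigma_{\min})^{-2}$ is certainly less than $\sigma_{\min}/2$, so all singular values of $\Mtil_{\Cbar\Cbar}$ lie in $[\sigma_{\min}/2,\,2\sigma_{\max}]$, and
\[
\norm{M_{\Cbar\Cbar}^{-1} - \Mtil_{\Cbar\Cbar}^{-1}}_F \le 10\,\sigma_{\min}^{-2}\epsilon.
\]
With operator‑norm bounds $\norm{M_{C\Cbar}}_2,\norm{M_{\Cbar C}}_2\le\sigma_{\max}$, $\norm{M_{\Cbar\Cbar}^{-1}}_2\le\sigma_{\min}^{-1}$, and the corresponding bounds for $\Mtil$ off by a factor of $2$, I would then apply Lemma~\ref{lem:ErrorCompose} twice—first to the product $M_{C\Cbar}\cdot M_{\Cbar\Cbar}^{-1}$, and then to the resulting pair against $M_{\Cbar C}$—to get a bound of the form $\mathrm{poly}(n)\,\sigma_{\max}^{c_1}\sigma_{\min}^{-c_2}\epsilon$ on the Frobenius norm of the triple‑product difference. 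Careful bookkeeping (each composition multiplies by a $10$, the Frobenius‑to‑operator passage costs a factor of $n$, and the inverse error carries a $\sigma_{\min}^{-2}$) absorbs cleanly into the advertised $n^{30}(\sigma_{\max}/\sigma_{\min})^4\epsilon$ form; the hypothesis on $\epsilon$ guarantees that the perturbed factors stay within a factor $2$ of the unperturbed ones so nothing blows up.

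Adding the trivial contribution $\epsilon$ from $M_{CC}-\Mtil_{CC}$ and applying the triangle inequality gives the desired bound. The only substantive place where care is required is verifying that the perturbed inverse $\Mtil_{\Cbar\Cbar}^{-1}$ is well‑defined with comparable norm, which is exactly what the chosen threshold on $\epsilon$ (with its $(\sigma_{\max}/\sigma_{\min})^{-2}$ slack) buys us; after that the proof is a routine error‑composition calculation with no deeper obstacle. I would not bother to optimize the polynomial exponents $n^{30}$ or $(\sigma_{\max}/\sigma_{\min})^4$, since in the application from Corollary~\ref{cor:BlocksAreWellConditioned} we have $\sigma_{\min}\ge n^{-10}\alpha_H$ and $\sigma_{\max}\le n^{10}\alpha_H^{-3}$, so these are subsumed by the $\Otil(m\log(\alpha_H^{-1}\epsilon^{-1}))$ word length budget anyway.
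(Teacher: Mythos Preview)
Your approach is correct but differs from the paper's. You expand the Schur complement formula $\textsc{SC}(M,C)=M_{CC}-M_{C\Cbar}M_{\Cbar\Cbar}^{-1}M_{\Cbar C}$ directly and track errors term by term, using Lemma~\ref{lem:ErrorInvert} only on the block $M_{\Cbar\Cbar}$. The paper instead goes through Fact~\ref{fact:SchurInverse}, which identifies $\textsc{SC}(M,C)^{-1}$ with the principal submatrix $[M^{-1}]_{CC}$: it applies Lemma~\ref{lem:ErrorInvert} to the full matrix $M$, restricts to the $CC$ block to bound $\norm{\textsc{SC}(M,C)^{-1}-\textsc{SC}(\Mtil,C)^{-1}}_F$, and then inverts back via a second application of Lemma~\ref{lem:ErrorInvert} using the singular value bounds on $\textsc{SC}(M,C)$ furnished by Lemma~\ref{lem:ConditionSchur}.

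Your route is more elementary---it avoids Fact~\ref{fact:SchurInverse} and Lemma~\ref{lem:ConditionSchur} entirely---and in fact the telescoping of the triple product yields a constant times $(\sigma_{\max}/\sigma_{\min})^{2}\epsilon$, tighter than the stated $(\sigma_{\max}/\sigma_{\min})^{4}$. The paper's route is conceptually uniform with how Schur complements are handled elsewhere in Section~\ref{sec:Solver} (everything is controlled through inverses and the displacement-rank machinery), at the cost of the extra inversion step and the looser exponent. Either argument suffices for the application, since the factor is absorbed into $\alpha_T^{-O(\log m)}$ in Lemma~\ref{lem:Recursion}.
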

	
\begin{proof}
	By Fact~\ref{fact:SchurInverse}, which gives that inverses
	of Schur complements are subsets of inverses, we can
	use Lemma~\ref{lem:ErrorInvert} to get
	\[
	\norm{\textsc{SC}\left( M, C \right)^{-1}
			-\textsc{SC}\left( \Mtil, C \right)^{-1}}_F
	\leq
	\norm{M^{-1} - \Mtil^{-1}}_F\\
    \leq
    10 \sigma_{\min}^{-2} \epsilon.
	\]
	On the other hand, inverting the singular value bounds on
	Schur complements from Lemma~\ref{lem:ConditionSchur}
	gives that all singular values of $\textsc{SC}(M, C)^{-1}$
	are in the range
	\[
	\left[
	n^{-10} \sigma_{\max}^{-2} \sigma_{\min},
	n^{2} \sigma_{\min}^{-1}
	\right].
	\]
	The given condition on $\epsilon$ then gives
	\[
	10 \sigma_{\min}^{-2} \epsilon
	\leq
	n^{-10} \sigma_{\max}^{-2} \sigma_{\min}.
	\]
	So we can invoke Lemma~\ref{lem:ErrorInvert} once again with
	$\textsc{SC}(M, C)$ as the original matrix, and
	$\textsc{SC}(\Mtil, C)$ as the perturbed matrix gives
	an overall error bound of
	\[
	10
	\cdot
	\left(n^{10}  \sigma_{\max}^{-2} / \sigma_{\min}\right)^{-2}
	\cdot
	\left( 10 \sigma_{\min}^{-2} \epsilon \right)
	\leq
	n^{30} \cdot \left(\sigma_{\max} / \sigma_{\min}\right)^{4} \epsilon.
	\]
\end{proof}

This kind of compounding necessitates a global bounding of errors.
We need to show (inductively) that
all matrices entering into all stages of the recursion
are close to the corresponding matrices of $M$.
Schur complements on the other hand are just as stable to perturbations.

\subsection{Analysis of Overall Recursion}
\label{subsec:Recursion}

We now analyze the overall algorithm by inductively showing that
all the matrices produced are close to their exact versions.
Our overall error bound for the recursion is as follows:

\begin{lemma}
\label{lem:Recursion}
Let $T$ be a matrix s.t. the singular values of any
block minor Schur complement are in the range $[\alpha_{T}, \alpha_{T}^{-1}]$
for some $\alpha_{T} < (ms)^{-100}$.
For 
$\epsilon < (ms)^{-1} \alpha_{T}^{10  \log{m}}$,
the output of
\[
X^{INV}, Y^{INV} = \textsc{Solve}\left(m, s, X, Y, \epsilon\right)
\]
corresponds to a matrix $Z$ with
$s^{-}(Z) = X^{INV} (Y^{INV})^T$ such that
\[
\norm{T^{-1} - Z}_{F}
\leq
\alpha_{T} ^{-10 \log{m}} \epsilon.
\]
\end{lemma}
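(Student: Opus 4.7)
The plan is to proceed by strong induction on $m$ (or equivalently on the recursion depth $d = \lceil \log_2 m \rceil$), tracking how errors inflate at each layer. The base case $m = 1$ is direct: we invert an explicit $s \times s$ matrix, which by the singular value assumption is well-conditioned, so standard stable inversion~\cite{DemmelDHK07} with precision $\Otil(\log(1/\epsilon))$ gives error at most $\epsilon$. For the inductive step, I will carry out the analysis of the six substantive operations in Figure~\ref{fig:RecursiveSC} in order: (i) the first recursive call producing $Z(\Cbar)$; (ii) forming the implicit approximate Schur complement $\widetilde{SC}$; (iii) compressing $s^{+}(\widetilde{SC})$ via $\textsc{LowRankApprox}$; (iv) the second recursive call producing $Z(SC)$; (v) assembling the block inverse formula implicitly; (vi) compressing $s^{-}(Z)$ via $\textsc{LowRankApprox}$ again.

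The core bookkeeping is to show that a single layer of recursion amplifies the error by at most a polynomial in $\alpha_{T}^{-1}$, say $\alpha_{T}^{-10}$. By Corollary~\ref{cor:BlocksAreWellConditioned} and Lemma~\ref{lem:ConditionSchur}, every matrix encountered (namely $T_{\{\Cbar,\Cbar\}}$ and $\textsc{SC}(T,\{C\})$, along with all of \emph{their} block-prefix minors and Schur complements) has singular values in $[n^{-O(1)}\alpha_{T}, n^{O(1)}\alpha_{T}^{-3}]$. So both recursive calls fall under the inductive hypothesis with a parameter $\alpha_{T'} \geq \alpha_{T}^{O(1)}$. For step (ii), the $\widetilde{SC}$ computed via the operator identity $T_{\{C,C\}} - T_{\{C,\Cbar\}} Z(\Cbar) T_{\{\Cbar,C\}}$ differs from the true $\textsc{SC}(T,\{C\})$ by at most a factor of $\alpha_{T}^{-O(1)}$ times the error in $Z(\Cbar)$; this uses Lemma~\ref{lem:ErrorCompose} to multiply out the errors and Lemma~\ref{lem:ErrorSchur} to identify the result with a Schur complement perturbation. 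For step (v), the same error composition lemma applied to the block product formula, together with the bounds from Lemma~\ref{lem:ErrorInvert} for converting between forward and inverse errors, bounds the final operator error in terms of the errors in $Z(\Cbar)$ and $Z(SC)$. Steps (iii) and (vi) each introduce an additive error of $n^{O(1)}$ times the precision fed into $\textsc{LowRankApprox}$ (Lemma~\ref{lem:LowRankApprox}), which we can make as small as $\epsilon$ at negligible extra cost; the translation between the low-rank factorized form of $s^{\pm}(\cdot)$ and the matrix itself loses only a polynomial factor by Lemma~\ref{lem:ErrorTransfer}.

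Putting these together, if $\epsilon_{d}$ denotes the worst-case error bound at recursion depth $d$ when the subroutine is invoked with internal precision $\epsilon$, then the recurrence becomes $\epsilon_{d} \leq \alpha_{T}^{-10}\epsilon_{d-1} + \alpha_{T}^{-O(1)}\epsilon$, with $\epsilon_{0} \leq \epsilon$. Unrolling the recurrence over the $d = O(\log m)$ levels gives $\epsilon_{d} \leq \alpha_{T}^{-10\log m}\epsilon$, which is exactly the bound claimed; the hypothesis $\epsilon < (ms)^{-1}\alpha_{T}^{10\log m}$ ensures that at every level the perturbation stays well below $\sigma_{\min}/2$ of the relevant block, so that Lemmas~\ref{lem:ErrorInvert} and~\ref{lem:ErrorSchur} remain applicable and the singular value bounds from Corollary~\ref{cor:BlocksAreWellConditioned} continue to control the approximate matrices. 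The main obstacle I expect is the careful bookkeeping at step (ii): the approximate Schur complement is only accessed through a matrix-vector multiplication operator built from several approximate pieces (the two applications of Lemma~\ref{lem:DisplacementRepresentation} producing operators for $T_{\{C,\Cbar\}}, T_{\{\Cbar,C\}}, T_{\{C,C\}}$, together with $Z(\Cbar)$), so one must verify that Lemma~\ref{lem:LowRankApprox}'s guarantees survive when the input "matrix" is in fact a composition of approximate operators and that the resulting factorization genuinely approximates $s^{+}(\textsc{SC}(T,\{C\}))$ rather than some moving target.
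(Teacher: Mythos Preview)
Your proposal is correct and follows essentially the same approach as the paper's own proof: induction on $m$, tracking error amplification through the same sequence of steps (first recursive call, approximate Schur complement, low-rank compression, second recursive call, block-inverse assembly, final compression) using the same supporting lemmas, and arriving at the same per-layer amplification factor of roughly $\alpha_T^{-10}$. The only cosmetic difference is that the paper tracks the error explicitly as $\alpha_T^{-k}\,\alpha_T^{-10(\log m - 1)}\epsilon$ for small integers $k$ at each substep, rather than setting up and unrolling a recurrence, and the paper bounds the error in $\widetilde{SC}$ directly via magnitude bounds on $T_{\{C,\Cbar\}}$, $T_{\{\Cbar,C\}}$ rather than invoking Lemma~\ref{lem:ErrorSchur}; your identification of the operator-composition bookkeeping in step~(ii) as the delicate point is apt and matches where the paper spends the most care.
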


\begin{proof}
The proof is by induction on $m$.
The base case of $m = 1$ follows from the guarantees
of (fast) matrix inversion~\cite{DemmelDHK07}.

The top-left blocks of this matrix given by $\{\Cbar\}$,
as well as their associated block Schur complements,
satisfy the same singular value bounds due to them being
Schur complements of leading minors of $T$.

Therefore, by the inductive hypothesis, we get that the matrices produced
by the first recursive call on Line~\ref{line:FirstRecursive} gives
additive inverse at most 
which incorporating the $\epsilon$ additive error
from operator generation from Lemma~\ref{lem:DisplacementRepresentation} gives
\[
\norm{T_{\left\{\Cbar, \Cbar\right\}}^{-1}
    - Z\left( \Cbar\right) }_{F}
\leq
\alpha_{T}^{-10 \left( \log{m} - 1 \right) } \epsilon,
\]
		
Then since the max entry in $T_{\{\Cbar, C\}}$ $T_{\{C, \Cbar\}}$, and $T_{\Cbar, \Cbar}^{-1}$
are at most $\alpha_{T}^{-1}$, 
and the approximate operators for multiplying by
them have error at most 
$\alpha_{T}^{-10 ( \log{m} - 1)} \epsilon$,
the approximate Schur complement onto $\{C\}$ using this
approximate inverse $Z(\Cbar)$ and the approximate
multiplication operators in $T$ satisfies
\[
\norm{\widetilde{\textsc{SC}}
    - \textsc{SC}\left( T, \left\{C\right\} \right)}_{F}
\leq
3 \cdot 2
\alpha_{T}^{-1}
\cdot \alpha_{T}^{-10 \left( \log{m} - 1 \right) } \epsilon
\leq
\alpha_{T}^{-2}
\cdot \alpha_{T}^{-10 \left( \log{m} - 1 \right) } \epsilon.
\]

Lemma~\ref{lem:SchurComplementClosure} gives that
$SC(T, \{C\})$ has $s^{+}$-displacement rank at most $2s$.
So combining with the error transfer from
Lemma~\ref{lem:ErrorTransfer} gives that the
distance from $s^{+}(\widetilde{\textsc{SC}})$ to a
rank $2s$ matrix is at most
\[
\left( ms \right)^2 \cdot
\alpha_{T}^{-2}
\cdot \alpha_{T}^{-10 \left( \log{m} - 1 \right) } \epsilon
\]
Then the guarantees of the low rank approximation procedure
\textsc{LowRankApprox} from Lemma~\ref{lem:LowRankApprox}
gives that the resulting factorization has error bounded by
\begin{multline*}
\norm{X\left(SC\right) Y\left(SC\right)^T
-
s^{+}\left(\textsc{SC}\left( T, \left\{C\right\} \right)\right)
}_{F}\\
\leq
\left( ms \right)^{20}
\cdot 
\left( ms \right) ^2
\alpha_{T}^{-2}
\cdot \alpha_{T}^{-10 \left( \log{m} - 1 \right) } \epsilon
+
\left( ms \right) ^2
\alpha_{T}^{-2}
\cdot \alpha_{T}^{-10 \left( \log{m} - 1 \right) } \epsilon\\
\leq
n^{30}
\alpha_{T}^{-2}
\cdot \alpha_{T}^{-10 \left( \log{m} - 1 \right) } \epsilon.
\end{multline*}
Applying the error transfer across displacement
given by Lemma~\ref{lem:ErrorTransfer} gives
that the approximate Schur complement represented by
$X(C)$ and $Y(C)$ produced on Line~\ref{line:GenSC},
which we denote as $\widehat{\textsc{SC}}$ satisfies
\[
\norm{\widehat{\textsc{SC}} - \textsc{SC}\left( T, \left\{C\right\} \right)}_{F}
\leq
n^{40}
\alpha_{T}^{-2}
\cdot \alpha_{T}^{-10 \left( \log{m} - 1 \right) } \epsilon
\leq
\alpha_{T}^{-3} \cdot \alpha_{T}^{-10 \left( \log{m} - 1 \right) } \epsilon.
\]

By assumption, the exact Schur complement $\textsc{SC}( T, C)$
has all consecutive principle minors and Schur complements with
singular values in the range $[\alpha_{T}, \alpha_{T}^{-1}]$.
So the perturbation statement from Lemma~\ref{lem:ErrorInvert}
gives that as long as
$\alpha_{T}^{-3} \cdot \alpha_{T}^{-10 ( \log{m} - 1 ) } \epsilon < \alpha_{T}$
(which is met by the initial assumption on $\epsilon$).
Thus, we get that the singular values of all block minors of
$\widehat{\textsc{SC}}$ are in the range
\[
\left[ \left(1 - \frac{1}{m}\right) \alpha_{T}, \left(1 + \frac{1}{m}\right)\alpha_{T}^{-1}\right].
\]
Since $\frac{1}{2} \leq (1 - \frac{1}{m})^{O(\log{n})} $
and $(1 + \frac{1}{m})^{O(\log{n})} \leq 2$,
this change in $\alpha_{T}$ is less than the difference made by
going from $\log{m}$ to $\log{m} - 1$ in the smaller recursive instance.
So the choice of $\epsilon$ is satisfactory for the
recursive call on $X(C)$ and $Y(C)$ as well.

So applying the inductive hypothesis on the second recursive call
on Line~\ref{line:SecondRecursive} gives
\[
\norm{
Z\left(SC\right) - \widehat{\textsc{SC}}^{-1}
}_{F}
\leq
\alpha_{T}^{-10 \left( \log{m} - 1 \right) } \epsilon
\]
which compounded with the differences with the exact
Schur complement gives
\[
\norm{Z\left(SC\right) - \textsc{SC}\left(T, \left\{C\right\}\right)}_{F}
\leq
\alpha_{T}^{-4} \alpha_{T}^{-10 \left( \log{m} - 1 \right) } \epsilon.
\]

The result for the inductive case
then follows from observing that in all the operators
for these approximate inverses that these two block matrices
are multiplied with have magnitude at most $\alpha_{T}^{-1}$
and there are at most three such operators composed.
Specifically, we get
\[
\norm{T^{-1} - Z}
\leq
3n^2 \cdot \alpha_{T}^{-1} \cdot \alpha_{T}^{-4} \alpha_{T}^{-10 \left( \log{m} - 1 \right) }
\leq 
n^{-100} \alpha_{T}^{-10 \log{m} },
\]
Then applying the low rank approximation guarantees
from Lemma~\ref{lem:LowRankApprox} amplifies this by
a factor of $n^{20}$;
and the implicit operator construction from Lemma~\ref{lem:DisplacementRepresentation} incurs
an additive $\epsilon$.
Incorporating these errors then gives that the inductive
hypothesis holds for $m$ as well.
\end{proof}

This means that all the matrices that arise in intermediate
steps are close to their exact versions.
It allows us to bound the max-magnitude of all the numbers
involved, and in turn the number of digits after decimal
places.
This gives the total cost in number of word operations,
which we also bring back to the original statement for solving a well-conditioned block Hankel matrix.

\begin{proof}(of Theorem~\ref{thm:Solver})

The proof takes two steps: we first use bounds on the
word complexity of $X$ and $Y$ to provide bounds on
the word sizes of all numbers involved/generated,
including the implicit operators.
We then incorporate these lengths into the running time
costs of the routines to obtain bounds on overall running time.

Corollary~\ref{cor:BlocksAreWellConditioned}
gives that the singular values of a block
are in the range $[n^{-10} \alpha_{H}, n^{10} \alpha_H^{-3}]$.
So we can set
\[
\alpha_{T} = \alpha_{H}^{4}
\]
and obtain errors bounds on the recursion via Lemma~\ref{lem:Recursion}.
Specifically, to obtain an overall error of $\epsilon$,
we need to reduce it by a factor of
\[
\alpha_{T}^{10 \log{m} }
\geq
\alpha_{H}^{O\left( \log{m} \right)}.
\]

This means that the additive difference in any
of the intermediate matrices we get is at most
\[
\alpha_{T}^{-10 \log{m}} \epsilon \leq \alpha_{T}^{-1},
\]
which combined with the bounds on the matrices themselves
of $O(n^{10} \alpha_T^{-3}) \leq \alpha_T^{-4}$ by
Lemma~\ref{lem:ConditionSchur} gives that the max magnitude
of a matrix that we pass into Lemma~\ref{lem:LowRankApprox}
to factorize is at most $\alpha_T^{-5}$.
It in turn implies that the magnitude of all factorized
matrices ($X$s and $Y$s) are at most $\alpha_T^{-6}$,
and they have at most $O(\log{m} \log(\alpha_{T}^{-1}\epsilon^{-1}))$
words after the decimal point.
Furthermore, because the initial $T$ has min singular value
at least $\alpha_{H}$, we can truncate the initial $X$ and
$Y$ to $\alpha_{H}^3$ without affecting the errors.

The implicit multiplication operators
generated using fast convolutions via
Lemma~\ref{lem:DisplacementRepresentation} have magnitude
at most $O(\alpha_{T}^6)$, and at most
$O(\log^2{m} \log(\alpha_{T}^{-1} \epsilon^{-1}))$
words after the decimal point.
As the implicitly generated operators for $\widetilde{SC}$,
and the overall inverse $Z$ only composes together a constant
number of such operators, the same bounds also hold for the
sizes of the numbers, up to a constant factor in the exponent.

We now use these word size bounds to bound the overall
running time.
These operators are multiplied against sets of $O(s)$ vectors,
each with magnitude at most $1$ and $O(\log{m} \log( \alpha_{T} / \epsilon))$
words after the decimal point
in the implicit low rank approximation procedure described
in Lemma~\ref{lem:LowRankApprox}.
Applying Lemma~\ref{lem:DisplacementRepresentation} to each of the
operators involved gives that this cost is
\[
O\left( 
m \log^5{m} s^{\omega} \log\left( \alpha_{T}^{-1} \epsilon^{-1} \right) \right)
=
\Otil\left( 
m s^{\omega} \log\left( \alpha_{T}^{-1} \epsilon^{-1} \right) \right)
,
\]
while the overhead cost (from matrix factorizations/
orthogonalizations) is
\[
\Otil\left(m s^{\omega}
\cdot \log{m} \log\left(\alpha_{T}^{-1} \epsilon^{-1}\right)
\right).
\]
Combining these,
substituting in $\log(1 / \alpha_T) = O(\log(1 / \alpha_H))$,
and incorporating the additional factor of
$O(\log{m})$ corresponding to the number of layers of recursion
then gives the total construction cost.

The invocation/solve cost then follows from
the implicit multiplication guarantee
of Lemma~\ref{lem:DisplacementRepresentation}
with the $O(\log^2{m} \log( \alpha_H^{-1} \epsilon^{-1}))$
word lengths shown above.
\end{proof}

	\section{Pad and Solve Step}
\label{sec:Pad}

We now analyze the padding algorithm that takes us from a solver
for almost the full space, to the full algorithm for solving
systems in $A$.
Our padding step combined with the singular value bound
from Theorem~\ref{thm:RandKrylov} and the stable block-Hankel
matrix solver from Theorem~\ref{thm:Solver} gives the solver
for symmetric, eigenvalue separated matrices, as stated in
Lemma~\ref{lem:PadAndSolve}.

\PadAndSolve*

The algorithm is by taking the $K$ generated
from Theorem~\ref{thm:RandKrylov}, and padding
a number of random columns to it until it becomes full rank.
We first check that we can obtain a bound on the condition
number of the overall matrix.

\begin{lemma}
	\label{lem:Extend}
	If $M \in \Re^{n \times d}$ is a $n$-by-$d$ matrix with $d < n$,
	max entry magnitude at most $1/n$,
	and minimum singular value at least $\alpha_{M}$,
	then the $n$-by-$(d + 1)$ matrix formed by appending a dense
	length $n$ Gaussian vector scaled down by $\frac{1}{n^2}$:
	\[
	\left[ M, g \right]
	\qquad
	\text{with}~
	g \sim \frac{1}{n^2} \cdot \normal\left( 0, 1 \right)^{n}
	\]
	has maximum magnitude at most $1/n$,
	and minimum singular value at least $n^{-10} \alpha_{M}$
	with probability at least $1 - n^{-2}$.
\end{lemma}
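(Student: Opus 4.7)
The plan is to prove the two assertions of the lemma separately: the entrywise magnitude bound on $[M,g]$ and the lower bound on its minimum singular value. For the magnitude bound, $M$ already satisfies $\normi{M}_\infty \le 1/n$, so only the new column $g$ requires attention. Each entry of $g$ is $\frac{1}{n^2} X_i$ for $X_i \sim \normal(0,1)$, and Claim~\ref{claim:GaussianMax} allows me to condition on $|X_i|\le n$, giving $|g_i| \le 1/n$. This conditioning fails with probability exponentially small in $n$, well below $n^{-2}$.

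For the singular value bound, let $\Pi$ denote the orthogonal projection onto the orthogonal complement of the column space of $M$. Since $d < n$, this projection has rank at least one, so I can fix a unit vector $v$ with $\Pi v = v$. Then $v^T g = \tfrac{1}{n^2} v^T X \sim \tfrac{1}{n^2}\normal(0,1)$ since $v$ is a unit vector. Standard Gaussian anti-concentration (density bounded by $(2\pi)^{-1/2}$) gives
\[
\prob{}{\abs{v^T g} < n^{-5}} \le \sqrt{2/\pi}\cdot n^{-3} \le n^{-2}/2,
\]
so with the required probability $\norm{\Pi g}_2 \ge \abs{v^T \Pi g} = \abs{v^T g} \ge n^{-5}$.

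Conditioned on both good events, write a generic unit vector $x = (y, z) \in \Re^{d+1}$ and let $N = [M, g]$; I split on the size of $|z|$. If $|z| \ge n^{-5}\alpha_M$, projecting onto the image of $\Pi$ annihilates $My$ and yields
\[
\norm{Nx}_2 \ge \norm{\Pi N x}_2 = \abs{z}\cdot \norm{\Pi g}_2 \ge n^{-10}\alpha_M.
\]
If instead $|z| < n^{-5}\alpha_M$, then (using $\alpha_M \le \sigma_{\max}(M) \le \norm{M}_F \le 1$ from the entrywise magnitude bound) $\norm{y}_2 \ge 1/2$, hence $\norm{My}_2 \ge \alpha_M/2$. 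The entrywise magnitude bound on $g$ gives $\norm{g}_2 \le n^{-1/2}$, so $\norm{gz}_2 \le n^{-5.5}\alpha_M$, and the triangle inequality yields $\norm{Nx}_2 \ge \alpha_M/2 - n^{-5.5}\alpha_M \ge \alpha_M/4 \ge n^{-10}\alpha_M$.

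The only delicate point is the bookkeeping on thresholds, namely choosing the cutoff on $|z|$ so that the product $|z|\cdot\norm{\Pi g}_2$ in Case A and the triangle-inequality slack in Case B both clear $n^{-10}\alpha_M$, while simultaneously keeping the Gaussian-anti-concentration failure probability below $n^{-2}$. Conceptually the proof is easy because entries of $g$ are mutually independent and independent of $M$, so no sophisticated matrix anti-concentration tools (unlike those required for the Krylov analysis) are needed here --- one scalar Gaussian anti-concentration bound plus a two-case split on $|z|$ suffices.
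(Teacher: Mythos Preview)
Your proposal is correct and follows essentially the same approach as the paper: both fix a unit vector orthogonal to the column span of $M$, use scalar Gaussian anti-concentration to lower-bound its inner product with $g$, and then split on the magnitude of the last coordinate of the test vector, handling the large case via the normal-vector/projection and the small case via the triangle inequality against $\sigma_{\min}(M)$. The only differences are cosmetic---your threshold is $n^{-5}\alpha_M$ versus the paper's $n^{-2}\alpha_M/10$, and you phrase the large-$|z|$ case via the projection $\Pi$ rather than directly via $\langle g,v\rangle$---but the argument is the same.
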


\begin{proof}
    By Claim~\ref{claim:GaussianMax}, we may assume that all
    entries of $g$ have magnitude at most $1/n$.
    
    We now bound the minimum singular value.
    
	Because $d < n$, there is some unit vector $v$ that's normal
	to all $d$ columns of $A$.
	Consider $d^{T}g$: by anti-concentration of Gaussians,
	with probability at least $1 - n^{-3}$ we have
	\[
	\abs{d^{T} g} \geq n^{-5}.
	\]
	
	We claim in this case, the matrix $[M, g]$ has minimum singular
	value at least $n^{-10} \alpha_{M}$.
	Consider any test vector $x \in \Re^{d + 1}$:
	if the last entry
	$x_{d + 1}$ has absolute value less than $n^{-2} \alpha_{M} / 10$, then
	invoking the min-singular value bound on $A x_{1:d}$ gives
	\[
	\norm{M x_{1:d}}_2
	\geq
	\alpha_{M}\norm{x_{1:d}}_2
	\geq
	\alpha_{M} \left(1 - \abs{x_{d + 1}} \right)
	\geq
	\frac{\alpha_{M}}{2}.
	\]
	Then by triangle inequality we get:
	\[
	\norm{\left[M, g\right] x}_2
	\geq
	\norm{M x_{1:d}}_2
	-
	\norm{g x_{d + 1}}_2
	\geq
	\frac{\alpha_{M}}{2}
	-
	n^2 \cdot n^{-2} \alpha_{M} / 10
	\geq
	\frac{\alpha_{M}}{10}.
	\]
	
	So it remains to consider the case where $|x_{d + 1}| > n^{-2} \alpha_{M} / 10$.
	Here because the last column has dot at least $n^{-2}$
	against the normal of the previous $d$ columns of $M$, we get
	\[
	\norm{\left[M, g\right] x}_2
	\geq
	\min_{x_{1:d}}
	\norm{g x_{d + 1} - M x_{1:d}}_2
	=
	\abs{x_{d + 1}}
	\cdot
	\norm{g - M x_{1:d}}_2
	=
	\abs{x_{d + 1}}
	\left<g, v \right>
	\geq
	n^{-7} \alpha_{M} / 10.
	\]
\end{proof}

This allows us to bound the condition number of the
overall operator.

We also need to account for the costs of computing
the blocks of $(AK)^TAK$, as well as performing matrix
multiplications in $Q$ and $Q^T$.

\begin{lemma}
\label{lem:KrylovCosts}
Given implicit matrix-vector product access to an
$n$-by-$n$ matrix $A$
whose entries have max magnitude at most $\alpha_{A}^{-1}$,
and at most $O(\log(1 / \alpha_{A}))$ words after the decimal place
(for some $0 < \alpha_{A} \leq n^{-10}$),
as well as an $n$-by-$s$ matrix $B$ with at most $nnz(B)$
non-zero entries, each with magnitude at most
$\normi{B}_{\infty}$ and at most $L_B$ digits
after the decimal place, as well as $m$ such that $ms \leq O(n)$,
we can compute the matrix
\[
K
=
\left[
\begin{array}{c|c|c|c|c}
B & A^{1} B & A^{2} B & \ldots
& A^{m - 1} B
\end{array}
\right]
\]
at with the cost of multiplying $A$ against $n$ vectors,
each with at most $O(m \log(1 / \alpha_A) + \log\normi{B}_{\infty} + L_B)$ digits,
as well as all blocks in its Gram matrix
\[
B^T A^{i} B
\qquad
0 \leq i \leq 2m
\]
with additional cost
$O( n \cdot nnz( B ) \cdot (O(m \log(1 / \alpha_A) + \log \normi{B}_{\infty} + L_B) \log{n})$.
\end{lemma}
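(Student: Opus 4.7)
The plan is to split the work into two phases matching the two conclusions of the lemma: (i) iteratively build $K$ column-block by column-block via repeated matrix-vector products in $A$, and (ii) assemble the Gram blocks $B^T A^i B$ by pairing up already-computed powers $A^j B$ and $A^{i-j} B$ and exploiting the sparsity of $B$ on the outside.

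For phase (i), I would set $K^{(0)} \leftarrow B$ and then for each $j = 1, \ldots, m$ compute $K^{(j)} \leftarrow A \cdot K^{(j-1)}$ by invoking the matrix-vector oracle separately on each of the $s$ columns of $K^{(j-1)}$. This produces both all columns of $K$ and the extra block $A^m B$ (which I will need in phase (ii) for the index $i = 2m$) using a total of $(m+1) s = O(n)$ matrix-vector calls. The precision tracking is routine: by Lemma~\ref{lem:MaxMagnitude} the max magnitude of $A^j B$ is at most $(n \cdot \alpha_A^{-1})^j \normi{B}_{\infty}$, whose logarithm is $O(m \log(1/\alpha_A) + \log \normi{B}_{\infty})$ since $\alpha_A \le n^{-10}$ absorbs the $m \log n$ term. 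For digits after the decimal point, each multiplication by $A$ adds at most $O(\log(1/\alpha_A))$ trailing digits (the bound on the precision of $A$'s entries), so after $j$ steps the vectors carry at most $L_B + j \cdot O(\log(1/\alpha_A))$ trailing digits. This gives exactly the $O(m \log(1/\alpha_A) + \log \normi{B}_{\infty} + L_B)$ bound on word length claimed for the inputs to the oracle calls.

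For phase (ii), I would use symmetry of $A$ to write, for each $0 \le i \le 2m$,
\[
B^T A^i B = (A^{\lfloor i/2 \rfloor} B)^T (A^{\lceil i/2 \rceil} B),
\]
so that both factors come from the matrices $A^0 B, A^1 B, \ldots, A^m B$ already in hand. To form such a product I multiply the sparse matrix $B^T$ (acting on the left via its transpose) against the $n \times s$ matrix $A^{\lceil i/2 \rceil} B$; each non-zero of $B$ contributes to exactly $s$ scalar products, so the arithmetic cost is $O(s \cdot nnz(B))$. Actually, a cleaner way to see the same bound is: the product $B^T Y$ for any $n \times s$ matrix $Y$ costs $O(s \cdot nnz(B))$ arithmetic operations. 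Each such operation involves entries of magnitude at most $\normi{B}_{\infty}$ and $\Otil(m \log(1/\alpha_A) + \log \normi{B}_{\infty} + L_B)$ total word length, and fast integer multiplication (as discussed after Lemma~\ref{lem:MaxMagnitude}) lets us do each arithmetic operation with only an $O(\log n)$ overhead in the word length. Summing over the $2m+1$ values of $i$ gives a total of $O(m \cdot s \cdot nnz(B) \cdot (m \log(1/\alpha_A) + \log \normi{B}_{\infty} + L_B) \cdot \log n)$, and using $ms = O(n)$ to replace $m s$ by $n$ yields the stated bound.

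The only mildly subtle point is making sure we actually have $A^m B$ rather than stopping at $A^{m-1} B$, since the index $i = 2m$ needs $(A^m B)^T (A^m B)$; this is why I run the iteration in phase (i) one step past what $K$ itself contains. Everything else is bookkeeping: additive compounding of trailing digits, multiplicative compounding of integer parts absorbed into $\log(1/\alpha_A)$ via $\alpha_A \le n^{-10}$, and a single $\log n$ factor from FFT-based word arithmetic. I do not foresee any real obstacle beyond being careful with these precision accountings.
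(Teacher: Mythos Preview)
Your phase (i) is fine and matches the paper's proof. The issue is in phase (ii).

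The symmetric split $B^T A^i B = (A^{\lfloor i/2\rfloor}B)^T(A^{\lceil i/2\rceil}B)$ is correct algebraically, but it does \emph{not} let you exploit the sparsity of $B$: once $\lfloor i/2\rfloor\ge 1$, the left factor $A^{\lfloor i/2\rfloor}B$ is a dense $n\times s$ matrix, so the product of the two factors costs $O(ns^2)$ arithmetic operations, not $O(s\cdot nnz(B))$. Your sentence ``multiply the sparse matrix $B^T$ (acting on the left via its transpose) against $A^{\lceil i/2\rceil}B$'' is not describing the product you wrote down; $B^T$ is buried inside $(A^{\lfloor i/2\rfloor}B)^T=B^T A^{\lfloor i/2\rfloor}$, and you do not have the $s\times n$ matrix $B^T A^{\lfloor i/2\rfloor}$ available without either forming it (dense, cost $O(ns)\cdot nnz(B)$ is wrong too) or having access to $A^{\lfloor i/2\rfloor}$ explicitly. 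Summed over the $O(m)$ indices, the dense-times-dense cost gives $O(mns^2)=O(n^2 s)$, which matches the lemma only when $nnz(B)=\Theta(ns)$, not in the sparse regime the algorithm actually uses.

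Your ``cleaner way'' observation --- that $B^T Y$ costs $O(s\cdot nnz(B))$ for any $n\times s$ matrix $Y$ --- is exactly the right idea, but to use it with $Y=A^iB$ you need those matrices for $i$ all the way up to $2m$, not just $m$. That is what the paper does: extend the powering in phase (i) to produce $A^0B,\ldots,A^{2m}B$ (still $(2m+1)s=O(n)$ oracle calls, so the first conclusion is unaffected), and then for each $i$ form $B^T\cdot(A^iB)$ directly using the sparsity of $B$. With that one change your argument goes through; the symmetric split should be dropped.
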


\begin{proof}
By repeated powering,
we can compute $A^{i}B$ via
\[
A^{i}B = A \cdot A^{i - 1}B.
\]
Each such powering increases the number of digits before
and after the decimal place by at most $O(\log(1 / \alpha_{A}))$,
so combining this over the $m$ steps gives the bound.

Note that the same also works for computing all the way
up to $A^{2m}B$.
Then we need to compute $B^T$ times each of these matrices.
This cost is $s$ multiplications of length
$O( m \log(1/\alpha_{A}) + \log(\normi{B}) + L_B)$.
Each such multiplication can be done using fast multiplication
with another overhead of $O(\log{n})$,
which gives a cost of
\[
O\left( s \cdot nnz\left(B \right)
\cdot
\left(\log\left( \normi{B}_{\infty} / \alpha_{A}\right) + L_B\right)\log{n} \right)
\]
per value of $i$, which times the $O(m)$ such values
and incorporating $sm \leq n$ gives the total.
\end{proof}

This gives the initialization cost.
Calling the block Hankel matrix solver,
and incorporating the guarantees into an overall operator
for the padded matrix then gives the overall running time.

\begin{proof}(Of Lemma~\ref{lem:PadAndSolve})

By Theorem~\ref{thm:RandKrylov},
the block Krylov space
$K$ has max singular value at most $n^2$,
and min singular value at least $\alpha_{K} = \alpha_{A}^{5m}$.
Consider the full padded matrix
\[
Q
=
\left[
\begin{array}{c|c}
K & G
\end{array}
\right]
=
\left[
\begin{array}{c|c|c|c|c|c}
G^{S} & A^{1} G^{S} & A^{2} G^{S} & \ldots
& A^{m - 1} G^{S} & AG
\end{array}
\right]
\]
where $G \in \Re^{n \times (n - ms)}$ is a dense
Gaussian matrix, and $G^{S}$ is a sparse Gaussian matrix with
entries set i.i.d to Gaussians with probability
$O(\frac{m^2 \log( 1 / \alpha_{A} )}{n})$.

Since $G$ has $O(m)$ columns,
applying Lemma~\ref{lem:Extend} inductively to
the extra columns gives that $Q$ has max singular value
at most $n^2$, and min singular value at least
\[
\alpha_{Q}
\geq
\alpha_{K} \cdot n^{-10 \cdot 10m}
\geq
\alpha_{A}^{10m}.
\]

This in turn means that errors in $B$ by at most
$\alpha_{A}^{-20m}$ will produce error at most
$\alpha_{A}/2$ in the inverse.
So we can round all digits in $B$ to such error,
obtaining $L_B \leq O(m \log(1 / \alpha_{A}))$ as well.
Combining this with $\normi{B}_{\infty} \leq 1$ gives
that the cost of the initialization steps given by
Lemma~\ref{lem:KrylovCosts} is
$n$ matrix-vector multiplications of $A$ against 
length $n$ vectors with $O(m \log(1 / \alpha_A))$ entries,
plus an additional cost of
\[
O\left( 
n \cdot nnz\left( B \right) \cdot m \log\left( 1 / \alpha_{A} \right) \log{n} \right)
\leq
\Otil\left( 
n^2 m^{3} \log^{2}\left( 1 / \alpha_{A} \right) \right)
\]
where the inequality follows from $B$ having $s \leq n /m$ columns,
each of which have at most $\Otil(m^{3} \log(1 / \alpha_A))$ non-zeros
with high probability.

Theorem~\ref{thm:RandKrylov} gives that with probability
at least $1 - n^{-2}$,
the min eigenvalue of $(AK)^TAK$ is at least
\[
\left(\alpha_{A} \cdot \alpha_{K}\right)^2
\geq \alpha_{A}^{20m}.
\]
So the block-Hankel solver from Theorem~\ref{thm:Solver}
with error 
\[
\epsilon \leftarrow \alpha_{A}^{1000m}
\]
requires construction time:
\[
\Otil\left( m s^{\omega}
    \cdot \log\left( \alpha_{H}^{-1} \epsilon^{-1} \right) \right)\\
\leq
\Otil\left( m \left( \frac{n}{m} \right)^{\omega}
\cdot m \log\left( 1 / \alpha_{A} \right) \right)
\leq
\Otil\left( n^{\omega} m^{2 - \omega}
    \log\left( 1 / \alpha_{A} \right) \right).
\]
It gives access to a solver operator 
gives an operator $Z_{H}$ such that
\[
\norm{Z_{H} - \left( \left(AK\right)^T AK\right)^{-1} }_{2}
\leq
\alpha_{A}^{1000m}.
\]

Now consider the matrix 
\[
\left(AQ\right)^TAQ
=
\left[
\begin{array}{c|c}
AK & AG
\end{array}
\right]^T
\left[
\begin{array}{c|c}
AK & AG
\end{array}
\right]
=
\left[
\begin{array}{c|c}
\left(AK\right)^TAK & \left( AK\right)^T AG\\
\hline
\left(AG\right)^TAK & \left(AG\right)^T AG
\end{array}
\right]
\]
The bounds that we have on $Q$ gives that the max and
min singular values of this matrix is at most
$\alpha_{A}^{-100m}$ and $\alpha_{A}^{100m}$ respectively.
So we can apply Lemma~\ref{lem:ErrorInvert} repeatedly
to replace the top-left block by $Z_{H}$:
\begin{enumerate}
    \item First, by the eigenvalue bounds on
    $(AK)^TAK$, we have
    \[
    \norm{Z_{H}^{-1} - \left(AK\right)^TAK}_{F}
    \leq
    \alpha_{A}^{800m},
    \]
    which when block-substituted into the overall
    formula implies
    \[
    \norm{
    \left[
        \begin{array}{c|c}
            Z_{H}^{-1} & \left( AK\right)^T AG\\
            \hline
            \left(AG\right)^TAK & \left(AG\right)^T \left(AG\right)
        \end{array}
    \right]
    - \left(AQ\right)^TAQ
    }_{F}
    \leq
    \alpha_{A}^{800m}.
    \]
    \item Inverting this again using the eigenvalue
    bound
    \[
    \norm{
    \left[
        \begin{array}{c|c}
            Z_{H}^{-1} & \left( AK\right)^T AG\\
            \hline
            \left(AG\right)^TAK & \left(AG\right)^T \left(AG\right)
        \end{array}
    \right]^{-1}
    - \left(\left(AQ\right)^TAQ\right)^{-1}
    }_{F}
    \leq
    \alpha_{A}^{800m}.
    \]
\end{enumerate}

This new block matrix can also be further factorized as:
\begin{multline*}
\left[
\begin{array}{c|c}
Z_{H}^{-1} & \left(AK\right)^TAG\\
\hline
\left(AG\right)^TAK & \left(AG\right)^T \left(AG\right)
\end{array}
\right]
=
\left[
\begin{array}{c|c}
I & \\
\hline
\left(AG\right)^TAK Z_{H}& I
\end{array}
\right]\\
\left[
\begin{array}{c|c}
Z_{H}^{-1} & 0\\
\hline
0 &
\left(AG\right)^T \left(AG\right)
- \left(AG\right)^TAK Z_{H} \left(AK\right)^{T} AG
\end{array}
\right]
\left[
\begin{array}{c|c}
I & Z_{H} \left(AK\right)^T AG\\
\hline
0 & I
\end{array}
\right]
\end{multline*}
which upon inverting becomes
\begin{multline*}
\left[
\begin{array}{c|c}
Z^{-1} & \left(AK\right)^TAG\\
\hline
\left(AG\right)^TAK & \left(AG\right)^T \left(AG\right)
\end{array}
\right]
=
\left[
\begin{array}{c|c}
I & -Z_{H} \left(AK\right)^T AG\\
\hline
0 & I
\end{array}
\right]\\
\left[
\begin{array}{c|c}
Z_{H} & 0\\
\hline
0 &
\left( \left(AG\right)^T AG
- \left(AG\right)^T AK Z_{H} \left(AK\right)^{T} AG \right)^{-1}
\end{array}
\right]
\left[
\begin{array}{c|c}
I & \\
\hline
-\left(AG\right)^T AK Z_{H}& I
\end{array}
\right]
\end{multline*}
Recall from the start of Section~\ref{sec:Overview}
that our definition of linear algorithms are that they
exactly evaluate their corresponding operators.
As we have access to $Z_{H}$, as well as a multiplications by
$A$, $K$, and $G$ (and their transposes), we are able to
evaluate the first and third term exactly.
So the only place where additional errors arise are in the
the inverse of the bottom-right block of the middle term.
Any error in approximating it will get multiplied by the magnitude
of the previous and subsequent matrices.
The magnitude of the upper/lower triangluar matrices at the start/end
are bounded by
\[
1 + \norm{Z_{H} K^T A^2 G}_2
\leq
1 + \norm{Z_{H}}_2 \norm{K}_2 \norm{A}_2^2 \norm{G}_2
\leq
\alpha_{A}^{-100m},
\]
So it suffices to invert the Schur complement term
in the middle, specifically $(AG)^T(AG) - (AG)^T AK Z_{H} (AK)^T AG$,
to an additive accuracy $\alpha_{A}^{500m}$.

We thus obtain an operator $Z_{Q}$ such that
\[
\norm{Z_{Q}
 - \left( \left(AQ\right)^T AQ\right)^{-1}
 }_{F}
 \leq
 \alpha_{A}^{300m}.
\]
Finally, note that both $Q$ and $(AQ)^{T}$
are matrices with magnitude at most $\alpha_{A}^{-100m}$,
and
\[
Q
\left(\left(AQ\right)^TAQ\right)^{-1}
\left(AQ\right)^{T}
=
Q Q^{-1} A
\left(AQ\right)^{-T}
\left(AQ\right)^{T}
=
A^{-1}.
\]
Substituting this in then gives:
\begin{multline*}
\norm{A^{-1} - 
Q
Z_{Q}
\left(AQ\right)^T
}_2
=
\norm{
Q
\left( \left( \left(AQ\right)^TAQ \right)^{-1} - Z_{Q} \right)
\left(AQ\right)^T
}_2\\
\leq
\norm{Q}_2
\norm{\left( \left(AQ\right)^T AQ \right)^{-1} - Z_{Q} }_{F}
\norm{AQ}_2
\leq
\alpha_{A}^{100m}
\leq \alpha_{A} / n^{2},
\end{multline*}
which means the final post-processing step gives
the desired error guarantees.

For the bit-complexity of the operator $Z_{A}$,
Theorem~\ref{thm:Solver} gives that the numbers of words
after decimal place in $Z_{H}$ is at most
\[
O\left( \log^{2}{n} \log\left( \alpha_{H}^{-1} \epsilon^{-1} \right) \right)
=
O\left( m \log^{2}n \log\left( 1 / \alpha_{A} \right) \right).
\]
The matrices $A$, $K$, $Q$ have at most
$O(m \log(1 / \alpha_{A}))$ words after the decimal place.
So because the multiplications only involve a constant
number of such operators, the maximum number of digits
after the decimal place we'll have is also
$O(m \log^2{n} \log(1 / \alpha_{A}))$, and the max magnitude of
an entry we'll encounter is $\alpha_A^{-O(m)}$.

Thus, the solve costs from Theorem~\ref{thm:Solver}
needs to be invoked on vectors whose entries have magnitude at most
$\alpha_A^{-O(m)} \normi{b}_{\infty}$, and at most
$O(L_B + m \log^{2}{n} \log( 1 / \alpha_{A} ))$
after the decimal point, giving:
\begin{multline*}
 \Otil\left(m \cdot s^2
\cdot
\left(\log\left(
\frac{\left( 1 + \alpha_A^{-O(m)} \normi{b}_{\infty}\right)
n}{\alpha_{H} \epsilon}\right)
+  m \log^{2}{n} \log\left( 1 / \alpha_{A}\right)
+ L_B \right)
\right)\\
\leq
\Otil\left( m s^2 
\cdot
\left( m \log^2{n} \log\left( 1 / \alpha_A \right)
+ \log\normi{B}_{\infty} + L_B \right)
\right)\\
\leq
\Otil\left(n^2 \left(\log\left( 1 / \alpha_A \right)
+ \log\normi{B}_{\infty} + L_B \right) \right).
\end{multline*}
On the other hand, this vector
with $O(m \log^{2}{n} \log( 1 / \alpha_{A} )
+ \log\normi{B}_{\infty} + L_B)$ words per entry
needs to be multiplied against $Q$ and $K$, which are dense
matrices with $n^2$ entries of word-size at most
$O(m \log(1 / \alpha_{A}))$.
So the total cost is:
\[
\Otil\left(n^2 m
\left( \log\left( 1 / \alpha_{A} \right)
  + \log\normi{B}_{\infty} + L_B \right) \right),
\]
which is more than the above term from the Hankel matrix
solver due to $ms \leq n$ and $m > n^{0.01}$.
\end{proof}
	
	\section{Discussion}
	\label{sec:Discussion}
	
	We have presented a faster solver for linear systems with moderately sparse  coefficient matrices under bounded word-length arithmetic with logarithmic dependence on the condition number.
	This is the first separation between the complexity of matrix multiplication and solving linear systems in the bounded precision setting.
	While both our algorithm and analysis are likely improvable,
	we believe they demonstrate that there are still many
	sparse numerical problems and algorithms that remain to
	be better understood theoretically. We list a few avenues for future work.
	
	\paragraph{Random Matrices.}
	
	The asymptotic gap between our running time of about
	$n^{2.33}$ and the $O(n^{2.28})$ running time of computing inverses
	of sparse matrices over finite fields~\cite{EberlyGGSV06} is
	mainly due to the overhead our minimum singular value bound
	from Theorem~\ref{thm:RandKrylov}, specifically the requirement
	of $\Omega(m^3)$ non-zeros per column on average.
	We conjecture that a similar bound holds for $\Otil(m)$
	non-zeros per column, and also in the full Krylov space case.
	\begin{enumerate}
		\item Can we lower bound the min singular value of
		a block Krylov space matrix generated from a random matrix with
		$\Otil(m)$ non-zeros per column?
		\item Can we lower bound the min singular value of
		a block Krylov space matrix where $m \cdot s = n$
		for general values of $s$ (block size)
		and $m$ (number of steps)?
	\end{enumerate}
	The second improvement of removing
	the additional $\Omega(m)$ columns would not give
	asymptotic speedups.
	It would however remove the need for the extra steps (padding with random Gaussian columns)
	in Section~\ref{sec:Pad}.
	Such a bound for the square case would likely 
	require developing new tools for analyzing matrix anti-concentration. 
	\begin{enumerate}
		\item[3.] Are there general purpose bounds on the
		min singular value of a sum of random matrices,
		akin to matrix concentration bounds (which focus
		on the max singular value)~\cite{RudelsonV10,Tropp15}.
	\end{enumerate}
	The connections with random matrix theory can also be leveraged in the reverse direction:
	\begin{enumerate}
		\item[4.] Can linear systems over random matrices with i.i.d. entries be solved faster?
	\end{enumerate}
	An interesting case here is sparse matrices with
	non-zeros set to $\pm 1$ independently.
	Such matrices have condition number $\Theta(n^2)$ with
	constant probability~\cite{RudelsonV10}, which means that
	the conjugate gradient algorithm has bit complexity
	$O(n \cdot nnz)$ on such systems.
	Therefore, we believe these matrices present a natural starting
	point for investigating the possibility of faster algorithms
	for denser matrices with $nnz > \Omega(n^{\omega - 1}$).
	
	\paragraph{Numerical Algorithms.}
	
	The bounded precision solver for block Hankel matrices
	in Section~\ref{sec:Solver} is built upon the earliest
	tools for speeding up solvers for such structured
	matrices~\cite{KailathKM79,BitmeadA80}, as well as the
	first sparsified block Cholesky algorithm for solving
	graph Laplacians~\cite{KyngLPSS16}.
	We believe the more recent developments in solvers
	for Hankel/Toeplitz matrices~\cite{XiXCB14} as well
	as graph Laplacians~\cite{KyngS16} can be incorporated
	to give better and more practical routines for solving
	block-Hankel/Toeplitz matrices.
	\begin{enumerate}
		\item[5.] Is there a superfast solver under bounded
		precision for block Hankel/Toeplitz matrices
		that does not use recursion?
	\end{enumerate}
	It would also be interesting to investigate whether
	recent developments in randomized numerical linear
	algebra can work for Hankel/Toeplitz matrices.
	Some possible questoins there are:
	\begin{enumerate}
		\item[6.] Can we turn $m \times s^{\omega}$ into $O(ms^2 + s^{\omega})$ using more recent developments sparse projections
		(e.g. CountSketch / sparse JL / sparse Gaussian instead of a dense Gaussian).
		\item[7.] Is there an algorithm that takes a rank $r$ factorization of $I - XY \in \Re^{n \times n}$, and computes in time $\Otil(n \mbox{poly}(r))$ the a rank $r$ factorization/approximation of $I - YX$?
	\end{enumerate}
	
	Another intriguing question is the extensions
	of this approach to the high condition number,
	or exact integer solution, setting.
	Here the current best running time bounds are via $p$-adic
	representations of fractions~\cite{Dixon82},
	which are significantly less understood compared to decimal point
	based representations.
	In the dense case, an algorithm via shifted $p$-adic numbers
	by Storjohann~\cite{Storjohann05} achieves an $O(n^{\omega})$ bit complexity.
	Therefore, it is natural to hope for a similar $\Otil(n \cdot nnz)$ bit
	complexity algorithm for producing exact integer solutions.
	A natural starting point could be the role of low-rank sketching in solvers
	that take advantage of displacement rank, i.e., extending
	the $p$-adic algorithms to handle low rank matrices:
	\begin{enumerate}
		\item[8.] Is there an $O(n \cdot r^{\omega - 1})$ time
		algorithm for exactly solving linear regression problems
		involving an $n$-by-$n$ integer matrix with rank $r$?
	\end{enumerate}
	
	Finally, we note that the paper by Eberly et al.~\cite{EberlyGGSV06} that
	proposed block-Krylov based methods for matrix inversion
	also included experimental results that demonstrated good
	performances as an exact solver over finite fields.
	It might be possible to practically evaluate
	block-Krylov type methods for solving general systems
	of linear equations.
	Here it is worth remarking that even if one uses naive $\Theta(n^3)$
	time matrix multiplication, both the Eberly et al.
	algorithm~\cite{EberlyGGSV07} (when combined with $p$-adic
	representations), as well as our algorithm, still take sub-cubic time.
	
	\section*{Acknowldgements}
	Richard Peng was supported in part by NSF CAREER award 1846218, and Santosh Vempala by NSF awards AF-1909756 and AF-2007443.
	We thank Mark Giesbrecht for bringing to our attention the literature
	on block-Krylov space algorithms;
	Yin Tat Lee for discussions on random linear systems; 
	Yi Li, Anup B. Rao, and Ameya Velingker for discussions about
	high dimensional concentration and anti-concentration bounds;
	Mehrdad Ghadiri, He Jia, Kyle Luh, Silvia Casacuberta Puig, 
	and anonymous reviewers for comments on earlier versions of this paper.
	
	\bibliographystyle{alpha}
	\bibliography{ref}
	
	\begin{appendix}
		
		\section{Properties of Vandermonde Matrices}
\label{sec:Vandermonde}

We prove the large entries property of Vandermonde matrices.

\Vandermonde*

We first prove the square case.

\begin{lemma}
\label{lem:VandermondeSquare}
Let $0 < \alpha < 1$ be a bound and
$\alpha < \lambda_{1} \leq \lambda_{2} \leq \ldots \leq \lambda_{m}
\leq \alpha^{-1}$
be positive real numbers such that
$\lambda_{i + 1} - \lambda_{i} \geq \alpha$,
then the Vandermonde matrix
\[
V
=
\left[
	\begin{array}{ccccc}
	1 & \lambda_{1} & \lambda_{1}^{2} & \ldots & \lambda_{1}^{m - 1}\\
	1 & \lambda_{2} & \lambda_{2}^{2} & \ldots & \lambda_{2}^{m - 1}\\
	\ldots & \ldots & \ldots & \ldots & \ldots\\
	1 & \lambda_{m} & \lambda_{m}^{2} & \ldots & \lambda_{m}^{m - 1}
\end{array}
\right]
\]
is full rank, and has minimum singular value at
least $m^{-1} 2^{-m} \alpha^{m}$.
\end{lemma}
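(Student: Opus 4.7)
The plan is to give an explicit inversion of $V$ via Lagrange interpolation and bound its operator norm, since $\sigma_{\min}(V) = 1/\|V^{-1}\|_2$. The fact that consecutive $\lambda_j$ differ by at least $\alpha$ makes the classical Vandermonde determinant $\prod_{i<j}(\lambda_j - \lambda_i)$ nonzero, so $V$ is invertible, and the $i$-th column of $V^{-1}$ is exactly the coefficient vector of the Lagrange basis polynomial
\[
L_i(t) = \frac{\prod_{j \neq i}(t - \lambda_j)}{D_i},
\qquad D_i := \prod_{j \neq i}(\lambda_i - \lambda_j),
\]
because multiplying a coefficient vector by $V$ evaluates the associated polynomial at $\lambda_1,\dots,\lambda_m$, and $L_i$ interpolates the standard basis vector $e_i$.

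The heart of the proof will be bounding the magnitudes of the coefficients of each $L_i$. For the denominator, I would exploit the sorted structure: the minimum gap condition gives $|\lambda_i - \lambda_j| \geq |i-j|\,\alpha$, so
\[
|D_i| \;\geq\; (i-1)!\,(m-i)!\,\alpha^{m-1} \;\geq\; \alpha^{m-1}.
\]
For the numerator $\prod_{j\neq i}(t-\lambda_j) = \sum_{k=0}^{m-1} (-1)^{m-1-k}\, e_{m-1-k}(\{\lambda_j\}_{j\neq i})\, t^k$, I would upper bound each elementary symmetric polynomial using $0 < \lambda_j \leq \alpha^{-1}$ to obtain $|e_{\ell}| \leq \binom{m-1}{\ell}\,\alpha^{-\ell}$. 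Summing over coefficients (or, more cleanly, applying Parseval on the unit circle, where $|e^{i\theta}-\lambda_j|\leq 1+\alpha^{-1} \leq 2\alpha^{-1}$) controls the $\ell_2$ norm of the numerator coefficient vector by $(2\alpha^{-1})^{m-1}$.

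Combining, each column of $V^{-1}$ has $\ell_2$-norm at most $(2\alpha^{-1})^{m-1}/|D_i|$, and summing the squared column norms using the factorial gains in the denominators yields $\|V^{-1}\|_2 \leq \|V^{-1}\|_F \leq m\cdot 2^m\,\alpha^{-m}$ after a careful accounting. Inverting gives the claimed $\sigma_{\min}(V) \geq m^{-1}2^{-m}\alpha^m$.

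The main obstacle will be the bookkeeping in the final step: getting the exponent of $\alpha$ down to $m$ (rather than the naive $2(m-1)$ one gets by combining the crude bounds $|D_i|\geq \alpha^{m-1}$ and numerator coefficient $\leq 2^{m-1}\alpha^{-(m-1)}$ without the factorial structure) requires leveraging the factorial factors $(i-1)!(m-i)!$ in $|D_i|$ against the binomial coefficients in the numerator, exploiting the cancellation $\binom{m-1}{k}/((i-1)!(m-i)!)$ when summing column norms. Everything else is mechanical algebra once Lagrange interpolation is written down.
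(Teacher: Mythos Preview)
Your approach via Lagrange interpolation is exactly the paper's approach, and your bounds on the numerator coefficients and on $|D_i|$ are correct and in fact sharper than what the paper writes down. The problem is your final paragraph: the claimed reduction of the $\alpha$-exponent from $2(m-1)$ down to $m$ by playing the factorials $(i-1)!(m-i)!$ against the binomials $\binom{m-1}{k}$ cannot work. Those factorial and binomial factors are purely $\alpha$-free combinatorial constants; the power of $\alpha$ in the $(k,i)$ entry of $V^{-1}$ is exactly $\alpha^{-(2m-2-k)}$, coming from $\alpha^{-(m-1-k)}$ in the numerator coefficient and $\alpha^{-(m-1)}$ in $1/|D_i|$, and no amount of cancellation among the $i$- and $k$-dependent combinatorial prefactors changes that exponent. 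The worst entry (at $k=0$) is genuinely of order $\alpha^{-(2m-2)}$, so $\|V^{-1}\|_2$ is of order $\alpha^{-2(m-1)}$ up to factors polynomial in $m$ and $2^m$, and that is the best this argument can give.

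In fact, if you look at the paper's own proof, it concludes with $\|V^{-1}\|_2 \leq m\cdot 2^m \alpha^{-2m}$ and hence $\sigma_{\min}(V)\geq m^{-1}2^{-m}\alpha^{2m}$, not $\alpha^m$ as written in the lemma statement; the $\alpha^m$ in the statement is a slip. So your ``crude'' bound $\alpha^{2(m-1)}$ is already the right answer (and matches what the paper actually proves), and the obstacle you identified is not an obstacle at all---just drop the claimed refinement and you are done. For the downstream application (Lemma~\ref{lem:Vandermonde}) only a bound of the form $\alpha^{O(m)}$ is needed, so the exponent $2m$ versus $m$ is immaterial.
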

	
\begin{proof}
We will bound the minimum singular value by bounding the
maximum magnitude of an entry in $V(\lambda, m)^{-1}$.
We will in turn bound this by explicitly writing down the inverse
using Lagrange interpolation of polynomials.
	
For a vector
$x = [ x_1 ; x_2; \ldots ; x_{m} ]$,
consider the polynomial in $t$ with coefficients
given by $x$, specifically
\[
p_{x}\left(t\right)
= 
\sum_{i = 1}^{m} x_{i} t^{i - 1}.
\]
The matrix-vector product $V x$
can be viewed as evaluating $p_{x}(\cdot)$
at $\lambda_{1}, \lambda_{2}, \ldots \lambda_{m}$:
\[
V x
= \left[ p_{x} \left( \lambda_{1} \right) ; p_{x} \left( \lambda_{2} \right); \ldots ; p_{x} \left( \lambda_{m} \right) \right].
\]
This means that if we are given these polynomial values
as a vector $b$,
we can solve for the value of $x$ using
the Lagrange interpolation formula
\[
p_{x} \left( t \right)
=
\sum_{i = 1}^{m}
	b_{i} \cdot \prod_{j \neq i}
	\frac{\left( t - \lambda_{j} \right)}
	{\left( \lambda_{i} - \lambda_{j} \right)}.
\]

Thus the inverse is formed by collecting coefficients
on each of the monomial $t^{i}$s.
To do this, consider expanding each of the product terms.
Since $\lambda_{j} \leq \alpha^{-1}$,
each of the numerator's contribution is at most
$\alpha^{-m}$.
Also, since $|\lambda_{i} - \lambda_{j}| \geq \alpha$,
the denominator contributes at most $\alpha^{-1}$ as well.
		
Coupled with the fact that there are at most $2^{m}$
terms in expansions, each entry of inverse has magnitude
at most $2^{m} \alpha^{-2m}$.
Plugging in any unit vector $x$ then gives
\[
\norm{x}_2
\leq
\norm{V^{-1}}_2
\norm{Vx}_2,
\]
which plugging in $\norm{V^{-1}}_2 \leq m \cdot 2^{m} \alpha^{-2m}$
above gives $\norm{Vx}_2 \geq m^{-1} 2^{-m} \alpha^{2m}$.
\end{proof}

For the rectangular case, we simply apply
Lemma~\ref{lem:VandermondeSquare} to every subset
of $m$ rows.

\begin{proof} (of Lemma~\ref{lem:Vandermonde})
If there is some unit $x$ such that $V x$ has $m$
entries with magnitude less than $\alpha^{3m}$,
then let this subset be $S$.
We have
\[
\norm{V_{S, :} x}_2
\leq
m^{-1} 2^{-m} \alpha^{2m}.
\]
On the other hand, because $\lambda_{S}$ is a subset of $\lambda$,
it is also in the range $[\alpha, \alpha^{-1}]$,
is separated by at least $\alpha$.
So we get a contradiction with Lemma~\ref{lem:VandermondeSquare}.
\end{proof}

		\section{Properties and Algorithms for
Low Displacement Rank Matrices}
\label{sec:DisplacementRankDetails}

We first check the preservation of displacement rank
(up to a sign/direction flip) under inversion.

\DisplacementRankInverse*

The proof of this lemma relies on the following generalization
of `left inverse is right inverse'.
\begin{fact}
	\label{fact:FlipRank}
	For any square (but not necessarily invertible)
	matrices $M^{(1)}$ and $M^{(2)}$, we have
	\[
	\textsc{Rank}\left(I - M^{\left( 1\right)} M^{\left( 2\right)} \right)
	=
	\textsc{Rank}\left(I - M^{\left( 2\right)} M^{\left( 1\right)} \right).
	\]
\end{fact}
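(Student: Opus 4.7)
The plan is to prove this rank identity by relating both sides to the rank of a common $2n$-by-$2n$ block matrix
\[
N \defeq \begin{pmatrix} I & M^{(1)} \\ M^{(2)} & I \end{pmatrix},
\]
where $I$ denotes the $n$-by-$n$ identity and $n$ is the common dimension of $M^{(1)}$ and $M^{(2)}$. The key observation is that $N$ admits two different block-triangularizations, each exposing one of the two matrices $I - M^{(1)} M^{(2)}$ or $I - M^{(2)} M^{(1)}$, and since all the block operations involved are performed by invertible block-unit-triangular matrices, they preserve rank.

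First I would left-multiply $N$ by the block matrix $\bigl(\begin{smallmatrix} I & -M^{(1)} \\ 0 & I \end{smallmatrix}\bigr)$, which is unit-triangular and hence invertible. This transforms $N$ into
\[
\begin{pmatrix} I - M^{(1)} M^{(2)} & 0 \\ M^{(2)} & I \end{pmatrix}.
\]
A subsequent block column operation (right-multiplication by $\bigl(\begin{smallmatrix} I & 0 \\ -M^{(2)} & I \end{smallmatrix}\bigr)$ would not quite work since $M^{(2)}$ sits in the lower-left; instead I apply the appropriate right-multiplication that uses the bottom-right identity to eliminate $M^{(2)}$) yields a block-diagonal matrix whose rank is exactly $n + \textsc{Rank}(I - M^{(1)} M^{(2)})$.

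Symmetrically, left-multiplying $N$ by the mirror block matrix $\bigl(\begin{smallmatrix} I & 0 \\ -M^{(2)} & I \end{smallmatrix}\bigr)$ produces
\[
\begin{pmatrix} I & M^{(1)} \\ 0 & I - M^{(2)} M^{(1)} \end{pmatrix},
\]
and the analogous column elimination brings this to a block-diagonal matrix of rank $n + \textsc{Rank}(I - M^{(2)} M^{(1)})$. Since every step is a multiplication by an invertible matrix, both reductions share the common value $\textsc{Rank}(N)$, and cancelling the shared $+n$ delivers the claimed identity.

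I do not expect any real obstacle here: the proof makes no use of invertibility of $M^{(1)}$ or $M^{(2)}$, exactly as the statement requires, and the only care needed is in checking that each block unit-triangular matrix used is genuinely invertible with inverse obtained by flipping the sign of its off-diagonal block, which is a routine verification. An alternative plan, giving a more conceptual argument, would be to exhibit an explicit bijection $x \mapsto M^{(2)} x$ between $\ker(I - M^{(1)} M^{(2)})$ and $\ker(I - M^{(2)} M^{(1)})$ with inverse $y \mapsto M^{(1)} y$; either route concludes the proof.
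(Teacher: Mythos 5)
Your block-matrix argument is correct and is a genuinely different route from the paper's. The paper proves the inequality $\dim\ker(I - M^{(1)}M^{(2)}) \leq \dim\ker(I - M^{(2)}M^{(1)})$ by showing that $x \mapsto M^{(2)}x$ carries a basis of the first kernel into the second, where injectivity is certified by applying $M^{(1)}$ on the left (since $M^{(1)}M^{(2)}x = x$), and then invokes symmetry for the reverse inequality. You instead compute the rank of
\[
N = \begin{pmatrix} I & M^{(1)} \\ M^{(2)} & I \end{pmatrix}
\]
via two different block eliminations by unit-triangular (hence invertible) matrices, yielding $\textsc{Rank}(N) = n + \textsc{Rank}(I - M^{(1)}M^{(2)}) = n + \textsc{Rank}(I - M^{(2)}M^{(1)})$. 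Both are standard proofs of the rank form of Sylvester's identity; yours has the advantage of being purely mechanical (rank is invariant under multiplication by invertible matrices, full stop), and it is thematically apt here since the Fact is invoked in a context that is itself about Schur complements and block elimination. The paper's approach is a bit shorter but requires the non-obvious step of recovering independence of $\{M^{(2)}x^{(\hat r)}\}$ by left-multiplying by $M^{(1)}$. Your closing remark about the bijection $x \mapsto M^{(2)}x$ with two-sided inverse $y \mapsto M^{(1)}y$ is essentially the paper's argument, just packaged more cleanly as an explicit isomorphism of kernels.

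One small correction to your write-up: the parenthetical hedge is unnecessary. Right-multiplying $\begin{pmatrix} I - M^{(1)}M^{(2)} & 0 \\ M^{(2)} & I \end{pmatrix}$ by $\begin{pmatrix} I & 0 \\ -M^{(2)} & I \end{pmatrix}$ does work: the new $(2,1)$ block is $M^{(2)}\cdot I + I\cdot(-M^{(2)}) = 0$, and the $(1,1)$ block is unaffected because the $(1,2)$ block was already zero, so you land directly on $\begin{pmatrix} I - M^{(1)}M^{(2)} & 0 \\ 0 & I \end{pmatrix}$ as desired.
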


\begin{proof}
We will show $\textsc{Rank}(I - M^{(1)}M^{(2)})
\geq \textsc{Rank}(I - M^{(2)} M^{(1)})$, or
$\textsc{Null}(I - M^{(1)}M^{(2)})
\leq \textsc{Null}(I - M^{(2)} M^{(1)})$.
The other direction follows from flipping the role of 
$M^{(1)}$ and $M^{(2)}$ and applying the argument again.

Let dimension of the null space of $I - M^{(1)}M^{(2)}$ be $r$.	
Let $x^{(1)}, x^{(2)}, \ldots x^{(r)}$ be a basis for the
null space of $I - M^{(1)}M^{(2)}$.
The condition
\[
\left( I - M^{\left( 1\right)} M^{\left( 2\right)} \right)
x^{\left( \rhat\right)}
=
0
\qquad
\forall 1 \leq \rhat \leq r
\]
rearranges to
\[
x^{\left( \rhat \right)}
=
M^{\left( 1\right)} M^{\left( 2\right)} x^{\left( \rhat \right)}
\qquad
\forall 1 \leq \rhat \leq r
\]
then for each $\rhat$, $M^{(2)} x^{(\rhat)}$ is also
in the null space of
$I - M^{(2)} M^{(1)}$:
\[
\left( I - M^{\left( 2\right)} M^{\left( 1\right)} \right)
M^{\left( 2\right)} x^{\left( \rhat \right)}
= M^{\left( 2\right)} x^{\left( \rhat \right)}
- M^{\left( 2\right)} M^{\left( 1\right)}
    M^{\left( 2\right)} x^{\left( \rhat \right)}
= M^{\left( 2\right)} x^{\left( \rhat \right)}
- M^{\left( 2\right)} x^{\left( \rhat \right)}
= 0
\]

So it remains to show that $M^{(2)} x^{(\rhat)}$s are
linearly independent.
Note that because $M^{(2)}$ may not be invertible,
we cannot just conclude that $x^{(\rhat)}$ are themselves
linearly independent.
Instead, we need to use the fact that $x^{(\rhat)}$
is invariant under $M^{(1)} M^{(2)}$ to `locally invert'
$M^{(2)} x^{(\rhat)}$ using $M^{(1)}$.
Formally,
suppose $c_1 \ldots c_r$ not all $0$ are coefficients such that
\[
\sum_{1 \leq \rhat \leq r} c_{\rhat}
    M^{\left(2\right)} x^{\left( \rhat\right)}
=
0
\]
then
\[
0
=
M^{\left(1\right)}
\left(
\sum_{1 \leq \rhat \leq r}
c_{\rhat}  M^{\left(2\right)}
  x^{\left( \rhat \right)}
\right)
=
\sum_{1 \leq \rhat \leq r}
c_{\rhat} M^{\left(1\right)} M^{\left(2\right)}
  x^{\left( \rhat \right)}
=
\sum_{1 \leq \rhat \leq r} c_{\rhat} x^{\left( \rhat \right)},
\]
a contradiction with the assumption that $x^{(\rhat)}$s
are linearly independent.
\end{proof}
	
\begin{proof}(of Lemma~\ref{lem:DisplacementRankInverse})
Because $M$ is square and full rank,
the rank of a matrix is preserved when multiplying
by $M$ and $M^{-1}$.
\[
\alpha_{-}\left( M^{-1} \right)
=
\textsc{Rank} \left( M^{-1} - \Delta\left( s \right)^{T} M^{-1}  \Delta\left( s \right) \right)
=
\textsc{Rank} \left( I - \Delta\left( s \right)^{T} M^{-1}  \Delta\left( s \right) M\right).
\]
Invoking Fact~\ref{fact:FlipRank}
on the matrices $\Delta(s)^T M^{-1}$ and $\Delta(s) M$ then gives
\[
\alpha_{-}\left( M^{-1} \right)
=
\textsc{Rank} \left( I -  \Delta\left( s \right) M  \Delta\left( s \right)^{T} M^{-1}\right)
\]
which when multiplied by $M$ gives
\[
=
\textsc{Rank} \left( M - \Delta\left( s \right) M  \Delta\left( s \right)^{T} \right)
=
\alpha_{+} \left( M \right).
\]
\end{proof}

Next we check that representations in the displaced
version can be used to efficiently perform matrix-vector
multiplications in the original matrix.

\DisplacementRepresentation*

Our goal is to use matrix-vector convolutions.
To do this, we first represent $M$ as the product of
upper and lower (block) triangular matrices that are also
Toeplitz: every diagonal consists of the same blocks.
This property is in turn useful because multiplying by these
special types of matrices can in turn be carried using
fast convolutions.

For notational simplicity (specifically to avoid using $U$ for upper-triangular matrices),
we only use lower triangular forms of these, which we denote $T_L$.

\begin{definition}
For an $(ms) \times s$ matrix $X$,
the corresponding block lower-triangular
Toeplitz matrix $T_{L}(X)$ is given by placing $X$ on the
leftmost column,
\[
T_L\left( X \right)
= 
\left[
\begin{array}{ccccc}
X_{\left\{1, 1\right\}} &  0 & 0 & \ldots  & 0 \\
X_{\left\{2, 1\right\}} & X_{\left\{1, 1\right\}} &  0 & \ldots & 0\\ 
X_{\left\{3, 1\right\}} & X_{\left\{2, 1\right\}} & X_{\left\{1, 1\right\}} & \ldots & 0\\ 
\ldots & \ldots & \ldots & \ldots & \ldots \\
X_{\left\{m, 1\right\}} & X_{\left\{m-1, 1\right\}} & X_{\left\{m - 2, 
		1\right\}} & \ldots & X_{\left\{1, 1\right\}}
\end{array}
\right]
\]
or formally
\[
T_L\left(X\right)_{\left\{i, j \right\}}
=
\begin{cases}
X_{\left\{i - j + 1, 1 \right\}} & \qquad \text{if $j \leq i$},\\
0 & \qquad \text{otherwise}.
\end{cases}
\]
\end{definition}

It was shown in~\cite{KailathKM79} that the product applying
$T_L(\cdot)$ to the two factors gives a representation of the
original displacement matrix.
Note that both the displacement operation, and the computation
of $XY^T$ is linear in the matrices.
Therefore, we can view $X$ and $Y$ as $t$ copies of $n \times s$
matrices, with possibly padding by $0$s to make their second
dimension a multiple of $s$.
\begin{lemma}
\label{lem:RepresentationTranslation}
For any number $t$ and any sequence of $(ms) \times s$ matrices
$X^{(1 \ldots t)}$ and $Y^{(1 \ldots t)}$,
the matrix $M^{(+)}$ satisfying the equation
\[
+s\left( M^{\left( + \right)} \right)
= \sum_{\that = 1}^{t} X^{\left( \that\right)}
    \left( Y^{\left( \that\right)} \right)^T
\]
has the unique solution
\[
M^{\left( + \right)}
= \sum_{\that = 1}^{t}
    T_{L}\left( X^{\left( \that\right)} \right)
    T_{L}\left( Y^{\left( \that\right)} \right)^T
\]
and similarly the unique solution to
$s-(M^{(-)}) = XY^T$ is
\[
M^{\left( - \right)}
= \sum_{\that = 1}^{t}
    T_{L}\left( X^{\left( \that\right)} \right)^T
    T_{L}\left( Y^{\left( \that\right)} \right)
\]
\end{lemma}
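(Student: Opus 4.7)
My plan is to verify the claimed formula by a direct block-indexed computation and to deduce uniqueness from the nilpotency of $\Delta(s)$. By linearity of the displacement operator and of the outer-product sum in its summands, it suffices to prove the single-summand ($t=1$) case, and I will focus on the positive-displacement statement; the negative-displacement statement will follow by reversing the order of block indices, which interchanges $\Delta(s) \leftrightarrow \Delta(s)^T$ and lower- with upper-triangular block Toeplitz matrices, so the same argument applies verbatim to the transposed picture.

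The key structural observations are that $\Delta(s) M$ shifts the block rows of $M$ down by one while zeroing the first block row, and $M \Delta(s)^T$ shifts the block columns right by one while zeroing the first block column. Consequently $(\Delta(s) \cdot T_L(X) T_L(Y)^T \cdot \Delta(s)^T)_{\{i,j\}}$ equals $(T_L(X) T_L(Y)^T)_{\{i-1, j-1\}}$ when $i, j \geq 2$, and vanishes whenever $i = 1$ or $j = 1$. Expanding the matrix product with the block lower-triangular Toeplitz structure of $T_L(X)$ and $T_L(Y)$ gives
\[
(T_L(X) T_L(Y)^T)_{\{i,j\}} = \sum_{\ell = 0}^{\min(i,j) - 1} X_{\{i - \ell, 1\}} \, Y_{\{j - \ell, 1\}}^{T}.
\]
For $i, j \geq 2$ the difference of this sum with its $(i{-}1, j{-}1)$-shifted counterpart telescopes to leave only the $\ell = 0$ summand, namely $X_{\{i, 1\}} Y_{\{j, 1\}}^T = (XY^T)_{\{i,j\}}$. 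When $i = 1$ (or $j = 1$) the shifted term is zero and the sum contains only its $\ell = 0$ term, which is again $(XY^T)_{\{i,j\}}$. This verifies $+s(T_L(X) T_L(Y)^T) = XY^T$ on every block.

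For uniqueness, I will use that $\Delta(s)$ is nilpotent with $\Delta(s)^m = 0$, so the linear operator $L: M \mapsto M - \Delta(s) M \Delta(s)^T$ on $(ms) \times (ms)$ matrices is invertible, with explicit inverse $L^{-1}(N) = \sum_{k=0}^{m-1} \Delta(s)^{k} N (\Delta(s)^T)^{k}$ (which one checks telescopes to $N$). Therefore the displacement equation $+s(M^{(+)}) = \sum_{\that} X^{(\that)} (Y^{(\that)})^T$ has exactly one solution, and the formula just verified produces it.

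The only step with any real content is the block-indexed telescoping computation; there is no conceptual obstacle, only careful index tracking at the boundary (first block row and column) to make sure the shift convention is applied consistently.
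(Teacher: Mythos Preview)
Your proposal is correct and follows essentially the same approach as the paper: both verify the formula by computing the $\{i,j\}$ block of $T_L(X)T_L(Y)^T$ as a sum and observing that subtracting the shifted version telescopes to leave exactly $(XY^T)_{\{i,j\}}$. Your uniqueness argument via the explicit inverse $L^{-1}(N)=\sum_{k=0}^{m-1}\Delta(s)^k N(\Delta(s)^T)^k$ is a slightly more compact packaging of the paper's iterative reconstruction (fill in the first $s$ rows and columns, then recurse via $M_{ij}=+s(M)_{ij}+M_{(i-s),(j-s)}$), but the two are equivalent and there is no substantive difference.
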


\begin{proof}
By symmetry (in reversing both the order of rows
and columns), we focus on the case with $M^{(+)}$.
We first show that the solution is unique.

Since $\Delta(s)$ shifts all entries down by $s$,
we have
\[
\left(M - \Delta\left( s \right) M \Delta\left( s \right)^{T}\right)_{ij}
=
\begin{cases}
    M_{ij} - M_{\left(i - s\right),\left( j - s \right)} & \qquad \text{if $i > 1$ and $j > 1$}\\
	M_{ij} & \qquad \text{otherwise}
\end{cases}
\]
Thus we can construct $M$ uniquely from $+s(M)$ by first filling
in the first $s$ rows and columns of $M$ with the same values as
in $+s(M)$, and iteratively constructing the rest via
\[
M_{ij}
=
+s\left( M \right)_{ij}
+ M_{\left( i - s\right), \left( j - s \right)}.
\]
This means that the solution to $+s(M^{(+)})
    = \sum_{\that} X^{( \that)} (Y^{( \that)})^{T}$
is unique, and all we have to do is to show that the formula
we provided gives equality.
	
First, we express the $\{i, j\}$ block of
the product of a lower triangular Toeplitz matrix
and an upper triangular Toeplitz matrix can be
in terms of the original blocks:
\[
    \left(T_L\left( X \right) T_L\left( Y \right)^{T} \right)_{\left\{i, j \right\}}
    =
	\sum_{k} T_{L}\left( X \right)_{\left\{i, k \right\}}
    			T_L\left( Y \right)_{\left\{j, k \right\}}^T
    =
	\sum_{1 \leq k \leq \min\left\{i, j \right\} }
	    X_{\left\{i + 1 - k, 1\right\}} Y_{\left\{j + 1 - k, 1\right\}}^{T}
\]
Now consider the matrix with each row shifted down by $1$
that results from multiplying by $\Delta(s)$.
This leads to
\[
\left( \Delta\left( s \right) T_L\left( X \right)
    T_L\left( X \right)^T \Delta\left( s \right)^T \right)
=
\left( \Delta\left( s \right) T_L\left( X \right) \right)
\left( \Delta\left( s \right) T_L\left( Y \right) \right)^T.
\]
Which when substituted into the above formula gives:
\[
\left( \Delta\left( s \right) T_L\left( X \right)
    T_L\left( X \right)^T \Delta\left( s \right)^T \right)_{\left\{i, j\right\}}
=
\sum_{1 \leq k \leq \min\left\{i, j \right\} - 1}
    X_{\left\{i - k, 1\right\}} Y_{\left\{j - k, 1\right\}}^{T}.
\]
Upon comparison, the only different term is
$X_{\{i, 1\}} Y_{\{j, 1\}}^{T}$,
which is precisely the corresponding block in $XY^{T}$.
Note that the column indices are $1$ in both
of these because $X$ and $Y$ are both $ms$-by-$s$,
so exactly one column block.
\end{proof}
	
Efficient multiplications against $T_L(X)$ and $T_L(X)^T$
can in turn be realized (in an operator manner) via fast
Fourier transforms.

First, observe that for an $ms$-by-$k$ matrix $B$,
$T_L(X)B$ is an $n$-by-$s$ matrix.
If we interpret it as $m$ $s$-by-$s$ blocks, these
blocks the result of computing the convolution 
of $X_{\{1, :\}} \ldots X_{\{m, :\}}$ with
$B_{\{m, :\}} \ldots B_{\{1, :\}}$.

Raising $m$ to a power of $2$, and filling the rest
of the blocks with $0$s means it suffices to show that
we can generate a covolution operator with good
bit complexity.
Due to the connection with Fourier transforms, it is
more convenient for us to use $0$-indexing of the blocks.

\begin{restatable}{lemma}{StableConvolution}
\label{lem:StableConvolution}
    Given $t$ that is a power of $2$, along with
    a length $t$ sequence of $s$-by-$s$ matrices
    $X^{(0)} \ldots X^{(t - 1)}$, and any error $\delta$,
    there is an algorithm corresponding to
    a linear operator $\Ztil_{Conv(X, \delta)}$ with at most
    $O( \log{t} \log(t s ( 1 + \normi{X}_{\infty} ) / \delta))$
    words after the decimal point such that for the
    exact convolution matrix $\Zbar_{Conv(X)}$ given by:
    \[
    \Zbar_{Conv\left(X\right)\left\{i, j\right\}}
    =
    X^{\left(\left(i - j\right) \mod m\right)}
    \]
    we have
    \[
    \norm{\Ztil_{Conv\left(X, \delta\right)}
    - \Zbar_{Conv\left(X\right)}}_{F} \leq \delta
    \]
    and for any length $t$ sequence of $s$-by-$k$ matrices
    $B^{(0)} \ldots B^{(t - 1)} \in \Re^{s \times k}$ with
    at most $L_{B}$ words after decimal point,
    corresponding vertically concatenated matrix
    $B \in \Re^{ts \times k}$,
    evaluating $\Ztil_{Conv(X, \delta)}B$ takes time
    \[
    O\left( t \log^2{t}
    \cdot
    \max\left\{s^2k^{\omega - 2}, s^{\omega - 1}k \right\}
    \cdot \left( \log\left(
    \left( 1 + \normi{X}_{\infty} \right)
    \left( 1 + \normi{B}_{\infty} \right)
    ts / \delta \right) + L_B \right)
    \right)
    \]
    in the unit-cost RAM model.
\end{restatable}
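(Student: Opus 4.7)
(Plan for Lemma~\ref{lem:StableConvolution})
The plan is to realize the operator by the standard FFT-based convolution, performed with roots of unity truncated to a carefully chosen precision so that the algorithm, viewed as a function of the input $B$ with twiddle factors frozen after preprocessing, is an exact linear operator that Frobenius-approximates the true cyclic convolution. Concretely, at preprocessing time I compute and store the $t$-point DFT $\widehat{X}^{(0)},\ldots,\widehat{X}^{(t-1)}$ of the sequence $X^{(0)},\ldots,X^{(t-1)}$, where each $\widehat{X}^{(j)}\in\Re^{s\times s}$. At solve time, given $B^{(0)},\ldots,B^{(t-1)}$, I compute $\widehat{B}^{(j)}$ via $t$-point DFT along the block axis, form the pointwise products $\widehat{Y}^{(j)}=\widehat{X}^{(j)}\widehat{B}^{(j)}$ using fast rectangular matrix multiplication, and return the inverse DFT of the $\widehat{Y}^{(j)}$. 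Cyclic diagonalization of convolution makes the exact version of this recipe equal $\Zbar_{Conv(X)}B$.

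For precision, I fix a target word-length $P=O(\log(ts(1+\normi{X}_{\infty})(1+\normi{B}_{\infty})/\delta))$ after the decimal point for intermediate quantities, and approximate each $t$-th root of unity $\omega_t^j$ to $P$ bits. The FFT is an $O(\log t)$-depth butterfly of additions and scalar multiplications by twiddles. I will track entry-wise magnitudes and show inductively that after $\ell$ layers, magnitudes grow by at most $2^{\ell}\cdot t\cdot\normi{X}_{\infty}$ (and similarly for $B$), while additive error grows by at most $2^{\ell}$ times the per-step rounding error; choosing $P$ so that per-step rounding is $\le \delta\cdot(ts)^{-10}(1+\normi{X}_{\infty})^{-2}(1+\normi{B}_{\infty})^{-1}$ keeps the final error in $\widehat{B}$ and in the pointwise products below $\delta/(ts)$ per block. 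Summing over blocks and using $\norm{\cdot}_F\le ts\cdot\normi{\cdot}_{\infty}$ gives the required Frobenius bound against $\Zbar_{Conv(X)}$. The word length of the resulting operator is $O(\log t\cdot P)$ because each butterfly layer contributes one multiplication by a $P$-digit twiddle, and this is the stated $O(\log t\log(ts(1+\normi{X}_{\infty})/\delta))$ bound.

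For the running time, the preprocessing FFT of $X$ uses $O(t\log t)$ block additions and twiddle-multiplications on $s\times s$ matrices, at $\Otil(s^{2})$ per operation times the bit-length overhead $\Otil(P)$. At solve time, the forward FFT of $B$ costs $O(t\log t)$ block operations on $s\times k$ matrices, each of size $\Otil(sk)$, and the inverse FFT costs the same. The dominant term is the $t$ pointwise matrix products $\widehat{X}^{(j)}\widehat{B}^{(j)}$: each is the product of an $s\times s$ and an $s\times k$ matrix, and using fast rectangular multiplication their cost is $O(\max\{s^{\omega-1}k,\,s^{2}k^{\omega-2}\})$ arithmetic operations. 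Multiplying by $t$, by the word cost $\Otil(P+L_B)$ for each word operation (using FFT-based integer multiplication of $\Otil(P+L_B)$-digit numbers), and by the $\log t$ butterfly depth factor needed to account for the per-layer length growth, yields the claimed bound.

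The main obstacle is the error analysis across the $O(\log t)$ butterfly layers: because errors in the twiddle factors and in intermediate sums interact multiplicatively with entries that themselves grow by factors of roughly $\normi{X}_{\infty}$ and $\normi{B}_{\infty}$ through the transform, I must bound both the magnitudes and the additive errors layer-by-layer and verify that a single choice $P=\Otil(\log(ts(1+\normi{X}_{\infty})(1+\normi{B}_{\infty})/\delta))$ suffices; once magnitudes are controlled, the Frobenius bound on $\Ztil-\Zbar$ follows from a per-block triangle inequality and the linearity of the rounded computation in $B$ once the twiddles are fixed at preprocessing.
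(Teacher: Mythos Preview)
Your approach is essentially the same as the paper's: realize cyclic convolution as $\mathrm{InvFFT}\circ(\text{pointwise product with }\widehat{X})\circ\mathrm{FFT}$, round only the twiddle factors so that the solve-time computation is exactly linear in $B$, and bound the resulting Frobenius error via operator composition; the paper packages the FFT stability into a separate Lemma~\ref{lem:StableFFT} and then applies Lemma~\ref{lem:ErrorCompose}, whereas you do the layer-by-layer butterfly tracking inline. One small slip: since the operator $\Ztil_{Conv(X,\delta)}$ must be fixed at preprocessing, the twiddle precision $P$ should not depend on $\normi{B}_{\infty}$ (and indeed need not, since the Frobenius bound $\norm{\Ztil-\Zbar}_F\le\delta$ is a statement about the operator alone); drop that factor from $P$ and your word-length bound matches the claimed $O(\log t\log(ts(1+\normi{X}_{\infty})/\delta))$.
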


Applying this convolution twice to the lower/upper
block triangular matrices then gives the overall algorithm.

\begin{proof} (of Lemma~\ref{lem:DisplacementRepresentation})
Applying Lemma~\ref{lem:StableConvolution} to the blocks
of $X$ and $Y$ for some error $\deltahat$ that we will set
at the end of this proof
gives that there are routines for multiplying by
$T_L(X)$ and $T_L(Y)^T$ that correspond to linear operators
such that:
\begin{enumerate}
    \item The error in Fronbenius norm is at most $\deltahat$.
    \item The number of digits after decimal point is at most
    $O(\log{m} \log(ms (1 + \normi{X}_{\infty}) (1+ \normi{Y}_{\infty} ) / \deltahat ))$
    \item For am $ms$-by-$k$ matrix $B$ with $L_B$ digits
    after the decimal point, the evaluation cost for these
    operators is
    $O(m \log^2 m \max\{ s^2 k^{\omega} 
        (\log( (1 + \normi{X}_{\infty}) (1 + \normi{Y}_{\infty})  (1 + \normi{B}_{\infty}) ms / \deltahat) + L_B)) $.
\end{enumerate}

We then composing these operators
for $L_T(X^{(\that)})$ and $L_T(Y^{(\that)})^T$
via the composition statement from Lemma~\ref{lem:ErrorCompose}.
Note that we have
\[
\normi{T_L\left(X\right)}_{\infty}
\leq \normi{X}_{\infty}
\]
as $T_L()$ simply duplicates the blocks in $X$.
So it suffices to choose
\[
\deltahat
\leftarrow
\frac{\delta}{ms
\left( 1 + \normi{X}_{\infty} \right)
\left( 1 + \normi{Y}_{\infty} \right)}.
\]

Substituting this value into the bit-length of the operators
gives that the number of bits in them is still
$O(\log{m} \log(ms (1 + \normi{X}_{\infty}) (1+ \normi{Y}_{\infty} ) / \delta ))$,
which in turn gives that the maximum number of words after
decimal point passed as input when invoking them,
for a particular $B$ is at most
\[
O\left(\log{t} \log\left(ms
\left(1 + \normi{X}_{\infty}\right)
\left(1 + \normi{Y}_{\infty} \right) / \delta
\right) + L_B\right).
\]
Substituting this into the algorithmic costs of
Lemma~\ref{lem:StableConvolution} along with the
$\lceil r / s \rceil$ involved gives a total cost of
\begin{align*}
& \left\lceil \frac{r}{s} \right\rceil
\cdot
O\left( m \log^3{m}
\cdot
\max\left\{s^2k^{\omega - 2}, s^{\omega - 1}k \right\}
\right. \\ & \qquad \left.
\cdot \left( \log\left(
\left( 1 + \normi{X}_{\infty} \right)
\left(1 + \normi{Y}_{\infty} \right)
\left( 1 + \normi{B}_{\infty} \right)
ms / \delta \right) + L_B \right)
\right)\\
&=
O\left( m \log^3{m}
\cdot
\max\left\{r, s \right\}
\max\left\{sk^{\omega - 2}, s^{\omega - 2}k \right\}
\right. \\ & \qquad \left.
\cdot \left( \log\left(
\left( 1 + \normi{X}_{\infty} \right)
\left(1 + \normi{Y}_{\infty} \right)
\left( 1 + \normi{B}_{\infty} \right)
ms / \delta \right) + L_B \right)
\right)
\end{align*}

\end{proof}
		
		\section{Convolution Operator}
\label{sec:Convolution}

In this section we bound the word complexity of the
operator version of fast convolution.

\StableConvolution*

Note that all numbers in $X$ can be truncated to
$O(\log(t / \epsilon))$ words after the decimal point.
After this, the easiest way of doing this is to treat
the numbers as integers, and use modulus against primes
(of size around $t$) to reduce this to computing
convolutions over finite fields via the exact same
formulas below (with $\omega$ replaced by a generator
over said finite fields).
Due to the involvement of different number systems,
we omit details on such an implementation in favor
of more thorough expositions,
e.g. Chapters 30 and 31 of the Third Edition of
Introduction to Algorithms
by Cormen-Leiserson-Rivest-Stein~\cite{CormenLRS09:book}.

Below we also verify how to do this using decimal expansions.
We compute this convolution via real-valued Fast Fourier
Transform (FFT) with all numbers rounded to $O(1 / \epsilon)$
words of accuracy.
We first briefly describe how the exact FFT algorithm
works: a detailed explanation can be found in Chapter 30~\cite{CormenLRS09:book}.

Let
\[
\omega = \cos\left( \frac{2 \pi }{t} \right)
    + i \cdot \sin\left( \frac{2 \pi }{t}\right)
\]
be a $t\textsuperscript{th}$ roof of unity.
Then the $i\textsuperscript{th}$ block of the
Fourier transform of $X$ is given by
\[
\sum_{j} \omega^{i \cdot j} X^{\left(j\right)}
\]
the corresponding block for $B$ is given by:
\[
\sum_{k} \omega^{i \cdot k} B^{\left(k\right)}.
\]
Taking the product of these two blocks
(with the same index $i$) then leads to:
\[
\sum_{j, k} \omega^{i \cdot \left( j + k \right)}
    X^{\left(j\right)} B^{\left(k\right)}.
\]
Note that the corresponding blocks have dimensions
$s$-by-$s$ and $s$-by-$k$ respectively, so their
product gives an $s$-by-$k$ block.

Applying inverse FFT to these then gives blocks of the form (for index $l$):
\[
\sum_{i, j, k} \omega^{-il} \cdot \omega^{i \cdot \left( j + k \right)}
    X^{\left(j\right)} B^{\left(k\right)}
=
\left[
\sum_{i} \omega^{i \cdot \left( j + k - l \right)}
\right]
\cdot
\sum_{j, k} X^{\left(j\right)} B^{\left(k\right)}.
\]
The properties of roots of unity then gives that this
leading coefficient is $t$ if $j + k \equiv l \pmod{t}$,
and $0$ otherwise.
So the above equations, when implemented exactly, gives
$t$ times the convolution between $X$ and $B$.

We first sketch how the Fast Fourier Transform
can be efficiently implemented as an linear operator
in the presence of roundoff errors.
Note that notions of Frobenius/$\ell_2$ norms extend to
the complex setting as well.
\begin{lemma}
\label{lem:StableFFT}
Given any value $t$ and block size $s$, and any $0 < \delta < 0.1$,
there is an algorithm that
correspond to a linear operator $\Ztil_{FFT(t, s, \delta)}$ with at most
$O( \log( t /  \delta) \log{t})$ digits after the decimal
place such that
\[
\norm{\Ztil_{FFT(t, s, \delta)} - \Zbar_{FFT(t, s)}}_{F}
\leq \delta,
\]
where $\Zbar_{FFT(t, s)}$ is the exact FFT operator
on $t$ blocks of size $s$:
\[
\left(\Zbar_{FFT(t, s)}\right)_{\left\{i, j\right\}}
=
\omega^{i \cdot j} \cdot I\left( s \right)
\qquad \forall 1 \leq i, j \leq t,
\]
where $I(s)$ is the $s$-by-$s$ identity matrix.
Given any matrix-sequence $X \in \Re^{ts \times k}$
with at most $L_X$ words after the
decimal place, $\Ztil_{FFT(t, s, \delta)} X$ can be computed in time
$O(t \log{t} 
\cdot sk \cdot (1 + \log(\normi{X}_{\infty}) + L_X + \log{t}\log(t / \delta)))$.

A similar bound also holds for the inverse FFT operator,
$\Ztil_{InvFFT(t, s)}$, specifically
$\norm{\Ztil_{InvFFT(t, s, \delta)} - Z_{InvFFT(t, s)}}_{F} \leq \delta$ with the same running time / word length bounds.
\end{lemma}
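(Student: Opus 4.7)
The plan is to realize $\Ztil_{FFT(t,s,\delta)}$ as the operator computed by the standard Cooley--Tukey radix-$2$ FFT, run on the input with all twiddle factors precomputed once and rounded to a carefully chosen precision $p$, and with no further rounding inserted during the butterfly network itself. Because the butterfly network is then a fixed straight-line program of additions and multiplications by constants $\omegahat^k$, it realizes a unique linear operator, and that operator is our definition of $\Ztil$. Since $\Zbar_{FFT(t,s)}$ is just the scalar DFT matrix tensored with $I(s)$, every analysis reduces to the scalar complex FFT: the $s$-block structure only contributes a factor of $sk$ to the per-butterfly cost and a factor of $\sqrt{s}$ in the Frobenius-to-entrywise conversion.

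First I would precompute $\omegahat^k$ for $k = 0,\ldots,t-1$ by evaluating $\cos(2\pi k/t) + i\sin(2\pi k/t)$ to $p$ words after the decimal point, choosing $p = \Theta(\log(ts/\delta))$ so that $|\omegahat^k-\omega^k| \le 10^{-p}$. The error analysis then follows a standard recursion on the depth $\ell$ of the butterfly network: if $e_\ell$ is the worst entrywise error and $M_\ell$ bounds the magnitude of intermediate values when the input is a unit basis vector, the butterfly $y = x_0 + \omegahat^k x_1$ satisfies $e_{\ell+1} \le 2 e_\ell + 10^{-p} M_\ell$ while $M_\ell \le 2^\ell$, giving $e_{\log t} = O(t \log t \cdot 10^{-p})$. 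Applying this per column and summing over the $O(t^2 s)$ nonzero positions of $\Zbar$ bounds the Frobenius distance by $O(t^2 s^{1/2} \log t \cdot 10^{-p})$, which falls below $\delta$ for our choice of $p$.

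For the bit-length claim, note that in the butterfly network each multiplication by a precomputed $\omegahat^k$ (with $p$ words after the decimal point) adds at most $p$ words after the decimal, while additions preserve that count by taking the max. Over the $\log t$ levels, intermediates therefore carry at most $L_X + p\log t$ words after the decimal point; specializing to standard basis inputs with $L_X=0$ certifies the $O(\log(t/\delta)\log t)$ word-length bound on $\Ztil$ itself. For runtime, there are $t \log t$ butterflies in total, each operating on $s \times k$ matrix blocks via $O(sk)$ scalar multiplications and additions, where the scalar word length at level $\ell$ is $L_X + \ell p + \log\normi{X}_\infty$; using $\Otil$ fast multiplication and summing across levels produces exactly the claimed bound $O(t \log t \cdot sk \cdot (1 + \log \normi{X}_\infty + L_X + \log t \log(t/\delta)))$. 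The inverse-FFT claim follows by running the identical network with $\omega$ replaced by $\omega^{-1}$ and absorbing the scalar factor $1/t$ at the end, which affects constants but not the asymptotic bounds.

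The main obstacle I expect is making the additive (rather than multiplicative) nature of the error propagation airtight: a naive bound on $e_{\ell+1}$ that treats the butterfly as multiplying errors by $(1+10^{-p})$ would still be fine, but keeping track of both the magnitude growth $M_\ell$ and the error accumulation $e_\ell$ simultaneously through the recursion is where care is needed. All other components---precomputation of $\omegahat^k$, the word-length recursion for intermediates, and the bit-cost accounting using fast integer multiplication---are routine once the error recursion is in hand.
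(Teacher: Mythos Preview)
Your proposal is correct and takes essentially the same approach as the paper: both implement the Cooley--Tukey butterfly with rounding applied only to the precomputed twiddle factors, then propagate the additive error and word-length bounds through the $O(\log t)$ levels. You supply more explicit quantitative detail in the error recursion than the paper's brief sketch, but the underlying strategy and the resulting bounds are the same.
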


\begin{proof}
We introduce round-off errors into the Fast Fourier transform algorithm,
but only in one place: the generation of each complex coefficient
used to evaluate the butterfly diagram.

Because all coefficients in the FFT matrix have magnitude at most
$1$, errors to the corresponding coefficients
accumulate additively.
So it suffices to have round-off error at most
$\delta / poly(t)$ these coefficients.

This implies at most $O( \log(t / \delta))$ words after decimal
point for each of the coefficients.
As the depth of the FFT recursion is $O(\log{t})$, the total length
of the words are bounded by $O( \log{t} \log( t / \delta))$.
This in turn implies that any linear combination of these coefficients
and $X$ can have at most $O(L_X +  \log{t} \log( t / \delta)))$
words after the decimal place.
Combining this with none of the intermediate numbers
exceeding $t \normi{X}_{\infty}$, and the $O(sk)$ cost of adding matrices gives the overall cost.
\end{proof}

\begin{proof} (Of Lemma~\ref{lem:StableConvolution})
We first round all entries in $X$ to an additive
accuracy of $\delta / (ts)$, or $O(\log(ts / \delta))$
words after the decimal point.
This perturbs the convolution operator in $X$
by at most $\delta / ts$.

We apply the FFT operators to $X$ and $B$
separately, namely we first invoke the FFT algorithm
from Lemma~\ref{lem:StableFFT} to compute
$\Ztil_{FFT(t, s, \deltahat)} X$
and
$\Ztil_{FFT(t, s, \deltahat)} B$
for some $\deltahat$ that we will choose later as a function of $\delta$, $\normi{X}_{\infty}$, $\normi{B}_{\infty}$, and $ms$.

The cost of these two (forward) FFT transforms is
then at most
\begin{multline*}
O\left( t \log{t} \cdot s^2 \cdot
\left( \log \left( 1 + \normi{X}_{\infty} \right)
    + \log\left(ts /  \delta \right)  \right)
+ \log{t} \log\left( t / \deltahat \right) \right)\\
+
O\left( t \log{t} \cdot sk \cdot
\left( \log \left( 1 + \normi{B}_{\infty} \right)
    + L_{B} + \log{t} \log\left( t / \deltahat \right) \right) \right)\\
\leq
O\left( t \log{t} \cdot \left( s^2 + sk \right) \cdot
\left( \log \left( \left( 1 + \normi{X}_{\infty} \right)
\left(1 + \normi{B}_{\infty} \right)
ms/ \delta \right) + L_{B} + \log{t} \log\left( t / \deltahat \right) \right) \right)
\end{multline*}

The products of the corresponding $s$-by-$s$
and $s$-by-$k$ matrices involve numbers with at most
$L_{X} + O(\log(t / \deltahat) \log{t}) 
=  O(\log(t / \deltahat) \log{t})$ words and
$L_{B} + O(\log(t / \deltahat) \log{t})$ words
after the decimal point respectively.
So the computation of the product involves numbers of length at most
\[
O\left(\log\left(
 \left( 1 + \normi{X}_{\infty} \right)
\left(1 + \normi{B}_{\infty} \right)
\right)
+ L_{B} + \log(t / \deltahat) \log{t}\right).
\]
When $k \leq s$, it can be reduced to multiplying
$\lceil s / k \rceil^2$ $k$-by-$k$ matrices,
for a total of $O((s/k)^2 k^{\omega}) = O(s^2 k^{\omega - 2})$.
When $k \geq s$, it reduces to $\lceil k / s\rceil$
multiplications of $s$-by-$s$ matrices, for a total
of $O((k / s) s^{\omega}) = O(ks^{\omega - 1})$.
As the larger exponent is only the larger term,
the overall running time is bounded by the max of these two.

The $i\textsuperscript{th}$ block of this matrix product
\[
\left[\Ztil_{FFT\left(t, s, \deltahat\right)} X\right]
_{\left\{i\right\}}
\cdot
\left[\Ztil_{FFT\left(t, s, \deltahat\right)} B\right]
_{\left\{i\right\}}
\cdot
\]
can also be viewed as a linear transformation of $B$,
namely the matrix
\[
\Ztil_{FFT\left(t, s, \deltahat\right), \left\{i\right\}}
X
\Ztil_{FFT\left(t, s, \deltahat\right), \left\{i\right\}}.
\]
Applying the error composition bound given
in Lemma~\ref{lem:ErrorCompose} gives that its error
to the exact operator is bounded by
\begin{multline*}
\norm{
\Ztil_{FFT\left(t, s, \deltahat\right), \left\{i\right\}}
X
\Ztil_{FFT\left(t, s, \deltahat\right), \left\{i\right\}}
-
\Zbar_{FFT\left(t, s, \deltahat\right), \left\{i\right\}}
X
\Zbar_{FFT\left(t, s, \deltahat\right), \left\{i\right\}}
}_{F}\\
\leq
100
\deltahat
\max\left\{1, \norm{X}_{2}\right\}
\leq
100
\deltahat \left( ts \right)^2
\left( 1 + \normi{X}_{\infty} \right).
\end{multline*}
Summing over all $0 \leq i < t$ gives a total error
that's larger by a factor of $(ts)$.

Composing this matrix once more against the inverse
FFT given by Lemma~\ref{lem:StableFFT} then gives the
overall error bound.
Specifically, for an overall operator error of $\delta$, it suffices to set
\[
\deltahat
\leftarrow
\frac{\delta}{\left( 1 + \normi{X}_{\infty}\right) t^2 s^2}.
\]
This means each FFT/inverse FFT increases the number
of digits behind decimal place by at most
\[
O\left( \log{t} \log\left(t  / \deltahat \right) \right)
=
O\left( \log t \log\left(
\left( 1 + \normi{X}_{\infty} \right)
ts / \delta \right) \right).
\]
additively, which in turn gives the bound on the number of
digits after decimal place for $\Ztil_{FFT(t, s, \delta)}$.

Incorporating the size of $B$, we get that
the total number of words after decimal point that
we need to track is
$O(L_B + \log{t} \log( (1 + \normi{X}_{\infty} ) ts /
    \delta ))$
On the other hand,
none of the intermediate matrices during the convolution
have magnitude more
than $(ts)^2 \normi{X}_{\infty} \normi{B}_{\infty}$.
This means the total number of words tracked in order
to represent all intermediate results is at most
\[
O\left(\log\left( \left( 1 + \normi{X}_{\infty} \right)
\left( 1 + \normi{B}_{\infty} \right)
ts \right)
+L_B
+ \log{t} \log\left(
\left( 1 + \normi{X}_{\infty} \right)ts / \delta \right)\right).
\]
So combining this with the cost of matrix multiplication,
and the $O(t\log{t})$ block rearrangements of the FFT steps gives a total cost upper bound of
\[
O\left( t \log^2{t}
\cdot \max\left\{s^2k^{\omega - 2}, s^{\omega - 1}k \right\}
\cdot \left( \log\left(
  \left( 1 + \normi{X}_{\infty} \right)
  \left( 1 + \normi{B}_{\infty} \right)
  ts / \delta \right) + L_B \right) \right)
\]
word operations.
\end{proof}

		\pagebreak
		
		\section*{Notation}

\begin{table}[H]
\begin{center}
\begin{tabular}{c|l}
$A$ & matrix to be solved\\
$b$, $B$ & right-hand side vector(s) of linear equations\\
$n$ & size of matrix\\
$m$ & number of Krylov space steps\\
$s$ & size of Krylov space blocks\\
$r$ & number of remaining columns `padded' to block-Krylov space,\\
& also number of columns of null space basis we work with.\\
$K$ & block-Krylov space\\
$\textsc{Solve}_{A}(\cdot)$
& solve procedure for matrix $A$ that takes as input $b$, and outputs
$x \approx A^{-1}b$\\
$Z_{\textsc{Alg}}$ & linear operator corresponding to $\textsc{Alg}$\\
$Q$ & full $n$-by-$n$ matrix formed by appending columns a Krylov space matrix\\
$g$ & Gaussian vector\\
$G$ & Gaussian matrix\\
$g^{S}$, $G^{S}$ & sparse vector/matrix with non-zeros set to Gaussians\\
$[\cdot]$ & subset of $m$ columns in Krylov space corresponding to a single initial vector.\\
$\{ \cdot \}$ & subset of $s$ columns in Krylov space corresponding to a particular power.\\
I & identity matrix\\
$U$, $W$ & orthonormal basis\\
$d$ & dimension of a basis\\
$u$, $w$ & unit vectors\\
$\lambda$ & eigenvalues\\
$v$ & eigenvectors\\
$\Lambda$ & diagonal matrix containing eigenvalues\\
$\sigma$ & singular values\\
$\Sigma$ & diagonal matrix containing singular values\\
$\kappa$ & condition number\\
$\epsilon$ & granularity of $\epsilon$-nets\\
$\alpha$ & non-degeneracy size\\
$C, \Cbar$ & remaining/eliminated split of coordinates in block Gaussian elimination\\
$X, Y$ & matrices that form low rank approximations, and tall-and-thin matrices\\
$Z$ & matrices/linear operators that correspond to linear algorithms\\
$\cdot_{S}$, $\cdot_{J}$ & subset of columns of matrices\\
$\norm{\cdot}_2$ & $2$-norm of vectors\\
$\norm{\cdot}_{F}$ & Frobenius norms of matrices\\
$\normi{\cdot}_{\infty}$ & entry-wise max magnitude of matrices\\
$A-$ & $A$ with last row/column removed (only in Lemma~\ref{lem:Interlacing})\\
$\theta$ & rescaling/renormalizing coefficients
\end{tabular}
\end{center}

\caption{Variable Names and Meaning}
\label{table:Notations}
\end{table}

	\end{appendix}
	
\end{document}